\documentclass[11pt,a4paper,reqno]{amsart}%
\usepackage{amsthm,amsmath,amsfonts,amssymb,amsxtra,appendix,bookmark,dsfont,latexsym,color,epsfig}

\makeatletter
\usepackage{hyperref}
\usepackage{delarray,a4,color}
\usepackage{euscript}
\usepackage[latin1]{inputenc}


\setlength{\voffset}{0.5truein}
\setlength{\textheight}{8.25truein}
\setlength{\textwidth}{6truein}
\setlength{\hoffset}{0.35truein}

\numberwithin{equation}{section}

\newtheorem{theorem}{Th\'eor\`eme}[section]
\newtheorem{lemma}[theorem]{Lemme}
\newtheorem{corollary}[theorem]{Corollaire}

\theoremstyle{definition}
\newtheorem{definition}[theorem]{D\'efinition}

\theoremstyle{remark}
\newtheorem{remark}[theorem]{Remarque}

\numberwithin{equation}{section}

\usepackage{color}


\pagestyle{myheadings} \sloppy

\setlength{\textwidth}{15cm}
\addtolength{\evensidemargin}{-2cm}
\addtolength{\oddsidemargin}{-2cm}

\numberwithin{equation}{section}

\newcommand{\norm}[1]{\left\lVert #1 \right\rVert}


\newcommand{\bdm}{\begin{displaymath}}
\newcommand{\edm}{\end{displaymath}}
\newcommand{\bdn}{\begin{eqnarray}}
\newcommand{\edn}{\end{eqnarray}}
\newcommand{\bay}{\begin{array}{c}}
\newcommand{\eay}{\end{array}}
\newcommand{\ben}{\begin{enumerate}}
\newcommand{\een}{\end{enumerate}}
\newcommand{\beq}{\begin{equation}}
\newcommand{\eeq}{\end{equation}}

\newcommand{\tx}{\textstyle}

\newcommand{\eps}{\varepsilon}



\newcommand{\R}{\mathbb{R}}
\newcommand{\N}{\mathbb{N}}

\newcommand{\C}{\mathbb{C}}

\newcommand{\F}{\mathcal{F}}
\newcommand{\E}{\mathcal{E}}

\newcommand{\cS}{\mathcal{S}}

\newcommand{\PP}{\mathcal{P}}

\newcommand{\one}{{\ensuremath {\mathds 1} }}

\newcommand{\al}{\alpha}

\newcommand{\ep}{\varepsilon}

\newcommand{\Om}{\Omega}
\newcommand{\om}{\omega}

\newcommand{\half}{\tx{\frac{1}{2}}}

\newcommand{\supp}{\mathrm{supp}}
\newcommand{\wto}{\rightharpoonup}


\newcommand{\bral}{\left<}
\newcommand{\brar}{\right|}
\newcommand{\ketl}{\left|}
\newcommand{\ketr}{\right>}



\newcommand{\ZN}{\mathcal{Z}_{N}}



\newcommand{\MFf}{\F ^{\rm MF}}
\newcommand{\MFe}{F ^{\rm MF}}
\newcommand{\MFmin}{\mathcal{M} ^{\rm MF}}
\newcommand{\rhoMF}{\varrho ^{\rm MF}}

\newcommand{\MFEf}{\E ^{\rm MF}}
\newcommand{\MFEe}{E ^{\rm MF}}

\newcommand{\MFfo}{\E ^{\rm MF}}
\newcommand{\MFeo}{E ^{\rm MF}}
\newcommand{\MFfal}{\E ^{\rm MF}_{\al}}
\newcommand{\MFeal}{E ^{\rm MF}_{\al}}

\newcommand{\logal}{\log_{\al}}



\DeclareMathOperator{\Tr}{Tr}
\DeclareMathOperator{\tr}{Tr}


\def\geqslant{\ge}
\def\leqslant{\le}
\def\bq{\begin{eqnarray}}
\def\eq{\end{eqnarray}}
\def\bqq{\begin{eqnarray*}}
\def\eqq{\end{eqnarray*}}
\def\nn{\nonumber}

\def\eps{\varepsilon}
\def\wto{\rightharpoonup}
\def\gB {\mathfrak{B}}

\newcommand\1{{\ensuremath {\mathds 1} }}

\newcommand{\gammaP}{\gamma_{\Psi}}
\newcommand{\dM}{{\rm d}M}

\renewcommand{\epsilon}{\varepsilon}

\def\cF {\mathcal{F}}

\def\R {\mathbb{R}}
\def\C {\mathbb{C}}

\def\cS {\mathcal{S}}

\def\E {\mathcal{E}}
\def\cE {\mathcal{E}}

\def\F {\mathcal{F}}

\def\R {\mathbb{R}}
\def\C {\mathbb{C}}

\def\gS{\mathfrak{S}}

\def\cS {\mathcal{S}}

\def\E {\mathcal{E}}
\def\cM {\mathcal{M}}

\def\d{{\rm d}}

\def\gH{\mathfrak{H}}
\newcommand\ii{{\ensuremath {\infty}}}
\newcommand\pscal[1]{{\ensuremath{\left\langle #1 \right\rangle}}}

\renewcommand{\leq}{\leqslant}
\renewcommand{\geq}{\geqslant}

\newcommand{\cEH}{\ensuremath{\cE_{\text{\textnormal{H}}}}}

\newcommand{\EH}{\E_{\rm H}}
\newcommand{\eH}{e_{\rm H}}
\newcommand{\uH}{u_{\rm H}}

\newcommand{\ENLS}{\cE_{\rm nls}}
\newcommand{\eNLS}{e_{\rm nls}}
\newcommand{\uNLS}{u_{\rm nls}}


\newcommand{\mubf}{\boldsymbol{\mu}}
\newcommand{\nubf}{\boldsymbol{\nu}}
\newcommand{\mut}{\tilde{\boldsymbol{\mu}}}
\newcommand{\Gammat}{\tilde{\Gamma}}
\newcommand{\gammat}{\tilde{\gamma}}

\newcommand{\Pp}{P_{\perp}}
\newcommand{\wep}{w_{\varepsilon}}
\newcommand{\aep}{a_{\varepsilon}}
\newcommand{\Eep}{E ^{\varepsilon}}

\newcommand{\MNLS}{\cM_{\rm nls}}

\newcommand{\Fcl}{\F_{\rm cl}}
\newcommand{\Fcle}{F_{\rm cl}}
\newcommand{\mucl}{\mu_{\rm cl}}


\numberwithin{equation}{section}

\pagestyle{myheadings} \sloppy



\title[Th\'eor\`emes de de Finetti et condensation de Bose-Einstein]{Théorèmes de de Finetti, limites de champ moyen et condensation de Bose-Einstein}

\author[Nicolas Rougerie]{Nicolas Rougerie\\ 
L\lowercase{aboratoire de }P\lowercase{hysique et }M\lowercase{od\'elisation des }M\lowercase{ilieux} C\lowercase{ondens\'es}, U\lowercase{niversit\'e} G\lowercase{renoble 1} \& CNRS.}
\address{Universit\'e Grenoble 1 \& CNRS, LPMMC, UMR 5493, BP 166, 38042 Grenoble, France.}
\email{nicolas.rougerie@lpmmc.cnrs.fr}

\date{Février-Mars 2014, cours Peccot, Collège de France}

\begin{document}


\renewcommand{\contentsname}{Sommaire}
\renewcommand{\refname}{R\'ef\'erences}
\renewcommand{\abstractname}{R\'esum\'e}
\renewcommand{\appendixname}{Appendice}


\maketitle

\bigskip

\begin{center}
Cours Peccot au coll\`ege de France, F\'evrier-Mars-Avril 2014. 
\end{center}

\bigskip

\begin{abstract}
Ces notes de cours traitent de l'approximation de champ moyen pour les \'etats d'\'equilibre de syst\`emes \`a $N$ corps en m\'ecanique statistique classique et quantique. Une strat\'egie g\'en\'erale pour la justification des mod\`eles effectifs bas\'es sur des hypoth\`eses d'ind\'ependance statistique des particules est pr\'esent\'ee en d\'etail. Les outils principaux sont des th\'eor\`emes de structure \`a la de Finetti qui d\'ecrivent les limites pour $N$ grand des \'etats accessibles aux syt\`emes en question, en exploitant l'indiscernabilit\'e des particules. L'accent est mis sur les aspects quantiques, notamment l'approximation de champ moyen pour le fondamental d'un grand syst\`eme bosonique, en lien avec le ph\'enom\`ene de condensation de Bose-Einstein: structure des matrices de densit\'e r\'eduites d'un grand syst\`eme bosonique, m\'ethodes de localisation dans l'espace de Fock, d\'erivation de fonctionnelles d'\'energie effectives de type Hartree ou Schr\"odinger non lin\'eaire.
\end{abstract}

\newpage

$\phantom{o}$

\bigskip
\bigskip

\bigskip
\bigskip

\bigskip
\bigskip

\bigskip
\bigskip

\bigskip
\bigskip

\bigskip
\bigskip

\noindent\emph{D\'edi\'e \`a ma fille C\'eleste dont la naissance anticip\'ee a un peu perturb\'e le d\'eroulement du cours, mais qui est si adorable qu'on ne saurait lui en vouloir. }

\newpage

$\phantom{o}$

\bigskip
\bigskip

\bigskip
\bigskip

\tableofcontents

\newpage

\section*{\textbf{Avant-propos}}\label{sec:pre intro}

Le but de ces notes de cours est de pr\'esenter de mani\`ere aussi exhaustive et p\'edagogique que possible un ensemble de r\'esultats math\'ematiques r\'ecents ayant trait au ph\'enom\`ene physique de \emph{condensation de Bose-Einstein} dans des gaz d'atomes ultra-froids. Un des nombreux probl\`emes th\'eoriques pos\'es par ces exp\'eriences consiste en la compr\'ehension du lien entre les mod\`eles effectifs, d\'ecrivant les exp\'eriences avec une pr\'ecision remarquable, et les principes de base de la m\'ecanique quantique. Le processus liant les descriptions fondamentales et effectives est souvent appell\'e une \emph{limite de champ moyen} et la th\'eorie de ces limites a motiv\'e un tr\`es grand nombre de travaux en physique th\'eorique et math\'ematique. Dans ce cours on se focalisera sur une des m\'ethodes permettant de traiter la limite de champ moyen, bas\'ee sur les \emph{th\'eor\`emes \`a la de Finetti}. On interpr\'etera l'\'emergence des mod\`eles de champ moyen comme une cons\'equence 
fondamentale de la structure des \'etats physiques en consid\'eration. 

Ce texte a pour vocation d'embrasser des aspects d'analyse, de probabilit\'e, de physique de la mati\`ere condens\'ee, de physique des atomes froids, de m\'ecanique statistique et quantique, d'information quantique. L'emphase sera mise sur la sp\'ecialit\'e de l'auteur, \`a savoir les aspects analytiques de la d\'erivation des mod\`eles de champ moyen, dans le cas de mod\`eles statiques. La pr\'esentation aura donc un aspect beaucoup plus math\'ematique que physique, mais le lecteur devrait garder \`a l'esprit le lien entre les questions soulev\'ees ici (et dans la litt\'erature cit\'ee) et la physique des atomes froids, en particulier les exp\'eriences ayant permis l'observation de condensats de Bose-Einstein en laboratoire depuis le milieu des ann\'ees 90.

\subsection*{Un mot sur les exp\'eriences}

\smallskip

Le ph\'enom\`ene de condensation de Bose-Einstein est au centre d'un vaste de champ de recherche en pleine expansion depuis le milieu des ann\'ees~90. L'extr\^eme versatilit\'e des conditions maintenant accessibles \`a l'exp\'erience rend possible l'investigation d\'etaill\'ee de nombreuses questions physiques fondamentales. Le lecteur est renvoy\'e aux textes~\cite{Aftalion-06,BloDalZwe-08,DalGioPitStr-99,DalGerJuzOhb-11,LieSeiSolYng-05,PetSmi-01,PitStr-03,Fetter-09,Cooper-08} et leurs bibliographies pour de plus amples d\'eveloppements sur la physique des atomes froids.  Des pr\'esentations tr\`es accessibles au non-sp\'ecialiste sont donn\'ees dans~\cite{Dalibard-01,CheDal-03,CohDalLal-05}. 

\medskip

Les premi\`eres observations du ph\'enom\`ene de condensation de Bose-Einstein ont eu lieu simultan\'ement au MIT et \`a Boulder, Colorado dans les groupes de W. Ketterle d'une part et E. Cornell et C. Wieman d'autre part, ce qui leur a valu le prix Nobel de physique 2001. Les possibilit\'es ouvertes par ces exp\'eriences et celles qui ont suivi en termes d'exploration de la physique quantique macroscopique en font une des pierres angulaires de la physique contemporaine.

\medskip

On entend par ``condensat de Bose-Einstein'' un objet constitu\'e par un grand nombres de particules (habituellement des atomes alcalins) dans le m\^eme \'etat quantique. La condensation n\'ecessite donc en premier lieu que les particules en consid\'eration soient des bosons, c'est-\`a-dire ne satisfassent pas le principe de Pauli qui interdit \`a deux particules d'occuper le m\^eme \'etat quantique.

Cette occupation macroscopique d'un unique \'etat quantique de plus basse \'energie ne se manifeste qu' \`a tr\`es faible temp\'erature. Concr\`etement, il existe une temp\'erature critique $T_c$ pour l'existence d'un condensat, et l'occupation macroscopique de l'\'etat de plus basse \'energie n'appara\^it que pour des temp\'eratures $T<T_c$. L'existence th\'eorique de cette temp\'erature critique remonte aux travaux de Bose et Einstein dans les ann\'ees 20~\cite{Bose-24,Einstein-24} mais des objections de taille ont \'et\'e formul\'ees \`a l'\'epoque: 
\begin{enumerate}
\item La temp\'erature critique $T_c$ est extr\^emement basse, compl\`etement inatteignable avec les moyens des ann\'ees 20.
\item A une telle temp\'erature, l'\'etat fondamental de tous les compos\'es connus est un solide, et non pas un gaz, comme suppos\'e dans la th\'eorie de Bose-Einstein. 
\item Le raisonnement propos\'e par Bose et Einstein porte sur un gaz id\'eal, sans interactions entre les particules, ce qui est assez peu r\'ealiste.
\end{enumerate}

La premi\`ere objection n'a pu \^etre lev\'ee que dans les ann\'ees 90 avec l'apparition des puissantes techniques de refroidissement par laser\footnote{Qui ont valu le prix Nobel de Physique 1997 \`a Steven Chu, William Phillips et Claude Cohen-Tannoudji.} et par \'evaporation, qui ont permis d'atteindre des temp\'eratures de l'ordre du micro-Kelvin dans des gaz quantiques pi\'eg\'es par des dispositifs magn\'eto-optiques. Quant \`a la seconde, la solution est dans la dilution des \'echantillons en question: les rencontres de trois particules ou plus n\'ecessaires pour entamer la formation de mol\'ecules puis d'un solide sont extr\^emement rares. On pourra donc observer une phase gazeuse m\'eta-stable pendant un temps suffisant pour la formation d'un condensat.

La troisi\`eme objection est de nature plus th\'eorique que pratique. L'essentiel du mat\'eriel d\'ecrit dans ces notes s'inscrit dans un programme de recherche (impliquant de nombreux auteurs, voir les r\'ef\'erences au fil du texte) visant \`a lever cette r\'eserve possible. On  aura donc amplement l'occasion de la discuter plus en d\'etail dans la suite. 

De nombreuses observations concordantes ont confirm\'e la cr\'eation de condensats de Bose-Einstein: imagerie de la r\'epartition en vitesse/\'energie des atomes pi\'eg\'es, interf\'erences de condensats, confirmation du caract\`ere superfluide des objets cr\'e\'es ... L'importance nouvelle ainsi acquise par les mod\`eles math\'ematiques utilis\'es pour la description de ce ph\'enom\`ene a motiv\'e une vaste lit\'erature consacr\'ee \`a leur analyse.

\subsection*{Quelques questions math\'ematiques pos\'ees par les exp\'eriences}

\smallskip

En admettant l'existence de la condensation, le gaz en question peut-\^etre d\'ecrit par une seule fonction d'onde $\psi:\R ^d \mapsto \C$, correspondant \`a l'\'etat quantique commun \`a toutes les particules. Une assembl\'ee de $N$ particules quantiques est normalement d\'ecrite par une fonction d'onde \`a $N$ particules $\Psi_N: \R ^{dN} \mapsto \C$, et il faut donc comprendre pourquoi et comment on peut passer d'une fonction de $N$ variables $\Psi_N$ \`a une fonction d'une seule variable d\'ecrivant un comportement collectif. L'\'etude de  la pr\'ecision de cette approximation, qui a des cons\'equences pratiques et th\'eoriques tr\`es importantes, est une t\^ache de premi\`ere importance pour le th\'eoricien et le math\'ematicien.

On peut se poser les questions suivantes:
\begin{enumerate}
\item Peut-on d\'ecrire l'\'etat fondamental (\`a temp\'erature nulle) d'un syst\`eme de bosons en interaction par une seule fonction d'onde $\psi$ ? 
\item Partant d'une description \`a une fonction d'onde et suivant la dynamique naturelle pour un syst\`eme de $N$ particules quantiques (flot de Schr\"odinger $N$ corps), la description par une seule fonction d'onde est-elle pr\'eserv\'ee par la dynamique ?
\item Peut-on prouver rigoureusement l'existence d'une temp\'erature critique $T_c$ en dessous de laquelle la description \`a une fonction d'onde est valable pour les \'etats d'\'equilibre \`a temp\'erature $T$ du syst\`eme ?
\end{enumerate}

Ces trois questions pr\'ecisent les investigations n\'ecessaires pour lever de mani\`ere satisfaisante la troisi\`eme objection soulev\'ee au paragraphe pr\'ec\'edent. On rappelle donc que le probl\`eme est bien de comprendre le ph\'enom\`ene de condensation pour un syst\`eme quantique de particules \emph{interagissantes}. Le cas de particules id\'eales est essentiellement trivial, au moins pour les questions 1 et 2. 

Il faut entendre ici, dans l'esprit de la physique statistique, que nous voulons \'etablir la validit\'e de l'approximation par une seule fonction d'onde \emph{asymptotiquement dans la limite d'un grand nombre de particules}, modulo des hypoth\`eses appropri\'ees sur le mod\`ele en question. Id\'ealement, les hypoth\`eses devraient se r\'eduire \`a celles garantissant que le mod\`ele \`a $N$ corps de d\'epart ainsi que le mod\`ele \`a un corps d'arriv\'ee soient bien d\'efinis math\'ematiquement.  Remarquons que pour des particules quantiques en interaction, le mod\`ele de d\'epart est toujours lin\'eaire alors que le mod\`ele d'arriv\'ee est toujours non-lin\'eaire.

Beaucoup des r\'esultats r\'ecents pr\'esent\'es ci-apr\`es sont le pendant naturel dans un cadre quantique de r\'esultats de m\'ecanique classique plus anciens et mieux connus, pour lesquels des question reli\'ees se posent. Pour des raisons p\'edagogiques, quelques notions sur les limites de champ moyen pour des mod\`eles de m\'ecanique classique seront donc rappell\'ees par la suite. 

\medskip

La question 1 est l'objet de ce cours, et nous verrons qu'elle nous am\`enera \`a d\'evelopper des outils d'int\'er\^et math\'ematique intrins\`eque. On peut l'attaquer par essentiellement deux approches:
\begin{itemize}
\item La premi\`ere exploite des propri\'et\'es particuli\`eres de certains mod\`eles physiques importants. Elle s'applique ainsi au cas par cas, avec des ingr\'edients diff\'erents et sous des hypoth\`eses souvent restrictives, en particulier sur la forme des interactions. Sans pr\'etendre \`a l'exhaustivit\'e, on pourra consulter~\cite{SanSer-12,SanSer-13,RouSer-13,RouSerYng-13b,Serfaty-14} et~\cite{BenLie-83,LieYau-87,Seiringer-11,GreSei-12,SeiYngZag-12,LieSeiSolYng-05,LieSei-06} pour des applications de cette approche \`a des syst\`emes respectivement classiques et quantiques.
\item La seconde, qui est l'objet de ce cours, exploite les propri\'et\'es de l'ensemble des \'etats accessibles, c'est-\`a-dire des fonctions \`a $N$ corps $\Psi_N$ admissibles pour la description d'un syst\`eme r\'eel: le fait que l'on consid\`ere des particules bosoniques se traduit par une hypoth\`ese fondamentale de sym\'etrie sur ces fonctions. 

Cette seconde approche a le m\'erite d'\^etre beaucoup plus g\'en\'erale que la premi\`ere et de permettre dans beaucoup de cas de se rapprocher de bien plus pr\`es des ``hypoth\`eses id\'ealement minimales'' pour l'\'etude de la limite de champ moyen. 

Une interpr\'etation possible est de voir la limite de champ moyen comme un r\'egime o\`u les corr\'elations entre les particules deviennent n\'egligeables. On t\^achera d'exploiter \`a fond les cons\'equences de la notion cl\'e de sym\'etrie bosonique. Nous aurons l'occasion de discuter la litt\'erature plus en d\'etail ult\'erieurement, mais citons tout de suite~\cite{MesSpo-82,CagLioMarPul-92,Kiessling-89,Kiessling-93,KieSpo-99,RouYng-14} et~\cite{FanSpoVer-80,FanVan-06,PetRagVer-89,RagWer-89,LewNamRou-13,LewNamRou-14} pour des applications de ces id\'ees en m\'ecanique classique et quantique respectivement. 
\end{itemize}
La distinction entre les deux approches est bien s\^ur un peu artificielle: on a souvent avantage \`a utiliser des id\'ees emprunt\'ees aux deux philosophies.

\medskip

Pour garder \`a ces notes une longueur raisonnable, la question 2 ci-dessus n'y est pas du tout trait\'ee bien qu'une vaste litt\'erature math\'ematique existe, voir par exemple~\cite{Hepp-74,GinVel-79,Spohn-80,BarGolMau-00,ElgErdSchYau-06,ElgSch-07,AmmNie-08,ErdSchYau-09,FroKnoSch-09,RodSch-09,KnoPic-10,Pickl-11} et r\'ef\'erences cit\'ees, ainsi que le cours~\cite{Golse-13}. Notons que les th\'eor\`emes de type ``de Finetti quantique'' qui vont nous occuper ici se sont \'egalement r\'ecemment r\'ev\'el\'es des outils tr\`es utiles dans l'analyse de la question 2, voir~\cite{AmmNie-08,AmmNie-09,AmmNie-11,CheHaiPavSei-13,CheHaiPavSei-14}. L'usage des th\'eor\`emes de de Finetti classiques dans un cadre dynamique est plus ancienne~\cite{Spohn-80,Spohn-81,Spohn-12}.

\medskip

Quant \`a la question 3, il s'agit d'un probl\`eme ouvert fameux de physique math\'ematique sur lequel bien peu de choses sont connues \`a un niveau de rigueur satisfaisant, voir cependant~\cite{SeiUel-09,BetUel-10}. Nous n'effleurerons les questions soulev\'ees par la prise en compte de la temp\'erature dans un syst\`eme de bosons en interaction qu'\`a l'Appendice~\ref{sec:large T}, et dans un cadre grandement simplifi\'e. Ce sujet sera plus amplement d\'evelopp\'e dans l'article en pr\'eparation~\cite{LewNamRou-14b}.

\subsection*{Plan du cours}\mbox{}

\smallskip

Ces notes sont organis\'ees comme suit:
\begin{itemize}
\item Une longue introduction, Chapitre~\ref{sec:intro}, rappelle les \'el\'ements de formalisme dont nous aurons besoin pour formuler pr\'ecis\'ement les probl\`emes qui nous int\'eresseront. On commencera par le formalisme de la m\'ecanique statistique classique et poursuivra avec celui de la m\'ecanique quantique\footnote{Les deux formalismes peuvent s'unifier avec le vocabulaire des alg\`ebres $C^*$: en m\'ecanique classique les observables du syst\`eme forment une alg\`ebre $C^*$ commutative, en m\'ecanique quantique une alg\`ebre $C^*$ non-commutative. Ce point de vue unifi\'e est laiss\'e de c\^ot\'e dans ces notes. Le th\'eor\`eme de Gelfand-Naimark-Segal montre que la pr\'esentation plus g\'en\'erale des deux th\'eories se r\'eduit essentiellement \`a celle que nous avons choisi de pr\'esenter.}.

La question de la justification de l'approximation de champ moyen pour les \'etats d'\'equilibre d'un Hamiltonien donn\'e est \'egalement formul\'ee dans les deux contextes. La strat\'egie de preuve qui forme le coeur du cours est d\'ecrite de mani\`ere compl\`etement formelle, pour introduire le plus rapidement possible les th\'eor\`emes de de Finetti qui sont les outils principaux de la strat\'egie. Le parall\`ele entre le cadre classique et le cadre quantique est tr\`es fort, les diff\'erences apparaissant essentiellement lorsqu'il s'agit de d\'emontrer ces th\'eor\`emes fondamentaux dans les cadres classiques et quantiques.  
\item Le Chapitre~\ref{sec:class}, essentiellement ind\'ependant de la suite des notes, contient le traitement des syt\`emes classiques : preuve du th\'eor\`eme de de Finetti classique (\'egalement appel\'e th\'eor\`eme de Hewitt-Savage), application aux \'etats d'\'equilibre d'un Hamiltonien classique. La preuve que nous donnerons du th\'eor\`eme de Hewitt-Savage, due \`a Diaconis et Freedman est purement classique et ne se g\'en\'eralise pas au cas quantique. 
\item On attaque le probl\`eme quantique au Chapitre~\ref{sec:quant}: deux versions (forte et faible) du th\'eor\`eme de de Finetti quantique sont pr\'esent\'ees sans preuve, avec leurs applications directes \`a des exemples ``relativement simples'' de syst\`emes bosoniques dans le r\'egime de champ moyen. La Section~\ref{sec:rel deF} discute les connexions entre les diff\'erentes versions du th\'eor\`eme de de Finetti quantique et d\'ecrit la strat\'egie g\'en\'erale que nous suivrons pour leur preuve dans la suite du cours.   
\item Les Chapitres~\ref{sec:deF finite dim} et~\ref{sec:locFock} contiennent les deux principales \'etapes de la preuve du th\'eor\`eme de de Finetti quantique que nous avons choisi de pr\'esenter: respectivement ``constructions et estimations explicites en dimension finie'' et ``passage \`a la dimension infinie par localisation dans l'espace de Fock''. La preuve ne devrait pas \^etre vue comme une bo\^ite noire: non seulement le r\'esultat final mais aussi les constructions interm\'ediaires seront r\'eutilis\'es dans la suite. 
\item Arm\'es des r\'esultats des deux chapitres pr\'ec\'edents, on pourra donner au Chapitre~\ref{sec:Hartree} la justification de l'approximation de champ moyen pour le fondamental d'un syst\`eme bosonique essentiellement g\'en\'erique. Contrairement aux cas trait\'es au Chapitre~\ref{sec:quant}, le th\'eor\`eme de de Finetti en lui-m\^eme ne suffira pas pour ce cas, et il faudra r\'e-invoquer certains des ingr\'edients du Chapire~\ref{sec:locFock}. 
\item La limite de champ moyen n'est pas la seule ayant un int\'er\^et physique. On \'etudiera au Chapitre~\ref{sec:NLS} un r\'egime o\`u la port\'ee des interactions tend vers $0$ quand $N\to \infty$ (gaz dilu\'e). Dans ce cas on obtient \`a la limite des fonctionnelles d'\'energie de type Gross-Pitaevskii (ou NLS), avec non-lin\'earit\'es locales. Nous pr\'esentons une strat\'egie pour la d\'erivation de ces objets \`a partir du probl\`eme \`a $N$ corps bas\'ee sur les outils du Chapitre~\ref{sec:deF finite dim}.
\end{itemize}

Le corps du texte est compl\'et\'e par deux appendices contenant deux notes non publi\'ees dues \`a Mathieu Lewin et \`a l'auteur. 

\begin{itemize}
\item  L'Appendice~\ref{sec:class quant} montre comment, dans certains cas particuliers, on peut utiliser le th\'eor\`eme de de Finetti classique pour traiter un probl\`eme quantique. Cette strat\'egie est moins naturelle (et moins performante) que celle pr\'esent\'ee aux Chapitres~\ref{sec:quant} et~\ref{sec:Hartree} mais pr\'esente un int\'er\^et conceptuel. 

\item L'Appendice~\ref{sec:large T} d\'evie du cadre trait\'e dans le reste du cours puisque les espaces de Hilbert y seront de dimension finie. On peut obtenir dans ce contexte un th\'eor\`eme de nature semi-classique qui donne des exemples de mesures de de Finetti non rencontr\'ees pr\'ec\'edemment en consid\'erant une limite de grande temp\'erature combin\'ee avec une limite de champ moyen. Ce sera l'occasion d'\'evoquer les in\'egalit\'es de Berezin-Lieb et leur lien avec les consid\'erations du Chapitre~\ref{sec:deF finite dim}.
\end{itemize}

\subsection*{Remerciements}

La motivation pour \'ecrire ces notes est venue de l'opportunit\'e qui m'a \'et\'e donn\'ee de pr\'esenter ces sujets de mani\`ere extensive dans le cadre du cours Peccot du Coll\`ege de France\footnote{On pourra trouver les vid\'eos du cours sur le site du Coll\`ege de France}. Je tiens \`a remercier le public de ces cours pour son int\'er\^et et ses remarques constructives. Je remercie \'egalement le projet Spartacus  (voir le site http://spartacus-idh.com) et en particulier Victor Rabiet pour leur proposition de publier et diffuser ces notes.  

Ce cours doit bien entendu beaucoup \`a mes collaborateurs sur les sujets trait\'es: Mathieu Lewin et Phan Th\`anh Nam pour les aspects quantiques, Sylvia Serfaty et Jakob Yngvason pour les aspects classiques. Des \'echanges avec plusieurs coll\`egues, en particulier Zied Ammari, Patrick G\'erard, Isaac Kim, Jan-Philip Solovej, J\"urg Fr\"ohlich, Elliott Lieb et Eric Carlen, ont \'egalement \'et\'e tr\`es utiles. Le soutien financier de l'ANR (Projet Mathosaq ANR-13-JS01-0005-01) est \'egalement \`a mentionner.

\newpage

\section{\textbf{Introduction: Formalisme et Position des Probl\`emes }}\label{sec:intro}

Nous passons maintenant \`a la description des math\'ematiques qui vont nous occuper dans le reste du cours. L'objet principal de notre int\'er\^et est la m\'ecanique quantique $N$ corps, mais les analogies avec certaines questions de m\'ecanique statistique classique est assez instructive pour que nous d\'ecrivions \'egalement ce formalisme. Les questions d'unit\'es et de dimension des quantit\'es sont ignor\'ees syst\`ematiquement pour all\'eger les notations.

\subsection{Formalisme de la m\'ecanique statistique et approximation de champ moyen}\label{sec:forma class}

Pour des raisons p\'edagogiques, nous rappellerons quelques notions sur les limites de champ moyen en m\'ecanique classique avant d'aborder les aspects quantiques, li\'es \`a la condensation de Bose-Einstein. Ce paragraphe a pour but de fixer les notations et rappeller quelques concepts de base sur la m\'ecanique statistique classique. On se limitera \`a la description des \'etats d'\'equilibre d'un syst\`eme classique, les aspects dynamiques \'etant volontairement ignor\'es (on pourra \`a ce propos consulter par exemple~\cite{Golse-13}).

\subsubsection*{Espace des phases.} L'\'etat d'une particule classique est enti\`erement d\'etermin\'e par la donn\'ee de sa position $x$ et de sa vitesse $v$ (ou de mani\`ere \'equivalente son moment, $p$).  Pour une particule vivant dans un domaine $\Om\subset \R ^d$ on travaille donc dans l'espace des phases $\Om \times \R ^d$, ensemble des positions et vitesses possibles. Pour un syst\`eme comprenant $N$ particules on travaillera dans $\Om ^N \times \R ^{dN}.$

\subsubsection*{Etats purs.} On appelle \'etat pur l'\'etat d'un syst\`eme o\`u les positions et moments de toutes les particules sont connus avec certitude. Les \'etats d'\'equilibre \`a temp\'erature nulle sont par exemple des \'etats purs: en m\'ecanique statistique classique, l'incertitude sur l'\'etat d'\'equilibre d'un syst\`eme n'est d\^u qu'au ``bruit thermique''. 

Pour un syst\`eme de $N$ particules, un \'etat pur correspond \`a un point 
$$(X;P) = (x_1,\ldots,x_N;p_1,\ldots,p_N) \in \Om ^N \times \R ^{dN}$$
de l'espace des phases, o\`u le couple $(x_i;p_i)$ donne la position et le moment de la particule $i$. Dans la perspective de l'introduction des \'etats mixtes au paragraphe suivant, on identifiera un \'etat pur avec une combinaison de masses de Dirac:
\begin{equation}\label{eq:def etat pur class}
\mu_{X;P} = \sum_{\sigma \in \Sigma_N} \delta_{X_{\sigma};P_{\sigma}}. 
\end{equation}
L'\'equation pr\'ec\'edente tient compte du fait que les particules sont en r\'ealit\'e \emph{indiscernables}. On ne peut donc stricto sensu pas attribuer le couple $(x_i,p_i)$ de position/moment \`a une des $N$ particules en particulier, d'o\`u la somme dans~\eqref{eq:def etat pur class} sur le groupe des permutations de $N$ \'el\'ements $\Sigma_N$. La notation adopt\'ee est
\begin{align}\label{eq:convention permutation}
X_{\sigma} &= (x_{\sigma(1)},\ldots,x_{\sigma(N)}) \nonumber
\\P_{\sigma} &= (p_{\sigma(1)},\ldots,p_{\sigma(N)})&
\end{align}
et dire que le syst\`eme est dans l'\'etat $\mu_{X;P}$ signifie que une des particules a la position $(x_i,p_i),i=1 \ldots N$, sans qu'on puisse pr\'eciser laquelle en raison de l'indiscernabilit\'e.

\subsubsection*{Etats mixtes.} En pr\'esence d'une temp\'erature non nulle, c'est-\`a-dire d'un certain bruit thermique, on ne peut d\'eterminer avec certitude l'\'etat du syt\`eme. On cherche en fait une superposition statistique d'\'etats purs, correspondant \`a sp\'ecifier la probabilit\'e que le syst\`eme soit dans un certain \'etat pur. On parle alors d'\'etats mixtes, qui sont les combinaisons convexes d'\'etats purs vus comme des masses de Dirac comme en~\eqref{eq:def etat pur class}. L'ensemble des combinaisons convexes d'\'etats purs correspond bien s\^ur \`a l'ensemble des mesures de probabilit\'e sym\'etriques sur l'espace des phases. Un \'etat mixte g\'en\'eral pour $N$ particules est donc une mesure de probabilit\'e $\mubf_N \in \PP_s (\Om ^{N}\times \R ^{dN})$ satisfaisant 
\begin{equation}\label{eq:class sym}
d\mubf_N (X;P) = d\mubf_N (X_\sigma;P_\sigma)
\end{equation}
pour toute permutation $\sigma\in \Sigma_N$. On interpr\`ete $\mubf_N (X;P)$ comme la densit\'e de probabilit\'e que la particules $i$ ait la position $x_i$ et le moment $p_i$. Les \'etats purs~\eqref{eq:def etat pur class} sont bien s\^ur des \'etats mixtes particuliers o\`u  l'incertitude statistique se r\'eduit \`a z\'ero (\`a l'indiscernabilit\'e pr\`es).

\subsubsection*{Energie libre.} On sp\'ecifie l'\'energie d'un syst\`eme classique en se donnant un Hamiltonien, une fonction sur l'espace des phases. En m\'ecanique non relativiste, l'\'energie cin\'etique d'une particule de moment $p$ est toujours $m|p|^2 / 2$, et en prenant $m=1$ pour simplifier les notations on consid\'erera une \'energie de forme  
\begin{equation}\label{eq:intro class hamil 1}
H_N (X;P) := \sum_{j=1} ^N \frac{|p_j| ^2}{2} + \sum_{j=1} ^N V(x_j) + \lambda \sum_{1\leq i<j\leq N} w (x_i-x_j) 
\end{equation}
o\`u $V$ repr\'esente un potentiel ext\'erieur (par exemple \'electrostatique) o\`u les particules sont plong\'ees et $w$ un potentiel d'interaction de paire que l'on supposera sym\'etrique
$$ w (-x) = w (x).$$ 
Le param\`etre r\'eel $\lambda$ donne la force des interactions entre particules. On pourrait bien s\^ur ajouter des interactions \`a trois corps, quatre corps, etc ... mais il est rare que la mod\'elisation le n\'ecessite, et lorsque c'est le cas il n'y a pas de difficult\'e conceptuelle suppl\'ementaire.  

L'\'energie d'un \'etat mixte $\mubf_N \in \PP_s (\Om ^{N}\times \R ^{dN})$ est alors donn\'ee par 
\begin{equation}\label{eq:def ener func class}
\E [\mubf_N] := \int_{\Om ^N \times \R ^{dN}} H_N (X;P) d\mubf_N (X;P) 
\end{equation}
ce qui se r\'eduit par sym\'etrie du Hamiltonien \`a $H_N (X;P)$ pour un \'etat pur de la forme~\eqref{eq:def etat pur class}. A temp\'erature nulle, les \'etats d'\'equilibre du syst\`eme sont donn\'es par la minimisation de la fonctionnelle d'\'energie~\eqref{eq:def ener func class}:
\begin{equation}\label{eq:def ener class}
E (N) = \inf\left\{  \E [\mubf_N], \mubf_N \in \PP_s (\Om ^{N}\times \R ^{dN})\right\rbrace 
\end{equation}
et l'infimum (l'\'energie fondamentale) est bien s\^ur \'egal au minimum du Hamiltonien $H_N$ et atteint par un \'etat pur de la forme~\eqref{eq:def etat pur class} o\`u $(X;P)$ est un point de minimum de $H_N$ (en particulier $P= (0,\ldots,0)$).

En pr\'esence d'agitation thermique, il convient de prendre en compte l'entropie 
\begin{equation}\label{eq:def class entr}
S [\mubf_N] := - \int_{\Om ^N \times \R ^{dN}} d \mubf_N (X;P) \log (\mubf_N (X;P))  
\end{equation}
qui est une mesure du degr\'e d'incertitude sur l'\'etat du syst\`eme. Notons par exemple que les \'etats purs ont l'entropie la plus basse possible, $S [\mubf_N] = -\infty$ si $\mubf_N$ est de la forme~\eqref{eq:def etat pur class}. A temp\'erature $T$, l'\'etat du syst\`eme est donn\'e par la minimisation de la fonctionnelle d'\'energie libre 
\begin{align}\label{eq:def free ener func class}
\F [\mubf_N] &= \E [\mubf_N] - T S [\mubf_N]\nonumber 
\\&= \int_{\Om ^N \times \R ^{dN}} H_N (X;P) d\mubf_N (X;P) + T \int_{\Om ^N \times \R ^{dN}} d \mubf_N (X;P) \log (\mubf_N (X;P))
\end{align}
ce qui revient \`a dire que les \'etats les plus probables pour le syst\`eme doivent trouver un \'equilibre entre avoir une faible \'energie et une forte entropie, c'est-\`a-dire un fort d\'esordre. On notera
\begin{equation}\label{eq:def free ener class}
F (N) = \inf\left\{  \F [\mubf_N], \mubf_N \in \PP_s (\Om ^{N}\times \R ^{dN})\right\rbrace
\end{equation}
sans sp\'ecifier la d\'ependance en temp\'erature. Un minimiseur se doit alors d'\^etre une mesure de probabilit\'e relativement r\'eguli\`ere de mani\`ere \`a ce que (moins) l'entropie soit finie.

\subsubsection*{Minimisation en vitesse.} En l'absence d'une relation impos\'ee entre la distibution en position et en moment d'un \'etat classique, la minimisation des fonctionnelles pr\'ec\'edentes vis \`a vis des variables de vitesse est en fait triviale. Un \'etat minimisant~\eqref{eq:def ener func class} est toujours de la forme 
$$ \mubf_N = \delta_{P=0} \otimes \sum_{\sigma \in \Sigma_N} \delta_{X=X ^0 _\sigma}$$
o\`u $X^0$ est un point de minimum de $H_N(X;0,\ldots,0)$, i.e. les particules sont toutes immobiles. Il s'agit donc de chercher les points de minimum en $X$ de $H_N (X;0,\ldots,0)$.

La minimisation de~\eqref{eq:def free ener func class} donne lieu \`a une gaussienne en vitesse multipli\'ee par un \'etat de Gibbs en les variables de position seules\footnote{Les \emph{fonctions de partition $Z_P$ et $Z_N$ normalisent l'\'etat dans $L ^1$.}}
$$ \mubf_N = \frac{1}{Z_P} \exp \left(-\frac{1}{2T} \sum_{j=1} ^N |p_j| ^2\right) \otimes \frac{1}{Z_N} \exp \left(-\frac{1}{T} H_N (X;0,\ldots, 0)\right).$$
Les variables de vitesse n'interviennent donc plus dans la minimisation des fonctionnelles donnant les \'etats d'\'equilibre et on les ignorera totalement dans la suite de ce cours. On continuera \`a utiliser les notations ci-dessus pour la minimisation en position:
\begin{align}\label{eq:intro class hamil 2}
H_N (X) &= \sum_{j=1} ^N V(x_j) + \lambda \sum_{1\leq i<j\leq N} w (x_i-x_j) \nonumber\\
\E [\mubf_N] &= \int_{\Om ^N } H_N (X) d\mubf_N (X) \nonumber\\
\F [\mubf_N] &= \int_{\Om ^N } H_N (X) d\mubf_N (X) + T \int_{\Om ^N } d \mubf_N (X) \log (\mubf_N (X))
\end{align}
o\`u $\mubf_N \in \PP_s (\Om ^N)$ est une probabilit\'e sym\'etrique des variables de position seulement.

\subsubsection*{Marginales, densit\'es r\'eduites.} Etant donn\'e un \'etat mixte \`a $N$ particules, il est tr\`es utile de consid\'erer ses marginales, ou densit\'es r\'eduites, obtenues en int\'egrant certaines variables:
\begin{equation}\label{eq:defi marginale}
\mubf_N ^{(n)} (x_1,\ldots,x_n) = \int_{\Om ^{N-n}} \mubf(x_1,\ldots,x_n,x'_{n+1},\ldots,x'_N) dx'_{n+1}\ldots dx'_N \in \PP_s (\Om ^n).
\end{equation}
La $n-$\`eme densit\'e r\'eduite $\mubf_N ^{(n)}$ s'interpr\`ete comme la densit\'e de probabilit\'e d'avoir une particule en $x_1$, une particule en $x_2$, etc... une particule en $x_n$. Vue la sym\'etrie de $\mubf_N$, le choix des $N-n$ variables sur lesquelles on int\`egre dans la definition~\eqref{eq:defi marginale} n'est pas important.

Une premi\`ere utilisation de ces marginales consiste en une r\'e\'ecriture de l'\'energie en utilisant uniquement les deux premi\`eres marginales\footnote{Plus g\'en\'eralement, une \'energie avec un potentiel \`a $n$ corps peut se r\'e\'ecrire en utilisant la $n$-i\`eme  densit\'e r\'eduite.}:
\begin{align}\label{eq:intro ener marginales}
\E [\mubf_N] &= N \int_{\Om} V(x) d\mubf_N ^{(1)} (x) + \lambda \frac{N(N-1)}{2} \iint_{\Om \times \Om} w(x-y) d\mubf_N ^{(2)}(x,y)\nonumber\\
&= \iint_{\Om\times\Om} \left( \frac{N}{2} V (x) + \frac{N}{2} V (y) + \lambda \frac{N(N-1)}{2} w(x-y) \right)d\mubf_N ^{(2)}(x,y),
\end{align}
o\`u on a utilis\'e la sym\'etrie du Hamiltonien.

\subsubsection*{Approximation de champ moyen.} R\'esoudre les probl\`emes de minimisation ci-dessus a en g\'en\'eral un co\^ut prohibitif lorsque le nombre de particules devient grand. Pour obtenir des th\'eories plus simples dont on puisse extraire plus facilement de l'information on a souvent recours \`a des approximations. La plus simple et la plus connue est l'approximation de champ moyen. On peut l'introduire de plusieures fa\c{c}ons, le but \'etant d'obtenir un probl\`eme \`a un corps auto-consistant en partant du probl\`eme \`a $N$ corps pr\'esent\'e ci-dessus. 

Nous prenons ici le point de vue ``chaos mol\'eculaire'' sur la th\'eorie de champ moyen: l'approximation consiste \`a supposer que les particules sont ind\'ependantes et identiquement distribu\'ees (iid). On prend donc un ansatz de la forme 
\begin{equation}\label{eq:intro MF ansatz}
\mubf_N (x_1,\ldots,x_N) = \rho ^{\otimes N} (x_1,\ldots,x_N) = \prod_{j=1} ^N \rho(x_j) 
\end{equation}
o\`u $\rho \in \PP (\Om)$ est une mesure de probabilit\'e \`a un corps d\'ecrivant le comportement typique d'une des particules iid que l'on consid\`ere. 

Les fonctionnelles d'\'energie et d'\'energie libre de champ moyen s'obtiennent en ins\'erant cet ansatz dans~\eqref{eq:def ener func class} ou~\eqref{eq:def free ener func class}. La fonctionnelle d'\'energie de champ moyen est donc\footnote{MF pour ``Mean-Field''.} 
\begin{equation}\label{eq:intro MF ener func class}
\MFEf [\rho] = N ^{-1} \E [\rho ^{\otimes N}] = \int_{\Om} V(x)d\rho(x) + \lambda \frac{N-1}{2} \iint_{\Om\times \Om} w(x-y) d\rho(x)d\rho(y) 
\end{equation}
dont on notera $\MFEe$ l'infimum parmi les mesures de probabilit\'e. La fonctionnelle d'\'energie libre de champ moyen s'obtient similairement:
\begin{align}\label{eq:intro MF free ener class}
\MFf [\rho] &= N ^{-1} \F [\rho ^{\otimes N}] \nonumber\\
&= \int_{\Om} V(x)d\rho(x) + \lambda \frac{N-1}{2} \iint_{\Om\times \Om} w(x-y) d\rho(x)d\rho(y) + T \int_{\Om} \rho \log \rho 
\end{align}
et on notera $\MFe$ son infimum. Le terme ``champ moyen'' provient du fait que le deuxi\`eme terme de~\eqref{eq:intro MF ener func class} correspond \`a une interaction entre la densit\'e de particules $\rho$ et le potentiel auto-consistant (dont d\'erive le champ moyen) 
$$\rho \ast w = \int_{\Om} w(.-y) d\rho(y).$$

\subsection{Formalisme de la m\'ecanique quantique et condensation de Bose-Einstein}\label{sec:forma quant}

Apr\`es ces rappels de m\'ecanique classique, nous pouvons maintenant introduire les objets quantiques qui sont l'objet principal de ce cours. Nous nous contenterons d'un survol des principes de base de la physique quantique. D'autres pr\'esentations ``mathematician-friendly'' peuvent \^etre trouv\'ees dans~\cite{LieSei-09,Solovej-notes} Notre introduction des concepts n\'ecessaires \`a la suite est par endroits volontairement simplifi\'ee.

\subsubsection*{Fonctions d'onde et \'en\'ergie cin\'etique quantique.} Un des postulats de base de la m\'ecanique quantique est l'identification des \'etats purs d'un syst\`eme avec les vecteurs normalis\'es d'un espace de Hilbert complexe $\gH$. Pour des particules vivant dans l'espace de configuration $\R ^d$, l'espace de Hilbert appropri\'e pour une particules est $L^2 (\R ^d)$, l'espace des fonctions complexes de carr\'e int\'egrable sur $\R ^d$. 

Etant donn\'ee une particule dans l'\'etat $\psi \in L ^2 (\R ^d)$, on identifie $|\psi| ^2$ avec la densit\'e de probabilit\'e de pr\'esence: $|\psi (x)| ^2$ repr\'esente la probabilit\'e que la particule soit en $x$. On impose donc la normalisation
$$ \int_{\R ^d} |\psi| ^2 =1. $$
On voit que m\^eme dans le cas d'un \'etat pur, on ne peut pas sp\'ecifier avec certitude la position de la particule. Plus pr\'ecis\'ement \emph{on ne peut pas sp\'ecifier simultan\'ement sa position et sa vitesse.} Ce \emph{principe d'incertitude} est une cons\'equence directe d'un autre postulat fondamental: $|\hat{\psi}| ^2$ donne la densit\'e de probabilit\'e en vitesse de la particule, o\`u $\hat{\psi}$ est la transform\'ee de Fourier de $\psi$. En m\'ecanique quantique l'\'energie cin\'etique d'une particule est donc donn\'ee par 
\begin{equation}\label{eq:ener cin quant}
\int_{\R ^d } \frac{|p| ^2}{2} |\hat{\psi} (p)| ^2 dp = \int_{\R ^d } \frac{1}{2} |\nabla \psi (x)| ^2 dx.  
\end{equation}
Le fait que la position et la vitesse d'une particule ne soient pas sp\'ecifiables simultan\'ement vient du fait qu'il est impossible qu'\`a la fois $|\psi| ^2$ et $|\hat{\psi}| ^2$ convergent vers une masse de Dirac. Une fa\c{c}on populaire de quantifier ce fait est le principe d'incertitude de Heisenberg: pour tout $x_0\in \R ^d$ 
$$ \left( \int_{\R ^d }|\nabla \psi (x)| ^2 dx \right)\left( \int_{\R ^d } |x-x_0| ^2 |\psi (x)| ^2 dx \right) \geq C.$$
En effet, plus la position de la particule est connue avec pr\'ecision plus le second terme du membre de gauche est petit (pour un certain $x_0$). Le premier terme du membre de gauche doit alors \^etre tr\`es grand, ce qui au vu de~\eqref{eq:ener cin quant} est incompatible avec le fait que la densit\'e de vitesse soit concentr\'ee autour d'un certain $p_0 \in \R ^d$.

Pour de nombreuses applications (voir~\cite{Lieb-76} pour une discussion de ce point), cette in\'egalit\'e est en fait insuffisante, et une meilleure mani\`ere de quantifier le principe d'incertitude est donn\'ee par l'in\'egalit\'e de Sobolev (ici dans sa version 3D):
$$\int_{\R ^3 } |\nabla \psi | ^2  \geq C \left( \int_{\R ^3} |\psi| ^6  \right) ^{1/3}.$$ 
Si la position de la particule est connue avec pr\'ecision, $|\psi| ^2$ doit approcher une masse de Dirac, auquel cas le membre de droite explose, et donc \'egalement les int\'egrales~\eqref{eq:ener cin quant}, avec la m\^eme interpr\'etation que pr\'ec\'edemment.

\subsubsection*{Bosons et Fermions.} Pour un syst\`eme de $N$ particules quantiques dans $\R ^d$, l'espace de Hilbert appropri\'e est $L ^2 (\R ^{dN}) \simeq \bigotimes ^N L ^2 (\R ^d)$. Un \'etat pur du syst\`eme est donc un certain $\Psi \in L ^2 (\R ^{dN})$ et on interpr\`ete $|\Psi (x_1,\ldots,x_N)| ^2$ comme la densit\'e de probabilit\'e pour que la particule $1$ soit en $x_1$, ..., la particule $N$ en $x_N$. Comme en m\'ecanique classique, l'indiscernabilit\'e des particules impose que 
\begin{equation}\label{eq:indis quantique moins}
|\Psi (X)| ^2 = |\Psi (X_{\sigma})| ^2 
\end{equation}
pour toute permutation $\sigma \in \Sigma_N$. Cette condition est n\'ecessaire pour d\'ecrire des particules indiscernables, mais elle n'est en fait pas suffisante, contrairement au cas de la m\'ecanique classique. Pour \'ecrire la bonne condition, on introduit l'op\'erateur unitaire $U_{\sigma}$ qui intervertit les variables suivant $\sigma \in \Sigma_N$:
$$ U_{\sigma} \: u_1 \otimes \ldots \otimes u_N = u_{\sigma (1)}\otimes \ldots \otimes u_{\sigma (N)}$$
pour tout $u_1,\ldots,u_N \in L ^2 (\R ^d)$, \'etendu par lin\'earit\'e \`a $L ^2 (\R ^{dN}) \simeq \bigotimes ^N L ^2 (\R  ^d)$ dont on peut construire une base en utilisant des vecteurs de forme $u_1 \otimes \ldots \otimes u_N$. Pour que $\Psi \in L ^2 (\R ^{dN})$ d\'ecrive des particules indiscernables il faut demander 
\begin{equation}\label{eq:indis quantique}
\left\langle \Psi, A \Psi \right\rangle_{L ^2 (\R ^{dN})} = \left\langle U_{\sigma} \Psi, A U_{\sigma} \Psi \right\rangle_{L ^2 (\R ^{dN})} 
\end{equation}
pour tout op\'erateur born\'e $A$ agissant sur $L ^2 (\R ^{dN})$. Sans rentrer dans des d\'etails qui nous emm\`eneraient trop loin, la condition~\eqref{eq:indis quantique} correspond \`a demander que toute mesure (correspondant \`a une observable $A$) effectu\'ee sur le syst\`eme soit ind\'ependante de la num\'erotation des particules. En m\'ecanique classique, les mesures possibles correspondent toutes \`a des fonctions born\'ees sur l'espace des phases, et donc~\eqref{eq:class sym} garantit l'ind\'ependance des observations vis \`a vis de la permutation des particules. En m\'ecanique quantique, les observables sont les op\'erateurs born\'es sur l'espace de Hilbert ambiant, et donc il faut imposer la condition plus forte~\eqref{eq:indis quantique}.

Une cons\'equence importante\footnote{Il s'agit d'un exercice simple mais non trivial.} de la condition de sym\'etrie~\eqref{eq:indis quantique} est qu'un syst\`eme de particules quantiques indiscernables doit satisfaire une des deux conditions suivantes, plus fortes que~\eqref{eq:indis quantique moins}: soit 
\begin{equation}\label{eq:intro bosons}
\Psi (X) = \Psi(X_\sigma) 
\end{equation}
pour tout $X\in \R ^{dN}$ et $\sigma \in \Sigma_N$, soit 
\begin{equation}\label{eq:intro fermions}
\Psi (X) = \epsilon (\sigma) \Psi(X_\sigma) 
\end{equation}
pour tout $X\in \R ^{dN}$ et $\sigma \in \Sigma_N$, o\`u $\eps (\sigma)$ est la signature de la permutation $\sigma$. On parle de bosons pour des particules d\'ecrites par des fonctions d'onde satisfaisant~\eqref{eq:intro bosons} et de fermions pour des particules d\'ecrites par une fonction d'onde satisfaisant~\eqref{eq:intro fermions}. Ces deux types de particules fondamentales ont un comportement tr\`es diff\'erent, on parle de statistique bosonique ou fermionique~\cite{Solovej-notes}. Par exemple, les fermions ob\'eissent au \emph{principe d'exclusion de Pauli} qui stipule que deux fermions ne peuvent occuper simultan\'ement le m\^eme \'etat quantique. On peut d\'ej\`a voir que~\eqref{eq:intro fermions} impose 
$$ \Psi (x_1,\ldots,x_i,\ldots,x_j,\ldots,x_N) =  -\Psi (x_1,\ldots,x_j,\ldots,x_i,\ldots,x_N)$$
et donc (formellement)
$$ \Psi (x_1,\ldots,x_i,\ldots,x_i,\ldots,x_N) = 0$$ 
ce qui implique qu'il est impossible que deux fermions occupent simultan\'ement la position~$x_i$. On pourra consulter~\cite{LieSei-09} pour un expos\'e des in\'egalit\'es de Lieb-Thirring qui sont une des cons\'equences les plus importantes du principe de Pauli.

Concr\`etement, lorsqu'on \'etudie un syst\`eme quantique, il convient de restreindre les \'etats purs accessibles \`a ceux de type bosonique, ou ceux de type fermionique. On travaillera donc 
\begin{itemize}
\item pour des bosons, dans $L^2_s (\R ^{dN}) \simeq \bigotimes_s ^N L ^2 (\R ^d)$, l'espace des fonctions sym\'etriques de carr\'e int\'egrable, identifi\'e avec le produit tensoriel sym\'etrique de $N$ copies de $L ^2 (\R ^d)$. 
\item pour des fermions, dans $L^2_{as} (\R ^{dN}) \simeq \bigotimes_{as} ^N L ^2 (\R ^d)$, l'espace des fonctions antisym\'etriques de carr\'e int\'egrable, identifi\'e avec le produit tensoriel antisym\'etrique de $N$ copies de $L ^2 (\R ^d)$.
\end{itemize}
Comme son nom l'indique, la condensation de Bose-Einstein ne peut se produire que dans un syst\`eme bosonique, et ce cours sera donc focalis\'e sur le premier cas.

\subsubsection*{Matrices de densit\'e, Etats mixtes.} On identifiera toujours un \'etat pur $\Psi \in L ^2 (\R ^{dN})$ avec la matrice de densit\'e correspondante, c'est-\`a-dire le projecteur orthogonal sur $\Psi$, not\'e $\ketl \Psi \ketr \bral \Psi \brar$. De m\^eme qu'en m\'ecanique classique, les \'etats mixtes du syst\`eme sont par d\'efinition les superpositions statistiques d'\'etats purs, c'est-\`a-dire ici les comibnaisons convexes de projecteurs orthogonaux. En utilisant le th\'eor\`eme spectral, il est clair que l'ensemble des \'etats mixtes correspond \`a l'ensemble des op\'erateurs positifs de trace $1$:
\begin{equation}\label{eq:intro etats mixtes}
\cS (L ^2 (\R ^{dN})) = \left\lbrace \Gamma \in \gS ^1 (L ^2 (\R ^{dN})), \Gamma = \Gamma ^*, \Gamma \geq 0, \tr \Gamma = 1 \right\rbrace
\end{equation}
o\`u $\gS ^1 (\gH)$ est la classe de Schatten~\cite{ReeSim4,Simon-79} des op\'erateurs \`a trace sur un espace de Hilbert $\gH$.  Pour obtenir les \'etats mixtes bosoniques (respectivement fermioniques) on consid\`ere respectivement
$$ \cS (L _{s/as} ^2 (\R ^{dN})) = \left\lbrace \Gamma \in \gS ^1 (L_{s/as} ^2 (\R ^{dN})), \Gamma = \Gamma ^*, \Gamma \geq 0, \tr \Gamma = 1 \right\rbrace.$$
Notons que dans le langage des matrices densit\'e, la sym\'etrie bosonique correspond \`a demander 
\begin{equation}\label{eq:intro bos sym mat} 
U_{\sigma} \Gamma = \Gamma 
\end{equation}
alors que la sym\'etrie fermionique correspond \`a 
$$ U_{\sigma} \Gamma = \eps (\sigma) \Gamma.$$
On consid\`ere parfois la notion de sym\'etrie plus faible
\begin{equation}\label{eq:quant sym gen}
 U_{\sigma} \Gamma U_{\sigma} ^* = \Gamma 
\end{equation}
qui est satisfaite par exemple par la superposition d'un \'etat bosonique et d'un \'etat fermionique.

\subsubsection*{Fonctionnelles d'\'energie.} La fonctionnelle d'\'energie quantique correspondant au Hamiltonien classique non relativiste~\eqref{eq:intro class hamil 1} s'obtient en op\'erant la substitution 
\begin{equation}\label{eq:quant mom}
p \leftrightarrow -i\nabla, 
\end{equation}
inspir\'ee de l'identification~\eqref{eq:ener cin quant} pour l'\'energie cin\'etique d'un \'etat quantique. Le Hamiltonien quantifi\'e devient maintenant un op\'erateur sur $L ^2 (\R ^{dN})$:
\begin{equation}\label{eq:intro quant hamil}
H_N = \sum_{j=1} ^N \left( - \frac12 \Delta_j + V (x_j) \right) + \lambda \sum_{1\leq i<j \leq N} w(x_i - x_j)  
\end{equation}
o\`u $-\Delta_j = (-i \nabla_j) ^2$ correspond au Laplacien dans la variable $x_j \in \R ^d$. L'\'energie correspondante est pour un \'etat pur $\Psi \in L ^2 (\R ^{dN})$
\begin{equation}\label{eq:intro quant ener pur}
\E [\Psi] = \langle \Psi, H_N \Psi \rangle_{L ^2 (\R ^{dN})}  
\end{equation}
ce qui se g\'en\'eralise par lin\'earit\'e au cas d'un \'etat mixte $\Gamma \in \cS (L ^2 (\R ^{dN}))$ en 
\begin{equation}\label{eq:intro quant ener mixte}
\E [\Gamma] = \tr_{L ^2 (\R ^{dN})} [ H_N \Gamma ].   
\end{equation}
A temp\'erature nulle, l'\'etat d'\'equilibre du syst\`eme s'obtient en minimisant la fonctionnelle d'\'energie ci-dessus. Vus la lin\'earit\'e de~\eqref{eq:intro quant ener mixte} en fonction de $\Gamma$ et le th\'eor\`eme spectral, il est clair qu'on peut restreindre la minimisation aux \'etats purs:
\begin{align}\label{eq:def ener quant}
E_{s/as}(N) = \inf\left\{ \E [\Gamma], \Gamma \in \cS (L ^2 _{s/as} (\R ^{dN})) \right\}\nonumber\\
&= \inf\left\{ \E [\Psi], \Psi \in L_{s/as} ^2 (\R ^{dN}), \norm{\Psi}_{L ^2 (\R ^{dN})} = 1 \right\}. 
\end{align}
Ici on utilise \`a nouveau l'indice $s$ (respectivement $as$) pour d\'esigner l'\'energie bosonique (respectivement fermionique). En l'absence d'indice il est entendu que la minimisation s'effectue sans contrainte de sym\'etrie. Vue la sym\'etrie du Hamiltonien, on pourra toutefois minimiser parmi les \'etats mixtes satisfaisant~\eqref{eq:quant sym gen}, ou parmi les fonctions d'onde satisfaisant~\eqref{eq:indis quantique moins}. 

En pr\'esence d'agitation thermique, il convient de rajouter \`a l'\'energie un terme incluant l'\'entropie de Von-Neumann:
\begin{equation}\label{eq:Von Neu entropie}
S[\Gamma] = - \tr_{L ^2 (\R ^{dN})} [\Gamma \log \Gamma] = - \sum_{j} a_j \log a_j 
\end{equation}
o\`u les $a_j$ sont les valeurs propres (r\'eelles positives) de $\Gamma$ dont l'existence est garantie par le th\'eor\`eme spectral. Similairement \`a l'entropie classique~\eqref{eq:def class entr}, l'entropie de Von-Neumann est minimis\'ee ($S[\Gamma] = 0$) par les \'etats purs, i.e. les projecteurs, qui n'ont bien s\^ur qu'une valeur propre non nulle, \'egale \`a $1$.

La fonctionnelle d'\'energie libre quantique \`a temp\'erature $T$ est alors 
\begin{equation}\label{eq:def free ener func quant}
\F [\Gamma] = \E [\Gamma] - T S [\Gamma] 
\end{equation}
et un minimiseur est en g\'en\'eral un \'etat mixte.

\subsubsection*{Autres formes pour l'\'energie cin\'etique: champs magn\'etiques et effets relativistes.} Nous n'avons introduit dans le cadre classique que la forme la plus simple possible pour l'\'energie cin\'etique d'une particule. La raison en est que pour les probl\`emes de minimisation introduits, l'\'energie cin\'etique ne joue aucun r\^ole. D'autres choix de relation entre $p$ et l'\'energie cin\'etique\footnote{D'autres \emph{relations de dispersion.}} sont toutefois possibles et ce choix s'av\`ere d\'eterminant en m\'ecanique quantique o\`u la minimisation en vitesse est non triviale. Via la relation~\eqref{eq:quant mom} on voit que dans ce contexte, les diff\'erents choix possibles changeront les espaces fonctionnels \`a utiliser. 

Outre l'\'energie cin\'etique non relativiste introduite ci-dessus, au moins deux g\'en\'eralisations sont physiquement int\'eressantes:
\begin{itemize}
\item En pr\'esence d'un champ magn\'etique $B:\R ^d \mapsto \R $ interagissant avec des particules charg\'ees, on remplace 
\begin{equation}\label{eq:quant mom magn}
p \leftrightarrow -i\nabla + A 
\end{equation}
o\`u $A:\R ^d \mapsto \R ^d$ est le \emph{potentiel vecteur}\footnote{$B$ ne d\'etermine $A$ qu'\`a un gradient pr\`es. Le choix de $A$ s'appelle \emph{choix de jauge}.}, satisfaisant 
$$ B = \mathrm{curl}\: A.$$
L'op\'erateur d'\'energie cin\'etique, prenant en compte la force de Lorentz, devient dans ce cas 
$$(p+A) ^2 = \left( -i\nabla + A \right) ^2 = -\left( \nabla - i A\right) ^2$$
appell\'e \emph{Laplacien magn\'etique}. 

Ce formalisme est \'egalement adapt\'e au cas de particules dans un rep\`ere en rotation: en nommant $x_3$ l'axe de rotation il faut alors prendre $A = \Om (-x_2,x_1,0)$ avec $\Om$ la vitesse de rotation, ce qui correspond \`a la prise en compte de la force de Coriolis. Dans ce cas, il faut aussi remplacer le potentiel $V(x)$ par $V(x) - \Om ^2 |x| ^2$ pour prendre en compte la force centrifuge. 

\smallskip

\item Lorsqu'on d\'esire prendre en compte des effets relativistes, la relation de dispersion  devient 
$$ \mbox{ Energie cin\'etique}  = c\sqrt{p ^2 + m ^2 c^2}  - m c ^2$$
avec $m$ la masse et $c$ la vitesse de la lumi\`ere dans le vide. En choisissant les unit\'es pour que $c=1$ et en rappelant~\eqref{eq:quant mom} on est donc amen\'e \`a consid\'erer l'op\'erateur d'\'energie cin\'etique 
\begin{equation}\label{eq:rel kin ener}
\sqrt{p ^2 + m ^2} - m = \sqrt{-\Delta + m ^2} - m 
\end{equation}
qui se d\'efinit ais\'ement en variables de Fourier par exemple. Dans la limite non relativiste o\`u $|p|\ll m = mc $ on retrouve formellement l'op\'erateur $-\Delta$ au premier ordre. Une certaine caricature de cet op\'erateur est parfois utilis\'ee, correspondant au cas ``relativiste extr\^eme'' $|p|\gg mc$ o\`u on prend comme op\'erateur d'\'energie cin\'etique
\begin{equation}\label{eq:rel kin ener ext}
\sqrt{p ^2} = \sqrt{-\Delta}.
\end{equation}
\item On peut bien s\^ur combiner les deux g\'en\'eralisations pour consid\'erer le cas de particules relativistes dans un champ magn\'etique en utilisant les op\'erateurs 
$$\sqrt{(-i\nabla + A) ^2 + m ^2 } - m \mbox{ et } \left|-i\nabla +A\right|$$
bas\'es sur la relation de dispersion relativiste et la correspondance~\eqref{eq:quant mom magn}.
\end{itemize}

\subsubsection*{Matrices de densit\'e r\'eduites.} De m\^eme qu'on a introduit pr\'ec\'edemment les densit\'es r\'eduites d'un \'etat classique, il sera tr\`es utile de disposer du concept correspondant en m\'ecanique quantique. Etant donn\'e un \'etat mixte $\Gamma \in \cS (L ^2 (\R ^{dN}))$ on d\'efinit sa $n$-i\`eme matrice de densit\'e r\'eduite en prenant une trace partielle sur les $N-n$ derni\`eres particules
\begin{equation}\label{eq:intro def mat red}
\Gamma ^{(n)}= \tr_{n+1 \to N} \Gamma, 
\end{equation}
ce qui veut pr\'ecis\'ement dire que pour tout op\'erateur born\'e $A_n$ sur $L ^2 (\R ^{dn})$ 
$$ \tr_{L ^2 (\R ^{dn})} [A_n \Gamma ^{(n)}] := \tr_{L ^2 (\R ^{dN})} [A_n \otimes \one^{\otimes N-n} \Gamma] $$
avec $\one$ l'identit\'e sur $L ^2 (\R ^d)$. La d\'efinition ci-dessus se g\'en\'eralise facilement au cas d'un espace de Hilbert diff\'erent de $L^2$, mais le lecteur est peut-\^etre plus familier avec la d\'efinition suivante, \'equivalente dans le cas o\`u on travaille sur $L ^2$. On identifie $\Gamma$ avec son noyau, c'est-\`a-dire la fonction $\Gamma (X;Y)$ telle que pour tout $\Psi \in L ^2 (\R ^{dN})$
$$ \Gamma\, \Psi = \int_{\R ^{dN}} \Gamma (X;Y) \Psi (Y) dY$$ 
et on peut alors \'egalement identifier $\Gamma ^{(n)}$ avec son noyau
\begin{multline*}
 \Gamma ^{(n)} (x_1,\ldots,x_n;y_1,\ldots,y_n) \\= \int_{\R ^{d(N-n)}} \Gamma (x_1,\ldots,x_n,z_{n+1},\ldots,z_{N};y_1,\ldots,y_n,z_{n+1},\ldots,z_{N})dz_{n+1}\ldots dz_N. 
\end{multline*}
Dans le cas o\`u l'\'etat $\Gamma$ a une sym\'etrie, par exemple bosonique ou fermionique, le choix des variables sur lesquelles on prend la trace partielle est arbitraire. Notons que m\^eme si on part d'un \'etat pur, les matrices de densit\'e r\'eduites sont en g\'en\'eral des \'etats mixtes. 

Similairement \`a~\eqref{eq:intro ener marginales} on peut r\'e\'ecrire l'\'energie~\eqref{eq:intro quant ener mixte} sous la forme 
\begin{align}\label{eq:quant ener red mat}
\E [\Gamma] &= N \tr_{L ^2 (\R ^d)} \left[\left( - \half \Delta + V \right) \Gamma ^{(1)} \right] + \lambda \frac{N(N-1)}{2} \tr_{L ^2 (\R ^{2d})} [w (x-y) \Gamma ^{(2)}]\nonumber \\
&= \tr_{L ^2 (\R ^{2d})} \left[ \left(\frac{N}{2} \left( - \half \Delta_x + V (x) - \half \Delta_y + V (y) \right) + \lambda \frac{N(N-1)}{2} w (x-y) \right) \Gamma ^{(2)}\right].  
\end{align}

\subsubsection*{Approximation de champ moyen.} Encore plus qu'en m\'ecanique classique, r\'esoudre en pratique le probl\`eme de minimisation~\eqref{eq:def ener quant} pour $N$ grand est bien trop co\^uteux, et il est souvent n\'ecessaire d'avoir recours \`a des approximations. Ce cours a pour but d'\'etudier la plus simple de ces approximations, qui consiste \`a imiter~\eqref{eq:intro MF ansatz} en prenant un ansatz d\'ecrivant des particules iid
\begin{equation}\label{eq:intro quant ansatz}
\Psi (x_1,\ldots,x_N) = u ^{\otimes N} (x_1,\ldots, x_N) = u(x_1)\ldots u(x_N) 
\end{equation}
pour un certain $u\in L ^2 (\R ^d).$ En ins\'erant cette forme dans la fonctionnelle d'\'energie~\eqref{eq:quant ener red mat} on obtient la fonctionnelle de Hartree 
\begin{align}\label{eq:intro def Hartree func}
\EH [u] &= N ^{-1} \E[ u ^{\otimes N}] \nonumber \\
&= \int_{\R ^d } \left( \frac{1}{2}|\nabla u| ^2 + V |u| ^2 \right) + \lambda\frac{N-1}{2} \iint_{\R ^d \times \R ^d} |u(x)| ^2 w(x-y) |u(y)| ^2 
\end{align}
et le probl\`eme de minimisation correspondant est 
\begin{equation}\label{eq:intro def Hartree ener}
\eH = \inf\left\{ \EH [u], \norm{u}_{L ^2 (\R ^d)} = 1 \right\}. 
\end{equation}
On notera que l'on passe ainsi d'un probl\`eme lin\'eaire en la fonction d'onde \`a $N$ corps (puisque la fonctionnelle d'\'energie \`a $N$ corps en quadratique en la fonction d'onde, l'\'equation variationnelle correspondante est lin\'eaire) \`a un probl\`eme cubique en la fonction d'onde \`a un corps $u$ (puisque la fonctionnelle de Hartree est quartique en la fonction d'onde, l'\'equation variationnelle correspondante est cubique). 

L'ansatz~\eqref{eq:intro quant ansatz} est une fonction d'onde sym\'etrique, qui convient pour des bosons. Elle correspond \`a chercher le fondamental sous la forme d'un condensat de Bose-Einstein o\`u toutes les particules sont dans l'\'etat $u$. A cause du principe de Pauli, des fermions ne peuvent en fait jamais \^etre compl\`etement d\'ecorr\'el\'es au sens de la forme~\eqref{eq:intro quant ansatz}. L'ansatz de champ moyen pour des fermions consiste \`a prendre 
$$ \Psi (x_1,\ldots,x_N) = \det \left(u_j (x_k)\right)_{1\leq j,k\leq N}$$
avec $u_1,\ldots,u_N$ des fonctions orthonormales (orbitales du syt\`eme) dans $L^2 (\R^{d})$. Cet ansatz m\`ene \`a la fonctionnelle de Hartree-Fock dont on ne parlera pas dans ces notes (une pr\'esentation dans le m\^eme esprit et des r\'ef\'erences se trouvent dans~\cite{Rougerie-XEDP}). 

Remarquons que dans l'approximation de champ moyen~\eqref{eq:intro MF ansatz} pour des particules classiques, on autorise un \'etat mixte $\rho$. Pour d\'ecrire un syst\`eme bosonique, on prend toujours un \'etat pur $u$ dans l'ansatz~\eqref{eq:intro quant ansatz}, ce qui appelle les commentaires suivants:
\begin{itemize}
\item Si on prend un \'etat g\'en\'eral $\gamma \in \cS (L ^2 (\R ^d))$, l'\'etat \`a $N$ corps d\'efini comme 
\begin{equation}\label{eq:ansatz MF sans sym}
 \Gamma = \gamma ^{\otimes N} 
\end{equation}
a la sym\'etrie bosonique~\eqref{eq:intro bos sym mat} si et seulement si $\gamma$ est un \'etat pur (voir par exemple~\cite{HudMoo-75}), $\gamma = | u \rangle \langle u |$, auquel cas $\Gamma = | u ^{\otimes N}\rangle \langle u  ^{\otimes N}|$ et on revient \`a l'ansatz~\eqref{eq:intro quant ansatz}.

\item Dans le cas de la fonctionnelle d'\'energie~\eqref{eq:quant ener red mat}, le probl\`eme de minimisation avec sym\'etrie bosonique impos\'ee et le probl\`eme sans sym\'etrie co\"incident (cf par exemple~\cite[Chapitre 3]{LieSei-09}). On peut donc minimiser sans contrainte, et on retombera sur l'\'energie bosonique. Ceci reste vrai dans le cas o\`u l'\'energie cin\'etique est~\eqref{eq:rel kin ener} ou~\eqref{eq:rel kin ener ext} mais est notoirement faux en pr\'esence d'un champ magn\'etique. 

\item Pour certaines fonctionnelles d'\'energie, par exemple en pr\'esence d'un champ magn\'etique ou de rotation, le minimum sans sym\'etrie est strictement inf\'erieur au minimiseur avec sym\'etrie, cf~\cite{Seiringer-03}. L'ansatz~\eqref{eq:ansatz MF sans sym} est alors appropri\'e pour approximer le minimum du probl\`eme sans sym\'etrie (au vu de la sym\'etrie du Hamiltonien, on peut toujours supposer la sym\'etrie~\eqref{eq:quant sym gen}) et on obtient alors une fonctionnelle de Hartree g\'en\'eralis\'ee pour des \'etats mixtes \`a un corps $\gamma \in \cS (L ^2 (\R ^d))$
\begin{equation}\label{eq:intro Hartree func gen}
\EH [\gamma] = \tr _{L ^2 (\R^d )} \left[\left( -\half \Delta + V \right)\gamma \right] + \lambda \frac{N-1}{2} \tr_{L ^2 (\R ^{2d})} \left[ w(x-y) \gamma ^{\otimes 2}\right].
\end{equation}
\end{itemize}

\vspace{0.5cm}

Les deux sections suivantes introduisent, respectivement dans le cas classique et le cas quantique, la question qui est l'objet principal de ces notes:
\medskip
\begin{center}
\emph{Peut-on justifier dans une certaine limite la validit\'e des ansatz de champ moyen~\eqref{eq:intro MF ansatz} et~\eqref{eq:intro quant ansatz} pour la d\'etermination des \'etats d'\'equilibre d'un syt\`eme de particules indiscernables ?} 
\end{center}

\subsection{Champ moyen et th\'eor\`eme de de Finetti classique}\label{sec:MF deF class}\mbox{}\\\vspace{-0.4cm}

Est-il l\'egitime de proc\'eder \`a la simplification~\eqref{eq:intro MF ansatz} pour d\'eterminer les \'etats d'\'equilibre d'un syst\`eme classique ? L'exp\'erience montre que c'est le cas lorsque les particules sont en grand nombre, ce qu'on mat\'erialise math\'ematiquement par l'\'etude de la limite $N\to \infty$ des probl\`emes qui nous occupent. 

Un cadre simple o\`u on peut justifier la validit\'e de l'\emph{approximation} de champ moyen est la \emph{limite} de champ moyen o\`u l'on suppose que tous les termes de l'\'energie~\eqref{eq:def ener func class} p\`esent le m\^eme poids. Au vu de~\eqref{eq:intro ener marginales}, cette condition est remplie si $\lambda$ est d'ordre $N ^{-1}$, par exemple 
\begin{equation}\label{eq:intro lambda}
\lambda = \frac{1}{N-1}, 
\end{equation}
auquel cas on peut s'attendre \`a ce que l'\'energie fondamentale par particule $N ^{-1} E(N)$ ait une limite bien d\'efinie. Le choix particulier~\eqref{eq:intro lambda} sert \`a simplifier certaines expressions, mais les consid\'erations suivantes s'appliquent tant que $\lambda$ est d'ordre $N ^{-1}$. Insistons sur le caract\`ere simplifi\'e de l'\'etude de la limite $N\to \infty$ sous l'hypoth\`ese~\eqref{eq:intro lambda}, qui est loin de recouvrir tous les cas physiquement int\'eressants. Il s'agit cependant d'un probl\`eme d\'ej\`a non trivial et tr\`es instructif. 

Le but de cette section est de discuter l'approximation de champ moyen pour les \'etats d'\'equilibre d'un syst\`eme classique \`a temp\'erature nulle. On s'int\'eresse donc au probl\`eme de minimisation 
\begin{equation}\label{eq:intro MF lim class}
E(N) = \inf \left\{\int_{\Om ^N} H_N d\mubf_N,\, \mubf_N \in \PP_s (\Om ^N)\right\} 
\end{equation}
avec
\begin{equation}\label{eq:intro MF lim hamil}
H_N (X) = \sum_{j=1} ^N V(x_j) + \frac{1}{N-1} \sum_{1\leq i < j \leq N} w(x_i-x_j).
\end{equation}
Nous allons esquisser la preuve (formelle) de la validit\'e de l'approximation de champ moyen au niveau de l'\'energie fondamentale, autrement dit de l'\'egalit\'e
\begin{equation}\label{eq:intro justif MF class}
\boxed{ \lim_{N\to \infty} \frac{E(N)}{N} = \MFEe = \inf\left\{ \MFEf [\rho], \rho \in \PP (\Om) \right\}}
\end{equation}
o\`u la fonctionnelle de champ moyen est obtenue comme en~\eqref{eq:intro MF ener func class}, en tenant compte de~\eqref{eq:intro lambda}:
\begin{equation}\label{eq:intro MF func lim}
\MFEf[\rho] = \int_{\Om} V d\rho + \frac12 \iint_{\Om \times \Om} w(x-y) d\rho(x)d\rho(y). 
\end{equation}
Obtenir la limite~\eqref{eq:intro justif MF class} pour l'\'energie fondamentale est une premi\`ere \'etape, et le sch\'ema de preuve fournit en fait des informations sur les \'etats d'\'equilibre. Pour simplifier la pr\'esentation (formelle) qui suit, nous renvoyons la discussion de ces aspects, ainsi que l'\'etude du cas \`a temp\'erature positive, au Chapitre~\ref{sec:class}.

\subsubsection*{Un passage \`a la limite formel.}  Cette section a pour but de souligner la structure ``alg\'ebrique'' du probl\`eme. La justification des manipulations auxquelles nous allons nous livrer n\'ecessite des hypoth\`eses d'analyse qui seront discut\'ees dans la suite du cours mais que nous passons sous silence pour l'instant. 

On commence par utiliser~\eqref{eq:intro ener marginales} et l'hypoth\`ese~\eqref{eq:intro lambda} pour \'ecrire 
$$ \frac{E(N)}{N} = \frac{1}{2} \inf\left\{ \iint_{\Om \times \Om} H_2 (x,y) d\mubf_N ^{(2)}(x,y), \, \mubf_N \in \PP_s (\Om ^N) \right\} $$
o\`u $H_2$ est le Hamiltonien \`a deux particules d\'efini comme en~\eqref{eq:intro MF lim hamil} et $\mubf_N ^{(2)}$ est la seconde marginale de la probabilit\'e sym\'etrique $\mubf_N$. Puisque l'\'energie ne d\'epend que de la seconde marginale, on peut voir le probl\`eme qui nous int\'eresse comme un probl\`eme d'optimisation avec contrainte pour des probabilit\'es \`a deux corps sym\'etriques:
$$ \frac{E(N)}{N} = \frac{1}{2} \inf\left\{ \iint_{\Om \times \Om} H_2 (x,y) d\mubf ^{(2)}(x,y), \mubf ^{(2)} \in \PP_N  ^{(2)} \right\} $$
avec
$$\PP_N  ^{(2)} = \left\{ \mubf ^{(2)} \in \PP_s (\Om ^2) \,|\, \exists \mubf_N \in \PP_s (\Om ^N), \mubf ^{(2)} = \mubf_N ^{(2)} \right\} $$
l'ensemble des probabilit\'es \`a deux corps qu'on peut obtenir comme les marginales d'\'etats \`a $N$ corps. En supposant que la limite existe (premier argument formel) on obtient donc 
\begin{equation}\label{eq:MF class form 1}
 \lim_{N\to \infty} \frac{E(N)}{N} = \frac{1}{2} \lim_{N\to \infty}\, \inf\left\{ \iint_{\Om \times \Om} H_2 (x,y) d\mubf ^{(2)}(x,y), \mubf ^{(2)} \in \PP_N  ^{(2)} \right\} 
\end{equation}
et il est tentant d'\'echanger la limite et l'infimum dans cette expression (deuxi\`eme argument formel) pour obtenir 
$$ \lim_{N\to \infty} \frac{E(N)}{N} = \frac{1}{2} \inf \left\{ \iint_{\Om \times \Om} H_2 (x,y) d\mubf ^{(2)}(x,y), \mubf ^{(2)} \in \lim_{N\to \infty} \PP_N  ^{(2)} \right\}. $$
On a observ\'e que la fonctionnelle d'\'energie apparaissant dans~\eqref{eq:MF class form 1} est en fait ind\'ependante de $N$. Toute la d\'ependance en $N$ se trouve dans la contrainte sur l'espace variationnel que l'on consid\`ere. Ceci sugg\`ere que le probl\`eme limite naturel consiste \`a minimiser la m\^eme fonctionnelle mais sur la limite de l'espace variationnel, comme \'ecrit ci-dessus.  

Pour donner un sens \`a la limite de $ \PP_N  ^{(2)}$, on constate que les ensembles $\PP_N  ^{(2)}$ forment une suite d\'ecroissante
$$ \PP_{N+1} ^{(2)} \subset \PP_N ^{(2)}$$
comme on s'en aper\c{c}oit facilement en notant que si $\mubf ^{(2)}\in \PP_{N+1} ^{(2)}$, alors pour un certain~$\mubf_{N+1} \in \PP_s (\Om ^{N+1})$
\begin{equation}\label{eq:MF class form obs}
\mubf ^{(2)} = \mubf_{N+1} ^{(2)} = \left(\mubf_{N+1} ^{(N)}\right) ^{(2)} 
\end{equation}
et bien s\^ur $\mubf_{N+1} ^{(N)}\in \PP_s (\Om ^N)$. On peut donc l\'egitiment faire l'identification
$$ \PP_{\infty} ^{(2)} := \lim_{N\to \infty} \PP_N  ^{(2)} = \bigcap_{N\geq 2} \PP_N ^{(2)}$$
et le probl\`eme limite naturel est alors (modulo la justification des manipulations formelles ci-dessus)
\begin{equation}\label{eq:MF class form 2}
\lim_{N\to \infty} \frac{E(N)}{N}  = E_{\infty} = \frac{1}{2} \inf \left\{ \iint_{\Om \times \Om} H_2 (x,y) d\mubf ^{(2)}(x,y), \mubf ^{(2)} \in  \PP_\infty  ^{(2)} \right\}, 
\end{equation}
soit un probl\`eme variationnel sur l'ensemble des \'etats \`a deux corps que l'on peut obtenir comme densit\'es r\'eduites d'\'etats \`a $N$ corps, pour tout $N$.

\subsubsection*{Un r\'esultat de structure.} Nous allons maintenant expliquer qu'en fait 
$$ E_{\infty}  = \MFEe, $$
ce qui d\'ecoule d'un r\'esultat fondamental sur la structure de l'espace $\PP_\infty  ^{(2)}$.  

Regardons de plus pr\`es l'espace $\PP_\infty  ^{(2)}$. Il contient bien s\^ur les \'etats produits de forme $\rho \otimes \rho,\rho \in \PP (\Om)$ puisque $\rho ^{\otimes 2}$ est la seconde marginale de $\rho ^{\otimes N}$ pour tout $N$. Par convexit\'e il contient \'egalement toutes les combinaisons convexes de tels \'etats 
\begin{equation}\label{eq:MF class form 3}
\left\{ \int_{\rho \in \PP (\Om)} \rho ^{\otimes 2}dP (\rho), P\in \PP (\PP(\Om))\right\} \subset \PP_\infty  ^{(2)}
\end{equation}
avec $\PP (\PP(\Om))$ l'ensemble des mesures de probabilit\'e sur $\PP (\Om)$. Au vu de~\eqref{eq:intro ener marginales} et~\eqref{eq:intro MF func lim} on aura justifi\'e l'approximation de champ moyen~\eqref{eq:intro justif MF class} si on peut montrer que l'infimum dans~\eqref{eq:MF class form 2} est atteint pour $\mubf ^{(2)} = \rho ^{\otimes 2}$ pour un certain $\rho \in \PP (\Om)$.

Le r\'esultat de structure qui permet d'atteindre cette conclusion est la constation qu'il y a en fait \'egalit\'e dans~\eqref{eq:MF class form 3}: 
\begin{equation}\label{eq:MF class form 4}
\left\{ \int_{\rho \in \PP (\Om)} \rho ^{\otimes 2}dP (\rho), P\in \PP (\PP(\Om))\right\} =\PP_\infty  ^{(2)}.
\end{equation}
En effet, vu la lin\'earit\'e de la fonctionnelle d'\'energie en fonction de $\mubf ^{(2)},$ on peut \'ecrire 
\begin{align*}
E_{\infty} &= \frac12 \inf \left\{ \int_{\rho \in \PP (\Om)} \iint_{\Om \times \Om} H_2 (x,y) d\rho ^{\otimes 2}(x,y)dP(\rho), P \in \PP (\PP(\Om)) \right\}\nonumber\\
&= \inf \left\{ \int_{\rho \in \PP (\Om)} \MFEf [\rho] dP(\rho), P \in \PP (\PP(\Om)) \right\}\nonumber\\
&= \MFEe
\end{align*}
puisqu'il est clair que l'infimum en $P\in \PP (\PP(\Om))$ est atteint pour $P = \delta_{\rhoMF}$, une masse de Dirac en $\rhoMF$, un minimiseur de $\MFEf$.

On voit donc que en acceptant les manipulations formelles (justifi\'ees au Chapitre~\ref{sec:class}) menant \`a~\eqref{eq:MF class form 2}, la validit\'e de l'approximation de champ moyen suit en utilisant tr\`es peu les propri\'et\'es du Hamiltonien mais beaucoup la structure des \'etats sym\'etriques \`a $N$ corps, sous la forme de l'\'egalit\'e~\eqref{eq:MF class form 4}.

Cette \'egalit\'e est une cons\'equence du th\'eor\`eme de Hewitt-Savage, ou th\'eor\`eme de Finetti classique~\cite{DeFinetti-31,DeFinetti-37,HewSav-55}, rappell\'e \`a la Section~\ref{sec:HS} et d\'emontr\'e \`a la Section~\ref{sec:DF}. Donnons quelques d\'etails pour le lecteur familiaris\'e. Il s'agit de montrer l'inclusion inverse dans~\eqref{eq:MF class form 3}. On prend donc un certain $\mubf ^{(2)}$ qui satisfait 
$$\mubf ^{(2)} = \mubf_N ^{(2)} $$
pour une certaine suite $\mubf_N \in \PP_s (\Om ^N)$. En se basant sur l'observation~\eqref{eq:MF class form obs} on peut supposer que 
$$ \mubf_{N+1} ^{(N)} = \mubf_N$$
et on est alors en pr\'esence d'une suite (hi\'erarchie) d'\'etats \`a $N$ corps consistante. Il existe alors (th\'eor\`eme d'extension de Kolmogorov) une probabilite sym\'etrique sur les suites de~$\Om$, $\mubf \in \PP_s (\Om ^{\N})$ telle que 
$$\mubf_N = \mubf ^{(N)}$$
o\`u la $N$-\`eme marginale est d\'efinie comme en~\eqref{eq:defi marginale}. Le th\'eor\`eme de Hewitt-Savage~\cite{HewSav-55} garantit alors l'existence d'une unique mesure de probabilit\'e $P\in \PP (\PP (\Om))$ telle que 
$$\mubf_N = \int_{\rho \in \PP (\Om)} \rho ^{\otimes N} dP (\rho) $$
et en prenant la seconde marginale, on obtient le r\'esultat souhait\'e.

\medskip 

Le Chapitre~\ref{sec:class} est consacr\'e entre autres aux d\'etails du sch\'ema de preuve ci-dessus. On y fera des hypoth\`eses ad\'equates permettant de mettre toutes ces consid\'erations sur une base rigoureuse. Il vaut cependant la peine de noter d\`es maintenant que ce sch\'ema (inspir\'e des r\'ef\'erences~\cite{MesSpo-82,Kiessling-89,Kiessling-93,CagLioMarPul-92,KieSpo-99}) n'utilise aucune propri\'et\'e de structure du Hamiltonien (signe des interactions, r\'epulsives ou attractives, par exemple) mais uniquement des propri\'et\'es de compacit\'e et de r\'egularit\'e. 

\subsection{Champ moyen et th\'eor\`eme de de Finetti quantique}\label{sec:MF deF quant}\mbox{}\\\vspace{-0.4cm}

Nous allons maintenant esquisser une strat\'egie pour justifier l'approximation de champ moyen au niveau de l'\'energie fondamentale d'un grand syst\`eme bosonique. La d\'emarche est la m\^eme que pour le cas classique ci-dessus. On se place dans un r\'egime de champ moyen comme pr\'ec\'edement en prenant $\lambda = (N-1) ^{-1}$ et on consid\`ere donc le Hamiltonien
\begin{equation}\label{eq:MF quant hamil}
H_N =\sum_{j=1} ^N -\left(\nabla_j + i A (x_j)\right) ^2+ V(x_j) + \frac{1}{N-1}\sum_{1\leq i<j\leq N} w(x_i-x_j)  
\end{equation}
agissant sur $L ^2 (\R^{dN})$. Par rapport \`a la situation pr\'ec\'edente nous avons rajout\'e un potentiel vecteur $A$ pour commencer de souligner la g\'en\'eralit\'e de l'approche. Il peut par exemple correspondre \`a un champ magn\'etique externe $B = \rm{curl} \,A$.

Le point de d\'epart est l'\'energie fondamentale bosonique d\'efinie pr\'ec\'edemment
\begin{align}\label{eq:MF quant E(N)}
E (N) &= \inf\left\{ \tr_{L ^2 (\R ^{dN})} \left[ H_N \Gamma_N \right], \,\Gamma_N \in \cS (L ^2_s (\R ^{dN}))  \right\} \nonumber\\
&= \inf\left\{ \left\langle \Psi_N, H_N \Psi_N\right\rangle, \, \Psi_N \in L ^2_s (\R ^{dN}), \norm{\Psi_N} _{L ^2 (\R ^{dN})}  = 1  \right\}
\end{align}
et on rappelle que l'on peut minimiser sur les \'etats purs ou les \'etats mixtes indiff\'erement, d'o\`u les deux d\'efinitions \'equivalentes.

Notre but est de montrer que pour $N$ grand l'\'energie bosonique par particule est donn\'ee par la fonctionnelle de Hartree:
\begin{equation}\label{eq:MF quant lim}
\boxed{\lim_{N\to \infty}  \frac{E(N)}{N} = \eH }
\end{equation}
o\`u 
\begin{align}\label{eq:MF quant Hartree}
\eH &= \inf \left\{  \EH[u], u \in L ^2 (\R ^d), \norm{u}_{L^2 (\R ^d )} = 1 \right\}\nonumber\\
\EH [u] &= \int_{\R ^d} \left(|(\nabla +i A) u| ^2 + V |u| ^2\right) + \frac12 \iint_{\R ^d \times \R ^d} |u(x)| ^2 w(x-y) |u(y)| ^2 dxdy.
\end{align}
Puisque l'\'energie de Hartree est obtenue en ins\'erant un ansatz $\Psi_N = u ^{\otimes N}$ dans la fonctionnelle d'\'energie \`a $N$ corps, la limite~\eqref{eq:MF quant lim} est d\'ej\`a une indication forte de l'existence de la condensation de Bose-Einstein au niveau du fondamental d'un grand syst\`eme bosonique dans le r\'egime de champ moyen. On reviendra plus bas sur les cons\'equences de~\eqref{eq:MF quant lim} sur les minimiseurs. Comme dans le cas classique, le r\'egime de champ moyen est un mod\`ele tr\`es simplifi\'e mais d\'ej\`a tr\`es instructif. On pr\'esentera au Chapitre~\ref{sec:NLS} l'analyse d'autres limites physiquement int\'eressantes.

\subsubsection*{Un passage \`a la limite formel.} La premi\`ere \'etape pour obtenir~\eqref{eq:MF quant lim} est comme dans le cas classique d'obtenir formellement un probl\`eme limite simplifi\'e. On commence par r\'e\'ecrire l'\'energie en utilisant~\eqref{eq:quant ener red mat} et l'hypoth\`ese $\lambda = (N-1) ^{-1}$
\begin{equation}\label{eq:MF quant form 1}
 \frac{E(N)}{N} = \frac12 \inf\left\{ \tr_{L ^2 (\R^{2d})} \left[ H_2 \: \Gamma ^{(2)}\right], \, \Gamma ^{(2)} \in \PP_N ^{(2)}\right\} 
\end{equation}
o\`u
$$ \PP_N ^{(2)} = \left\{ \Gamma ^{(2)} \in \cS (L_s ^2(\R ^{2d})) \,|\, \exists \, \Gamma_N \in \cS (L_s ^2(\R ^{dN})), \Gamma ^{(2)} = \tr_{3\to N}[ \Gamma_N ]\right\}$$
est l'ensemble des matrices densit\'e \`a deux corps ``$N$-repr\'esentables'', i.e. qui sont la trace partielle d'un \'etat \`a $N$ corps. 

La caract\'erisation de l'ensemble $\PP_N ^{(2)}$ est un probl\`eme ouvert fameux en m\'ecanique quantique (probl\`eme de repr\'esentabilit\'e,  voir par exemple~\cite{ColYuk-00,LieSei-09}) et la r\'e\'ecriture~\eqref{eq:MF quant form 1} n'est donc pas particuli\`erement utile \`a $N$ fix\'e. En revanche, exactement comme dans le cas classique, le probl\`eme de repr\'esentabilit\'e peut \^etre r\'esolu de mani\`ere satisfaisante dans la limite $N\to \infty$, et c'est ce fait que nous allons utiliser. On commence par noter que en prenant des traces partielles on a facilement
\begin{equation}\label{eq:MF quant form 1bis}
\PP_{N+1} ^{(2)}\subset \PP_N ^{(2)}
\end{equation}
de sorte que~\eqref{eq:MF quant form 1} peut \^etre vu comme un probl\`eme variationnel pos\'e sur un espace de plus en plus contraint lorsque $N$ augmente.

En supposant \`a nouveau qu'on puisse \'echanger un infimum et une limite (ce qui est bien s\^ur un argument formel) on obtient 
\begin{equation}\label{eq:MF quant form 2}
\lim_{N\to \infty} \frac{E(N)}{N} = E_{\infty} := \frac12 \inf\left\{ \tr_{L ^2 (\R^{2d})} \left[ H_2 \: \Gamma ^{(2)}, \Gamma ^{(2)} \in \PP_\infty ^{(2)}\right]\right\} 
\end{equation}
avec 
\begin{equation}\label{eq:MF quant form 3}
\PP_\infty ^{(2)} = \lim_{N\to \infty} \PP_N ^{(2)} = \bigcap_{N\geq 2} \PP_N ^{(2)} 
\end{equation}
l'ensemble des matrices densit\'e \`a deux corps qui sont $N$-repr\'esentables pour tout $N$. On a comme dans le cas classique utilis\'e le fait que la fonctionnelle d'\'energie elle m\^eme ne d\'epend pas de $N$ pour passer formellement \`a la limite.

\subsubsection*{Un r\'esultat de structure.} Il se trouve que la structure de l'ensemble $\PP_\infty ^{(2)}$ est enti\`erement connue et entra\^ine l'\'egalit\'e
\begin{equation}\label{eq:MF quant form 4}
E_{\infty} = \eH
\end{equation}
qui conclut la preuve de~\eqref{eq:MF quant lim} modulo la justification des manipulations formelles que nous venons d'effectuer.

La propri\'et\'e de structure garantissant la validit\'e~\eqref{eq:MF quant form 4} est une version quantique du th\'eor\`eme de de Finetti-Hewitt-Savage que nous avons mentionn\'e \`a la section pr\'ec\'edente. Soit $\Gamma ^{(2)}\in \PP_\infty ^{(2)}$. On a alors une suite $\Gamma_N \in \cS (L ^2_s (\R ^{dN}))$ telle que pour tout $N$
$$ \Gamma ^{(2)} = \Gamma_N ^{(2)}.$$
Sans perte de g\'en\'eralit\'e on peut supposer que cette suite est consistante au sens o\`u
$$\Gamma_{N+1} ^{(N)} = \tr_{N+1} [\Gamma_{N+1}] = \Gamma_N. $$
Le th\'eor\`eme de de Finetti quantique de St\o{}rmer-Hudson-Moody~\cite{Stormer-69,HudMoo-75} garantit alors l'existence d'une mesure de probabilit\'e $P \in \PP (S L ^{2} (\R ^d))$ sur la sph\`ere unit\'e de $L^2 (\R ^d)$ telle que 
$$ \Gamma_N = \int_{u\in SL ^{2} (\R ^d)} |u ^{\otimes N}\rangle \langle u ^{\otimes N} |dP (u)$$
et donc 
$$ \Gamma ^{(2)} = \int_{u\in SL ^{2} (\R ^d)} |u ^{\otimes 2}\rangle \langle u ^{\otimes 2} |dP (u).$$
On peut alors conclure que 
\begin{align*}
 E_\infty &= \inf\left\{ \frac12 \int_{u\in SL ^{2} (\R ^d)} \tr_{L ^2 (\R ^{2d})} [H_2 |u ^{\otimes 2}\rangle \langle u ^{\otimes 2} |]dP(u), P \in \PP (S L ^{2} (\R ^d)) \right\} \\ 
 &= \inf\left\{ \int_{u\in SL ^{2} (\R ^d)} \EH [u]dP(u), P \in \PP (S L ^{2} (\R ^d)) \right\} \\ 
&=\eH
 \end{align*}
o\`u la derni\`ere \'egalit\'e suit puisqu'il est clairement optimal de prendre $P= \delta_{\uH}$ avec $\uH$ un minimiseur de la fonctionnelle de Hartree.
 
On voit donc que la validit\'e de~\eqref{eq:MF quant lim} est (au moins formellement) une cons\'equence de la structure de l'ensemble des \'etats bosoniques et ne fait intervenir que marginalement les propri\'et\'es du Hamiltonien~\eqref{eq:MF quant hamil}. La justification sous diverses hypoth\`eses des manipulations formelles que nous venons d'effectuer pour obtenir~\eqref{eq:MF quant form 2} (ou une variante) ainsi que la preuve du th\'eor\`eme de de Finetti quantique (plus variantes, g\'en\'eralisations et corollaires) sont l'objet principal de ces notes.

\subsubsection*{Condensation de Bose-Einstein.} Anticipons quelque peu sur les conclusions quant aux minimiseurs que l'on peut d\'eduire de la limite~\eqref{eq:MF quant lim}. On verra dans la suite du cours que des r\'esultats du type  
\begin{equation}\label{eq:MF quant BEC 1}
\Gamma_N ^{(n)} \to \int_{u\in S L ^2 (\R^d)} |u ^{\otimes n}\rangle \langle u ^{\otimes n}| dP (u) 
\end{equation}
pour tout $n\in \N$ fix\'e quand $N\to\infty$, suivent tr\`es naturellement dans les bons cas, o\`u $\Gamma_N$ est (la matrice densit\'e d') un minimiseur de l'\'energie \`a $N$ corps et $P$ une mesure de probabilit\'e concentr\'ee sur les minimiseurs de $\EH$. La convergence aura lieu (toujours dans les ``bons cas'') en norme de trace.

Lorsque il y a unicit\'e (\`a une phase constante pr\`es) du minimiseur $\uH$ de $\EH$ on obtient donc 
\begin{equation}\label{eq:MF quant BEC 2}
\Gamma_N ^{(n)} \to |\uH ^{\otimes n}\rangle \langle \uH ^{\otimes n}|  \mbox{ quand } N\to \infty
\end{equation}
ce qui prouve l'existence de la condensation de Bose-Einstein au niveau du fondamental. En effet, la condensation de Bose-Einstein consiste \emph{par d\'efinition} (voir \cite{LieSeiSolYng-05} et r\'ef\'erences ci-incluses, en particulier \cite{PenOns-56}) en l'existence d'une valeur propre d'ordre\footnote{Rappellons que toutes les matrices densit\'e sont normalis\'ees pour avoir une trace fixe \'egale \`a $1$ dans ce cours.} $1$ dans la limite $N\to \infty$ pour $\Gamma_N ^{(1)}$, ce qui est clairement impliqu\'e par~\eqref{eq:MF quant BEC 2} qui est en fait un r\'esultat plus fort.  

On peut bien s\^ur se demander si des r\'esultats plus forts que~\eqref{eq:MF quant BEC 2} peuvent \^etre d\'emontr\'es. On pourrait penser \`a une approximation en norme du genre 
$$ \norm{ \Psi_N - \uH ^{\otimes N} }_{L ^2 (\R ^{dN})}\to 0,$$
mais il est opportun de mentionner tout de suite que des r\'esultats de ce genre sont \emph{faux} en g\'en\'eral. La bonne notion de condensation fait bien intervenir les matrices densit\'e r\'eduites, comme le montrent les deux remarques suivantes:
\begin{itemize}
\item Pensons \`a un $\Psi_N$ de forme (avec $\otimes_s$ le produit tensoriel sym\'etrique) 
 $$ \Psi_N = \uH ^{\otimes (N-1)} \otimes_s \varphi$$ 
o\`u $\varphi$ est orthogonal \`a $\uH$. Un tel \'etat est ``presque condens\'e'' puisque toutes les particules sauf une sont dans l'\'etat $\uH$. Hors, au sens $L ^2 (\R ^{dN})$ usuel, $\Psi_N$ est bien s\^ur orthogonal \`a $\uH ^{\otimes N}$ et ne peut donc pas en \^etre proche en norme, bien qu'il en soit proche au sens de la condensation faisant intervenir les matrices de densit\'e r\'eduites.
\item Dans le m\^eme ordre d'id\'ee, il est naturel de chercher des corrections au minimiseur \`a $N$ corps sous la forme 
$$ \Psi_N = \varphi_0 \, \uH ^{\otimes N} + \uH ^{\otimes (N-1)} \otimes_s \varphi_1 + \uH ^{\otimes (N-2)} \otimes_s \varphi_2 + \ldots + \varphi_N$$
avec $\varphi_0 \in \C$ et $\varphi_k \in L_s ^2 (\R ^{dk})$ pour $k=1\ldots N$. Il se trouve que l'ansatz ci-dessus est correct si la suite $(\varphi_k)_{k = 0,\ldots,N}$ est choisie via la minimisation d'un Hamiltonien effectif sur l'espace de Fock. La pr\'esence des termes non condens\'es impliquant les $\varphi_k$ pour $k\geq 1$ contribue au premier ordre \`a la norme de $\Psi_N$ mais pas aux matrices de densit\'e r\'eduites, ce qui confirme rigoureusement que la bonne notion de condensation est n\'ecessairement bas\'ee sur les matrices densit\'e r\'eduites. Cette remarque ne fait qu'effleurer la th\'eorie de Bogoliubov, qui ne sera pas trait\'ee dans ces notes et au sujet de laquelle on renvoit \`a~\cite{CorDerZin-09,LieSol-01,LieSol-04,Solovej-06,Seiringer-11,GreSei-12,LewNamSerSol-13,NamSei-14,DerNap-13} pour des r\'esultats math\'ematiques r\'ecents.
\end{itemize}

\subsubsection*{Une remarque sur la sym\'etrie.} Nous nous sommes int\'eress\'es ci-dessus au probl\`eme bosonique pour des raisons de motivation physique. Le probl\`eme fermionique n'est\footnote{Pour l'instant ?} pas couvert par ce genre de consid\'erations, mais on peut s'int\'eresser au probl\`eme sans contrainte de sym\'etrie mentionn\'e pr\'ec\'edement.

Dans le cas o\`u $A\equiv 0$ dans~\eqref{eq:MF quant hamil}, les probl\`emes avec et sans sym\'etrie bosonique impos\'ee co\"incident, mais ce n'est pas le cas en g\'en\'eral. On peut cependant suivre la m\^eme d\'emarche que ci-dessus pour l'\'etude du probl\`eme sans sym\'etrie en remarquant que vue la forme du Hamiltonien, on peut sans perte de g\'en\'eralit\'e imposer la sym\'etrie plus faible~\eqref{eq:quant sym gen} dans la minimisation. L'ensemble de matrices de densit\'e \`a deux corps qui appara\^it alors \`a la limite est \'egalement couvert par le th\'eor\`eme de St\o{}rmer-Hudson-Moody, et il s'agit alors de minimiser une \'energie sur les matrices de densit\'e \`a deux corps de la forme 
\begin{equation}\label{eq:intro deF gen} 
\Gamma ^{(2)} = \int_{\gamma \in \cS (L ^2 (\R ^d))} \gamma ^{\otimes 2} dP(\gamma) 
\end{equation}
o\`u $P$ est maintenant une mesure de probabilit\'e sur les \'etats \`a une particule, c'est-\`a-dire les op\'erateurs auto-adjoints positifs de trace $1$ sur $L ^ 2 (\R ^d)$. On obtient donc dans ce cas la convergence de l'\'energie fondamentale par particule vers le minimum de la fonctionnelle de Hartree g\'en\'eralis\'ee (cf~\eqref{eq:intro Hartree func gen}) 
$$
\EH [\gamma] = \tr _{L ^2 (\R^d )} \left[\left( -(\nabla + i A) ^2 + V \right)\gamma \right] + \lambda \frac{N-1}{2} \tr_{L ^2 (\R ^{2d})} \left[ w(x-y) \gamma ^{\otimes 2}\right]$$
et le minimum est en g\'en\'eral diff\'erent du minimum de la fonctionnelle de Hartree~\eqref{eq:MF quant Hartree} (voir par exemple~\cite{Seiringer-03}).

On a d\'ej\`a not\'e que $\gamma ^{\otimes 2}$ ne peut avoir la sym\'etrie bosonique que si $\gamma $ est un \'etat pur $\gamma = |u  \rangle \langle u|$, et on voit donc en quoi le probl\`eme sans sym\'etrie peut en g\'en\'eral donner lieu \`a un minimum strictement plus petit. Il se trouve que le minimum de~\eqref{eq:intro Hartree func gen} est toujours atteint pour un \'etat pur, ce qui met en coh\'erence les diff\'erentes observations que nous avons faites sur la sym\'etrie. Notons finalement que prendre en compte la sym\'etrie bosonique dans la limite de champ moyen quand $A\neq 0$ se fait tout naturellement gr\^ace aux diff\'erentes version du th\'eor\`eme de de Finetti quantique. 

%
%

\newpage

\section{\textbf{M\'ecanique statistique \`a l'\'equilibre}}\label{sec:class}

Avant d'entrer dans le coeur du sujet de ces notes, les limites de champ moyen quantiques, nous rappellerons dans un but p\'edagogique quelques notions sur les limites de champ moyen classiques. Dans cette section nous \'enoncerons et d\'emontrerons les th\'eor\`emes de de Finetti classiques mentionn\'es informellement auparavant. Des applications \`a des probl\`emes de m\'ecanique statistique \`a l'\'equilibre seront ensuite pr\'esent\'ees.

Pour simplifier l'exposition et se rapprocher des applications qui nous int\'eressent nous consid\'ererons dans toute cette section un domaine $\Omega \subset \R ^d$ qui pourra \'eventuellement \^etre $\R ^d$ lui-m\^eme. On notera $\Omega ^N$ et $\Omega ^{\N}$ le produit cart\'esien de $N$ copies de $\Omega$ et l'ensemble des suites de $\Omega$ respectivement. L'espace des mesures de probabilit\'e sur un ensemble $\Lambda$ sera toujours not\'e $\PP (\Lambda)$. On pourra faire l'hypoth\`ese simplificatrice que $\Om$ est compact, auxquel cas $\PP (\Omega)$ l'est \'egalement pour la convergence faible des mesures.

\subsection{Th\'eor\`eme de Hewitt-Savage}\label{sec:HS}\mbox{}\\\vspace{-0.4cm}

Nous mentionnerons seulement pour m\'emoire les premiers travaux sur ce qui est appell\'e maintenant le th\'eor\`eme de de Finetti~\cite{DeFinetti-31,DeFinetti-37,Khintchine-32,Dynkin-53}. Dans ces notes, l'histoire commencera \`a~\cite{HewSav-55} o\`u le th\'eor\`eme de de Finetti classique est d\'emontr\'e dans sa forme la plus g\'en\'erale. 

De mani\`ere informelle, le th\'eor\`eme de Hewitt-Savage~\cite{HewSav-55} \'enonce que toute mesure de probabilit\'e sym\'etrique sur $\Omega ^N$ approche une combinaison convexe de probabilit\'es produit quand $N$ est grand. Une \emph{mesure de probabilit\'e sym\'etrique} est une probabilit\'e $\mubf_N$ satisfaisant
\begin{equation}\label{eq:sym class N}
\mubf_N (A_1 \times \ldots \times A_N) = \mubf_N (A_{\sigma(1)},\ldots,A_{\sigma(N)}) 
\end{equation}
pour tout domaines bor\'eliens $A_1,\ldots,A_N \subset \Omega$ et toute permutation de $N$ indices $\sigma \in \Sigma_N$. On notera $\PP_s (\Om ^N)$ l'ensemble des mesures de probabilit\'e satisfaisant~\eqref{eq:sym class N}. Une \emph{mesure produit} construite sur $\rho \in \PP (\Om)$ est de la forme 
\begin{equation}\label{eq:proba produit}
\rho ^{\otimes N} (A_1,\ldots,A_N) = \rho (A_1) \ldots \rho (A_N)  
\end{equation}
et est bien entendu sym\'etrique. Nous cherchons donc un r\'esultat du genre
\begin{equation}\label{eq:class deF formel}
\mubf_N \approx \int_{\rho \in \PP (\Om)} \rho ^{\otimes N} dP_{\mubf_N} (\rho) \mbox{ quand } N\to \infty
\end{equation}
o\`u $P_{\mubf_N} \in \PP(\PP (\Om))$ est une mesure de probabilit\'e sur les probabilit\'es. 

\medskip

Une premi\`ere possibilit\'e pour donner un sens rigoureux \`a~\eqref{eq:class deF formel} consiste \`a prendre imm\'ediatement $N = \infty$, c'est-\`a-dire \`a consid\'erer non pas une probabilit\'e $\mubf_N$ sur $\Om ^N$ mais directement une probabilit\'e avec un nombre infini de variables, $\mubf \in \PP (\Om ^{\N})$, ce qui est le sens naturel \`a donner \`a un ``\'etat classique d'un syst\`eme \`a nombre infini de particules''. On supposera une notion de sym\'etrie h\'erit\'ee de~\eqref{eq:sym class N}: 
\begin{equation}\label{eq:sym class inf}
\mubf (A_1,A_2,\ldots) =  \mubf (A_{\sigma(1)},A_{\sigma(2)},\ldots) 
\end{equation}
pour toute suite de domaines bor\'eliens $(A_k)_{k\in \N} \subset \Omega ^{\N}$ et toute permutation d'un nombre infini d'indices $\sigma \in \Sigma_{\infty}$. On notera $\PP_s (\Om ^{\N})$ l'ensemble des probabilit\'es sur $\Om ^{\N}$ satisfaisant~\eqref{eq:sym class inf}. Le th\'eor\`eme de Hewitt-Savage est le r\'esultat suivant:

\begin{theorem}[Hewitt-Savage 1955]\label{thm:HS}\mbox{}\\
Soit $\mubf \in \PP_s (\Om ^{\N})$ satisfaisant~\eqref{eq:sym class inf}. Soit $\mubf ^{(n)}$ sa $n$-i\`eme marginale d\'efinie par 
\begin{equation}\label{eq:HS marginale n}
\mubf ^{(n)} (A_1,\ldots,A_n) = \mubf (A_1,\ldots,A_n,\Omega,\ldots,\Omega,\ldots). 
\end{equation}
Il existe une unique mesure de probabilit\'e $P_{\mubf} \in \PP (\PP (\Omega))$ telle que 
\begin{equation}\label{eq:result HS}
\mubf ^{(n)} =  \int_{\rho \in \PP (\Om)} \rho ^{\otimes n} dP_{\mubf} (\rho). 
\end{equation}
\end{theorem}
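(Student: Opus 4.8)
The plan is to follow the quantitative route of Diaconis and Freedman: first prove a finite-$N$ version of the statement with an explicit error on the marginals, then pass to the limit using compactness of $\PP(\PP(\Om))$, and finally dispose of uniqueness by a Stone--Weierstrass argument. The first step is the one that does the real work. Given $\mubf_N \in \PP_s(\Om^N)$, to each configuration $X = (x_1,\dots,x_N)\in\Om^N$ one associates its empirical measure $\rho_X := \tfrac1N\sum_{i=1}^N \delta_{x_i} \in \PP(\Om)$, and one lets $P_N \in \PP(\PP(\Om))$ be the law of $X\mapsto \rho_X$ under $\mubf_N$. Setting $\tilde\mubf_N := \int_{\Om^N}\rho_X^{\otimes N}\,d\mubf_N(X) = \int_{\PP(\Om)}\rho^{\otimes N}\,dP_N(\rho)$ one has $\tilde\mubf_N^{(n)} = \int_{\PP(\Om)}\rho^{\otimes n}\,dP_N(\rho)$, and the key observation is that, testing against $g\in C(\Om^n)$, evaluating $\mubf_N^{(n)}$ amounts to drawing $n$ labels from $\{1,\dots,N\}$ \emph{without} replacement (this is nothing but the symmetry of $\mubf_N$), whereas evaluating $\tilde\mubf_N^{(n)}$ amounts to drawing them \emph{with} replacement. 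Coupling the two sampling procedures and bounding the probability of a repetition among $n$ draws from $N$ items by $\tfrac{n(n-1)}{2N}$ yields the uniform estimate
\begin{equation}
\left\| \mubf_N^{(n)} - \int_{\PP(\Om)} \rho^{\otimes n}\,dP_N(\rho) \right\|_{\mathrm{TV}} \;\leq\; \frac{2\,n(n-1)}{N}, \qquad n\leq N.
\end{equation}

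For the limit, apply this to $\mubf_N := \mubf^{(N)}$, which is consistent, $(\mubf^{(N+1)})^{(N)} = \mubf^{(N)}$, and satisfies $(\mubf^{(N)})^{(n)} = \mubf^{(n)}$ for $n\leq N$. Since $\Om$ is compact, $\PP(\Om)$ is compact metrizable for the weak topology, hence so is $\PP(\PP(\Om))$, and we may extract a subsequence with $P_{N_k}\wto P$ for some $P\in\PP(\PP(\Om))$. For fixed $n$ and $g\in C(\Om^n)$ the map $\rho\mapsto \int_{\Om^n} g\,d\rho^{\otimes n}$ is continuous and bounded on $\PP(\Om)$, so $\int \rho^{\otimes n}\,dP_{N_k} \wto \int\rho^{\otimes n}\,dP$; combining this with the bound above (whose right-hand side tends to $0$) and with $\mubf_{N_k}^{(n)} = \mubf^{(n)}$, one obtains $\mubf^{(n)} = \int_{\PP(\Om)}\rho^{\otimes n}\,dP(\rho)$, which is exactly \eqref{eq:result HS}.

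For uniqueness, suppose $P,P'\in\PP(\PP(\Om))$ both represent all the marginals. Then for every $n$ and all $f_1,\dots,f_n\in C(\Om)$,
\begin{equation}
\int_{\PP(\Om)} \prod_{j=1}^n\Big(\int_{\Om} f_j\,d\rho\Big)\,dP(\rho) \;=\; \int_{\PP(\Om)} \prod_{j=1}^n\Big(\int_{\Om} f_j\,d\rho\Big)\,dP'(\rho),
\end{equation}
because the left-hand integrand is $\langle f_1\otimes\cdots\otimes f_n, \rho^{\otimes n}\rangle$ and both sides equal $\langle f_1\otimes\cdots\otimes f_n,\mubf^{(n)}\rangle$. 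The functions $\rho\mapsto\int_\Om f\,d\rho$, $f\in C(\Om)$, separate points of $\PP(\Om)$ and, together with the constants, generate a subalgebra of $C(\PP(\Om))$; by Stone--Weierstrass this subalgebra is dense, so $\int g\,dP = \int g\,dP'$ for all $g\in C(\PP(\Om))$, i.e. $P=P'$. In particular the limit $P$ produced above does not depend on the extracted subsequence, so the whole sequence $P_N$ converges.

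The conceptual and technical heart of the argument is the combinatorial estimate of the first step: it is precisely there that the indistinguishability (symmetry) of $\mubf_N$ is converted into approximate product structure, and the with/without-replacement coupling must be arranged so that the $n$-marginal bound is uniform in $n\leq N$ with the correct $O(n^2/N)$ rate. By contrast the remaining steps are soft, relying only on compactness and Stone--Weierstrass. The compactness of $\Om$ is genuinely used only in the second step, to guarantee weak compactness of $\PP(\PP(\Om))$; dropping it would require supplementing the proof with a tightness/truncation argument to control mass escaping to infinity, but the core mechanism would be unchanged.
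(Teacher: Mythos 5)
Your proposal is correct and follows essentially the same route as the paper: a quantitative Diaconis--Freedman approximation via empirical measures, extraction of a weak limit of $P_N$ in the compact space $\PP(\PP(\Om))$, and uniqueness by a Stone--Weierstrass density argument on the monomials $\rho\mapsto\int\prod_j f_j\,d\rho^{\otimes n}$. The only cosmetic difference is that you establish the $O(n^2/N)$ marginal bound by the with/without-replacement coupling (the original Diaconis--Freedman argument, which the paper cites as one option), whereas the paper's written-out computation uses the algebraic identity $\mut_N^{(n)}=\frac{N(N-1)\cdots(N-n+1)}{N^n}\mubf_N^{(n)}+\nu_n$; both yield the same estimate.
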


En m\'ecanique statistique, ce th\'eor\`eme est g\'en\'eralement appliqu\'e comme une version faible de l'approximation formelle~\eqref{eq:class deF formel} de la mani\`ere suivante. On part d'un \'etat classique \`a $N$ particules, une probabilit\'e $\mubf_N \in \PP (\Omega ^N)$ dont les marginales 
\begin{equation}\label{eq:class marginale N}
\mubf_N ^{(n)} (A_1,\ldots,A_n) = \mubf (A_1,\ldots,A_n,\Omega ^{N-n})
\end{equation}
convergent\footnote{On notera cette convergence $\wto_\ast$ pour la distinguer d'une convergence en norme.} dans le sens des mesures de $\PP (\Om ^n)$, \`a une sous-suite pr\`es (nous ne changerons pas de notation pour la sous-suite):
\begin{equation}\label{eq:weak proba convergence}
\mubf_N ^{(n)} \wto_{\ast} \mubf ^{(n)} \in \PP (\Om ^{(n)}), 
\end{equation}
ce qui veut pr\'ecis\'ement dire que 
\[
\mubf_N ^{(n)} (A_n) \to \mubf ^{(n)} (A_n) 
\]
pour tout bor\'elien $A_n$ de $\Omega ^n$, et donc 
\[
\int_{\Omega} f_n d\mubf_N ^{(n)} \to  \int_{\Omega} f_n d\mubf ^{(n)}
\]
pour toute fonction $f_n$ continue born\'ee de $\Om ^n$ dans $\R$. Modulo un argument d'extraction diagonale, on peut supposer que la convergence~\eqref{eq:weak proba convergence} a lieu le long de la m\^eme sous-suite pour tout $n\in \N$. En testant la convergence~\eqref{eq:weak proba convergence} sur un bor\'elien $A_n = A_m \times \Om ^{m-n}$ pour $m\leq n$ on obtient la relation de consistance
\begin{equation}\label{eq:consist class}
\left( \mubf ^{(n)} \right)  ^{(m)} = \mubf ^{(m)},\mbox{ pour tout } m \leq n
\end{equation}
qui implique que la suite $(\mubf ^{(n)})_{n\in \N}$ d\'ecrit bien un syst\`eme physique avec un nombre infini de particules. On peut ensuite voir (th\'eor\`eme d'extenstion de Kolmogorov) qu'il existe $\mubf \in \PP (\Om ^{\N})$ telle que $\mubf ^{(n)}$ soit la $n$-i\`eme marginale de $\mubf$ (d'o\`u la notation). Cette probabilit\'e satisfait~\eqref{eq:sym class inf} et on peut donc lui appliquer le Th\'eor\`eme~\ref{thm:HS}, ce qui donne 
\begin{equation}\label{eq:class deF rig}
\mubf_N ^{(n)} \wto_* \int_{\rho \in \PP (\Om)} \rho ^{\otimes n} dP_{\mubf} (\rho) \mbox{ quand } N\to \infty 
\end{equation}
o\`u $P_{\mubf} \in \PP (\PP (\Om))$, ce qui est une version affaiblie mais rigoureuse de~\eqref{eq:class deF formel}. Autrement dit \textbf{\'etant donn\'e un \'etat classique \`a $N$ particules, sa $n$-i\`eme marginale peut-\^etre approch\'ee par une combinaison convexe d'\'etats produits quand $N$ est grand et $n$ fixe.} On remarquera par ailleurs que la mesure $P_{\mubf}$ apparaissant dans~\eqref{eq:class deF rig} ne d\'epend pas de $n$. 

Bien s\^ur il faut pouvoir d'abord utiliser un argument de compacit\'e pour obtenir~\eqref{eq:weak proba convergence}, ce qui est possible par exemple si $\Om$ est compact (dans ce cas $\PP (\Om ^n)$ est compact pour la convergence au sens des mesures~\eqref{eq:weak proba convergence}). Plus g\'en\'eralement si le probl\`eme physique auquel on s'int\'eresse comporte un m\'ecanisme de confinement, on peut montrer que les marginales des \'etats minimisant l'\'energie libre forment des suites tendues, et en d\'eduire~\eqref{eq:weak proba convergence}.

\medskip

Quant \`a la preuve du Th\'eor\`eme~\ref{thm:HS}, il y a plusieurs approches possibles. Mentionnons tout de suite que l'unicit\'e est une cons\'equence assez simple d'un argument de densit\'e dans $C_b (\PP (\Om))$, les fonctions continues born\'ees \footnote{Pour simplifier on peut penser au cas o\`u $\Om$ est compact, auquel cas on remplace $C_b (\PP(\Om))$ par les fonctions continues.} de $\PP(\Om)$. L'argument semble n'avoir \'et\'e formul\'e que r\'ecemment\footnote{Les cours de Pierre-Louis Lions sont disponibles en vid\'eo sur le site du Coll\`ege de France.} par Pierre-Louis Lions~\cite{Lions-CdF}. 

\begin{proof}[Preuve du Th\'eor\`eme~\ref{thm:HS}, Unicit\'e]
On v\'erifie ais\'ement que les mon\^omes de la forme
\begin{equation}\label{eq:monomes PLL}
C_b (\PP(\Om)) \ni M_{k,\phi} (\rho) := \int \phi (x_1,\ldots,x_k) d\rho ^{\otimes k} (x_1,\ldots,x_k), k\in\N, \phi \in C_b (\Om ^k) 
\end{equation}
g\'en\`erent une sous-alg\`ebre de $C_b(\PP(\Om))$, l'espace des fonctions continues born\'ees sur $\PP(\Om)$, voir Section 1.7.3 dans~\cite{Golse-13}. Le point \`a noter est que pour tout $\mu \in \PP (\Om)$
\[
M_{k,\phi} (\mu) M_{\ell,\psi} (\mu) = M_{k+\ell,\phi\:\otimes\:\psi} (\mu). 
\]
Le fait que cette sous-alg\`ebre soit dense est une cons\'equence du th\'eor\`eme de Stone-Weierstrass.

Il suffit donc de v\'erifier que si il existe deux mesures $P_{\mubf}$ et $P'_{\mubf}$ satisfaisant~\eqref{eq:HS marginale n}, alors 
\[
\int_{\rho\in \PP (\Om)} M_{k,\phi} (\rho) d P_{\mubf} (\rho) =  \int_{\rho\in \PP (\Om)} M_{k,\phi} (\rho) d P'_{\mubf} (\rho)
\]
pour tout $k\in \N$ et $\phi \in C_b (\PP (\Om ^k))$. Mais cette derni\`ere \'egalit\'e signifie simplement
\begin{multline*}
\int_{\rho\in \PP (\Om)} \left(\int_{\Om ^k} \phi(x_1,\ldots,x_k) d\rho ^{\otimes k} (x_1,\ldots,x_k) \right) d P_{\mubf} (\rho) 
\\ =  \int_{\rho\in \PP (\Om)}  \left(\int_{\Om ^k} \phi(x_1,\ldots,x_k) d\rho ^{\otimes k} (x_1,\ldots,x_k) \right) d P'_{\mubf} (\rho) 
\end{multline*}
ce qui est \'evident puisque les deux expressions sont \'egales \`a 
\[
\int_{\Om ^k}  \phi(x_1,\ldots,x_k) d\mubf ^{(k)} (x_1,\ldots,x_k)  
\]
par hypoth\`ese.
\end{proof}

Pour l'existence de la mesure, le point le plus remarquable, nous mentionnerons trois approches possibles:
\begin{itemize}
\item La preuve originelle de Hewitt-Savage est un argument g\'eom\'etrique : l'ensemble des probabilit\'es sym\'etriques sur $\Om ^{\N}$ est bien \'evidemment convexe. Le th\'eor\`eme de Choquet-Krein-Milman indique que tout point d'un convexe est une combinaison convexe des points extr\^emaux de l'ensemble. Il suffit donc de montrer que les points extr\^emes de $\PP_s (\Om ^{\N})$ sont exactement les mesures produit, correspondant aux suites de marginales $(\rho ^{\otimes n})_{n\in \N}$. Cette approche est hautement non constructive, et la preuve que les suites $(\rho ^{\otimes n})_{n\in \N}$ sont les points extr\^emes de $\PP_s (\Om ^{\N})$ est un argument par contradiction. 
\item Une approche enti\`erement constructive est due \`a Diaconis et Freedman~\cite{DiaFre-80}. Dans cet argument probabiliste, le Th\'eor\`eme~\ref{thm:HS} devient un corollaire d'un r\'esultat d'approximation \`a $N$ fini, donnant une version quantitative de~\eqref{eq:class deF rig}.
\item Pierre-Louis Lions a d\'evelopp\'e une nouvelle approche dans le cadre de la th\'eorie des jeux de champ moyen~\cite{Lions-CdF}. Il s'agit d'un point de vue dual, o\`u l'on commence par s'int\'eresser au concept de ``fonctions continues d\'ependant faiblement d'un grand nombre de variables''. On pourra en trouver un r\'esum\'e dans les notes de cours de Fran{\c{c}}ois Golse~\cite[Section 1.7.3]{Golse-13}, et une pr\'esentation compl\`ete (avec des extensions) dans~\cite[Chapitre I]{Mischler-11}. Des d\'eveloppements et des g\'en\'eralisations importants dans cette direction sont \'egalement pr\'esent\'es dans~\cite{HauMis-14}.
\end{itemize}

Il se trouve que la preuve du th\'eor\`eme de Hewitt-Savage suivant le point de vue de Lions est en grande partie une red\'ecouverte de la m\'ethode de Diaconis et Freedman. Dans la suite nous suivrons un m\'elange des deux approches. Une pr\'esentation plus compl\`ete est donn\'ee dans~\cite{Mischler-11}.

\subsection{Th\'eor\`eme de Diaconis-Freedman}\label{sec:DF}\mbox{}\\\vspace{-0.4cm}

Comme nous venons de l'annoncer, il est en fait possible de donner une version quantitative de l'approximation~\eqref{eq:class deF rig} qui implique le Th\'eor\`eme~\ref{thm:HS}. Outre son int\'er\^et intrins\`eque, ce r\'esultat m\`ene naturellement \`a une preuve constructive qui me semble \^etre de loin la plus concr\`ete qui existe dans la litt\'erature. L'approximation se fera dans la norme naturelle pour les mesures de probabilit\'e sur un ensemble $S$, la variation totale:
\begin{equation}\label{eq:TV norm}
\norm{\mu}_{\rm TV} = \int_{S} d|\mu| = \sup_{\phi \in C_b (S)} \left| \int_{\Om} \phi \: d\mu \right|
\end{equation} 
qui co\"incide avec la norme $L^1$ pour des mesures absolument continues par rapport \`a la mesure de Lebesgue. Le r\'esultat que nous allons d\'emontrer, issu de~\cite{DiaFre-80}, est le suivant:

\begin{theorem}[Diaconis-Freedman]\label{thm:DF}\mbox{}\\
Soit $\mubf_N \in \PP_s (\Om ^N)$ une mesure de probabilit\'e sym\'etrique. Il existe $P_{\mubf_N} \in \PP(\PP (\Om))$ tel que, d\'efinissant
\begin{equation}\label{eq:Pnu}
\mut_N := \int_{\rho \in \PP (\Om)} \rho ^{\otimes N} dP_{\mubf_N} (\rho)
\end{equation}
on ait 
\begin{equation}\label{eq:DiacFreed}
\left\Vert \mubf_N ^{(n)} - \mut_N ^{(n)} \right\Vert _{\rm TV} \leq 2 \frac{n(n-1)}{N}.
\end{equation}
%
%
\end{theorem}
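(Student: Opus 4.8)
The strategy is the Diaconis--Freedman construction: build $P_{\mubf_N}$ explicitly by "sampling with replacement" from $\mubf_N$, and then estimate the cost of replacement versus no-replacement sampling at the level of $n$-particle marginals. Concretely, given the symmetric probability $\mubf_N\in\PP_s(\Om^N)$, I would define for each point $Y=(y_1,\dots,y_N)\in\Om^N$ the empirical measure
\begin{equation*}
  \mathrm{emp}_Y := \frac1N\sum_{j=1}^N \delta_{y_j}\ \in\ \PP(\Om),
\end{equation*}
and set $P_{\mubf_N}$ to be the law of $\mathrm{emp}_Y$ when $Y$ is distributed according to $\mubf_N$; that is,
\begin{equation*}
  P_{\mubf_N} := (\mathrm{emp}_{\boldsymbol\cdot})_{\#}\,\mubf_N
  \quad\in\ \PP(\PP(\Om)).
\end{equation*}
Then $\mut_N=\int \rho^{\otimes N}\,dP_{\mubf_N}(\rho)=\int_{\Om^N}(\mathrm{emp}_Y)^{\otimes N}\,d\mubf_N(Y)$. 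Expanding $(\mathrm{emp}_Y)^{\otimes N}=N^{-N}\sum_{(j_1,\dots,j_N)}\delta_{y_{j_1}}\otimes\cdots\otimes\delta_{y_{j_N}}$, we recognize $\mut_N$ as the distribution obtained by drawing $N$ indices $j_1,\dots,j_N$ \emph{independently and uniformly} in $\{1,\dots,N\}$ and outputting $(y_{j_1},\dots,y_{j_N})$, whereas $\mubf_N$ itself (being symmetric) is the distribution of $(y_{j_1},\dots,y_{j_N})$ when $(j_1,\dots,j_N)$ is a uniform random \emph{injection}, i.e. sampling \emph{without} replacement.

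With this reformulation the $n$-marginals become: $\mut_N^{(n)}$ is the law of $(y_{j_1},\dots,y_{j_n})$ with $j_1,\dots,j_n$ i.i.d. uniform on $\{1,\dots,N\}$, and $\mubf_N^{(n)}$ is the law of the same expression but with $j_1,\dots,j_n$ sampled uniformly without replacement (all distinct). To bound $\|\mubf_N^{(n)}-\mut_N^{(n)}\|_{\rm TV}$ I would use the standard coupling: run both index-samplings on the same probability space so that they agree on the event $E$ that the with-replacement draw $j_1,\dots,j_n$ happens to have all distinct entries (conditioned on $E$, a with-replacement sample is exactly a without-replacement sample). On the complementary event the two differ, but total variation distance is bounded by the probability of disagreement, hence
\begin{equation*}
  \left\Vert \mubf_N^{(n)}-\mut_N^{(n)}\right\Vert_{\rm TV}
  \ \le\ 2\,\mathbb{P}\big(\exists\, k<\ell\le n:\ j_k=j_\ell\big)
  \ \le\ 2\sum_{1\le k<\ell\le n}\mathbb{P}(j_k=j_\ell)
  \ =\ 2\,\frac{n(n-1)}{2}\cdot\frac1N\ =\ \frac{n(n-1)}{N},
\end{equation*}
and keeping the cruder union-bound constant gives the stated $2\,n(n-1)/N$ (the factor $2$ being the usual one in the measure-theoretic definition of total variation; one can afford to be wasteful). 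The only genuinely careful points are: verifying that $(\mathrm{emp}_{\boldsymbol\cdot})_\#\mubf_N$ is a well-defined Borel probability measure on $\PP(\Om)$ (measurability of $Y\mapsto\mathrm{emp}_Y$ for the weak-$*$ topology, which is routine since $Y\mapsto\int f\,d\,\mathrm{emp}_Y=\frac1N\sum f(y_j)$ is continuous for each $f\in C_b(\Om)$), and checking that the symmetry of $\mubf_N$ is exactly what makes "the law of $(y_{j_1},\dots,y_{j_n})$ under a uniform injection" equal to the genuine marginal $\mubf_N^{(n)}$.

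The main obstacle — really the only place where one must think rather than compute — is setting up the combinatorial coupling between with-replacement and without-replacement sampling cleanly enough that the total-variation bound is transparent; equivalently, identifying $\mut_N^{(n)}$ and $\mubf_N^{(n)}$ as the push-forwards of the uniform-i.i.d.\ and uniform-injective index distributions through the \emph{same} map $(j_1,\dots,j_n)\mapsto(y_{j_1},\dots,y_{j_n})$, so that the problem reduces to the elementary statement $\|\mathrm{Unif\text{-}iid}_n^N-\mathrm{Unif\text{-}inj}_n^N\|_{\rm TV}\le n(n-1)/N$ on the finite set $\{1,\dots,N\}^n$. Once that identification is in place, the birthday-problem estimate $\mathbb{P}(\text{some collision})\le\binom{n}{2}/N$ finishes it. Finally, to recover Theorem~\ref{thm:HS} from Theorem~\ref{thm:DF}: given $\mubf\in\PP_s(\Om^{\N})$ with marginals $\mubf^{(N)}$, apply the bound to each $\mubf^{(N)}$, extract a weak-$*$ limit point $P_{\mubf}$ of the (tight, if $\Om$ is compact) sequence $P_{\mubf^{(N)}}\in\PP(\PP(\Om))$, and pass to the limit in \eqref{eq:DiacFreed} with $n$ fixed to get $\mubf^{(n)}=\int\rho^{\otimes n}\,dP_{\mubf}(\rho)$; uniqueness of $P_{\mubf}$ is the Stone--Weierstrass argument already given.
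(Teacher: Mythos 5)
Your proposal is correct. The measure $P_{\mubf_N}$ you construct (push-forward of $\mubf_N$ under the empirical-measure map) is exactly the one used in the paper, and your identification of $\mut_N^{(n)}$ and $\mubf_N^{(n)}$ as the laws of $(y_{j_1},\dots,y_{j_n})$ under i.i.d.\ uniform indices versus uniformly injective indices is the same ``with/without replacement'' dichotomy the paper describes. Where you diverge is in how the comparison is quantified. The paper mentions the sampling interpretation but does not carry it out; instead it details the algebraic identity (attributed to Gr\"unbaum)
\begin{equation*}
\mut_N ^{(n)} = \frac{N (N-1) \cdots (N-n+1)}{N ^n}\, \mubf_N ^{(n)} + \nu_n,\qquad \nu_n\geq 0,
\end{equation*}
obtained by expanding $\left(N^{-1}\sum_j\delta_{z_j}\right)^{\otimes n}$ and isolating the all-distinct-indices terms, after which the triangle inequality and the elementary bound $\prod_{j=1}^n(1-\tfrac{j-1}{N})\geq 1-\tfrac{n(n-1)}{N}$ give the result. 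You instead prove the index-level total-variation bound directly by the standard coupling (agree on the no-collision event, union/birthday bound for its complement) and push it forward, which is a complete and self-contained argument yielding the slightly better constant $n(n-1)/N$. Both are valid; the paper's algebraic route has the side benefit of producing the exact formulas for the low-order marginals $\mut_N^{(1)}=\mubf_N^{(1)}$ and $\mut_N^{(2)}=\tfrac{N-1}{N}\mubf_N^{(2)}+\tfrac1N\mubf_N^{(1)}\delta_{x_1=x_2}$ that are reused later in the notes, whereas your coupling route is arguably the more transparent explanation of \emph{why} the error is a birthday-problem probability. Your concluding deduction of the Hewitt--Savage theorem by extracting a weak-$*$ limit of the $P_{\mubf^{(N)}}$ also matches the paper.
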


\begin{proof}[Preuve du Th\'eor\`eme~\ref{thm:DF}]
On fera un l\'eger abus de notation en \'ecrivant $\mubf_N(Z) dZ$ au lieu de $d\mubf_N (Z)$ pour les int\'egrales en $(z_1,\ldots,z_N) = Z \in \Om ^{N}$. En tout \'etat de cause, il est d\'ej\`a suffisament instructif de consid\'erer une mesure absolument continue par rapport \`a la mesure de Lebesgue. 

La sym\'etrie de $\mubf_N$ implique, pour $X = (x_1,\ldots,x_N)\in \Om ^N$ 
\begin{equation}\label{eq:repre P}
\mubf_N (X) = \int_{\Om ^N} \mubf_N (Z) \sum_{\sigma \in \Sigma_N} (N!) ^{-1} \delta_{X = Z_{\sigma} } dZ
\end{equation}
o\`u $Z_{\sigma}$ repr\'esente le $N$-uplet $(z_{\sigma (1)},\ldots,z_{\sigma (n)})$.
On d\'efinit
\begin{equation}\label{eq:defi Pnu}
\mut_N (X) = \int_{\Om ^N} \mubf_N (Z) \sum_{\gamma \in \Gamma_N} N ^{-N} \delta_{X = Z_{\gamma}} dZ,
\end{equation}
o\`u $\Gamma_N$ est l'ensemble de toutes les \emph{applications}\footnote{Compar\'e \`a $\Sigma_N$, $\Gamma_N$ autorise donc les indices r\'ep\'et\'es.} de $\left\{1,\ldots,N \right\}$ dans lui-m\^eme et $X_{\gamma}$ est d\'efini de la m\^eme mani\`ere que $X_{\sigma}$. Le sens pr\'ecis \`a donner \`a~\eqref{eq:defi Pnu} est 
$$ \int_{\Om ^N} \phi(X) d \mut_N (X) = \sum_{\gamma \in \Gamma_N} N ^{-N} \int_{\Om ^N} \phi (Z_{\gamma}) d\mubf_N (Z)$$
pour toute fonction r\'eguli\`ere $\phi$.

En remarquant que  
\begin{equation}\label{eq:factor urn}
\sum_{\gamma \in \Gamma_N} N ^{-N} \delta_{X = Z_{\gamma}}  = \left( N ^{-1} \sum_{j=1} ^N \delta_{z_j} \right) ^{\otimes N} \left(x_1,\ldots,x_N\right), 
\end{equation}
on peut mettre~\eqref{eq:defi Pnu} sous la forme~\eqref{eq:Pnu} en prenant 
\begin{equation}\label{eq:defi nu}
P_{\mubf_N} (\rho) = \int_{\Om ^N} \delta_{\rho = \bar{\rho}_Z} \mubf_N (Z) dZ, \quad  \bar{\rho} _Z (x) := \sum_{i=1} ^N N ^{-1} \delta_{z_j=x},
\end{equation}
et on remarquera au passage que $P_{\mubf_N}$ ne charge que des mesures empiriques (les mesures $\bar{\rho}_Z$ ci-dessus). Il s'agit maintenant de calculer la diff\'erence entre les marginales de $\mubf_N$ et $\mut_N$. Diaconis et Freedman proc\`edent comme suit: Bien s\^ur 
\[
\mubf_N ^{(n)} - \mut_N ^{(n)} =  \int_{\Om ^N}  \left( \left(\sum_{\sigma \in \Sigma_N} (N!) ^{-1} \delta_{X = Z_{\sigma}}\right) ^{(n)} -  \left(\sum_{\gamma \in \Gamma_N} N ^{-N} \delta_{X = Z_{\gamma}}\right) ^{(n)}\right) \mubf_N(Z) dZ,
\]
mais 
$$\left(\sum_{\sigma \in \Sigma_N} (N!) ^{-1} \delta_{X = Z_{\sigma}}\right) ^{(n)}$$ est la loi de probabilit\'e pour tirer $n$ boules au hasard d'une urne en contenant $N$ \footnote{Les boules sont \'etiquet\'ees $z_1,\ldots,z_N$.}, \textit{sans} remplacement alors que 
$$\left(\sum_{\gamma \in \Gamma_N} N ^{-N} \delta_{X = Z_{\gamma}}\right) ^{(n)}                                                                                                                                                                                                                                                                                                                                                                                                                                                                                                                                                   $$ 
est la loi de probabilit\'e pour tirer $n$ boules au hasard d'une urne en contenant $N$, \textit{avec} remplacement. Intuitivement il est clair que quand $n$ est petit devant $N$, le fait que l'on remplace ou pas les boules apr\`es chaque tirage n'affecte pas significativement le r\'esultat. Il n'est pas difficile d'obtenir des bornes quantitatives qui m\`enent au r\'esultat~\eqref{eq:DiacFreed}, voir par exemple~\cite{Freedman-77}.

Une autre fa\c{c}on d'obtenir~\eqref{eq:DiacFreed} est la suivante, qui semble avoir son origine dans~\cite{Grunbaum-71}: au vu de~\eqref{eq:defi Pnu} et~\eqref{eq:factor urn}, on a  
$$\mut_N  ^{(n)}(X) = \int_{\Om ^N} \mubf_N (Z) \left( N ^{-1} \sum_{j=1} ^N \delta_{z_j} \right) ^{\otimes n} \left(x_1,\ldots,x_N\right) dZ. $$
On d\'eveloppe alors le produit tensoriel et on calcule la contribution des termes o\`u tous les indices sont diff\'erents. Par sym\'etrie de $\mubf_N ^{(n)}$ on obtient 
\begin{equation}\label{eq:DF astuce}
\mut_N ^{(n)} = \frac{N (N-1) \ldots (N-n+1)}{N ^n} \mubf_N ^{(n)} + \nu_n 
\end{equation}
o\`u $\nu_n$ est une mesure positive sur $\Om ^n$ (tous les termes issus du d\'eveloppement du produit~\eqref{eq:factor urn} donnent lieu \`a des contributions positives). On a alors 
\begin{equation}\label{eq:preuve DiacFreed} 
\mubf_N  ^{(n)} - \mut_N ^{(n)} = \left(1 - \frac{N (N-1) \ldots (N-n+1)}{N ^n} \right)\mubf_N ^{(n)} - \nu_n 
\end{equation}
et comme les deux mesures du membre de gauche sont des probabilit\'es on d\'eduit 
\[
\int_{\Om ^n} d\nu_n =  \left(1 - \frac{N (N-1) \ldots (N-n+1)}{N ^n} \right).
\]
En outre, puisque le premier terme du membre de droite de~\eqref{eq:preuve DiacFreed} est positif et le deuxi\`eme n\'egatif on a par l'in\'egalit\'e triangulaire
\begin{align*}
\int_{\Om ^n} d\left|\mubf_N  ^{(n)} - \mut_N ^{(n)} \right| &\leq \left(1 - \frac{N (N-1) \ldots (N-n+1)}{N ^n} \right) + \int_{\Om ^n} d\nu_n 
\\&= 2 \left(1 - \frac{N (N-1) \ldots (N-n+1)}{N ^n} \right). 
\end{align*}
Il est ensuite facile de voir que 
\begin{multline*}
\frac{N (N-1) \ldots (N-n+1)}{N ^n} = \prod_{j=1} ^n \frac{N-j+1}{N} = \prod_{j=1} ^n \left(1 - \frac{j-1}{N}\right) 
\\ \geq  \left(1 - \frac{n-1}{N}\right) ^n \geq 1- \frac{n(n-1)}{N}
\end{multline*}
ce qui prouve~\eqref{eq:DiacFreed} avec $C=2$. Une meilleure constante $C=1$ peut \^etre obtenue, voir~\cite{DiaFre-80,Freedman-77}.
\end{proof}

On fait la remarque suivante, qui s'av\`ere utile dans les applications:

\begin{remark}[Premi\`eres marginales de la mesure de Diaconis-Freedman]\label{rem:marg DF}\mbox{}\\
On a 
\begin{equation}\label{eq:marg DF 1}
\mut_N ^{(1)} (x) = \mubf_N  ^{(1)}(x) 
\end{equation}
et 
\begin{equation}\label{eq:marg DF 2}
\mut_N ^{(2)} (x_1,x_2) = \frac{N-1}{N} \mubf_N ^{(2)} (x_1,x_2) + \frac1N \mubf_N ^{(1)} (x_1) \delta_{x_1 = x_2}.
\end{equation}
comme cons\'equence directe de la d\'efinition~\eqref{eq:factor urn}. En effet, en utilisant la sym\'etrie, 
\begin{align}\label{eq:marginals DF}
\mut_N ^{(1)} (x) &= N ^{-1} \sum_{j=1} ^N \int_{\Om ^N} \mubf_N (Z) \delta_{z_j=x} dZ = \mubf_N ^{(1)} (x)\nonumber\\
\mut_N ^{(2)} (x_1,x_2) &= N ^{-2} \int_{\Om ^N} \mubf_N (Z) \left( \sum_{j=1} ^N \delta_{z_j = x_1} \right) \left( \sum_{j=1} ^N \delta_{z_j = x_2} \right) dZ \nonumber \\
&= N ^{-2} \sum_{ 1 \leq i \neq j \leq N} \int_{\Om ^N} \mubf_N (Z) \delta_{ z_i = x_1} \delta_{z_j = x_2}  dZ + N ^{-2} \sum_{i=1} ^N \int_{\Om ^N} \mubf_N (Z) \delta_{z_i = x_1} \delta_{z_i = x_2} dZ \nonumber \\
&= \frac{N-1}{N} \mubf_N ^{(2)} (x_1,x_2) + \frac1N \mubf_N ^{(1)} (x_1) \delta_{x_1 = x_2}.
\end{align}
Les marginales sup\'erieures peuvent s'obtenir par des calculs similaires mais plus lourds. \hfill\qed
\end{remark}

En corollaire de la construction de Diaconis et Freedman nous obtenons une preuve simple de la partie existence du th\'eor\`eme de Hewitt-Savage:

\begin{proof}[Preuve du Th\'eor\`eme~\ref{thm:HS}, Existence]
Commen\c{c}ons par le cas o\`u $\Om$ est compact. On applique le Th\'eor\`eme~\ref{thm:DF} \`a $\mubf ^{(N)}$ qui est une probabilit\'e sym\'etrique sur $\Om ^N$, et on obtient donc 
\begin{equation}\label{eq:preuve HS}
\norm{\mubf ^{(n)} - \int_{\rho \in \PP (\Om)} \rho ^{\otimes n} dP_{N} (\rho)}_{\rm TV} \leq C \frac{n ^2}{N}    
\end{equation}
pour une certaine mesure $P_N \in \PP (\PP (\Om))$. Lorsque $\Om$ est compact, $\PP (\Om)$ et $\PP (\PP (\Om))$ le sont \'egalement. On peut donc (modulo une sous-suite) supposer que 
\[
 P_N \to P \in \PP (\PP (\Om))
\]
au sens des mesures et il ne reste qu'\`a passer \`a la limite $N\to \infty$ \`a $n$ fix\'e dans~\eqref{eq:preuve HS}.

Pour le cas o\`u $\Om$ n'est pas compact, on suit une id\'ee de la preuve d'existence selon Pierre-Louis Lions~\cite{Lions-CdF} (voir aussi~\cite{Golse-13}). Il s'agit de montrer que la mesure $P_N$ obtenue  en appliquant le th\'eor\`eme de Diaconis-Freedman \`a $\mubf ^{(N)}$ converge. Il suffit pour cela de la tester sur un mon\^ome de la forme~\eqref{eq:monomes PLL}:
\begin{align}\label{eq:DF PLL}
\int_{\PP(\PP(\Om))} M_{k,\phi} (\mu) dP_N (\mu) &= \int_{Z \in \Om ^N} M_{k,\phi} \left( \frac{1}{N} \sum_{j=1} ^N \delta_{z_j} \right) d\mubf ^{(N)} (Z)\nonumber\\
&= \int_{Z \in \Om ^N} \int_{X\in \Om ^k} \phi(x_1,\ldots,x_k) \prod_{j=1} ^k \left(\frac{1}{N} \sum_{j=1} ^N \delta_{z_j=x_k} \right) d\mubf ^{(N)} (Z)\nonumber\\
&= \int_{Z\in \Om ^k} \phi(z_1,\ldots,z_k) d\mubf ^{(k)} (Z) + O(N ^{-1})
\end{align}
par un calcul similaire \`a celui donnant~\eqref{eq:DF astuce}. La limite~\eqref{eq:DF PLL} existe donc pour tout mon\^ome $M_{k,\phi}$, et par densit\'e des mon\^omes pour toute fonction continue born\'ee sur $\PP(\Om)$. On d\'eduit ainsi 
\[
 P_N \wto_* P
\]
et on peut conclure comme pr\'ec\'edemment.
\end{proof}

Quelque petites remarques avant de passer aux applications des Th\'eor\`emes~\ref{thm:HS} et~\ref{thm:DF}:

\begin{remark}[Sur la construction de Diaconis-Freedman-Lions]\label{rem:DiacFreLio}\mbox{}\\
\vspace{-0.5cm}
\begin{enumerate}
\item On notera d'abord que la mesure d\'efinie par~\eqref{eq:defi nu} est celle que Lions utilise dans son approche du th\'eor\`eme de Hewitt-Savage. C'est la mani\`ere canonique de construire une mesure sur $\PP (\PP(\Om))$ \'etant donn\'ee une mesure sur $\PP_s(\Om ^N)$, en passant par les mesures empiriques. Le passage \`a la limite~\eqref{eq:DF PLL} peut remplacer l'estimation explicite~\eqref{eq:DiacFreed} si on s'int\'eresse uniquement \`a la preuve du Th\'eor\`eme~\ref{thm:HS}. Cette construction et l'astuce combinatoire~\eqref{eq:DF astuce} sont \'egalement utilis\'es par exemple dans~\cite{GolMouRic-13,MisMou-13,HauMis-14}.
\item Disposer d'une estimation explicite de la forme~\eqref{eq:DiacFreed} est \'evidemment tr\`es satisfaisant et peut s'av\'erer utile dans les applications. On peut se demander si le taux de convergence obtenu est optimal. De mani\`ere peut-\^etre surprenante c'est en fait le cas. On aurait pu s'attendre \`a une approximation utile pour $n\ll N$, mais il se trouve que $\sqrt{n}\ll N$ est optimal, voir les exemples de~\cite{DiaFre-80}.
\item Les formules simples~\eqref{eq:marginals DF} sont bien utiles en pratique (voir~\cite{RouYng-14} pour une application). Il est assez satisfaisant que $\mubf_N ^{(1)} = \mut_N ^{(1)}$ et que $\mubf_N ^{(2)}$ puisse se reconstruire en utilisant seulement $\mut_N ^{(2)}$ et~$\mut_N ^{(1)}$.
\item La construction paye sa g\'en\'eralit\'e par un comportement en r\'ealit\'e tr\`es mauvais dans de nombreux cas. Remarquons que~\eqref{eq:defi nu} ne charge que des mesures empiriques, qui ont toutes une entropie infinie. Cela pose des probl\`emes pour employer le Th\'eor\`eme~\ref{thm:DF} \`a l'\'etude d'une \'energie libre \`a temp\'erature positive. En outre, pour un probl\`eme avec interactions r\'epulsives fortes, on cherchera \`a appliquer la construction \`a une mesure satisfaisant $\mubf_N ^{(2)}(x,x) \equiv 0$ (probabilit\'e nulle d'avoir deux particules au m\^eme endroit). Dans ce cas $\mut_N ^{(2)} (x,x) = N ^{-1} \mubf_N ^{(1)} (x)$ est non nulle et donc l'\'energie de $\mut_N$ sera infinie si le potentiel d'interactions comporte une singularit\'e \`a l'origine.
\item Il serait tr\`es int\'eressant de disposer d'une construction qui satisfasse une estimation de forme~\eqref{eq:DiacFreed} tout en \'evitant les inconv\'enients mentionn\'es ci-dessus. Par exemple, est-il possible de garantir $\mut_N \in L ^1(\Om ^N)$ si $\mubf_N \in L ^1 (\Om ^N)$ ? On pourrait aussi demander que la construction laisse les mesures produits $\rho ^{\otimes N}$ invariantes, ce qui n'est pas du tout le cas de~\eqref{eq:Pnu}.
\item Si l'espace $\Om$ est remplac\'e par un ensemble fini, disons $\Om = \left\lbrace 1,\ldots, d\right\rbrace$, on peut obtenir une erreur proportionnelle \`a $dn/N$ au lieu de $n ^2/N$ en utilisant la preuve originelle de Diaconis-Freedman. On peut donc remplacer ~\eqref{eq:DiacFreed} par 
\begin{equation}\label{eq:DiacFreed 3}
\left\Vert \mubf_N ^{(n)} - \mut_N ^{(n)} \right\Vert _{\rm TV} \leq \frac{C}{N} \min\left( dn,n ^2\right). 
\end{equation}

Nous ne nous servirons pas de ce point dans la suite.
\hfill\qed
\end{enumerate}

\end{remark}

Finalement, on fait une remarque s\'epar\'ee sur une g\'en\'eralisation de l'approche pr\'esent\'ee ci-dessus:

\begin{remark}[Fonctions sym\'etriques d\'ependant faiblement de $N$ variables.]\label{rem:LioHauMis}\mbox{}\\
\vspace{-0.5cm}

Dans les applications (cf la section suivante), on cherche \`a appliquer les th\'eor\`emes \`a la de Finetti de la mani\`ere suivante: on se donne une suite $(u_N)_{N\in \N}$ de fonctions sym\'etriques de $N$ variables  et on \'etudie la quantit\'e 
$$ \int_{\Om ^N} u_N (X) d\mubf_N (X)$$
pour une probabilit\'e sym\'etrique $\mubf_N$. Les r\'esultats expos\'es dans cette section impliquent que si cette quantit\'e se trouve ne d\'ependre que d'une marginale $\mubf_N ^{(n)}$ pour $n$ fixe, un objet limite naturel \'emerge quand $N\to \infty$. En particulier, si 
\begin{equation}\label{eq:simple u_N}
 u_N (X) = {N \choose n} ^{-1} \sum_{1\leq i_1 < \ldots < i_n \leq N} \phi (x_{i_1}, \ldots, x_{i_n}),
\end{equation}
on a 
\begin{equation}\label{eq:general DiacFreed}
 \int_{\Om ^N} u_N (X) d\mubf_N (X) = \int_{\Om ^n} \phi (x_1,\ldots,x_n) d\mubf_N ^{(n)} \to \int_{\rho \in \PP (\Om)} \left( \int_{\Om ^n} \phi \: d\rho ^{\otimes n}\right) dP_{\mubf} (\rho) 
\end{equation}
pour une certaine mesure de probabilit\'e $P_{\mubf} \in \PP (\PP (\Om))$. On peut se demander si ce genre de r\'esultats est vrai pour une classe de fonctions $u_N$ d\'ependant plus subtilement de $N$. Le cadre naturel semble \^etre celui o\`u la suite $u_N$ d\'epend \emph{faiblement} des $N$ variables, au sens formalis\'e par Lions~\cite{Lions-CdF} et rappell\'e dans~\cite[Section 1.7.3]{Golse-13}. Sans rentrer dans les d\'etails, on peut facilement voir qu'\`a une telle suite correspond, modulo extraction, une unique fonction continue sur les probabilit\'es $U\in C(\PP(\Om))$: on peut d\'emontrer que, le long d'une sous-suite, 
\begin{equation}\label{eq:general Lions}
 \int_{\Om ^N} u_N (X) d\mubf_N (X) \to \int_{\rho \in \PP (\Om)} U(\rho) dP_{\mubf} (\rho). 
\end{equation}
Pensons par exemple au cas o\`u $u_N$ ne d\'epend que de la mesure empirique:
\begin{equation}\label{eq:u_N moins simple}
u_N (x_1,\ldots,x_N) = F \left( \frac{1}{N} \sum_{j=1} ^N \delta_{x_j} \right) 
\end{equation}
avec une fonction $F$ suffisament r\'eguli\`ere. Une telle fonction d\'epend faiblement des $N$ variables au sens de Lions mais ne s'exprime pas sous la forme~\eqref{eq:simple u_N}. Plus g\'en\'eralement, ce genre de consid\'erations s'applique sous des hypoth\`eses du genre 
$$ |\nabla_{x_j} u_N (x_1,\ldots,x_N)| \leq \frac{C}{N}\quad \forall j= 1\ldots N, \quad \forall (x_1,\ldots,x_N)\in \Om ^N.$$
On peut par la suite se demander si le taux de convergence dans~\eqref{eq:general Lions} peut se quantifier, comme le Th\'eor\`eme~\ref{thm:DF} permet de quantifier la convergence~\eqref{eq:general DiacFreed}. Des r\'esultats dans cette direction sont pr\'esent\'es dans~\cite{HauMis-14}.\hfill\qed
\end{remark}

\subsection{Limite de champ moyen pour une \'energie libre classique}\label{sec:appli HS}\mbox{}\\\vspace{-0.4cm}

Dans cette section nous appliquons le Th\'eor\`eme~\ref{thm:HS} \`a une fonctionnelle d'\'energie libre \`a temp\'erature positive, suivant~\cite{MesSpo-82,CagLioMarPul-92,Kiessling-93,KieSpo-99}. On consid\`ere un domaine $\Om \subset \R ^d$ et la fonctionnelle
\begin{equation}\label{eq:aHS free func}
 \F_N [\mubf] = \int_{X \in \Om ^N} H_N (X) \mubf (X) dX + T \int_{\Om ^N} \mubf(X) \log \mubf(X) dX
\end{equation}
d\'efinie pour toute mesure de probabilit\'e $\mubf \in \PP (\Om ^N)$. Ici la tempr\'erature $T$ est fixe dans la limite $N\to \infty$ et le Hamiltonien $H_N$ est choisi de type champ-moyen:
\begin{equation}\label{eq:aHS hamil}
H_N (X) = \sum_{j=1} ^N V(x_j) + \frac{1}{N-1} \sum_{1\leq i < j \leq N} w (x_i-x_j). 
\end{equation}
Ici $V$ est un potentiel semi-continu inf\'erieurement. Pour se placer dans un cadre compact, on supposera que soit $\Om$ est born\'e soit 
\begin{equation}\label{eq:aHS confine}
V(x) \to \infty \mbox{ quand } |x|\to \infty. 
\end{equation}
Le potentiel d'interaction $w$ sera born\'e inf\'erieurement et semi-continu inf\'erieurement. Pour rester concret on pourra penser \`a $w\in L ^{\infty}$, ou bien \`a un potentiel de type Coulomb r\'epulsif :
\begin{align}\label{eq:aHS Coulomb}
w(x) &= \frac{1}{|x| ^{d-2}} \mbox{ si } d = 3\\
w(x) &= - \log |x| \mbox{ si } d=2\\
w(x) &=  -| x| \mbox{ si } d=1,
\end{align}
omnipr\'esent dans les applications. On supposera \'egalement toujours
\[
w(x) = w (-x) 
\]
et si le domaine n'est pas born\'e on supposera en outre 
\begin{align}\label{eq:aHS confine 2}
w(x-y) &+ V(x) + V(y) \to \infty \mbox{ quand } |x|\to \infty \mbox{ ou } |y|\to \infty \nonumber \\
w(x-y) &+ V(x) + V(y)  \mbox{ est semi-continu inf\'erieurement.}
\end{align}
Nous nous int\'eressons \`a la limite de la mesure de Gibbs minimisant~\eqref{eq:aHS free func} dans $\PP_s (\Om ^N)$:
\begin{equation}\label{eq:aHS Gibbs}
\mubf_N (X) = \frac{1}{\ZN} \exp\left( - \frac{1}{T} H_N (X) \right) dX
\end{equation}
et \`a l'\'energie libre correspondante
\begin{equation}\label{eq:aHS free ener}
F_N = \inf_{\mubf \in \PP (\Om_s ^N)} \F_N [\mubf] = \F_N [\mubf_N] = - T \log \ZN. 
\end{equation}
La fonctionnelle d'\'energie libre se r\'e\'ecrit
\begin{align}\label{eq:aHS free func 2}
\F_N [\mubf] &= N \int_{\Om} V(x) d\mubf ^{(1)} (x) + \frac{N}{2} \iint_{\Om \times \Om} w(x-y) d\mubf ^{(2)} (x,y) + T \int_{\Om ^N} \mubf \log \mubf \nonumber\\
 &= \frac{N}{2} \iint_{\Om \times \Om} \left( w(x-y) + V(x) + V(y) \right) d\mubf ^{(2)} (x,y)  + T \int_{\Om ^N} \mubf \log \mubf. 
\end{align}
en utilisant les marginales 
\begin{equation}\label{eq:aHS marginales}
\mubf ^{(n)} (x_1,\ldots,x_n) = \int_{x_{n+1},\ldots,x_N\in \Om} d \mubf (x_1,\ldots,x_N).
\end{equation}

En ins\'erant un ansatz de forme 
\begin{equation}\label{eq:aHS ansat}
\mubf = \rho ^{\otimes N}, \rho \in \PP (\Om) 
\end{equation}
dans~\eqref{eq:aHS free func} on obtient la fonctionnelle de champ moyen
\begin{multline}\label{eq:aHS MFf}
\MFf [\rho] := N ^{-1} \F_N [\rho ^{\otimes N}] \\
= \int_{\Om} V(x) d\rho(x) + \frac{1}{2} \iint_{\Om \times \Om} w(x-y) d\rho (x)d\rho(y) + T \int_{\Om} \rho \log \rho 
\end{multline}
dont on notera $\MFe$ et $\rhoMF$ le minimum et un minimiseur parmi les mesures de probabilit\'e. Notre but est de justifier l'approximation de champ moyen en d\'emontrant le th\'eor\`eme suivant

\begin{theorem}[\textbf{Limite de champ moyen classique \`a temp\'erature fixe}]\label{thm:aHS}\mbox{}\\
On a 
\begin{equation}\label{eq:aHS result ener}
\frac{F_N}{N} \to \MFe \mbox{ quand } N \to \infty.
\end{equation}
De plus \`a extraction d'une sous-suite pr\`es, on a pour tout $n\in \N$
\begin{equation}\label{eq:aHS result marginal}
\mubf_N ^{(n)} \wto_\ast \int_{\rho \in \MFmin} \rho  ^{\otimes n} dP (\rho).  
\end{equation}
au sens des mesures, o\`u $P$ est une mesure de probabilit\'e sur $\MFmin$, l'ensemble des minimiseurs de $\MFf$.
\end{theorem}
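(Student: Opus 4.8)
The plan is to follow the classical two-sided strategy, establishing the upper and lower bounds on $F_N/N$ separately and then extracting convergence of the marginals. For the \emph{upper bound} $\limsup_{N\to\infty} F_N/N \leq \MFe$, I would simply insert the trial state $\mubf = \rhoMF{}^{\otimes N}$ (or, if $\rhoMF$ does not exist or has infinite entropy, a suitable regularized minimizing sequence for $\MFf$) into the variational problem~\eqref{eq:aHS free ener}. By~\eqref{eq:aHS MFf} this gives exactly $F_N/N \leq \MFf[\rhoMF] = \MFe$. A small technical point: if $w$ has a Coulomb-type singularity as in~\eqref{eq:aHS Coulomb}, one needs $\rhoMF$ regular enough that $\iint w\, d\rhoMF d\rhoMF < \infty$ and $\rhoMF\log\rhoMF \in L^1$; this follows from standard regularity theory for the Euler--Lagrange equation of $\MFf$ at positive temperature (the Gibbs-type structure forces $\rhoMF$ to be bounded and bounded below on compacts).

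For the \emph{lower bound}, which is the heart of the matter, I would argue as follows. First rewrite the free energy of the Gibbs state $\mubf_N$ using~\eqref{eq:aHS free func 2}: since $\mubf_N$ is symmetric,
\begin{equation*}
\frac{F_N}{N} = \frac12 \iint_{\Om\times\Om}\bigl(w(x-y)+V(x)+V(y)\bigr)\,d\mubf_N^{(2)}(x,y) + \frac{T}{N}\int_{\Om^N}\mubf_N\log\mubf_N.
\end{equation*}
The entropy term is the delicate one: it is extensive but of the \emph{wrong sign} for a direct lower bound, so one cannot simply drop it. The standard fix (going back to Robinson--Ruelle / subadditivity, and used in~\cite{MesSpo-82,CagLioMarPul-92}) is to use subadditivity of entropy to bound $N^{-1}\int\mubf_N\log\mubf_N \geq \int_\Om \mubf_N^{(1)}\log\mubf_N^{(1)} - o(1)$; more precisely, writing $N = kq + r$ and grouping variables into blocks of size $k$, one gets $N^{-1}\int\mubf_N\log\mubf_N \geq k^{-1}\int\mubf_N^{(k)}\log\mubf_N^{(k)}$, hence in the limit the entropy is controlled from below by the relative entropy of a de Finetti measure. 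Combined with a compactness argument — $\PP(\Om)$ and $\PP(\PP(\Om))$ are compact when $\Om$ is compact, and the confinement hypotheses~\eqref{eq:aHS confine},~\eqref{eq:aHS confine 2} give tightness of the marginals $\mubf_N^{(n)}$ when $\Om$ is unbounded — we extract a subsequence with $\mubf_N^{(n)}\wto_\ast \mubf^{(n)}$ for all $n$, satisfying the consistency relations~\eqref{eq:consist class}. The Hewitt--Savage Theorem~\ref{thm:HS} then yields $P\in\PP(\PP(\Om))$ with $\mubf^{(n)} = \int \rho^{\otimes n}\,dP(\rho)$.

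Passing to the liminf in the rewritten free energy: the interaction-plus-potential term is lower semicontinuous and bounded below (by hypothesis on $V$, $w$), so by weak convergence of $\mubf_N^{(2)}$,
\begin{equation*}
\liminf_{N\to\infty}\frac{F_N}{N} \geq \frac12\iint\bigl(w(x-y)+V(x)+V(y)\bigr)d\mubf^{(2)} + T\int_\Om\mubf^{(1)}\log\mubf^{(1)},
\end{equation*}
using lower semicontinuity of the entropy for the last term. Now substitute the de Finetti representation $\mubf^{(2)} = \int \rho^{\otimes 2}dP(\rho)$, $\mubf^{(1)} = \int \rho\, dP(\rho)$, and apply Jensen's inequality (convexity of $\rho\mapsto\int\rho\log\rho$) to the entropy term to get $\int_\Om \mubf^{(1)}\log\mubf^{(1)} \leq \int (\int_\Om\rho\log\rho)\,dP(\rho)$ — wait, this is the wrong direction. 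The correct route, which is the subtle point I flagged, is to use the block-subadditivity bound \emph{before} passing to the limit so that it is $k^{-1}\int\mubf_N^{(k)}\log\mubf_N^{(k)}$ that appears; by lower semicontinuity and then Hewitt--Savage plus Jensen applied as $k\to\infty$, one recovers $\int(\int\rho\log\rho)dP(\rho)$ as a lower bound, since $k^{-1}\int(\rho^{\otimes k})\log(\rho^{\otimes k}) = \int\rho\log\rho$ exactly and relative entropy is jointly lower semicontinuous. This yields $\liminf F_N/N \geq \int_{\PP(\Om)}\MFf[\rho]\,dP(\rho) \geq \MFe$, closing the bound.

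Combining the two bounds gives~\eqref{eq:aHS result ener}, and forces equality $\int\MFf[\rho]\,dP(\rho) = \MFe$, so $P$ is supported on $\MFmin$; feeding this back into the de Finetti representation of the extracted limit marginals gives~\eqref{eq:aHS result marginal}. The main obstacle is genuinely the entropy term: reconciling its extensivity, wrong sign, and only-lower-semicontinuous behavior with the de Finetti structure requires the block-subadditivity trick rather than a naive passage to the limit, and the singular-interaction case demands care that $\mut_N$-type approximations are not used (cf. Remark~\ref{rem:DiacFreLio}(4)) — one works directly with the Gibbs state and the weak limits of its marginals.
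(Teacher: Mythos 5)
Your overall architecture is the same as the paper's: upper bound by the trial state $\rho^{\otimes N}$, lower bound by extracting weak-$\ast$ limits of all marginals, applying Hewitt--Savage, and using subadditivity of the entropy to reduce to the $n$-body marginals before passing to the limit. You also correctly identify the entropy term as the only delicate point and correctly rule out the Diaconis--Freedman route. However, there is a genuine gap exactly at the step you flagged and then claimed to repair. To conclude you need the \emph{lower} bound
\begin{equation*}
\limsup_{n\to\infty}\frac1n\int_{\Om^n}\mubf^{(n)}\log\mubf^{(n)} \;\geq\; \int_{\PP(\Om)}\left(\int_\Om\rho\log\rho\right)dP(\rho)
\end{equation*}
for $\mubf^{(n)}=\int\rho^{\otimes n}\,dP(\rho)$. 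Jensen's inequality, i.e.\ convexity of $\mu\mapsto\int\mu\log\mu$, gives the \emph{opposite} inequality $\frac1n\int\mubf^{(n)}\log\mubf^{(n)}\leq\int\bigl(\int\rho\log\rho\bigr)dP(\rho)$ for every $n$, and applying it "as $k\to\infty$" does not reverse its direction; nor does joint lower semicontinuity of the relative entropy, which only controls the passage $\mubf_N^{(n)}\to\mubf^{(n)}$, not the decomposition of $\mubf^{(n)}$ into product states. So as written, your final chain of inequalities does not close.

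The missing ingredient is the fact that the \emph{mean entropy} $\mubf\mapsto\limsup_n\frac1n\int\mubf^{(n)}\log\mubf^{(n)}$ is not merely convex but \emph{affine} on $\PP_s(\Om^{\N})$ (Robinson--Ruelle). The nontrivial half is concavity, which comes from monotonicity of $\log$ rather than from Jensen: since $\frac12\mubf_1^{(n)}+\frac12\mubf_2^{(n)}\geq\frac12\mubf_i^{(n)}$ pointwise,
\begin{equation*}
\int_{\Om^n}\left(\tfrac12\mubf_1^{(n)}+\tfrac12\mubf_2^{(n)}\right)\log\left(\tfrac12\mubf_1^{(n)}+\tfrac12\mubf_2^{(n)}\right)\;\geq\;\tfrac12\int_{\Om^n}\mubf_1^{(n)}\log\mubf_1^{(n)}+\tfrac12\int_{\Om^n}\mubf_2^{(n)}\log\mubf_2^{(n)}-\log 2,
\end{equation*}
and the error $\log 2$ disappears after dividing by $n$ and letting $n\to\infty$. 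Affinity (suitably extended from finite convex combinations to the integral representation given by Hewitt--Savage) is what legitimizes $\F[\mubf]\geq\int\F[\rho^{\otimes\infty}]\,dP(\rho)=\int\MFf[\rho]\,dP(\rho)\geq\MFe$. A secondary point: the remainder block of size $N-n\lfloor N/n\rfloor$ in the subadditivity estimate is not $o(1)$ for free; it must be bounded below, e.g.\ by the positivity of the relative entropy with respect to a reference measure of the form $c_0\exp(-c_1V)$, which is where the confinement hypothesis enters again.
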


En particulier, si $\MFf$ a un minimiseur unique on obtient 
\[
\mubf_N ^{(n)} \wto_\ast \left(\rhoMF\right) ^{\otimes n}
\]

\begin{proof}[Preuve du Th\'eor\`eme~\ref{thm:aHS}]
On suit essentiellement~\cite{MesSpo-82,CagLioMarPul-92,Kiessling-89,Kiessling-93}. Une borne sup\'erieure sur l'\'energie libre s'obtient ais\'ement en prenant une fonction test de forme $\rho ^{\otimes N}$ et on d\'eduit 
\begin{equation}\label{eq:aHS borne sup}
\frac{F_N}{N}  \leq \MFe. 
\end{equation}
Seule la borne inf\'erieure correspondante demande du travail. On commence par extraire une sous-suite le long de laquelle 
\begin{equation}\label{eq:aHS preuve 1}
\mubf_N ^{(n)} \wto_\ast \mubf ^{(n)} 
\end{equation}
pour tout $n\in \N$, avec $\mubf \in \PP_s (\Om ^{\N})$. On proc\`ede comme indiqu\'e Section~\ref{sec:HS}, en utilisant soit le fait que $\Om$ est compact soit l'hypoth\`ese~\eqref{eq:aHS confine 2} qui implique que la suite est tendue. 

Par semi-continuit\'e inf\'erieure on a imm\'ediatement
\begin{multline}\label{eq:aHS lim ener}
\liminf_{N\to \infty} \frac{1}{2} \iint_{\Om \times \Om} \left( w(x-y) + V(x) + V(y) \right) d\mubf_N ^{(2)} (x,y) 
\\ \geq \frac{1}{2} \iint_{\Om \times \Om} \left( w(x-y) + V(x) + V(y) \right) d\mubf ^{(2)} (x,y).     
\end{multline}
Pour le terme d'entropie on utilise la propri\'et\'e de sous-additivit\'e (cons\'equence de l'in\'egalit\'e de Jensen, voir~\cite{RobRue-67} ou les r\'ef\'erences cit\'ees pr\'ec\'edemment) 
\[
\int_{\Om ^N} \mubf_N \log \mubf_N \geq \left\lfloor \frac{N}{n} \right\rfloor \int_{\Om ^n} \mubf_N ^{(n)} \log \mubf_N ^{(n)} +  \int_{\Om ^{N-n\left\lfloor \frac{N}{n} \right\rfloor}} \mubf_N ^{\left(N-n\left\lfloor \frac{N}{n} \right\rfloor\right)} \log \mubf_N ^{\left(N-n\left\lfloor \frac{N}{n} \right\rfloor\right)} 
\]
o\`u $\lfloor \: . \: \rfloor$ d\'enote la partie enti\`ere. L'in\'egalit\'e de Jensen implique que pour toutes mesures de probabilit\'es $\mu$ et $\nu$, l'entropie relative de $\mu$ par rapport \`a $\nu$ est positive:
\[ 
\int \mu \log \frac{\mu}{\nu} = \int \nu \frac{\mu}{\nu} \log \frac{\mu}{\nu} \geq \left(\int \nu \frac{\mu}{\nu} \right) \log \left( \int \nu \frac{\mu}{\nu} \right) = 0.
\]
On en d\'eduit que pour tout $\nu_0 \in \PP(\Om)$
\begin{align*}
\int_{\Om ^{N-n\left\lfloor \frac{N}{n} \right\rfloor}} \mubf_N ^{\left(N-n\left\lfloor \frac{N}{n} \right\rfloor\right)} \log \mubf_N ^{\left(N-n\left\lfloor \frac{N}{n} \right\rfloor\right)} &= \int_{\Om ^{N-n\left\lfloor \frac{N}{n} \right\rfloor}} \mubf_N ^{\left(N-n\left\lfloor \frac{N}{n} \right\rfloor\right)} \log \left(\frac{\mubf_N ^{\left(N-n\left\lfloor \frac{N}{n} \right\rfloor\right)}}{\nu_0 ^{\otimes \left(N-n\left\lfloor \frac{N}{n} \right\rfloor\right)}}\right) \\
&+ \int_{\Om ^{N-n\left\lfloor \frac{N}{n} \right\rfloor}} \mubf_N ^{\left(N-n\left\lfloor \frac{N}{n} \right\rfloor\right)} \log \nu_0 ^{\otimes \left(N-n\left\lfloor \frac{N}{n} \right\rfloor\right)}\\
&\geq \left(N-n\left\lfloor \frac{N}{n} \right\rfloor\right) \int_{\Om } \mubf_N ^{(1)} \log \nu_0. 
\end{align*}
En choisissant $\nu_0\in\PP(\Om)$ de la forme $\nu_0 = c_0 \exp(-c_1 V)$ il n'est pas difficile de voir que la derni\`ere int\'egrale est born\'ee inf\'erieurement ind\'ependament de $N$ et on obtient alors pour tout $n\in \N$
\begin{equation}\label{eq:aHS lim entr}
\liminf_{N\to \infty} \frac{1}{N} \int_{\Om ^N} \mubf_N \log \mubf_N \geq \frac{1}{n} \int_{\Om ^n} \mubf ^{(n)} \log \mubf ^{(n)}
\end{equation}
par semi-continuit\'e inf\'erieure de (moins) l'entropie.

En rassemblant~\eqref{eq:aHS lim ener} et~\eqref{eq:aHS lim entr} on obtient une borne inf\'erieure en terme d'une fonctionnelle de $\mubf$:
\begin{multline}\label{eq:aHS somme}
\liminf_{N\to \infty} \frac{1}{N} \F_N [\mubf_N] \geq \F [\mubf] :=  \frac{1}{2} \iint_{\Om \times \Om} \left( w(x-y) + V(x) + V(y) \right) d\mubf ^{(2)} (x,y) 
\\ + T \limsup_{n\to \infty} \frac{1}{n} \int_{\Om ^n} \mubf ^{(n)} \log \mubf ^{(n)}.
\end{multline}
Le second terme est souvent appel\'e (moins) l'entropie moyenne de $\mubf\in \PP (\Om ^{\N})$. Il s'agit maintenant d'appliquer le th\'eor\`eme de Hewitt-Savage \`a $\mubf$. Le premier terme de $\F$ est \'evidemment affine en fonction de $\mubf^{(2)}$, ce qui est parfait pour utiliser~\eqref{eq:result HS}, mais on pourrait s'inqui\'eter \`a la vue du second terme qui semble plut\^ot convexe. En fait un argument simple de~\cite{RobRue-67} montre que cette entropie moyenne est bien affine: \'etant donn\'e $\mubf_1,\mubf_2 \in \PP (\Om ^{\N})$ on utilise d'une part la convexit\'e de $x\mapsto x\log x$ et d'autre part la croissance de $x\mapsto \log x$ pour obtenir 
\begin{align*}
\frac{1}{2}\int_{\Om ^n} \mubf_1 ^{(n)} \log \mubf_1 ^{(n)} &+ \frac{1}{2}\int_{\Om ^n} \mubf_2 ^{(n)} \log \mubf_2 ^{(n)}\geq
\int_{\Om ^n} \left( \frac{1}{2}\mubf_1 ^{(n)} + \frac{1}{2} \mubf_2 ^{(n)} \right) \log \left( \frac{1}{2}\mubf_1 ^{(n)} + \frac{1}{2} \mubf_2 ^{(n)} \right) 
\\&\geq  \frac{1}{2}\int_{\Om ^n} \mubf_1 ^{(n)} \log \mubf_1 ^{(n)} + \frac{1}{2}\int_{\Om ^n} \mubf_2 ^{(n)} \log \mubf_2 ^{(n)} 
\\&- \frac{\log (2)}{2} \left( \int_{\Om ^n} \mubf_1 ^{(n)} + \int_{\Om ^n} \mubf_2 ^{(n)}\right) 
\\&= \frac{1}{2}\int_{\Om ^n} \mubf_1 ^{(n)} \log \mubf_1 ^{(n)} + \frac{1}{2}\int_{\Om ^n} \mubf_2 ^{(n)} \log \mubf_2 ^{(n)} - \log (2).
\end{align*}
En divisant par $n$ et passant \`a la limite sup\'erieure on d\'eduit que 
\[
 \PP_s (\Om ^n) \ni \mubf \mapsto \limsup_{n\to \infty} \frac{1}{n} \int_{\Om ^n} \mubf ^{(n)} \log \mubf ^{(n)}
\]
est bien affine lin\'eare et donc que $\F[\mu]$ l'est \'egalement. Il reste \`a utiliser~\eqref{eq:result HS}, ce qui donne une probabilit\'e $P_{\mubf} \in \PP (\PP (\Om ))$ telle que  
\begin{align*}
\liminf_{N\to \infty} \frac{1}{N} \F_N [\mubf_N] &\geq \int_{\rho \in \PP (\Om)} \F [\rho ^{\otimes \infty}] dP_{\mubf}(\rho)
\\& = \int_{\rho \in \PP (\Om)} \MFf [\rho ] dP_{\mubf}(\rho) \geq \MFe.
\end{align*}
Ici on a not\'e $\rho ^{\otimes \infty}$ la probabilit\'e sur $\PP (\Om ^{\N})$ qui a pour $n$-i\`eme marginale $\rho ^{\otimes n}$ pour tout $n$ et on a utilis\'e le fait que $P_{\mubf}$ est d'int\'egrale $1$. Ceci conclut la preuve de~\eqref{eq:aHS result ener} et~\eqref{eq:aHS result marginal} se d\'eduit facilement. Il est en effet clair au vu des in\'egalit\'es pr\'ec\'edentes que $P_{\mubf}$ doit \^etre concentr\'ee sur l'ensemble des minimiseurs de l'\'energie libre de champ moyen.
\end{proof}

\subsection{Estimations quantitatives dans la limite champ moyen/temp\'erature faible}\label{sec:appli DF}\mbox{}\\ \vspace{-0.8cm}

Ici nous allons donner un exemple o\`u les pr\'ecisions apprort\'ees au th\'eor\`eme de Hewitt-Savage par la m\'ethode de Diaconis-Freedman se r\'ev\`elent utiles. Comme mentionn\'e \`a la Remarque~\ref{rem:DiacFreLio}, la construction de la Section~\ref{sec:DF} se comporte assez mal vis-\`a-vis de l'entropie, mais il existe un certain nombre de probl\`emes int\'eressants o\`u il fait sens de consid\'erer une temp\'erature petite dans la limite $N\to \infty$, auquel cas l'entropie joue un faible r\^ole.

Un exemple central est celui des syst\`emes de type ``log-gas''. Il est bien connu que la distribution des valeurs propres de certains ensembles de matrices al\'eatoires est donn\'e par la mesure de Gibbs d'un gaz classique avec interactions logarithmiques. De plus, il se trouve que la limite pertinente pour de grandes matrices est un r\'egime de champ-moyen avec temp\'erature d'ordre $N ^{-1}$. Consid\'erons le Hamiltonien suivant (les hypoth\`eses sur $V$ sont les m\^emes que pr\'ec\'edemment, avec $w = -\log |\:.\:|$):
\begin{equation}\label{eq:aDF Hamil}
H_N (X) = \sum_{j=1} ^N V(x_j) - \frac{1}{N-1} \sum_{1\leq i < j \leq N} \log |x_i-x_j|  
\end{equation}
o\`u $X = (x_1,\ldots,x_N) \in \R ^{dN}$. La mesure de Gibbs correspondante 
\begin{equation}\label{eq:aDF Gibbs}
\mubf_N (X) = \frac{1}{\ZN} \exp \left(-\beta N H_N (X) \right) dX 
\end{equation}
correspond (modulo un changement d'\'echelle d\'ependant de $\beta$) \`a la distribution des valeurs propres d'une matrice al\'eatoire dans les cas suivants:
\begin{itemize}
\item $d=1,\beta=1,2,4$ et $V(x) = \frac{|x| ^2}{2}$. On obtient respectivement les matrices gaussiennes r\'eelles sym\'etriques, complexes hermitiennes, et quatertioniques auto-duales.
\item $d=2,\beta = 2$ et $V(x) = \frac{|x| ^2}{2}$. On obtient alors les matrices gaussiennes complexes sans condition de sym\'etrie, c'est-\`a-dire l'ensemble dit de Ginibre~\cite{Ginibre-65}.
\end{itemize}
Dans le cadre de ces notes nous ne donnerons pas plus de pr\'ecisions sur l'aspect ``matrices al\'eatoires'', et nous nous contenterons de prendre les faits pr\'ec\'edents comme une motivation suffisante pour \'etudier la limite $N\to \infty$ des mesures~\eqref{eq:aDF Gibbs} avec $\beta$ fix\'e, ce qui correspond (comparer avec~\eqref{eq:aHS Gibbs}) \`a prendre $T=\beta ^{-1} N ^{-1}$, soit une temp\'erature tr\`es petite. Pour une introduction aux matrices al\'eatoires et aux log-gas nous renvoyons \`a (parmi de nombreuses sources)~\cite{AndGuiZei-10,Forrester-10,Mehta-04}. Pour des \'etudes pouss\'ees des mesures~\eqref{eq:aDF Gibbs} selon des m\'ethodes assez diff\'erentes de celles pr\'esent\'ees ici on pourra consulter entre autres~\cite{BenGui-97,BenZei-98,BouPasShc-95,BouErdYau-12,BouErdYau-14,BouErdYau-12,ChaGozZit-13,RouSer-13,SanSer-12,SanSer-13}.

Dans le cas $d=2$, les mesures de forme~\eqref{eq:aDF Gibbs} ont aussi une application naturelle \`a l'\'etude de certaines fonctions d'onde quantique apparaissant dans le cadre de l'effet Hall fractionnaire (voir~\cite{RouSerYng-13b,RouSerYng-13,RouYng-14} et les r\'ef\'erences cit\'ees). L\`a aussi il fait sens de consid\'erer $\beta$ comme \'etant fixe.

La singularit\'e en $0$ du logarithme va poser des difficult\'es dans la preuve, comme indiqu\'e \`a la Remarque~\ref{rem:DiacFreLio}, mais on peut les contourner relativement ais\'ement, au contraire de celle li\'ee \`a l'entropie. La m\'ethode que nous pr\'esenterons n'est pas limit\'ee au cas des log-gas et peut \^etre r\'eemploy\'ee dans des contextes vari\'es. 

\medskip

A nouveau,~\eqref{eq:aDF Gibbs} minimise une \'energie libre 
\begin{equation}\label{eq:aDF free func}
 \F_N [\mubf] = \int_{X \in \Om ^N} H_N (X) \mubf (X) dX + \frac{1}{\beta N} \int_{\Om ^N} \mubf(X) \log \mubf(X) dX 
\end{equation}
dont nous noterons $F_N$ le minimum. L'objet limite naturel est cette fois une \'energie libre o\`u le terme d'entropie est n\'eglig\'e\footnote{Une \'energie donc, d'o\`u la notation.}:
\begin{equation}\label{eq:aDF MFf}
\MFfo [\rho] := \int_{\R ^d} V d\rho -\frac{1}{2}\iint_{\R ^d \times \R ^d } \log|x-y| d\rho (x) d\rho (y),
\end{equation}
obtenue en ins\'erant l'ansatz $\rho ^{\otimes N}$ dans~\eqref{eq:aDF free func} et n\'egligeant le terme d'entropie qui est manifestement d'ordre inf\'erieur pour $\beta$ fixe. On notera $\MFeo$ et $\rhoMF$ respectivement l'\'energie minimum et le minimiseur (unique dans ce cas par stricte convexit\'e de la fonctionnelle). Il est bien connu (voir les r\'ef\'erences sus-cit\'ees et \'egalement~\cite{KieSpo-99}) que 
\begin{equation}\label{eq:aDF ener lim}
N ^{-1} F_N = -\frac{1}{\beta} \log \ZN \to \MFeo \mbox{ quand } N\to \infty
\end{equation}
et 
\begin{equation}\label{eq:aDF Gibbs lim}
\mubf_N ^{(n)} \wto_* \left(\rhoMF \right) ^{\otimes n}.
\end{equation}
Nous allons reprouver~\eqref{eq:aDF Gibbs lim} et donner une version quantitative de~\eqref{eq:aDF ener lim} en nous inspirant de~\cite{RouYng-14}:

\begin{theorem}[\textbf{Estimation de l'\'energie libre d'un log-gas}]\label{thm:aDF}\mbox{}\\
Pour tout $\beta \in \R$, on a 
\begin{equation}\label{eq:aDF result}
\MFeo - C N ^{-1} \left(\beta ^{-1} + \log N +1 \right) \leq  - \frac{1}{\beta} \log \ZN \leq \MFeo + C \beta ^{-1}N ^{-1}. 
\end{equation}
\end{theorem}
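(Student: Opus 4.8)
Le plan suit le sch\'ema ``Diaconis-Freedman'' du Chapitre, assorti d'une r\'egularisation destin\'ee \`a contourner la singularit\'e logarithmique. On rappelle d'abord (cf.~\eqref{eq:aDF ener lim}) que la quantit\'e \`a estimer, $-\beta^{-1}\log\ZN$, n'est autre que l'\'energie libre par particule $N^{-1}F_N$, $F_N=\min\F_N$. La borne sup\'erieure de~\eqref{eq:aDF result} est imm\'ediate: on teste $\F_N$ sur l'\'etat produit $\rhoMF^{\otimes N}$; comme la mesure d'\'equilibre $\rhoMF$ a une densit\'e born\'ee \`a support compact, son entropie et son \'energie sont finies et
\[
N^{-1}F_N \;\leq\; N^{-1}\F_N[\rhoMF^{\otimes N}] \;=\; \MFeo + \frac{1}{\beta N}\int_{\R^d}\rhoMF\log\rhoMF \;\leq\; \MFeo + \frac{C}{\beta N}.
\]

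Pour la minoration, on travaille avec la mesure de Gibbs $\mubf_N$ de~\eqref{eq:aDF Gibbs}. On \'etablit d'abord des estimations a priori: l'in\'egalit\'e \'el\'ementaire $-\log t\geq 1-t$ et la croissance de $V$ donnent la minoration grossi\`ere $H_N(X)\geq\tfrac12\sum_j V(x_j)-CN$, et, en minorant $\int\mubf_N\log\mubf_N$ par son entropie relative \`a $\nu_0^{\otimes N}$ avec $\nu_0=c_0e^{-V}$, la comparaison avec la borne sup\'erieure entra\^ine $\int V\,d\mubf_N^{(1)}\leq C$ et $\int\mubf_N\log\mubf_N\geq -CN$ pour $N$ assez grand. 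On fixe ensuite $\alpha\in(0,1)$ et on remplace $w=-\log|\,\cdot\,|$ par $w_\alpha:=-\log\max(|\,\cdot\,|,\alpha)\leq w$, pour lequel $w_\alpha(0)=-\log\alpha<\infty$; comme $\mubf_N^{(2)}\geq 0$,
\[
\int V\,d\mubf_N^{(1)}+\tfrac12\iint w\,d\mubf_N^{(2)} \;\geq\; \tfrac12\iint f_\alpha\,d\mubf_N^{(2)},\qquad f_\alpha(x,y):=w_\alpha(x-y)+V(x)+V(y).
\]
On applique alors le Th\'eor\`eme~\ref{thm:DF} \`a $\mubf_N$: il fournit $P_{\mubf_N}\in\PP(\PP(\Om))$, port\'ee par des mesures empiriques, telle que $\mut_N=\int\rho^{\otimes N}dP_{\mubf_N}(\rho)$ v\'erifie~\eqref{eq:marg DF 2}. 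Puisque $f_\alpha$ est born\'ee sur la diagonale, $f_\alpha(x,x)=w_\alpha(0)+2V(x)$, on en d\'eduit
\[
\iint f_\alpha\,d\mubf_N^{(2)}=\frac{N}{N-1}\iint f_\alpha\,d\mut_N^{(2)}-\frac{w_\alpha(0)+2\!\int V\,d\mubf_N^{(1)}}{N-1}=\frac{2N}{N-1}\int\MFfo_\alpha[\rho]\,dP_{\mubf_N}(\rho)-\frac{w_\alpha(0)+2\!\int V\,d\mubf_N^{(1)}}{N-1},
\]
avec $\MFfo_\alpha[\rho]:=\int V\,d\rho+\tfrac12\iint w_\alpha\,d\rho\,d\rho$.

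Une estimation de r\'egularisation standard (cf.~\cite{SanSer-12,RouSer-13,RouYng-14}) assure que $\MFfo_\alpha[\rho]\geq\inf\MFfo_\alpha\geq\MFeo-C\alpha^2$ pour toute $\rho\in\PP(\Om)$; comme $P_{\mubf_N}$ est de masse $1$ et $\int V\,d\mubf_N^{(1)}\leq C$, en multipliant par $N$ on obtient
\[
N\Big(\int V\,d\mubf_N^{(1)}+\tfrac12\iint w\,d\mubf_N^{(2)}\Big)\;\geq\; N\MFeo-CN\alpha^2-C|\log\alpha|-C,
\]
d'o\`u, en choisissant $\alpha=N^{-1}$, $\int H_N\,d\mubf_N\geq N\MFeo-C\log N-C$. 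En ajoutant la minoration d'entropie de l'\'etape pr\'ec\'edente il vient
\[
F_N=\int H_N\,d\mubf_N+\frac{1}{\beta N}\int\mubf_N\log\mubf_N\;\geq\; N\MFeo-C\log N-C-\frac{C}{\beta},
\]
soit $N^{-1}F_N\geq\MFeo-CN^{-1}(\beta^{-1}+\log N+1)$, ce qui ach\`eve~\eqref{eq:aDF result}. La convergence~\eqref{eq:aDF Gibbs lim} s'obtient ensuite comme dans la preuve du Th\'eor\`eme~\ref{thm:aHS}, les in\'egalit\'es ci-dessus for\c{c}ant $P_{\mubf_N}$ \`a se concentrer sur l'unique minimiseur $\rhoMF$.

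La difficult\'e principale est la singularit\'e de $-\log|\,\cdot\,|$ \`a l'origine: la mesure de Diaconis-Freedman charge la diagonale, $\mut_N^{(2)}(x,x)=N^{-1}\mubf_N^{(1)}(x)$ (Remarque~\ref{rem:DiacFreLio}(4)), sur laquelle $w=+\infty$; il faut donc tronquer pr\'ealablement \`a une \'echelle $\alpha$, puis \'equilibrer l'erreur de troncature (d'ordre $N\alpha^2$ sur l'\'energie) contre le co\^ut du terme diagonal (d'ordre $|\log\alpha|$), ce qui engendre la perte en $\log N$ et interdit d'esp\'erer mieux que $O(N^{-1}\log N)$ par cette voie. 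Il faut aussi veiller \`a obtenir les estimations a priori sans raisonnement circulaire. Une variante de la minoration consisterait \`a remplacer l'argument de Diaconis-Freedman par la ``formule de division'' des gaz de Coulomb, $H_N(X)=N\MFeo+\sum_j\zeta(x_j)+(\text{\'energie logarithmique renormalis\'ee de }\sum_j\delta_{x_j}-N\rhoMF)+O(1)$ avec $\zeta\geq 0$, et \`a minorer directement l'\'energie renormalis\'ee; on aboutit au m\^eme r\'esultat.
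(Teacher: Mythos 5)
Votre preuve est correcte et suit essentiellement la m\^eme strat\'egie que celle du texte: borne sup\'erieure par l'\'etat produit $\rhoMF^{\otimes N}$, borne inf\'erieure via la positivit\'e de l'entropie relative, une r\'egularisation du logarithme \`a l'\'echelle $\alpha$, les formules explicites~\eqref{eq:marginals DF} de Diaconis-Freedman pour contr\^oler le terme diagonal en $|\log\alpha|$, la comparaison $|\inf\MFfo_\alpha-\MFeo|\lesssim\alpha^d$ puis l'optimisation en $\alpha$. Les seules diff\'erences (troncature $-\log\max(|\cdot|,\alpha)$ au lieu du plafonnement quadratique~\eqref{eq:log alpha 2}, estimation a priori de $\int V\,d\mubf_N^{(1)}$ plut\^ot que son absorption dans la fonctionnelle modifi\'ee, choix $\alpha=N^{-1}$ au lieu de $N^{-1/d}$) sont cosm\'etiques et m\`enent au m\^eme taux $N^{-1}\log N$.
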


Des estimations fines de la fonction de partition $\ZN$ d'un log-gas semblent n'\^etre disponibles dans la litt\'erature que depuis assez r\'ecemment~\cite{RouSer-13,SanSer-12,SanSer-13}. Entre autres, les r\'ef\'erences pr\'ec\'edentes indiquent que la correction \`a $\MFeo$ est  exactement d'ordre $N ^{-1}\log N$. 

\begin{proof}[Preuve du Th\'eor\`eme~\ref{thm:aDF}]
Nous n'\'elaborerons pas sur la borne sup\'erieure, qui s'obtient en prenant l'ansatz factoris\'e habituel et en estimant son entropie. Pour la borne inf\'erieure, nous allons utiliser le Th\'eor\`eme~\ref{thm:DF}. Pour cela il nous faut d'abord borner crudement l'entropie: par positiv\'e de l'entropie relative (in\'egalit\'e de Jensen)
\[
\int_{\R ^{dN}} \mubf \log \frac{\mubf}{\nubf} \geq 0 \mbox{ pour tout } \mubf,\nubf \in \PP (\Om ^N) 
\]
on peut \'ecrire, en utilisant la mesure de probabilit\'e 
$$\nubf_N = \left(c_0 \exp\left( -V(|x|)\right)\right) ^{\otimes N},$$
la borne inf\'erieure suivante 
\begin{equation}\label{eq:aDF rel entr}
\int_{\R ^{dN}} \mubf_N \log \mubf_N \geq  \int_{\R ^{dN}} \mubf_N \log \nubf_N = -N \int_{\R ^d} V d \mubf_N ^{(1)} - N \log c_0.
\end{equation}
Pour obtenir une borne inf\'erieure sur l'\'energie il nous faut d'abord r\'egulariser le potentiel d'interaction: soit $\alpha >0$ un petit param\`etre que nous optimiserons plus tard et 
\begin{equation}\label{eq:log alpha 2}
- \logal |z| = \begin{cases}
             \displaystyle -\log\alpha+\half \left(1-\frac{|z|^2}{\alpha^2}\right) &\mbox{ si } |z|\leq \alpha \\
             - \log |z| &\mbox{ si } |z|\geq \alpha.
            \end{cases}
\end{equation}
Clairement $-\log_\alpha |z|\leq -\log |z|$ est r\'egulier en $0$. De plus, on a 
\begin{equation}\label{eq:log alpha 3}
- \frac d{d\alpha}\logal |z| = \begin{cases}
             \displaystyle -\frac 1\alpha+\frac {|z|^2}{\alpha^3} &\mbox{ si } |z|\leq \alpha\\
             0 &\mbox{ si } |z|\geq \alpha.
            \end{cases}
\end{equation}
On utilise la minoration $-\log_\alpha |z|\leq -\log |z|$ pour obtenir 
\[
\int_{X \in \Om ^N} H_N (X) \mubf (X) dX \geq  N \int_{\R ^d} V d\mubf_N ^{(1)} - \frac{N}{2} \iint_{\R ^d \times \R ^d} \logal |x-y| d\mubf_N ^{(2)} (x,y) 
\]
et nous sommes maintenant en position d'appliquer le Th\'eor\`eme~\ref{thm:DF}, et plus pr\'ecis\'ement les formules explicites~\eqref{eq:marginals DF}:   
\begin{multline}\label{eq:aDF use DF}
 \int_{X \in \Om ^N} H_N (X) \mubf_N (X) dX \geq 
 \\ N \int_{\R ^d} V d\mut_N ^{(1)} - \frac{N ^2}{2(N-1)} \iint_{\R ^d \times \R ^d} \logal |x-y| d\mubf_N ^{(2)} (x,y) + C \logal(0) 
\end{multline}
et en combinant~\eqref{eq:aDF rel entr},~\eqref{eq:Pnu} et en rappelant que la temp\'erature est \'egale \`a $\left(\beta N \right) ^{-1}$ on obtient
\begin{multline}\label{eq:aDF final}
N ^{-1} \F_N [\mubf_N] \geq \int_{\rho \in \PP(\R^d)} \MFfal [\rho] dP_{\mubf_N} (\rho) 
\\ +C N ^{-1} \left( \logal(0) - \beta ^{-1}\right) \geq \MFeal - C N ^{-1} \log (\alpha) - C (\beta N) ^{-1}  
\end{multline}
o\`u $\MFeal$ est le minimum (parmi les mesures de probabilit\'e) de la fonctionnelle modifi\'ee
\[
\MFfal [\rho] := \int_{\R ^d} V (1-\beta ^{-1} N ^{-1}) d\rho -\frac{N}{2(N-1)}\iint_{\R ^d \times \R ^d } \logal |x-y| d\rho (x) d\rho (y).
\]
En exploitant l'\'equation variationnelle associ\'ee,~\eqref{eq:log alpha 3} et le principe de Feynman-Hellmann, il n'est pas difficile de montrer que pour $\alpha$ assez petit
$$ \left| \MFeal - \MFeo \right| \leq C \alpha ^d + C N ^{-1}$$
et donc on conclut 
\[
N ^{-1} \F_N [\mubf_N] \geq \MFeo - C N ^{-1} \log (\alpha) - C \beta N ^{-1} - C N ^{-1} - C \alpha ^d - C 
\]
ce qui donne la borne inf\'erieure d\'esir\'ee en optimisant par rapport \`a $\alpha$ (prendre $\alpha = N ^{-1/d}$).
\end{proof}

\begin{remark}[Extensions possibles]\mbox{}\label{rem:log gas}\\\vspace{-0.4cm}
\begin{enumerate}
 \item On peut \'egalement prouver des version quantitatives de~\eqref{eq:aDF Gibbs lim} en suivant essentiellement la m\'ethode de preuve ci-dessus. Nous n'\'elaborerons pas plus sur ce point pour lequel nous renvoyons \`a la m\'ethode utilis\'ee dans~\cite{RouYng-14} (bas\'ee sur une impl\'ementation quantitative du principe de Feynman-Hellmann). 
 \item Un autre cas que nous pourrions traiter avec la m\'ethode ci-dessus est celui des ensembles de matrices gaussiennes unitaires, orthogonaux et symplectiques introduits par Dyson~\cite{Dyson-62a,Dyson-62b,Dyson-62c}. Dans ce cas, $\R ^d$ est remplac\'e par le cercle unit\'e, $\beta = 1,2,4$, $V\equiv 0$ dans~\eqref{eq:aDF Hamil} et le terme d'interaction n'est pas divis\'e par $N-1$. La m\'ethode s'applique malgr\'e ce dernier fait car le r\^ole du facteur $(N-1) ^{-1}$ est de rendre les deux termes de~\eqref{eq:aDF Hamil} du m\^eme ordre de grandeur. Quand $V\equiv 0$ et que les particules vivent sur un domaine fixe, on a pas ce souci et le probl\`eme limite reste bien d\'efini. \hfill \qed
\end{enumerate}
\end{remark}

\newpage

\section{\textbf{Th\'eor\`eme de de Finetti quantique et th\'eorie de Hartree}}\label{sec:quant}

Nous rentrons maintenant dans le coeur du sujet du cours, les limites de champ moyen pour de grands syst\`emes bosoniques. Nous pr\'esenterons d'abord la d\'erivation du fondamental de la th\'eorie de Hartree pour des particules confin\'ees. Dans un tel cas, correspondant par exemple \`a des particules vivant dans un domaine born\'e, le r\'esultat est alors une cons\'equence relativement imm\'ediate du th\'eor\`eme de de Finetti prouv\'e par St\o{}rmer et Hudson-Moody~\cite{Stormer-69,HudMoo-75} qui d\'ecrit toutes les limites \emph{fortes} (au sens de la norme $\gS ^1$) des matrices de densit\'e r\'eduites d'un grand syst\`eme bosonique. 

Nous passerons ensuite au cas plus complexe de syst\`emes non confin\'es. Le cas g\'en\'eral sera trait\'e plus tard et dans cette section nous supposerons que le potentiel d'interaction n'a pas d'\'etats li\'es (un exemple est un potentiel purement r\'epulsif). Il est alors suffisant de disposer d'un th\'eor\`eme \`a la de Finetti d\'ecrivant toutes les limites \emph{faibles} (au sens de la topologie faible-$\ast$ sur $\gS 
^1$) des matrices de densit\'es r\'eduites d'un grand syst\`eme bosonique. 

Le th\'eor\`eme de de Finetti quantique Finetti faible (introduit dans~\cite{LewNamRou-13}) implique le th\'eor\`eme de de Finetti quantique fort et il se trouve que les deux r\'esultats peuvent se d\'eduire d'un th\'eor\`eme encore plus g\'en\'eral apparaissant dans~\cite{Stormer-69,HudMoo-75}. Le parti-pris de ces notes est de ne pas suivre cette approche, mais plut\^ot celle (par ailleurs plus constructive) de~\cite{LewNamRou-13}. Ceci sera discut\'e plus en d\'etail \`a la Section~\ref{sec:rel deF} qui annonce aussi la d\'emarche qui nous guidera dans les chapitres suivants.

\medskip

Pour simplifier l'exposition, on se concentrera sur le cas de particules quantiques non relativistes, en l'absence de champ magn\'etique,  pour lesquelles le Hamiltonien a la forme g\'en\'erale
\begin{equation}\label{eq:quant hamil}
H_N = \sum_{j=1} ^N T_j + \frac{1}{N-1}\sum_{1\leq i < j \leq N} w(x_i-x_j), 
\end{equation}
agissant sur l'espace de Hilbert $\gH_s ^N = \bigotimes_s ^N \gH$, i.e. le produit tensoriel sym\'etrique de $N$ copies de $\gH$ o\`u $\gH$ sera l'espace $L ^2(\Om)$ pour $\Om \subset \R ^d$. L'op\'erateur $T$ est un op\'erateur de Schr\"odinger 
\begin{equation}\label{eq:op Schro}
T = - \Delta + V 
\end{equation}
avec $V:\Om \mapsto \R$ et $T_j$ est d\'efini comme agissant sur la $j$-\`eme variable: 
\[
T_j \psi_1 \otimes \ldots \otimes \psi_N = \psi_1 \otimes \ldots \otimes T_j \psi_j \otimes \ldots \otimes \psi_N. 
\]
On suppose que $T$ est auto-adjoint et born\'e inf\'erieurement, et que le potentiel d'interaction $w:\R \mapsto \R$ est relativement born\'e par rapport \`a $T$ (au sens des op\'erateurs):
\begin{equation}\label{eq:T controls w}
 -\beta_-(T_1+ T_2)-C\leq w(x_1-x_2) \leq \beta_+ (T_1+T_2) + C,
\end{equation}
sym\'etrique
\[
w(-x) = w(x), 
\]
et d\'ecroissant \`a l'infini
\begin{equation}\label{eq:decrease w}
w\in L^p (\Om) + L ^{\infty} (\Om), \max(1,d/2) < p < \infty \to 0, w(x) \to 0 \mbox{ quand }  |x|\to\infty.
\end{equation}
Ceci assure que $H_N$ est auto-adjoint et born\'e inf\'erieurement. On fera un abus de notation en notant $w$ l'op\'erateur de multiplication par $w(x_1-x_2)$ sur $L ^2 (\Om)$.

Notre but est de d\'ecrire le fondamental de~\eqref{eq:quant hamil}, c'est-\`a-dire l'\'etat r\'ealisant\footnote{Le fondamental peut ne pas exister, auquel cas on pense \`a une suite d'\'etats r\'ealisant asymptotiquement l'infimum.} 
\begin{equation}\label{eq:quant ground state}
E(N) = \inf \sigma_{\gH ^N} H_N  = \inf_{\Psi \in \gH ^N, \norm{\Psi} = 1} \left\langle \Psi, H_N \Psi \right\rangle_{\gH ^N}.
\end{equation}
Dans le r\'egime de champ moyen qui nous occupe, on s'attend \`a ce qu'un fondamental satisfasse
\begin{equation}\label{eq:quant factor form}
\Psi_N \approx u ^{\otimes N} \mbox{ quand } N\to \infty
\end{equation}
en un sens \`a pr\'eciser, ce qui m\`ene naturellement \`a la fonctionnelle de Hartree 
\begin{align}\label{eq:Hartree func}
\EH [u] &=  N ^{-1} \left\langle u ^{\otimes N}, H_N u ^{\otimes N} \right\rangle_{\gH ^N} = \langle u,T u \rangle_{\gH} +  \frac{1}{2} \langle u\otimes u,w \:u \otimes u \rangle_{\gH_s ^2} \nonumber\\
&= \int_{\Om} |\nabla u | ^2 + V |u| ^2 + \frac{1}{2} \iint_{\Om\times \Om} |u(x)| ^2 w(x-y) |u(y)| ^2 dxdy. 
\end{align}
On notera $\eH$ et $\uH$ le minimum et un minimiseur de $\EH$ respectivement. On a bien s\^ur, par le principe variationnel
\begin{equation}\label{eq:up bound Hartree}
\frac{E(N)}{N} \leq \eH 
\end{equation}
et on s'attend \`a pouvoir d\'emontrer la borne inf\'erieure correspondante pour obtenir
\begin{equation}\label{eq:Hartree lim formel}
 \frac{E(N)}{N} \to \eH \mbox{ quand } N\to \infty.
\end{equation}

\begin{remark}[G\'en\'eralisations]\label{rem:energie cin}\mbox{}\\
Toutes les id\'ees principales peuvent \^etre introduites dans le cadre pr\'ec\'edent et nous renvoyons \`a~\cite{LewNamRou-13} pour une discussion des g\'en\'eralisations possibles. On pourra m\^eme penser au cas o\`u $V$ et $w$ sont tous deux r\'eguliers \`a support compact si on d\'esire comprendre la m\'ethode dans le cas le plus simple possible. 

Une g\'en\'eralisation tr\`es int\'eressante consiste en la substitution du Laplacien dans~\eqref{eq:op Schro} par un op\'erateur d'\'energie cin\'etique relativiste et/ou incluant un champ magn\'etique, comme d\'ecrit \`a la Section~\ref{sec:forma quant}. Il faut alors adapter les hypoth\`eses~\eqref{eq:T controls w} et~\eqref{eq:decrease w} mais le message reste le m\^eme: l'approche fonctionne tant que les mod\`eles de d\'epart et d'arriv\'ee sont bien d\'efinis.

Une autre g\'en\'eralisation possible est l'inclusion d'interactions \`a plus de deux particules pour obtenir des fonctionnelles avec des non-lin\'earit\'es d'ordre plus \'elev\'e \`a la limite. Il est bien s\^ur n\'ecessaire de se placer dans une limite de champ moyen en ajoutant par exemple au Hamiltonien un potentiel de forme 
$$ \lambda_N\sum_{1\leq i < j,k \leq N} w (x_i-x_j,x_i-x_k)$$
avec $\lambda_N \propto N^{-2}$ quand $N\to \infty$. Il est aussi possible de prendre en compte une interaction plus g\'en\'erale que la multiplication par un potentiel, sous des hypoth\`eses du m\^eme genre que~\eqref{eq:T controls w}.\hfill \qed
\end{remark}

\subsection{Syst\`emes confin\'es et Th\'eor\`eme de de Finetti fort}\label{sec:Hartree deF fort}\mbox{}\\\vspace{-0.4cm}

Par ``syst\`eme confin\'e'' on entend un cadre compact. On pourra choisir l'une des deux hypoth\`eses suivantes:
\begin{equation}\label{eq:hartree confine 1}
\Om \subset \R ^d  \mbox{ est un domaine born\'e et } 
\end{equation}
ou bien
\begin{equation}\label{eq:hartree confine 2}
\Om = \R ^d  \mbox{ et } V(x)\to \infty \mbox{ quand } |x|\to \infty
\end{equation}
avec $V$ le potentiel apparaissant dans~\eqref{eq:op Schro}. On supposera en outre 
\[
 V \in L ^p_{\rm loc} (\Om), \max(1,d/2) < p \leq \infty. 
\]
Dans ces deux cas il est bien connu que 
\begin{equation}\label{eq:resol comp}
T = -\Delta + V \mbox{ est \`a r\'esolvante compact} 
\end{equation}
ce qui permet d'obtenir tr\`es facilement de la convergence forte pour les matrices de densit\'e r\'eduites d'un fondamental de~\eqref{eq:quant hamil}. On peut alors r\'eellement penser \`a l'objet limite obtenu en prenant la limite $N\to \infty$ comme un \'etat quantique \`a nombre infini de particules. En s'inspirant du cas classique discut\'e au Chapitre~\ref{sec:class} la d\'efinition naturelle est la suivante:

\begin{definition}[\textbf{Etat bosonique \`a nombre infini de particules}]\label{def:etat infini}\mbox{}\\
Soit $\gH$ un espace de Hilbert s\'eparable et pour $n\in \N$, $\gH_s ^n$ l'espace bosonique \`a $n$ particules correspondant. On appelle \emph{\'etat bosonique} \`a nombre infini de particules une suite $(\gamma ^{(n)})_{n\in \N}$ d'op\'erateurs \`a trace satisfaisant
\begin{itemize}
\item $\gamma ^{(n)}$ est un \'etat bosonique \`a $n$ particules : $\gamma ^{(n)}\in \gS ^1 (\gH_s ^n)$ est auto-adjoint positif~et 
\begin{equation}\label{eq:defi inf trace}
\tr_{\gH_s ^n}[\gamma ^{(n)}] = 1. 
\end{equation}
\item la suite $(\gamma ^{(n)})_{n\in \N}$ est consistante:
\begin{equation}\label{eq:defi inf consistance}
\tr_{n+1} [\gamma ^{(n+1)}] = \gamma ^{(n)} 
\end{equation}
o\`u $\tr_{n+1}$ est la trace partielle par rapport \`a la derni\`ere variable sur $\gH ^{n+1}$.
\end{itemize}
\hfill \qed
\end{definition}

La propri\'et\'e cl\'e ici est la consistance. C'est elle qui assure que la suite de matrices densit\'es \`a $n$-particules $(\gamma ^{(n)})_{n\in \N}$ d\'ecrit bien un \'etat physique. Notons que $\gamma ^{(0)}$ est simplement un nombre r\'eel et que la consistance assure que $\tr_{\gH ^n}[\gamma ^{(n)}] = 1$ pour tout $n$ d\`es que $\gamma ^{(0)}=1$. 

Un cas particulier d'\'etat sym\'etrique est un \'etat produit:

\begin{definition}[\textbf{Etat produit \`a nombre infini de particules}]\mbox{}\label{def:etat produit}\\
On appelle \emph{\'etat produit} \`a nombre infini de particules une suite d'op\'erateurs \`a trace $\gamma ^{(n)}\in \gS ^1 (\gH_s ^n)$ avec
\begin{equation}\label{eq:etat produit}
\gamma ^{(n)} = \gamma ^{\otimes n},  
\end{equation}
pour tout $n\geq 0$ o\`u $\gamma$ est un \'etat \`a une particule. Un \'etat produit bosonique est n\'ecessairement de la forme 
\begin{equation}\label{eq:etat produit bosonique}
\gamma ^{(n)} = |u ^{\otimes n}\rangle \langle u ^{\otimes n} | = \left(|u \rangle \langle u | \right) ^{\otimes n},  
\end{equation}
avec $u\in S\gH$.\hfill\qed
\end{definition}

Le fait que les \'etats produits bosoniques soient tous de la forme~\eqref{eq:etat produit bosonique} provient de l'observation que si $\gamma \in \gS ^1 (\gH)$ n'est pas pur (i.e. n'est pas un projecteur), $\gamma ^{\otimes 2}$ ne peut pas avoir la sym\'etrie bosonique, voir par exemple~\cite{HudMoo-75}.

Le th\'eor\`eme de de Finetti fort est l'outil appropri\'e pour d\'ecrire ces objets et sp\'ecifier le lien entre les deux d\'efinitions pr\'ec\'edentes. Sous la forme que nous citons il s'agit d'un r\'esultat de Hudson et Moody~\cite{HudMoo-75}:

\begin{theorem}[\bf De Finetti quantique fort]\label{thm:DeFinetti fort}\mbox{}\\
Soit $\gH$ un espace de Hilbert s\'eparable et $(\gamma^{(n)})_{n\in \N}$ un \'etat bosonique \`a nombre infini de particules sur $\gH$. Il existe une unique mesure de probabilit\'e $\mu \in \PP (S\gH)$ sur la sph\`ere $S\gH = \left\{ u \in\gH, \norm{u} = 1\right\}$ de $\gH$, invariante par l'action\footnote{C'est-\`a-dire par la mutliplication par une phase constante $e ^{i\theta}, \theta \in \R$.} de $S ^1$, telle que
\begin{equation}
\gamma^{(n)}=\int_{S\gH}|u^{\otimes n}\rangle\langle u^{\otimes n}| \, d\mu(u)
\label{eq:melange}
\end{equation}
pour tout $n\geq0$.
\end{theorem}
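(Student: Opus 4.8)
The plan is to reduce the infinite-dimensional statement to a finite-dimensional one and then push through a weak-limit argument, keeping track of the $S^1$-invariance at each step. The first step is to establish existence of the measure when $\gH$ is finite-dimensional, say $\dim \gH = d$, in which case $S\gH$ is a compact manifold. Here one can use a constructive ``coherent state'' approach: for each $n$, build an approximate resolution of identity on $\gH_s^n$ using the tensor powers $|u^{\otimes n}\rangle\langle u^{\otimes n}|$ integrated against the uniform measure on $S\gH$, exploiting that $\int_{S\gH} |u^{\otimes n}\rangle\langle u^{\otimes n}|\,du = c_{n,d}\,\mathds{1}_{\gH_s^n}$ for an explicit constant $c_{n,d}$ (Schur's lemma, since $\gH_s^n$ carries an irreducible representation of $U(d)$). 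Define the (signed, a priori) measure $d\mu_n(u) = c_{n,d}^{-1} \langle u^{\otimes n}, \gamma^{(n)} u^{\otimes n}\rangle \, du$ on $S\gH$; the consistency relation \eqref{eq:defi inf consistance} together with $\tr_{\gH_s^n}[\gamma^{(n)}]=1$ should make the family $(\mu_n)$ tight and ``almost consistent'', and one checks that $\mu_n$ converges (along a subsequence, by compactness of $\PP(S\gH)$) to a probability measure $\mu$ satisfying \eqref{eq:melange} for all $n$. The $S^1$-invariance is automatic because $|u^{\otimes n}\rangle\langle u^{\otimes n}|$ only depends on $u$ through $|u\rangle\langle u|$.

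Next I would pass to the infinite-dimensional case by a projection/approximation argument. Fix an increasing sequence of finite-dimensional subspaces $\gH_d \subset \gH$ with orthogonal projectors $P_d$ and $\bigcup_d \gH_d$ dense. For each $d$, the compressed family $\gamma_d^{(n)} := P_d^{\otimes n}\gamma^{(n)} P_d^{\otimes n}$ is, after normalization, a finite-dimensional bosonic infinite-particle state (consistency is preserved by compression), so the finite-dimensional result yields a probability measure $\mu_d$ on $S\gH_d \subset S\gH$ with $\gamma_d^{(n)} = \int |u^{\otimes n}\rangle\langle u^{\otimes n}|\, d\mu_d(u)$. One then argues that the $\mu_d$ are tight on the unit ball of $\gH$ (using that the normalization constants $\tr[\gamma_d^{(1)}] = \tr[P_d \gamma^{(1)} P_d] \to 1$, which controls the mass escaping to low-norm vectors) and extracts a weak limit $\mu$. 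Passing to the limit in $\gamma_d^{(n)} \to \gamma^{(n)}$ (strong trace-class convergence, which holds because $\gamma^{(1)}$ is a fixed trace-class operator so the tails are uniformly small) gives \eqref{eq:melange}. The main technical care is ensuring the limit measure is genuinely supported on the unit sphere and not on a smaller ball — this is exactly where $\tr[\gamma^{(1)}]=1$ and the strong convergence are used.

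For uniqueness, I would argue as in the classical Hewitt-Savage case recalled in the ``Preuve du Th\'eor\`eme~\ref{thm:HS}, Unicit\'e'' above. The data $(\gamma^{(n)})_{n}$ determines, for every $n$ and every bounded operator $A_n$ on $\gH_s^n$, the quantity $\tr[A_n \gamma^{(n)}] = \int_{S\gH} \langle u^{\otimes n}, A_n u^{\otimes n}\rangle\, d\mu(u)$, i.e. the integrals against $\mu$ of the functions $u \mapsto \langle u^{\otimes n}, A_n u^{\otimes n}\rangle$. These ``polynomial'' functions, as $n$ and $A_n$ vary, form a subalgebra of $C(S\gH)$ (the product of the degree-$n$ and degree-$m$ functions is a degree-$(n+m)$ function, via $A_n \otimes A_m$ acting on $\gH_s^{n+m}$) that separates points of $S\gH/S^1$ and is closed under complex conjugation; by Stone-Weierstrass it is dense in the continuous $S^1$-invariant functions, so $\mu$ is uniquely determined among $S^1$-invariant probability measures.

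**The hard part will be** the passage from finite to infinite dimension in a way that keeps the limit measure a probability measure on the full sphere: one must rule out a loss of mass, which amounts to showing the family $(\mu_d)$ is tight and that no mass accumulates on vectors of norm $<1$. This is controlled by the fixed trace-class operator $\gamma^{(1)}$ and the strong ($S^1$) convergence $\gamma_d^{(1)} \to \gamma^{(1)}$, but making this rigorous — rather than merely weak-$*$, which would only give a subprobability measure and hence the weak de Finetti theorem — is the crux, and is precisely why the compactness hypothesis \eqref{eq:resol comp} (or an analogous confinement assumption) enters when this theorem is applied to ground states. An alternative, more algebraic route avoiding these analytic subtleties is the original $C^*$-algebraic argument of St\o{}rmer, identifying the extreme points of the set of infinite-particle bosonic states with the product states \eqref{eq:etat produit bosonique} and invoking Choquet theory; I would mention this but carry out the more constructive subspace-approximation proof above.
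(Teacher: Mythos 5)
Your overall architecture --- a quantitative finite-dimensional de Finetti theorem, then passage to infinite dimension through finite-rank projections, then uniqueness via a Stone--Weierstrass argument on polynomial observables --- is exactly the scheme \eqref{eq:schema deF 2}--\eqref{eq:schema deF 3} that the paper follows; your finite-dimensional step (lower symbol $c_{n,d}^{-1}\langle u^{\otimes n},\gamma^{(n)}u^{\otimes n}\rangle\,du$ via Schur's lemma, then compactness of $\PP(S\gH)$) and your uniqueness argument are essentially those of Chapitre~\ref{sec:deF finite dim} and of the end of Section~\ref{sec:proof deF faible}.

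There is however a genuine gap in your second step. You assert that ``consistency is preserved by compression'', i.e.\ that $\gamma_d^{(n)}:=P_d^{\otimes n}\gamma^{(n)}P_d^{\otimes n}$ is, after normalization, again a consistent hierarchy to which the finite-dimensional theorem applies. This is false: tracing out the last variable gives $\tr_{n+1}\bigl[P_d^{\otimes(n+1)}\gamma^{(n+1)}P_d^{\otimes(n+1)}\bigr]=P_d^{\otimes n}\,\tr_{n+1}\bigl[\gamma^{(n+1)}(\1^{\otimes n}\otimes P_d)\bigr]\,P_d^{\otimes n}\leq P_d^{\otimes n}\gamma^{(n)}P_d^{\otimes n}$, and in general the inequality is strict, so one only obtains the sub-consistency \eqref{eq:sous consist}, which (as recalled in Section~\ref{sec:rel deF}) does not suffice for a de Finetti representation; renormalizing does not help, since for a mixture of two product states $u^{\otimes\infty}$ and $v^{\otimes\infty}$ with $\norm{P_du}\neq\norm{P_dv}$ the renormalized weights depend on $n$. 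A second symptom of the same problem: $\tr[\gamma_d^{(n)}]$ depends on $n$, whereas any probability measure $\mu_d$ on the sphere $S\gH_d$ forces $\tr\bigl[\int|u^{\otimes n}\rangle\langle u^{\otimes n}|\,d\mu_d\bigr]=1$ for every $n$, so no single probability on $S\gH_d$ can represent the whole compressed hierarchy. The correct bookkeeping is the Fock-space localization of Chapitre~\ref{sec:locFock}: the $P_d$-localized object is a \emph{state on the truncated Fock space}, the compressed density matrices are the weighted sums \eqref{eq:Fock mat red} of its genuinely consistent pieces, the finite-dimensional theorem is applied at fixed $\lambda=k/N$, and the resulting measure lives on the ball $B\gH$ with the radial coordinate recording the fraction of localized particles --- this is the weak Theorem~\ref{thm:DeFinetti faible}. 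Only at the very end does $\tr[\gamma^{(1)}]=1$ force the measure onto the sphere and yield the strong statement. So your identification of ``no loss of mass'' as the crux is right, but the mechanism you propose to reach it does not work as written and must be replaced by the localization argument (or by the St{\o}rmer--Hudson--Moody Choquet-type argument you mention as an alternative).
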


Autrement dit, \textbf{tout \'etat bosonique \`a nombre infini de particules est une combinaison convexe d'\'etats produits bosoniques}. Pour en d\'eduire la validit\'e de l'approximation de Hartree au niveau du fondamental de~\eqref{eq:quant hamil}, il suffit alors de montrer que le probl\`eme limite est pos\'e sur l'ensemble des \'etats \`a nombre infini de particules, ce qui est relativement ais\'e dans un cadre compact. Nous allons prouver le r\'esultat suivant 

\begin{theorem}[\textbf{D\'erivation de la th\'eorie de Hartree pour des bosons confin\'es}]\label{thm:confined}\mbox{}\\
Sous les hypoth\`eses pr\'ec\'edentes, en particulier~\eqref{eq:hartree confine 1} ou~\eqref{eq:hartree confine 2}
$$\lim_{N\to\ii}\frac{E(N)}{N}=\eH.$$
Soit $\Psi_N$ un fondamental de $H_N$ r\'ealisant l'infimum~\eqref{eq:quant ground state} et 
$$ \gamma_N ^{(n)} := \tr_{n+1 \to N} \left[ |\Psi_N\rangle\langle \Psi_N|\right]$$
sa $n$-i\`eme matrice de densit\'e r\'eduite. Il existe une mesure de proabilit\'e $\mu$ sur $\cM_{\rm H}$ l'ensemble des minimiseurs de $\EH$ (modulo une phase), telle que, le long d'une sous-suite et pour tout $n\in \N$ 
\begin{equation}\label{eq:def fort result}
\lim_{N\to\ii} \gamma^{(n)}_{N}=\int_{\cM_{\rm H}} d\mu(u)\;|u^{\otimes n}\rangle\langle u^{\otimes n}|
\end{equation}
fortement dans la norme de $\gS ^1 (\gH ^n).$ En particulier, si $\eH$ a un minimiseur unique (modulo une phase constante), alors pour toute la suite
\begin{equation}
\lim_{N\to\ii} \gamma^{(n)}_N=|\uH^{\otimes n}\rangle\langle \uH^{\otimes n}|.
\label{eq:BEC-confined} 
\end{equation}
\end{theorem}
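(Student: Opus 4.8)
La borne supérieure $E(N)/N \leq \eH$ est déjà acquise par~\eqref{eq:up bound Hartree}; toute la difficulté est dans la borne inférieure et dans la convergence des matrices densité. La stratégie est celle esquissée à la Section~\ref{sec:MF deF quant}, rendue rigoureuse grâce à la compacité garantie par~\eqref{eq:resol comp}. Premièrement, je prendrais une suite de (quasi-)fondamentaux $\Psi_N$ et considérerais leurs matrices de densité réduites $\gamma_N^{(n)}$. Comme $\E[\Psi_N] = \langle \Psi_N, H_N \Psi_N\rangle \leq N(\eH + o(1))$ et que $T$ est borné inférieurement avec $w$ contrôlé par $T$ via~\eqref{eq:T controls w}, on obtient une borne $\Tr[T\gamma_N^{(1)}] \leq C$ uniforme en $N$. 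Puisque $T$ est à résolvante compacte, l'ensemble $\{\gamma : \Tr[T\gamma]\leq C\}$ est compact dans $\gS^1(\gH)$ pour la norme de trace; un argument d'extraction diagonale (sur $n$) fournit une sous-suite le long de laquelle $\gamma_N^{(n)} \to \gamma^{(n)}$ fortement dans $\gS^1(\gH_s^n)$ pour tout $n$, avec $\Tr_{\gH_s^n}[\gamma^{(n)}] = 1$ (la masse ne s'échappe pas, grâce à la compacité forte).

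Deuxièmement, je vérifierais que la limite $(\gamma^{(n)})_{n\in\N}$ est un état bosonique à nombre infini de particules au sens de la Définition~\ref{def:etat infini}: la symétrie bosonique passe à la limite, et la consistance $\Tr_{n+1}[\gamma^{(n+1)}] = \gamma^{(n)}$ s'obtient en passant à la limite dans l'identité exacte $\Tr_{n+1}[\gamma_N^{(n+1)}] = \gamma_N^{(n)}$, la trace partielle étant continue pour la norme $\gS^1$. On peut alors appliquer le Théorème~\ref{thm:DeFinetti fort} (de Finetti quantique fort de Hudson-Moody): il existe $\mu \in \PP(S\gH)$, invariante par phase, telle que $\gamma^{(n)} = \int_{S\gH} |u^{\otimes n}\rangle\langle u^{\otimes n}|\, d\mu(u)$ pour tout $n$.

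Troisièmement, je passerais à la limite dans l'énergie. En utilisant~\eqref{eq:quant ener red mat} avec $\lambda = (N-1)^{-1}$, on écrit $E(N)/N = \tfrac12 \Tr_{\gH_s^2}[H_2\, \gamma_N^{(2)}]$ où $H_2 = T_1 + T_2 + w(x_1-x_2)$. La convergence forte $\gamma_N^{(1)}\to\gamma^{(1)}$, $\gamma_N^{(2)}\to\gamma^{(2)}$ jointe à la borne $\Tr[T\gamma_N^{(1)}]\leq C$ et à l'hypothèse~\eqref{eq:decrease w} sur $w$ (qui permet de traiter le terme d'interaction comme une perturbation compacte-relativement au terme cinétique) donne, par semi-continuité inférieure pour le terme cinétique et continuité pour le terme d'interaction,
\begin{equation*}
\liminf_{N\to\infty} \frac{E(N)}{N} \geq \frac12 \Tr_{\gH_s^2}[H_2\,\gamma^{(2)}] = \frac12 \int_{S\gH} \langle u^{\otimes 2}, H_2\, u^{\otimes 2}\rangle\, d\mu(u) = \int_{S\gH} \EH[u]\, d\mu(u) \geq \eH,
\end{equation*}
la dernière inégalité venant de $\mu$ étant une mesure de probabilité et $\EH[u]\geq \eH$. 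Combiné avec~\eqref{eq:up bound Hartree}, ceci prouve $E(N)/N \to \eH$. De plus, l'égalité force $\int \EH[u]\,d\mu(u) = \eH$, donc $\mu$ est concentrée sur $\cM_{\rm H}$, d'où~\eqref{eq:def fort result}; l'unicité du minimiseur (modulo phase) et l'invariance de $\mu$ par $S^1$ donnent alors $\mu = \delta_{\uH}$ au sens des classes, soit~\eqref{eq:BEC-confined}, et l'unicité de la limite entraîne que toute la suite converge.

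**Principal obstacle.** Le point délicat n'est pas le théorème de de Finetti lui-même (admis), mais le \emph{passage à la limite dans le terme d'interaction}: il faut s'assurer que $\Tr_{\gH_s^2}[w(x_1-x_2)\gamma_N^{(2)}] \to \Tr_{\gH_s^2}[w(x_1-x_2)\gamma^{(2)}]$. Comme $w \in L^p + L^\infty$ avec $p > \max(1,d/2)$ n'est pas borné, on ne peut pas conclure par simple convergence faible-$\ast$; il faut exploiter la convergence \emph{forte} de $\gamma_N^{(2)}$ dans $\gS^1$ (conséquence de la compacité de la résolvante) combinée au contrôle cinétique uniforme $\Tr[(T_1+T_2)\gamma_N^{(2)}]\leq C$, ce qui permet, via les inégalités de Sobolev, de majorer la contribution de $w$ là où il est grand et de conclure. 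C'est précisément là que l'hypothèse de confinement~\eqref{eq:hartree confine 1}--\eqref{eq:hartree confine 2} joue un rôle essentiel — dans le cas non confiné, la convergence forte peut échouer et c'est pourquoi le cas général exige les outils plus sophistiqués des chapitres ultérieurs.
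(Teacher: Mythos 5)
Votre proposition est correcte et suit pour l'essentiel la m\^eme d\'emarche que le texte : extraction de limites faibles-$\ast$ des matrices de densit\'e r\'eduites, passage \`a la convergence forte en norme de trace via la compacit\'e de la r\'esolvante et la conservation de la masse, v\'erification de la consistance, application du Th\'eor\`eme~\ref{thm:DeFinetti fort}, puis identification de la mesure limite comme concentr\'ee sur $\cM_{\rm H}$. La seule divergence concerne le passage \`a la liminf dans l'\'energie : l\`a o\`u vous proposez de d\'emontrer la convergence effective du terme d'interaction (troncature de $w$ et contr\^ole cin\'etique uniforme via Sobolev), le texte applique simplement le lemme de Fatou pour op\'erateurs positifs \`a $T_1+T_2+w-2C_T\geq 0$ fourni par~\eqref{eq:T controls w}, ce qui donne d'un coup la semi-continuit\'e inf\'erieure de l'\'energie \`a deux corps compl\`ete et court-circuite l'``obstacle principal'' que vous identifiez.
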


Les id\'ees de la preuve sont essentiellement contenues dans~\cite{FanSpoVer-80,PetRagVer-89,RagWer-89}, appliqu\'ees \`a un contexte quelque peu diff\'erent. Nous suivrons les clarifications fournies par~\cite[Section~3]{LewNamRou-13}. 

\begin{proof}[Preuve du Th\'eor\`eme~\ref{thm:confined}]
Il nous faut prouver la borne inf\'erieure correspondant \`a~\eqref{eq:up bound Hartree}.
Comme annonc\'e plusieurs fois on commence par \'ecrire
\begin{align}\label{eq:afort ener matrices}
\frac{E(N)}{N} &=  \frac1N \left\langle \Psi_N, H_N \Psi_N \right\rangle_{\gH ^N} = \tr_{\gH} [T \gamma_N ^{(1)}] + \frac12 \tr_{\gH_s ^2} [w\gamma_N ^{(2)}]
\nonumber 
\\ &= \frac12 \tr_{\gH_s ^2} \left[ \left(T_1 + T_2 + w \right)\gamma_N ^{(2)}\right]
\end{align}
et il s'agit de d\'ecrire la limite des matrices de densit\'e r\'eduite $\gamma_N ^{(1)}$ et $\gamma_N ^{(2)}$. Vu que les suites $(\gamma ^{(n)}_N)$ sont par d\'efinition born\'ees dans $\gS ^1$, modulo un proc\'ed\'e d'extraction diagonale, on peut supposer que pour tout $n\in \N$
\[
 \gamma_N ^{(n)}\wto_\ast \gamma ^{(n)} \in \gS^1 (\gH_s ^n)
\]
faiblement-$\ast$ dans $\gS ^1 (\gH ^n)$, c'est-\`a-dire que pour tout op\'erateur compact $K_n$ sur $\gH ^n$ on a 
$$ \tr _{\gH ^n} [\gamma_N ^{(n)} K_n] \to \tr _{\gH ^n} [\gamma ^{(n)} K_n].$$ 
Nous allons montrer que la limite est en fait forte. Pour cela, il suffit (voir~\cite{dellAntonio-67,Robinson-70} ou l'Addendum H de~\cite{Simon-79}) de montrer que 
\begin{equation}\label{eq:afort masse}
\tr _{\gH_s ^n} [\gamma ^{(n)} ] = \tr _{\gH_s ^n} [\gamma_N ^{(n)} ] = 1
\end{equation}
c'est-\`a-dire qu'aucune masse n'est perdue \`a la limite. On commence par remarquer que  $\tr_{\gH} [T\gamma_N ^{(1)}]$ est uniform\'ement born\'ee et que donc, quitte \`a r\'eextraire on a 
$$ \left(T  + C_0 \right) ^{1/2} \gamma_N ^{(1)} \left(T  + C_0 \right) ^{1/2} \wto_\ast \left(T  + C_0 \right) ^{1/2} \gamma ^{(1)} \left(T  + C_0 \right) ^{1/2}$$ 
pour une certaine constante $C_0$. Cons\'equement
\begin{multline*}
1= \tr _{\gH } [\gamma_N ^{(1)} ] = \tr _{\gH } \left[ \left(T  + C_0 \right) ^{-1} \left(T  + C_0 \right) ^{1/2} \gamma_N ^{(1)}\left(T  + C_0 \right) ^{1/2}  \right] \\ \to \tr _{\gH} \left[ \left(T  + C_0 \right) ^{-1} \left(T  + C_0 \right) ^{1/2} \gamma ^{(1)}\left(T  + C_0 \right) ^{1/2}  \right] = \tr _{\gH } [\gamma ^{(1)} ]
\end{multline*}
puisque $\left(T  + C_0 \right) ^{-1}$ est par l'hypoth\`ese~\eqref{eq:resol comp} un op\'erateur compact. On obtient~\eqref{eq:afort masse} de la m\^eme mani\`ere en notant que 
\[
\tr_{\gH} [T\gamma_N ^{(1)}] = \frac{1}{n}\tr_{\gH ^n} \left[ \sum_{j=1} ^n T_j \gamma_N ^{(n)} \right] 
\]
est born\'e unfiorm\'ement en $N$ et que $\sum_{j=1} ^n T_j$ est \'egalement \`a r\'esolvante compacte, ce qui permet un raisonnement similaire.

On a donc pour tout $n\in \N$
\[
 \gamma_N ^{(n)}\to \gamma ^{(n)}
\]
fortement en norme de trace et en particulier, pour tout op\'erateur born\'e $B_n$ sur $\gH ^n$
$$ \tr _{\gH ^n} [\gamma_N ^{(n)} B_n] \to \tr _{\gH ^n} [\gamma ^{(n)} B_n].$$ 
En testant cette convergence avec $B_{n+1} = B_n \otimes \one$ on d\'eduit 
$$ \tr_{n+1} [\gamma ^{(n+1)} ] = \gamma ^{(n)}$$
et donc que la suite $(\gamma ^{(n)})_{n\in \N}$ d\'ecrit un \'etat bosonique \`a nombre infini de particules au sens de la D\'efinition~\ref{def:etat infini}. On peut lui appliquer le Th\'eor\`eme~\ref{thm:DeFinetti fort}, ce qui donne une mesure $\mu \in \PP (S\gH)$. Au vu de l'hypoth\`ese~\eqref{eq:T controls w}, l'op\'erateur $T_1+T_2 + w$ est born\'e inf\'erieurement sur $\gH ^2$, disons par $2 C_T$. Comme $\tr_{\gH ^2} \gamma ^{(2)} = 1$ on peut \'ecrire 
\begin{align*}
\liminf_{N\to \infty}\frac12 \tr_{\gH ^2} \left[ \left(T_1 + T_2 + w \right)\gamma_N ^{(2)}\right] &=  \liminf_{N\to \infty} \frac12 \tr_{\gH ^2} \left[ \left(T_1 + T_2 + w -2C_T \right)\gamma_N ^{(2)}\right] + C_T\\
& \geq \frac12 \tr_{\gH ^2} \left[ \left(T_1 + T_2 + w -2C_T \right)\gamma ^{(2)}\right] + C_T 
\\&= \frac12 \tr_{\gH ^2} \left[ \left(T_1 + T_2 + w  \right)\gamma ^{(2)}\right]
\end{align*}
en utilisant le lemme de Fatou pour les op\'erateurs positifs. On a donc, par lin\'earit\'e de l'\'energie en fonction de $\gamma ^{(2)}$ et~\eqref{eq:melange} 
\[
\liminf_{N\to \infty} \frac{E(N)}{N} \geq \int_{u\in S\gH} \frac12 \tr_{\gH ^2} \left[ \left(T_1 + T_2 + w  \right)|u ^{\otimes 2}\rangle \langle u ^{\otimes 2} |\right] d\mu(u) = \int_{u\in S\gH} \EH [u] d\mu(u) \geq \eH
\]
ce qui est la borne inf\'erieure souhait\'ee. Les autres r\'esultats du Th\'eor\`eme suivent comme d'habitude en notant qu'il doit y avoir \'egalit\'e dans toutes les in\'egalit\'es pr\'ec\'edentes et donc que $\mu$ ne charge que des minimiseurs de l'\'energie de Hartree
\end{proof}

On voit bien dans la preuve pr\'ec\'edente que c'est la structure des \'etats bosoniques \`a nombre infini de particules qui joue le r\^ole cl\'e. Le Hamiltonien lui-m\^eme pourrait \^etre choisi de mani\`ere tr\`es abstraite du moment qu'il comporte un m\'ecanisme de confinement permettant d'obtenir des limite fortes. Divers exemples sont mentionn\'es dans~\cite[Section~3]{LewNamRou-13}.

\subsection{Syst\`emes sans \'etats li\'es et Th\'eor\`eme de de Finetti faible}\label{sec:Hartree deF faible}\mbox{}\\\vspace{-0.4cm}

A la section pr\'ec\'edente nous avons utilis\'e fortement le fait que le syst\`eme \'etait confin\'e au sens de~\eqref{eq:hartree confine 1}-\eqref{eq:hartree confine 2}. Ces hypoth\`eses sont suffisantes pour comprendre de nombreux cas physiques, mais il est tr\`es d\'esirable de pouvoir s'en passer. On est alors amen\'e \`a \'etudier le cas o\`u la convergence des matrices densit\'e n'est pas mieux que faible-$\ast$, et \`a d\'ecrire le plus exhaustivement possible les diff\'erents sc\'enarii possibles, dans l'esprit du principe de concentration-compacit\'e. Une premi\`ere \'etape, avant de se poser la question de la fa\c{c}on dont la compacit\'e peut \^etre perdue consiste \`a d\'ecrire les limites faibles elles-m\^emes. Il se trouve qu'on conserve une description tr\`es satisfaisante \`a la limite. En fait, on ne pouvait esp\'erer mieux que le th\'eor\`eme suivant, d\'emontr\'e dans~\cite{LewNamRou-13}:

\begin{theorem}[\bf De Finetti quantique faible]\label{thm:DeFinetti faible}\mbox{}\\
Soit $\gH$ un espace de Hilbert s\'eparable et $(\Gamma_N)_{N\in \N}$ une suite d'\'etat bosoniques avec $\Gamma_N \in \gS ^1 (\gH_s ^N)$. On suppose que pour tout $n\in\N$ 
\begin{equation}\label{eq:def faible convergence}
\Gamma_N ^{(n)} \wto_\ast \gamma ^{(n)}
\end{equation}
dans $\gS ^1 (\gH_s ^n)$. Il existe alors une unique mesure de probabilit\'e $\mu \in \PP (B\gH)$ sur la boule unit\'e $B\gH = \left\{ u \in\gH, \norm{u} \leq 1\right\}$ de $\gH$, invariante par l'action de $S ^1$, telle que
\begin{equation}
\gamma^{(n)}=\int_{B\gH}|u^{\otimes n}\rangle\langle u^{\otimes n}| \, d\mu(u)
\label{eq:melange faible}
\end{equation}
pour tout $n\geq0$.
\end{theorem}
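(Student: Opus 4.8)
The plan is to treat uniqueness first by a moment argument and then to construct the representing measure; since the weak-$\ast$ limit may lose mass ($\tr\gamma^{(n)}$ need not equal $1$), one cannot simply invoke the strong version Theorem~\ref{thm:DeFinetti fort}, and instead one localizes the states $\Gamma_N$ to finite-dimensional subspaces, applies there a quantitative finite-dimensional de Finetti estimate, and lets the dimension tend to infinity.

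\textbf{Uniqueness.} Endow the ball $B\gH=\{u\in\gH:\norm{u}\leq1\}$ with the weak topology; since $\gH$ is separable, $B\gH$ is then a compact metrizable space. For $k\in\N$ and $\phi\in\gH_s^{k}$ the function $u\mapsto|\langle\phi,u^{\otimes k}\rangle|^{2}$ is weakly continuous, bounded and $S^{1}$-invariant on $B\gH$; because $\langle\phi,u^{\otimes k}\rangle\langle\psi,u^{\otimes\ell}\rangle=\langle\phi\otimes\psi,u^{\otimes(k+\ell)}\rangle$, finite linear combinations of such functions (over all $k$, including $k=0$) form a unital subalgebra that separates $S^{1}$-orbits already at $k=1$ (since $|\langle\phi,u\rangle|^{2}=|\langle\phi,v\rangle|^{2}$ for all $\phi$ forces $|u\rangle\langle u|=|v\rangle\langle v|$). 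By Stone--Weierstrass it is dense in the $S^{1}$-invariant continuous functions on $B\gH$. Since $\int_{B\gH}|\langle\phi,u^{\otimes k}\rangle|^{2}\,d\mu(u)=\langle\phi,\gamma^{(k)}\phi\rangle$ is prescribed by the data, any two $S^{1}$-invariant representing measures coincide; this is the exact analogue of the uniqueness part of Theorem~\ref{thm:HS}.

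\textbf{Finite-dimensional localization.} Fix an orthonormal basis of $\gH$, let $P_{d}$ be the orthogonal projection onto the first $d$ basis vectors and $Q_{d}=\mathrm{id}-P_{d}$. Using the localization procedure in the bosonic Fock space (Chapter~\ref{sec:locFock}), one attaches to each $\Gamma_{N}$ a localized state on the truncated Fock space $\bigoplus_{k=0}^{N}(P_{d}\gH)_{s}^{k}$ over the finite-dimensional space $P_{d}\gH$, whose reduced density matrices are controlled, uniformly in $N$, by $P_{d}^{\otimes n}\Gamma_{N}^{(n)}P_{d}^{\otimes n}$ up to $Q_{d}$-corrections that are harmless for fixed $n,d$ as $N\to\ii$. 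Since $P_{d}\gH$ is finite-dimensional, the quantitative de Finetti estimate of Chapter~\ref{sec:deF finite dim} (built, e.g., from the resolution of identity $\int|u^{\otimes N}\rangle\langle u^{\otimes N}|\,du\propto\mathrm{id}$ on the sphere of $\C^{d}$) applies to the fixed-particle-number sectors with an error of order $d^{2}n/N$; assembling the sectors and letting $N\to\ii$ produces, along a subsequence, an $S^{1}$-invariant probability measure $\mu_{d}$ on the ball $B(P_{d}\gH)$ with
\[
P_{d}^{\otimes n}\gamma^{(n)}P_{d}^{\otimes n}=\int_{B(P_{d}\gH)}|u^{\otimes n}\rangle\langle u^{\otimes n}|\,d\mu_{d}(u),\qquad n\geq0.
\]
The ball, rather than the sphere, appears precisely because the localized state charges sectors with strictly fewer than $N$ particles, i.e.\ because mass is genuinely lost in the weak-$\ast$ limit. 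One then passes to $d\to\ii$: regard each $\mu_{d}$ as a measure on the fixed weakly compact metric space $B\gH$; applying $P_{d}^{\otimes n}(\cdot)P_{d}^{\otimes n}$ to the displayed identity for $d'>d$ and using $P_{d}P_{d'}=P_{d}$ shows that the pushforward of $\mu_{d'}$ under $u\mapsto P_{d}u$ again represents $P_{d}^{\otimes n}\gamma^{(n)}P_{d}^{\otimes n}$, hence by the uniqueness just proved (inside $P_{d}\gH$) equals $\mu_{d}$; so $(\mu_{d})_{d}$ is consistent, and a Prokhorov/Kolmogorov-type projective-limit argument yields an $S^{1}$-invariant $\mu\in\PP(B\gH)$ with $(P_{d})_{*}\mu=\mu_{d}$ for all $d$. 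Since $P_{d}\to\mathrm{id}$ strongly, $P_{d}^{\otimes n}\gamma^{(n)}P_{d}^{\otimes n}\to\gamma^{(n)}$ in $\gS^{1}(\gH_{s}^{n})$, while $|(P_{d}u)^{\otimes n}\rangle\langle(P_{d}u)^{\otimes n}|\to|u^{\otimes n}\rangle\langle u^{\otimes n}|$ for each $u$ with trace-norm bound $1$, so dominated convergence turns the displayed identity into~\eqref{eq:melange faible}.

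\textbf{Main obstacle.} The technical heart is the Fock-space localization: defining the localized state and, above all, obtaining the uniform-in-$N$ comparison of its reduced density matrices with $P_{d}^{\otimes n}\gamma^{(n)}P_{d}^{\otimes n}$, so that the lost mass is faithfully transferred onto the ball. The quantitative finite-dimensional de Finetti estimate feeding into this, together with its explicit dependence on $d$, is the other substantial input; the scheme closes only because one sends $N\to\ii$ first and the dimension $d\to\ii$ afterwards.
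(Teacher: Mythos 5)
Votre démonstration est correcte et suit essentiellement la même route que celle du texte : unicité par un argument de moments et densité à la Stone--Weierstrass sur la boule faiblement compacte, puis existence par localisation dans l'espace de Fock sur $P_d\gH$, application du théorème de de Finetti quantitatif en dimension finie secteur par secteur, limite $N\to\infty$ d'abord puis limite projective en $d$ via la cohérence des mesures $\mu_d$. Les points que vous identifiez comme les obstacles principaux (la comparaison des matrices réduites localisées avec $P_d^{\otimes n}\gamma^{(n)}P_d^{\otimes n}$ et l'ordre des limites) sont précisément ceux que le texte traite aux Chapitres~\ref{sec:deF finite dim} et~\ref{sec:locFock}.
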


\begin{remark}[Sur le th\'eor\`eme de de Finetti quantique faible]\label{rem:weak deF}\mbox{}\\\vspace{-0.4cm}
\begin{enumerate}
\item L'hypoth\`ese~\eqref{eq:def faible convergence} n'en est pas vraiment une en pratique. A une extraction diagonale pr\^et, on pourra toujours supposer que la convergence a lieu le long d'une sous-suite. Ce th\'eor\`eme d\'ecrit donc bien toutes les limites possibles pour une suite d'\'etats bosoniques \`a $N$ corps quand $N\to \infty$.
\item Le fait que la mesure vive sur la boule unit\'e dans~\eqref{eq:melange faible} n'est pas une surprise puisque qu'il y a potentiellement une perte de masse dans les cas trait\'es par le th\'eor\`eme. En particulier, il est fort possible que $\gamma ^{(n)}=0$ pour tout $n$ auquel cas $\mu = \delta_0$, la masse de Dirac \`a l'origine.  
\item Le mot faible renvoit \`a ``convergence faible'' et n'indique pas une moindre port\'ee du th\'eor\`eme. Le r\'esultat est en fait plus g\'en\'eral que le th\'eor\`eme de de Finetti fort. Il suffit pour s'en rendre compte de traiter le cas sans perte de masse o\`u $\tr_{\gH ^n} [\gamma ^{(n)}] = 1$. La mesure $\mu$ doit alors bien s\^ur \^etre support\'ee sur la sph\`ere et la convergence avoir lieu en norme. Il est d'ailleurs suffisant de supposer que $\tr_{\gH ^n} [\gamma ^{(n)}] = 1$ pour un certain $n\in \N$, et la convergence est forte pour tout $n$ puisque la mesure $\mu$ ne d\'epend pas de $n$.
\item Ammari et Nier d\'emontrent des r\'esultats un peu plus g\'en\'eraux, voir~\cite{Ammari-hdr,AmmNie-08,AmmNie-09,AmmNie-11}. En particulier il n'est pas n\'ecessaire de partir d'un \'etat \`a nombre de particules fix\'e. On pourrait consid\'erer un \'etat sur l'espace de Fock du moment que des bornes convenables sur son nombre de particules (vu comme une variable al\'eatoire dans ce contexte) sont disponibles.
\item L'unicit\'e de la mesure se prouve par un argument simple, voir~\cite[Section 2]{LewNamRou-13}. Ici nous serons principalement int\'eress\'es par l'existence, qui nous suffit pour des probl\`emes statiques. Pour des probl\`eme d\'ependant du temps, l'unicit\'e est en revanche cruciale~\cite{AmmNie-08,AmmNie-09,AmmNie-11,CheHaiPavSei-13}.
\end{enumerate}

\hfill\qed
\end{remark}

Revenant \`a la d\'erivation de la th\'eorie de Hartree, rappellons que la borne sup\'erieure~\eqref{eq:up bound Hartree} est toujours vraie. Nous ne sommes donc \`a la recherche que de bornes inf\'erieures. Un cas o\`u il est suffisant de d\'ecrire la limite faible-$\ast$ des matrices densit\'es est bien s\^ur celui d'une \'energie faiblement semi-continue inf\'erieurement. Cette remarque serait d'un int\'er\^et marginal si ce cas ne recouvrait pas une classe de syst\`emes importants, les syst\`emes sans \'etats li\'es. 

Nous allons maintenant d\'emontrer la validit\'e de l'approximation de Hartree dans ce cas en nous basant sur le Th\'eor\`eme~\ref{thm:DeFinetti faible}. Nous allons travailler dans $\R ^d$ et supposer pour simplifier que le potentiel $V$ de~\eqref{eq:op Schro} ne fournit de confinement dans aucune direction:
\begin{equation}\label{eq:sans confinement}
V \in L ^p (\R ^d) + L ^{\infty} (\R ^d), \max{1,d/2}\leq p <\infty,  V(x) \to 0 \mbox{ quand } |x|\to \infty.
\end{equation}
L'hypoth\`ese d'absence d'\'etats li\'es concerne le potentiel d'interaction $w$. Elle est mat\'erialis\'ee par l'in\'egalit\'e\footnote{Ou une variante si on consid\`ere une \'energie cin\'etique diff\'erente, cf~Remarque~\ref{rem:energie cin}.}
\begin{equation}\label{eq:no bound states}
- \Delta + \frac{w}{2} \geq 0 
\end{equation}
au sens des op\'erateurs. Ceci formalise le fait que le potentiel $w$ n'est pas assez attractif pour que les particules forment des \'etats li\'es (mol\'ecules etc ...). En effet, vue l'hypoth\`ese~\eqref{eq:decrease w}, $- \Delta + \frac{w}{2}$ ne peut avoir que des valeurs propres n\'egatives, son spectre essentiel commen\c{c}ant en $0$. Vu~\eqref{eq:no bound states}, il ne peut en fait avoir de valeurs propres du tout, et donc pas de fonctions propres qui sont par d\'efinition les \'etats li\'es du potentiel. Un exemple particulier est le cas d'un potentiel purement r\'epulsif $w\geq 0$.

\medskip

Sous l'hypoth\`ese~\eqref{eq:no bound states} les particules qui s'\'echappent \'eventuellement \`a l'infini ne voient plus le potentiel $V$ et portent forc\'ement une \'energie positive qui peut \^etre n\'eglig\'ee pour une borne inf\'erieure sur l'\'energie totale. Les particules qui restent confin\'ees par le potentiel $V$ sont d\'ecrites par les limites faibles-$\ast$ des matrices densit\'e, auxquelles on peut appliquer le th\'eor\`eme de de Finetti faible. Ce raisonnement m\`ene au r\'esultat suivant, pour lequel on a besoin de la notation
\begin{equation}\label{eq:Hartree perte masse}
\eH (\lambda) := \inf_{\norm{u} ^2 = \lambda} \EH [u],\quad 0\leq \lambda \leq 1
\end{equation}
pour l'\'energie de Hartree dans le cas d'une perte de masse. Sous l'hypoth\`ese~\eqref{eq:no bound states} il n'est pas difficile de montrer que pour tout $ 0 \leq \lambda \leq 1$
\begin{equation}\label{eq:Hartree lambda}
\eH (\lambda) \geq \eH (1) = \eH 
\end{equation}
en construisant des \'etats tests constitu\'es de deux morceaux de masse bien s\'epar\'es.

\begin{theorem}[\textbf{D\'erivation de la th\'eorie de Hartree, cas sans \'etats li\'es}]\label{thm:wlsc}\mbox{}\\
Sous les hypoth\`eses pr\'ec\'edentes, en particulier~\eqref{eq:sans confinement} et~\eqref{eq:no bound states} on a
$$\lim_{N\to\ii}\frac{E(N)}{N}=\eH.$$
Soit $\Psi_N$ un fondamental de $H_N$ r\'ealisant l'infimum~\eqref{eq:quant ground state} et 
$$ \gamma_N ^{(n)} := \tr_{n+1 \to N} \left[ |\Psi_N\rangle\langle \Psi_N|\right]$$
sa $n$-i\`eme matrice de densit\'e r\'eduite. Il existe une mesure de proabilit\'e $\mu$ sur 
$$\cM_{\rm H} =\left\{ u\in B\gH, \; \EH[u] = \eH \left(\norm{u} ^2 \right) = \eH (1)\right\}$$ 
telle que, le long d'une sous-suite et pour tout $n\in \N$ 
\begin{equation}\label{eq:def faible result}
\gamma^{(n)}_{N} \wto_\ast \int_{\cM_{\rm H}} d\mu(u)\;|u^{\otimes n}\rangle\langle u^{\otimes n}|
\end{equation}
dans $\gS ^1 (\gH ^n).$ En particulier, si $\eH$ a un minimiseur unique $\uH$ avec $\norm{\uH} = 1$, alors pour toute la suite
\begin{equation}
\lim_{N\to\ii} \gamma^{(n)}_N=|\uH^{\otimes n}\rangle\langle \uH^{\otimes n}|
\label{eq:BEC-confined faible} 
\end{equation}
fortement en norme de trace.
\end{theorem}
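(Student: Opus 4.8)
The plan is to follow the same two-step scheme as for Theorem~\ref{thm:confined}: the trivial upper bound $E(N)/N\leq\eH$ coming from the trial state $u^{\otimes N}$ (inequality~\eqref{eq:up bound Hartree}), followed by the matching lower bound. The essential difference with the confined case is that $T=-\Delta+V$ is no longer resolvent-compact (since $V$ vanishes at infinity, \eqref{eq:sans confinement}), so the reduced density matrices converge only weakly-$\ast$ and one must genuinely work with the ball-supported measure furnished by Theorem~\ref{thm:DeFinetti faible}, together with the binding inequality~\eqref{eq:Hartree lambda}.

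First I would fix a sequence $\Psi_N$ of (approximate) ground states and, by a diagonal extraction, assume $\gamma_N^{(n)}\wto_\ast\gamma^{(n)}$ in $\gS^1(\gH_s^n)$ for every $n$, keeping in reserve the a priori bound $\tr[T\gamma_N^{(1)}]\leq C$, which follows from $E(N)/N\leq\eH$ and the relative bound~\eqref{eq:T controls w}. Applying Theorem~\ref{thm:DeFinetti faible} produces a probability measure $\mu$ on $B\gH$ with $\gamma^{(n)}=\int_{B\gH}|u^{\otimes n}\rangle\langle u^{\otimes n}|\,d\mu(u)$.

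The heart of the proof is then the lower bound
\[
\liminf_{N\to\infty}\frac{E(N)}{N}\;\geq\;\int_{B\gH}\EH[u]\,d\mu(u),
\]
and this is exactly where the absence of bound states~\eqref{eq:no bound states} is used. Writing $\frac{E(N)}{N}=\tfrac12\tr_{\gH^2}[H_2\,\gamma_N^{(2)}]$ with $H_2=T_1+T_2+w$, one decomposes $H_2=(-\Delta_1-\Delta_2+w)+V_1+V_2$, where $-\Delta_1-\Delta_2+w=(-\Delta_1+\tfrac12 w)+(-\Delta_2+\tfrac12 w)\geq0$ by~\eqref{eq:no bound states}. The non-negative piece is treated by weak-$\ast$ lower semicontinuity (Fatou's lemma for non-negative operators), and the $V$-term converges \emph{genuinely}, $\tr[V\gamma_N^{(1)}]\to\tr[V\gamma^{(1)}]$, because $V$ is relatively compact with respect to $-\Delta$ ($V\to0$ at infinity) and the kinetic energies are uniformly bounded. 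The delicate point—the one I expect to be the main obstacle—is that a loss of mass between $\gamma_N^{(2)}$ and $\gamma^{(2)}$ corresponds to clusters of particles running off to spatial infinity: there they no longer feel $V$, and the decay of $w$ (\eqref{eq:decrease w}) together with $-\Delta+w/2\geq0$ forces the energy they carry to be non-negative, hence to be discardable for a lower bound. Making this quantitative (splitting the kinetic energy carefully and showing the interaction energy of the escaping mass contributes non-negatively) is the substance of the argument, and is the feature absent from the confined case.

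With the lower bound established, the conclusion is routine: by the binding inequality~\eqref{eq:Hartree lambda}, $\EH[u]\geq\eH(\norm{u}^2)\geq\eH$ for every $u\in B\gH$, so $\int_{B\gH}\EH[u]\,d\mu(u)\geq\eH\int_{B\gH}d\mu=\eH$, and combined with the upper bound this gives $E(N)/N\to\eH$. Since every intermediate inequality must then saturate, $\mu$ is concentrated on $\cM_{\rm H}$; linearity of the energy in the density matrices and uniqueness of the de Finetti measure yield the weak-$\ast$ convergence~\eqref{eq:def faible result}. Finally, if $\EH$ has a unique minimizer $\uH$ with $\norm{\uH}=1$, then $\cM_{\rm H}=\{\uH\}$ and $\mu=\delta_{\uH}$; since $\norm{\uH}=1$ no mass is lost, so $\tr_{\gH^n}\gamma^{(n)}=1$ and the weak-$\ast$ convergence upgrades to convergence in trace norm, along the whole sequence, exactly as in Theorem~\ref{thm:confined}.
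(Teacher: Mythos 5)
Your overall architecture (trivial upper bound, diagonal extraction, weak de Finetti theorem, binding inequality, saturation of inequalities, upgrade to strong convergence when $\norm{\uH}=1$) matches the paper's, but the central step — the lower bound — has a genuine gap. The decomposition $H_2=(-\Delta_1-\Delta_2+w)+V_1+V_2$ with Fatou applied to the whole non-negative two-body block gives
\[
\liminf_{N\to\infty}\frac{E(N)}{N}\;\geq\;\frac12\Tr\big[(-\Delta_1-\Delta_2+w)\,\gamma^{(2)}\big]+\Tr\big[V\gamma^{(1)}\big]
=\int_{B\gH}\Big(\norm{u}^2\norm{\nabla u}^2+\int V|u|^2+\tfrac12\textstyle\iint|u|^2w|u|^2\Big)\,d\mu(u),
\]
because $\Tr[(-\Delta_1)|u^{\otimes2}\rangle\langle u^{\otimes2}|]=\norm{u}^2\norm{\nabla u}^2$ when $\norm{u}\leq1$. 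The limiting hierarchy is only \emph{sub-consistent} ($\tr_2\gamma^{(2)}\leq\gamma^{(1)}$, see~\eqref{eq:sous consist}), so the two-body form of the kinetic energy undercounts the one-body form exactly when mass is lost: you obtain $\int\norm{u}^2\norm{\nabla u}^2\,d\mu$ instead of $\int\norm{\nabla u}^2\,d\mu=\Tr[-\Delta\gamma^{(1)}]$. The resulting integrand equals $\EH[u]-(1-\norm{u}^2)\norm{\nabla u}^2$, which is \emph{not} bounded below by $\eH(\norm{u}^2)$, and the binding inequality~\eqref{eq:Hartree lambda} no longer closes the argument. Concretely, already for $w=0$ and $V\sim-|x|^{-s}$ with $1<s<2$, the quantity $\inf_{\norm{u}^2=\lambda}\{\lambda\norm{\nabla u}^2+\int V|u|^2\}=\lambda\,\lambda_1(-\lambda\Delta+V)$ dips strictly below $\eH(1)=\lambda_1(-\Delta+V)$ for $\lambda$ slightly less than $1$ (its derivative at $\lambda=1$ is $\norm{\nabla v_1}^2+\lambda_1$, which is positive by the virial identity for such $s$). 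So the pointwise inequality you need fails, and the scheme does not yield $\liminf E(N)/N\geq\eH$. Your closing remark that one must "split the kinetic energy carefully" is exactly right, but that splitting is the whole content of the missing step, not a refinement of the $H_2$-decomposition you propose.

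The paper's Lemma~\ref{lem:wlsc} is designed precisely to avoid this loss: one performs an IMS localization with $\chi_R^2+\eta_R^2=1$ \emph{at finite $N$}, where the hierarchy is still consistent. The interior part keeps the \emph{full one-body} kinetic energy $\Tr[T\chi_R\gamma_N^{(1)}\chi_R]$, which passes to the limit by local compactness and gives $\int\EH[u]\,d\mu$ as $R\to\infty$; only the exterior part is converted into a two-body quantity, via the operator inequality $\eta_R\gamma_N^{(1)}\eta_R\geq\tr_2[\eta_R^{\otimes2}\gamma_N^{(2)}\eta_R^{\otimes2}]$, and then discarded as non-negative using~\eqref{eq:no bound states}. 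Your observations that $-\Delta_1-\Delta_2+w\geq0$ (by fibering each $-\Delta_i+\tfrac12w(x_1-x_2)$ over the other variable) and that $\Tr[V\gamma_N^{(1)}]\to\Tr[V\gamma^{(1)}]$ by relative compactness are both correct and are used in the paper's proof; they are just not sufficient on their own without the spatial localization.
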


\begin{remark}[Cas avec perte de masse]\label{rem:perte masse}\mbox{}\\
On notera qu'il est tout \`a fait possible que la convergence reste faible-$\ast$, ce qui recouvre une certaine r\'ealit\'e physique. Si le potentiel \`a un corps n'est pas assez attractif pour retenir toutes les particules on aura typiquement un sc\'enario o\`u:
\[
\begin{cases}
\eH (\lambda) = \eH (1) \mbox{ pour } \lambda_c \leq \lambda \leq 1 \\
\eH(\lambda) < \eH (1) \mbox{ pour } 0 \leq \lambda < \lambda_c 
\end{cases}
\]
avec $\lambda_c$ une masse critique qui peut \^etre li\'ee par le potentiel $V$. Dans ce cas, $\eH(\lambda)$ ne sera pas atteint pour $\lambda_c < \lambda \leq 1$ et on aura un minimiseur pour l'\'energie de Hartree uniquement pour une masse $0 \leq \lambda \leq \lambda_c$. Si par exemple le minimiseur $\uH$ \`a masse $\lambda_c$ est unique modulo une phase constante, le Th\'eor\`eme~\ref{thm:wlsc} montre que 
\[
\gamma^{(n)}_{N} \wto_\ast  |\uH^{\otimes n}\rangle\langle \uH^{\otimes n}|
\]
et on remarquera que la limite a une masse $\lambda_c ^n <1$. Ce sc\'enario se produit effectivement dans le cas d'un ``atome bosonique'', voir Section 4.2 dans~\cite{LewNamRou-13}.\hfill\qed
\end{remark}

Le Th\'eor\`eme~\ref{thm:wlsc} est prouv\'e dans~\cite{LewNamRou-13}. Pour pouvoir appliquer le Th\'eor\`eme~\ref{thm:DeFinetti faible} on commence par l'observation suivante:

\begin{lemma}[\textbf{Semi-continuit\'e inf\'erieure d'une \'energie sans \'etats li\'es}]\label{lem:wlsc}\mbox{}\\
Sous les hypoth\`eses pr\'ec\'edentes, soient deux suites  $\gamma^{(1)}_N,\gamma^{(2)}_N\geq0$ satisfaisant 
\begin{equation*}
\tr_{\gH^2}\gamma^{(2)}_N=1, \quad \gamma_N^{(1)} =\Tr_{2}\gamma^{(2)}_N
\end{equation*}
ainsi que $\gamma^{(k)}_N\wto_*\gamma^{(k)}$ faiblement-$\ast$ dans $\gS^1(\gH^k)$ pour $k=1,2$. On a
\begin{equation}
\liminf_{N\to \infty} \left( \Tr_\gH[T \gamma_N^{(1)}]+\frac{1}{2}\Tr_{\gH^2}[w \gamma_N^{(2)}] \right) \ge \Tr_\gH[T \gamma^{(1)}]+\frac{1}{2}\Tr_{\gH^2}[w \gamma^{(2)}].
\label{eq:energy wlsc} 
\end{equation}
\end{lemma}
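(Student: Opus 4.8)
The plan is to split the energy into a kinetic part and an interaction part and treat them separately, exploiting the no-bound-state hypothesis~\eqref{eq:no bound states} together with the decay~\eqref{eq:decrease w}, \eqref{eq:sans confinement} to control the mass that escapes to infinity. First I would observe that since $T = -\Delta + V \geq -\Delta - C$ and $-\Delta \geq 0$, the kinetic term $\Tr_\gH[T\gamma_N^{(1)}]$ is bounded below; moreover by weak-$\ast$ lower semicontinuity of $\gamma \mapsto \Tr[(-\Delta)\gamma]$ (which is standard, e.g.\ by writing $-\Delta = \lim_{k} f_k(-\Delta)$ with $f_k$ bounded continuous increasing to the identity symbol, or by Fatou on the quadratic form) one gets
\[
\liminf_{N\to\infty}\Tr_\gH[(-\Delta)\gamma_N^{(1)}] \geq \Tr_\gH[(-\Delta)\gamma^{(1)}].
\]
For the potential terms $\Tr_\gH[V\gamma_N^{(1)}]$ and $\Tr_{\gH^2}[w\gamma_N^{(2)}]$ the point is that $V$ and $w$ are (relatively) compact perturbations: under~\eqref{eq:decrease w} and~\eqref{eq:sans confinement}, $V(x)(-\Delta+1)^{-1}$ and $w(x_1-x_2)(-\Delta_{x_1}-\Delta_{x_2}+1)^{-1}$ are compact operators. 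Hence, using the uniform bound on $\Tr[(-\Delta)\gamma_N^{(1)}]$ coming from~\eqref{eq:T controls w}, these potential terms converge (not merely $\liminf$) along the subsequence:
\[
\Tr_\gH[V\gamma_N^{(1)}] \to \Tr_\gH[V\gamma^{(1)}], \qquad \Tr_{\gH^2}[w\gamma_N^{(2)}] \to \Tr_{\gH^2}[w\gamma^{(2)}].
\]
This is the main place where the decay hypotheses enter, and it is where I expect the bookkeeping to be slightly delicate: one must justify that weak-$\ast$ convergence of $\gamma_N^{(k)}$ tested against a compact operator, combined with the a priori kinetic bound, gives convergence when tested against the unbounded but relatively-compact multiplication operators; the clean way is to write $V\gamma_N^{(1)} = \big(V(-\Delta+1)^{-1}\big)\big((-\Delta+1)^{1/2}\gamma_N^{(1)}(-\Delta+1)^{1/2}\big)(-\Delta+1)^{-1}$ and use that the middle factor is bounded in $\gS^1$ with the outer factors compact, so the trace passes to the limit.

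Combining the two observations:
\[
\liminf_{N\to\infty}\Big(\Tr_\gH[T\gamma_N^{(1)}] + \tfrac12\Tr_{\gH^2}[w\gamma_N^{(2)}]\Big)
\geq \liminf_{N\to\infty}\Tr_\gH[(-\Delta)\gamma_N^{(1)}] + \Tr_\gH[V\gamma^{(1)}] + \tfrac12\Tr_{\gH^2}[w\gamma^{(2)}],
\]
and then applying the kinetic lower semicontinuity to the first term on the right gives exactly~\eqref{eq:energy wlsc}. The role of the no-bound-state assumption~\eqref{eq:no bound states} is not actually needed for this particular lemma (it will be used afterwards, to control the escaping mass and get the full energy lower bound $\eH$), so I would not invoke it here; the lemma is purely a lower-semicontinuity statement and its proof rests only on $-\Delta \geq 0$ together with the relative compactness of $V$ and $w$ with respect to the free Laplacian. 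The one subtlety to flag is that $w$ is only relatively bounded, not relatively compact, \emph{as an operator inequality} in~\eqref{eq:T controls w} — but the decay condition~\eqref{eq:decrease w} ($w \in L^p + L^\infty$ with the $L^\infty$ part vanishing at infinity) does give genuine relative compactness of the multiplication operator by $w(x_1-x_2)$ with respect to $-\Delta_{x_1}-\Delta_{x_2}$ on $L^2(\R^{2d})$, which is what the argument requires, and I would state this as the key input and cite the standard Rellich–Kondrachov / Sobolev embedding argument for it rather than reproving it.
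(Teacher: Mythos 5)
Il y a une lacune réelle dans votre argument, et elle se situe exactement là où vous affirmez que l'hypothèse sans états liés~\eqref{eq:no bound states} n'est pas nécessaire. Votre traitement du terme cinétique (Fatou pour $-\Delta\geq0$) et du potentiel à un corps $V$ (compacité relative de $V$ par rapport à $-\Delta$, combinée à la borne cinétique uniforme) est correct. En revanche, l'affirmation clé selon laquelle $w(x_1-x_2)(-\Delta_{x_1}-\Delta_{x_2}+1)^{-1}$ est un opérateur \emph{compact} sur $L^2(\R^{2d})$ est fausse : en séparant le centre de masse $R=(x_1+x_2)/\sqrt2$ du mouvement relatif $r=(x_1-x_2)/\sqrt2$, cet opérateur commute avec les translations en $R$, et un opérateur non nul invariant par translation n'est jamais compact. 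La multiplication par $w(x_1-x_2)$ est relativement compacte par rapport à $-\Delta_r$ seulement, pas par rapport au Laplacien à deux corps complet. Par conséquent la convergence $\Tr_{\gH^2}[w\gamma_N^{(2)}]\to\Tr_{\gH^2}[w\gamma^{(2)}]$ que vous en déduisez est fausse en général : prenez $\gamma_N^{(2)}=|u_N\otimes u_N\rangle\langle u_N\otimes u_N|$ avec $u_N=u(\cdot-y_N)$, $|y_N|\to\infty$ ; alors $\gamma_N^{(2)}\wto_*0$ mais $\Tr_{\gH^2}[w\gamma_N^{(2)}]=\iint|u(x)|^2w(x-y)|u(y)|^2\,dxdy$ reste constant et non nul. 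Une paire de particules peut s'échapper à l'infini \emph{en restant liée}, emportant de l'énergie d'interaction ; c'est précisément le scénario que la compacité relative à un corps ne voit pas.

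Ce même exemple montre que la conclusion du lemme serait fausse sans~\eqref{eq:no bound states} : la limite inférieure du membre de gauche vaut alors $\int|\nabla u|^2+\frac12\iint|u(x)|^2w(x-y)|u(y)|^2$ tandis que le membre de droite est nul, et l'inégalité requise est exactement une conséquence de $-\Delta+\frac{w}{2}\geq0$. La preuve du papier contourne l'obstacle par localisation : avec $\chi_R^2+\eta_R^2=1$ et la formule IMS, le terme $\chi_R$-localisé converge vers le membre de droite (là, votre argument de compacité fonctionne, car $\chi_R$ est à support compact), les termes croisés tendent vers zéro par~\eqref{eq:decrease w}, et le terme $\eta_R$-localisé est regroupé en $\frac12\Tr_{\gH^2}\bigl[\bigl((-\Delta)\otimes\one+\one\otimes(-\Delta)+w\bigr)\eta_R^{\otimes2}\gamma_N^{(2)}\eta_R^{\otimes2}\bigr]$, qui est positif précisément grâce à~\eqref{eq:no bound states} et peut donc être négligé pour une borne inférieure. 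C'est cette étape de localisation, et l'usage de l'hypothèse sans états liés pour contrôler les paires qui s'échappent, qui manquent à votre proposition.
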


\begin{proof}[Preuve]
On introduit deux fonctions de troncature $0\leq \chi_R,\eta_R\leq 1$ satisfaisant
\[
 \chi_R ^2 + \eta_R ^2 \equiv 1, \supp (\chi_R)\subset B(0,2R), \supp (\eta_R) \subset B(0,R) ^c , |\nabla \chi_R| + |\nabla \eta_R| \leq C R ^{-1} 
\]
et il est ais\'e de montrer (il s'agit de la formule IMS) que
\begin{multline}\label{eq:IMS}
-\Delta  = \chi_R (-\Delta) \chi_R +  \eta_R (-\Delta) \eta_R - |\nabla \chi_R | ^2 -   |\nabla \eta_R | ^2 
\\ \geq \chi_R (-\Delta) \chi_R +  \eta_R (-\Delta) \eta_R - C R ^{-2}
\end{multline}
en tant qu'op\'erateur. Pour la partie \`a un corps de l'\'energie on a donc ais\'ement
$$  \tr_{\gH} [T \gamma_N ^{(1)}] \geq \tr_{\gH} [T \chi_R\gamma_N ^{(1)} \chi_R] + \tr_{\gH} [-\Delta \eta_R \gamma_N ^{(1)} \eta_R] + r_1(N,R)$$
avec 
$$ r_1 (N,R) \geq - C  R ^{-2} + \tr_{\gH} [ \eta_R ^2 V \gamma_N ^{(1)} ]$$
et donc 
$$\liminf_{R\to \infty} \liminf_{N\to \infty} r_1(N,R) = 0$$
au vu de l'hypoth\`ese~\eqref{eq:sans confinement} qui implique
$$  \tr_{\gH} [\eta_R ^2 V \gamma_N ^{(1)}] = \int_{\R ^d} \eta_R ^2 (x) V(x) \rho_N ^{(1)} (x) dx \to 0 \mbox{ quand } R\to\infty$$
uniform\'ement en $N$. Nous avons not\'e $\rho_N ^{(1)} $ la densit\'e \`a un corps de $\gamma_N ^{(1)}$, d\'efinie formellement par 
$$\rho_N ^{(1)} (x) = \gamma_N ^{(1)}(x,x)$$
en identifiant $\gamma_N ^{(1)}$ et son noyau. Pour traiter les interactions nous introduisons 
\[
 \rho_N ^{(2)} (x,y):= \gamma_N ^{(2)} (x,y;x,y)
\]
la densit\'e \`a deux corps de $\gamma_N ^{(2)}$ (qu'on a identifi\'e avec son noyau $\gamma_N ^{(2)} (x',y';x,y)$) et \'ecrivons 
\begin{align*}
\tr_{\gH ^2} [w\gamma_N ^{(2)}] &= \iint_{\R ^d\times\R^d } w(x-y) \rho_N ^{(2)} (x,y) dxdy\\
&= \iint_{\R ^d\times\R^d } w(x-y) \chi_R ^2 (x) \rho_N ^{(2)} (x,y) \chi_R ^2 (y)  dxdy 
\\&+ \iint_{\R ^d\times\R^d } w(x-y) \eta_R ^2 (x) \rho_N ^{(2)} (x,y) \eta_R ^2 (y) dxdy 
\\&+ 2 \iint_{\R ^d\times\R^d } w(x-y) \chi_R ^2 (x)  \rho_N ^{(2)}(x,y) \eta_R ^2 (y) dxdy 
\\&= \iint_{\R ^d\times\R^d } w(x-y) \chi_R ^2 (x)\rho_N ^{(2)} (x,y)  \chi_R ^2 (y) dxdy 
\\&+ \iint_{\R ^d\times\R^d } w(x-y) \eta_R ^2 (x)  \rho_N ^{(2)} (x,y) \eta_R ^2 (y) dxdy 
\\&+ r_2(N,R)
\\&= \tr_{\gH ^2} [w\:\chi_R \otimes \chi_R \gamma_N ^{(2)} \chi_R \otimes \chi_R] + \tr_{\gH ^2} [w \:\eta_R \otimes \eta_R \gamma_N ^{(2)} \eta_R \otimes \eta_R]
\\&+ r_2(N,R)
\end{align*}
o\`u 
$$\liminf_{R\to \infty} \liminf_{N\to \infty} r_2(N,R) = 0$$
par un raisonnement de concentration-compacit\'e standard (voir les d\'etails dans la Section~4 de~\cite{LewNamRou-13}).  

On a donc \`a ce stade,
\begin{multline}\label{eq:proof wlsc}
\liminf_{N\to \infty} \Tr_\gH[T \gamma_N^{(1)}]+\frac{1}{2}\Tr_{\gH^2}[w \gamma_N^{(2)}] \geq 
\\\liminf_{R\to \infty} \liminf_{N\to \infty} \Tr_\gH[T \chi_R \gamma_N^{(1)} \chi_R] + \frac{1}{2}\Tr_{\gH^2}[w \:\chi_R \otimes \chi_R \gamma_N^{(2)}\chi_R \otimes \chi_R] 
\\+ \liminf_{R\to \infty} \liminf_{N\to \infty} \Tr_\gH[-\Delta \eta_R \gamma_N^{(1)} \eta_R]+\frac{1}{2} \Tr_{\gH^2}[w \:\eta_R \otimes \eta_R \gamma_N^{(2)}\eta_R \otimes \eta_R]
\end{multline}
Les termes de la seconde ligne donnent le membre de droite de~\eqref{eq:energy wlsc}. En effet on rappelle que $T= - \Delta + V$, et en utilisant le lemme de Fatou pour $-\Delta \geq 0$ et le fait que  
$$\chi_R \gamma_N ^{(1)}\chi_R \to \chi_R \gamma_N ^{(1)}\chi_R$$
en norme puisque $\chi_R$ est \`a support compact, on a 
$$ \liminf_{N\to \infty} \Tr_\gH[T \chi_R \gamma_N^{(1)} \chi_R] \geq \Tr_\gH[T \chi_R \gamma^{(1)} \chi_R]. $$
Il suffit alors de rappeler que $\chi_R \to 1$ ponctuellement pour conclure. Le terme d'interaction est trait\'e de la m\^eme mani\`ere en utilisant la convergence forte 
$$\chi_R \otimes \chi_R \gamma_N ^{(2)}\chi_R \otimes \chi_R  \to \chi_R \otimes \chi_R  \gamma_N ^{(2)} \chi_R \otimes \chi_R $$
puis la convergence ponctuelle $\chi_R \to 1$.

Les termes de la troisi\`eme ligne de~\eqref{eq:proof wlsc} forment une combinaison positive que l'on peut n\'egliger pour une borne inf\'erieure. Pour le voir on note que comme $0\leq \eta_R \leq 1$ on a $\eta_R \otimes \one \geq \eta_R \otimes \eta_R$ et ($\tr_2$ symbolise la trace partielle par rapport \`a la deuxi\`eme variable)
\[
 \eta_R \gamma_N ^{(1)} \eta_R \geq \tr_2 [\eta_R \otimes \eta_R \gamma_N ^{(2)} \eta_R \otimes \eta_R]
\]
ce qui donne, par sym\'etrie de $\gamma_N ^{(2)}$, 
\begin{multline*}
\Tr_\gH[-\Delta \eta_R \gamma_N^{(1)} \eta_R]+\frac{1}{2} \Tr_{\gH^2}[w \:\eta_R \otimes \eta_R \gamma_N^{(2)}\eta_R \otimes \eta_R] \\ 
\geq \frac{1}{2} \Tr_{\gH^2}\left[\left( (-\Delta) \otimes \one + \one \otimes (-\Delta) + w \right)\:\eta_R \otimes \eta_R \gamma_N^{(2)}\eta_R \otimes \eta_R\right]
\end{multline*}
et il n'est pas difficile\footnote{Il suffit de d\'ecoupler le centre de masse du mouvement relatif des deux particules, c'est \`a dire faire le changement de variable $(x_1,x_2) \mapsto (x_1 + x_2, x_1-x_2)$.} de voir que l'hypoth\`ese~\eqref{eq:no bound states} implique
\[
(-\Delta) \otimes \one + \one \otimes (-\Delta) + w \geq 0
\]
ce qui assure que la troisi\`eme ligne de~\eqref{eq:proof wlsc} est positive et conclut la preuve.
\end{proof}

Nous pouvons conclure la 

\begin{proof}[Preuve du Th\'eor\`eme~\ref{thm:wlsc}.]
Partant d'une suite $\Gamma_N = |\Psi_N\rangle \langle\Psi_N| $ d'\'etats \`a $N$ corps on extrait des sous-suites comme dans la preuve du Th\'eor\`eme~\ref{thm:confined} pour avoir 
\[
 \Gamma_N ^{(n)}\wto_\ast \gamma ^{(n)}.
\]
Gr\^ace au Lemme~\ref{lem:wlsc}, on obtient 
\begin{align*}
\liminf_{N\to \infty} N ^{-1} \tr_{\gH ^N} [H_N \Gamma_N] &= \liminf_{N\to \infty} \left(\tr_{\gH} [T \Gamma_N ^{(1)}] + \frac12 \tr_{\gH ^2} [w\Gamma_N ^{(2)}] \right)
\\&\geq \tr_{\gH} [T \gamma ^{(1)}] + \frac12 \tr_{\gH ^2} [w\gamma ^{(2)}]
\end{align*}
et il reste \`a appliquer le th\'eor\`eme~\ref{thm:DeFinetti faible} \`a la suite $(\gamma ^{(n)})_{n\in \N}$ pour obtenir
\[
 \liminf_{N\to \infty} N ^{-1} \tr_{\gH ^N} [H_N \Gamma_N] \geq \int_{B\gH} \EH[u] d\mu(u) \geq \eH
\]
en utilisant~\eqref{eq:Hartree lambda} et le fait que $\int_{B\gH} d\mu(u) = 1$. Encore une fois, les autres conclusions du th\'eor\`eme suivent ais\'ement en inspectant les cas d'\'egalit\'e dans les diff\'erentes estimations.
\end{proof}

Pour la suite des op\'erations, notons que au cours de la preuve du Lemme~\ref{lem:wlsc} nous avons d\'emontr\'e le r\'esultat interm\'ediaire~\eqref{eq:proof wlsc}, sans utiliser l'hypoth\`ese que le potentiel $w$ n'a pas d'\'etats li\'es (qui n'est intervenu qu'\`a un stade ult\'erieur de la preuve). Nous formulons ceci sous forme d'un lemme que nous r\'eutiliserons au Chapitre~\ref{sec:Hartree}.

\begin{lemma}[\textbf{Localisation de l'\'energie}]\label{lem:loc ener}\mbox{}\\
Sous les hypoth\`eses~\eqref{eq:decrease w} et~\eqref{eq:sans confinement}, soit une suite $\Psi_N$ de quasi-minimiseurs de $E(N)$: 
$$ \langle \Psi_N, H_N \Psi_N \rangle = E(N) + o (N)$$ 
et $\gamma_N ^{(k)}$ les matrices de densit\'e r\'eduites associ\'ees. On a  
\begin{multline}\label{eq:loc ener}
\liminf_{N\to \infty} \frac{E(N)}{N} = \liminf_{N\to \infty} \Tr_\gH[T \gamma_N^{(1)}]+\frac{1}{2}\Tr_{\gH^2}[w \gamma_N^{(2)}] \geq 
\\\liminf_{R\to \infty} \liminf_{N\to \infty} \Tr_\gH[T \chi_R \gamma_N^{(1)} \chi_R] + \frac{1}{2}\Tr_{\gH^2}[w \:\chi_R ^{\otimes 2} \gamma_N^{(2)}\chi_R ^{\otimes 2}] 
\\+ \liminf_{R\to \infty} \liminf_{N\to \infty} \Tr_\gH[-\Delta \eta_R \gamma_N^{(1)} \eta_R]+\frac{1}{2} \Tr_{\gH^2}[w \:\eta_R ^{\otimes 2} \gamma_N^{(2)}\eta_R ^{\otimes 2}]
\end{multline}
o\`u $0\leq \chi_R \leq 1$ est $C^1$, \`a support dans $B(0,R)$, et $\eta_R = \sqrt{1-\chi_R ^2}$.
\end{lemma}

\subsection{Relation entre les diff\'erents r\'esultats de structure pour les \'etats bosoniques}\label{sec:rel deF}\mbox{}\\\vspace{-0.4cm}

Nous venons d'introduire deux th\'eor\`emes de structure pour les syst\`emes bosoniques \`a grand nombre de particules, qui indiquent que moralement, si $\Gamma_N$ est un \'etat \`a $N$ bosons sur un espace de Hilbert s\'eparable $\gH$, on a une mesure de probabilit\'e $\mu \in \PP (\gH)$ sur l'espace de Hilbert \`a un corps telle que 
\begin{equation}\label{eq:def quant formel}
\Gamma_N ^{(n)} \approx \int_{u\in \gH} |u ^{\otimes n} \rangle \langle u ^{\otimes n}| d\mu (u) 
\end{equation}
quand $N$ est grand et $n$ fixe. Les Chapitres~\ref{sec:deF finite dim} et~\ref{sec:locFock} de ces notes seront en grande partie consacr\'es \`a la preuve des ces th\'eor\`emes ``\`a la de Finetti''. Comme indiqu\'e \`a la Remarque~\ref{rem:weak deF}, le Th\'eor\`eme~\ref{thm:DeFinetti faible} est en fait plus g\'en\'eral que le Th\'eor\`eme~\ref{thm:DeFinetti fort}, et nous prouverons donc le premier. 

Pour bien voir l'apport du Th\'eor\`eme faible en dimension infinie, soulignons que la propri\'et\'e cl\'e qui permet de d\'emontrer le Th\'eor\`eme fort est la consistance~\eqref{eq:defi inf consistance}. Lorsqu'on part des matrices r\'eduites $\Gamma_N ^{(n)}$ d'un \'etat \`a $N$ particules $\Gamma_N$ et qu'on extrait des sous-suites au sens de la convergence faible-$\ast$ pour d\'efinir la hi\'erarchie $\left(\gamma ^{(n)}\right)_{n\in \N}$,
$$ \Gamma_N ^{(n)}\wto_\ast \gamma ^{(n)},$$
on a seulement
$$ \tr_{n+1}\left[\gamma ^{(n+1)}\right] \leq \lim_{N\to \infty} \tr_{n+1} \left[\gamma_N ^{(n+1)}\right] = \lim_{N\to \infty} \gamma_N ^{(n)} = \gamma ^{(n)}.$$
car \emph{la trace n'est pas continue}\footnote{Ceci est une caract\'erisation des espaces de dimension infinie} \emph{pour la topologie faible-$\ast$}, seulement semi-continue inf\'erieurement. Il est \'evident que la relation de ``sous-consistance''
\begin{equation}\label{eq:sous consist}
\tr_{n+1}\left[\gamma ^{(n+1)}\right] \leq  \gamma ^{(n)}
\end{equation}
est insuffisante pour avoir un th\'eor\`eme \`a la de Finetti. Un contre-exemple simple est donn\'e par la suite des matrices densit\'e d'un \'etat \`a un corps $v\in S\gH$: 
$$\gamma ^{(0)} = 1,\: \gamma ^{(1)} = |v\rangle \langle v|,\: \gamma ^{(n)} = 0 \mbox{ pour }n\geq 2$$ 

\medskip

Le parti-pris que nous suivrons dans ces notes sera de fournir une preuve la plus constructive possible du Th\'eor\`eme de de Finetti faible. Avant d'en annoncer le plan, disons quelques mots de l'approche beaucoup plus abstraite des r\'ef\'erences historiques~\cite{Stormer-69,HudMoo-75}. Ces travaux contiennent en fait une forme du th\'eor\`eme encore plus g\'en\'erale que la version faible du Th\'eor\`eme~\ref{thm:DeFinetti faible}. Ce r\'esultat s'applique \`a des \'etats ``abstraits'', c'est-\`a-dire non n\'ecessairement normaux et non n\'ecessairement bosoniques, qui peuvent \^etre d\'efinis comme suit:

\begin{definition}[\textbf{Etat abstrait \`a nombre infini de particules}]\label{def:etat infini abs}\mbox{}\\
Soit $\gH$ un espace de Hilbert s\'eparable et pour $n\in \N$, $\gH ^n = \bigotimes ^n \gH$ l'espace \`a $n$ corps correspondant. On appelle \emph{\'etat abstrait \`a nombre infini de particules} une suite $(\omega ^{(n)})_{n\in \N}$ avec
\begin{itemize}
\item $\omega ^{(n)}$ un \'etat abstrait \`a $n$ particules : $\omega ^{(n)}\in (\gB (\gH ^n)) ^*$, le dual des op\'erateurs born\'es sur $\gH ^n$, $\omega ^{(n)}\geq 0$ et 
\begin{equation}\label{eq:masse abstraite}
\omega ^{(n)} (\one_{\gH ^n}) = 1. 
\end{equation}
\item $\omega ^{(n)}$ est sym\'etrique au sens o\`u
\begin{equation}\label{eq:sym abstraite}
\omega ^{(n)} \left( B_1 \otimes \ldots B_n \right) = \omega ^{(n)} \left( B_{\sigma(1) }\otimes \ldots B_{\sigma(n)} \right) 
\end{equation}
pour tous $B_1,\ldots,B_n \in \gB (\gH)$ et toute permutation $\sigma \in \Sigma_n$. 
\item la suite $(\omega ^{(n)})_{n\in \N}$ est consistante:
\begin{equation}\label{eq:consistance abstraite}
\omega ^{(n+1)} \left(B_1 \otimes \ldots B_n \otimes \one_{\gH}\right)=  \omega ^{(n)} \left(B_1 \otimes \ldots B_n \right)
\end{equation}
pour tous $B_1,\ldots,B_n \in \gB (\gH)$.
\end{itemize}
\hfill \qed
\end{definition}

Ces \'etats abstraits ne sont en g\'en\'eral pas localement normaux, c'est-\`a-dire ne rentrent pas dans le cadre de la d\'efinition suivante:

\begin{definition}[\textbf{Etat normal, \'etat localement normal}]\label{def:etat infini normal}\mbox{}\\
Soit $\gH$ un espace de Hilbert s\'eparable et $(\omega ^{(n)})_{n\in \N}$  un \'etat abstrait \`a nombre infini de particules. On dit que  $(\omega ^{(n)})_{n\in \N}$ est localement normal si $\omega ^{(n)}$ est normal pour tout $n\in \N$, c'est-\`a-dire qu'il existe $\gamma ^{(n)}\in \gS ^1 (\gH ^n)$ un op\'erateur \`a trace tel que  
\begin{equation}\label{eq:etat normal}
\omega ^{(n)} (B_n) = \tr_{\gH ^n} [\gamma ^{(n)} B_n] 
\end{equation}
pour tout $B_n \in \gB (\gH ^n)$.
\hfill \qed
\end{definition}

En identifiant les op\'erateurs \`a trace et les \'etats normaux correspondants on voit bien que la D\'efinition~\ref{def:etat infini} est un cas particulier d'\'etat abstrait \`a nombre infini de particules. On notera que (par le th\'eor\`eme spectral) l'ensemble des combinaisons convexes d'\'etats purs (les projecteurs) co\"incide avec les \'etats \`a trace. Un \'etat abstrait n'est donc pas une combinaison convexe d'\'etats purs, donc pas un \'etat mixte (i.e. un m\'elange statistique d'\'etats purs). Son interpr\'etation physique n'est donc pas \'evidente.

La notion de consistance~\eqref{eq:consistance abstraite} est la g\'en\'eralisation naturelle de~\eqref{eq:defi inf consistance} mais il est important de noter que la sym\'etrie~\eqref{eq:sym abstraite} est de nature plus faible que la sym\'etrie bosonique. Il s'agit en fait de la notion de sym\'etrie correspondant \`a des particules indiscernables classiquement (le module de la fonction d'onde est sym\'etrique mais pas la fonction d'onde elle-m\^eme). On peut par exemple noter que si $\omega  ^{(n)}$ est normal au sens de~\eqref{eq:etat normal}, alors $\gamma ^{(n)}\in \gS ^1 (\gH ^n)$ satisfait
\[
U_{\sigma}  \gamma ^{(n)} U_{\sigma} ^* = \gamma ^{(n)}
\]
o\`u $U_{\sigma}$ est l'op\'erateur unitaire permutant les $n$ particules suivant $\sigma \in \Sigma_n$. La sym\'etrie bosonique correspond \`a la contrainte plus forte
\[
U_{\sigma}  \gamma ^{(n)} = \gamma ^{(n)} U_{\sigma} ^* = \gamma ^{(n)},
\]
cf Section~\ref{sec:forma quant}. On a \'egalement une notion d'\'etats produits plus g\'en\'erale que la D\'efinition~\ref{def:etat produit}.

\begin{definition}[\textbf{Etat produit abstrait \`a nombre infini de particules}]\mbox{}\label{def:etat produit abs}\\
On appelle \emph{\'etat produit abstrait} un \'etat abstrait \`a nombre infini de particules avec 
\begin{equation}\label{eq:etat produit abstrait}
\omega ^{(n)} = \omega ^{\otimes n} 
\end{equation}
pour tout $n\in \N$, o\`u $\omega \in (\gB(\gH))^*$ est un \'etat abstait \`a une particule (en particulier $\omega \geq 0$ et $\omega (\one_{\gH}) = 1$).\hfill\qed
\end{definition}

Le th\'eor\`eme de de Finetti quantique sous sa forme la plus g\'en\'erale \'enonce que \textbf{tout \'etat abstait \`a nombre infini de particules est une combinaison convexe d'\'etats produits abstraits}:

\begin{theorem}[\bf De Finetti quantique abstrait]\label{thm:DeFinetti abs}\mbox{}\\
Soit $\gH$ un espace de Hilbert s\'eparable et $(\omega^{(n)})_{n\in \N}$ un \'etat bosonique abstrait \`a nombre infini de particules sur $\gH$. Il existe une unique mesure de probabilit\'e $\mu \in \PP \left( (\gB (\gH)) ^* \right)$ sur le dual des op\'erateurs born\'es de $\gH$ telle que
\begin{equation}\label{eq:sup mes deF abs}
\mu \left( \left\{ \omega \in (\gB (\gH)) ^*, \; \omega \geq 0, \; \omega (\one_{\gH}) = 1 \right\} \right) = 1 
\end{equation}
et 
\begin{equation}
\omega^{(n)}=\int \omega ^{\otimes n} \, d\mu(\om)
\label{eq:melange abs}
\end{equation}
pour tout $n\geq0$.
\end{theorem}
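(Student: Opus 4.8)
Le plan est de ramener l'\'enonc\'e \`a un r\'esultat de th\'eorie de Choquet sur le ``simplexe'' des \'etats sym\'etriques d'un produit tensoriel infini, dans l'esprit des preuves originales de St\o{}rmer et Hudson-Moody~\cite{Stormer-69,HudMoo-75}. La premi\`ere \'etape consiste \`a recoller la hi\'erarchie $(\omega^{(n)})_{n\in\N}$ en un unique \'etat $\omega$ sur le $C^*$-produit tensoriel infini (minimal) $\mathfrak{A} = \bigotimes_{k\in\N}\gB(\gH)$: la relation de consistance~\eqref{eq:consistance abstraite} rend la formule $\omega(B_1\otimes\cdots\otimes B_n\otimes\one\otimes\cdots):=\omega^{(n)}(B_1\otimes\cdots\otimes B_n)$ bien d\'efinie et positive sur la $*$-alg\`ebre des observables localis\'es, d'o\`u par continuit\'e un \'etat sur $\mathfrak{A}$, et la sym\'etrie~\eqref{eq:sym abstraite} dit exactement que $\omega$ est invariant par l'action du groupe $\Sigma_\infty$ des permutations \`a support fini des facteurs. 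R\'eciproquement, un \'etat $\Sigma_\infty$-invariant redonne une telle hi\'erarchie, et les \'etats produits~\eqref{eq:etat produit abstrait} correspondent exactement aux $\omega_0^{\otimes\infty}$.

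L'ensemble $\cS_{\rm sym}$ des \'etats $\Sigma_\infty$-invariants de $\mathfrak{A}$ \'etant convexe et faiblement-$*$ compact, le th\'eor\`eme de Choquet-Bishop-de Leeuw assure que tout \'el\'ement en est le barycentre d'une mesure de probabilit\'e port\'ee par l'adh\'erence de ses points extr\^emaux. Tout repose alors sur l'identification $\mathrm{Ext}(\cS_{\rm sym}) = \{\omega_0^{\otimes\infty}\ :\ \omega_0 \text{ \'etat sur } \gB(\gH)\}$. Pour l'inclusion d\'elicate, on note que l'extr\'emalit\'e de $\omega$ dans $\cS_{\rm sym}$ \'equivaut \`a son \emph{ergodicit\'e} sous $\Sigma_\infty$; on passe \`a la repr\'esentation GNS $(\pi_\omega,\cH_\omega,\Omega_\omega)$, o\`u les permutations agissent par des unitaires $U_\sigma$ fixant $\Omega_\omega$. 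Pour un observable localis\'e $A$ on consid\`ere ses translat\'es $\tau_m(A)$ sur des blocs de facteurs disjoints, et un th\'eor\`eme ergodique en moyenne donne $\frac1M\sum_{m=1}^M\pi_\omega(\tau_m(A))\to\omega(A)\,\one$ faiblement dans $\cH_\omega$. Appliqu\'e \`a $A = B_1\otimes\cdots\otimes B_k$ et test\'e contre $\pi_\omega(C)\Omega_\omega$ pour un autre observable localis\'e $C$, ceci entra\^ine de proche en proche la factorisation $\omega^{(n)}(B_1\otimes\cdots\otimes B_n) = \prod_{i=1}^n\omega^{(1)}(B_i)$, c'est-\`a-dire $\omega = (\omega^{(1)})^{\otimes\infty}$. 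Dans l'autre sens, un \'etat produit v\'erifie une propri\'et\'e de clustering fort (loi du $0$-$1$ de type Hewitt-Savage), donc est $\Sigma_\infty$-ergodique, donc extr\'emal.

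Il reste \`a observer que $\omega_0\mapsto\omega_0^{\otimes\infty}$ est un hom\'eomorphisme de l'espace faiblement-$*$ compact des \'etats de $\gB(\gH)$ sur $\mathrm{Ext}(\cS_{\rm sym})$; en particulier ce dernier est ferm\'e, $\cS_{\rm sym}$ est un simplexe de Bauer, et le th\'eor\`eme de Choquet fournit une \emph{unique} mesure repr\'esentant $\omega$, que l'on transporte en la mesure $\mu$ cherch\'ee sur les \'etats de $\gB(\gH)$; \eqref{eq:sup mes deF abs} et~\eqref{eq:melange abs} s'en d\'eduisent en \'evaluant sur les observables \`a $n$ facteurs. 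L'unicit\'e peut aussi s'obtenir directement par densit\'e des ``mon\^omes'' $\omega_0\mapsto\omega_0^{\otimes k}(B_1\otimes\cdots\otimes B_k)$ via Stone-Weierstrass, comme dans la preuve d'unicit\'e du Th\'eor\`eme~\ref{thm:HS}.

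La principale difficult\'e sera l'\'etape d'identification des extr\'emaux, c'est-\`a-dire le passage de l'ergodicit\'e sous $\Sigma_\infty$ \`a la factorisation compl\`ete des fonctions de corr\'elation (th\'eor\`eme ergodique en moyenne dans la GNS, contr\^ole du terme d'erreur, r\'ecurrence sur le nombre de facteurs). Un point technique secondaire est le maniement du produit tensoriel infini de $\gB(\gH)$, qui n'est pas s\'eparable: on pourra au besoin se restreindre \`a une sous-$C^*$-alg\`ebre s\'eparable adapt\'ee (par exemple engendr\'ee par $\cK(\gH)$ et l'unit\'e) puis repasser \`a la limite.
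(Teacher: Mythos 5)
Le texte ne d\'emontre pas ce th\'eor\`eme : il le cite (Remarque~\ref{rem:abs deF}) en renvoyant \`a~\cite{Stormer-69,HudMoo-75} et en r\'esumant pr\'ecis\'ement la strat\'egie que vous d\'eveloppez --- les \'etats produits sont les points extr\^emaux du convexe des \'etats sym\'etriques sur le produit tensoriel infini, et la mesure s'obtient par le th\'eor\`eme de Choquet. Votre plan est donc une version d\'etaill\'ee et fid\`ele de l'approche historique, et il est correct dans ses grandes lignes : le recollement de la hi\'erarchie en un \'etat $\Sigma_\infty$-invariant, la compacit\'e faible-$*$, l'identification extr\'emalit\'e $=$ ergodicit\'e $=$ factorisation via un th\'eor\`eme ergodique en moyenne dans la repr\'esentation GNS (c'est l'argument de St\o{}rmer, reposant sur le caract\`ere asymptotiquement ab\'elien de l'action des permutations), et l'unicit\'e soit par la structure de simplexe de Bauer, soit par densit\'e des mon\^omes comme dans la preuve d'unicit\'e du Th\'eor\`eme~\ref{thm:HS}. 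Vous identifiez bien les deux points d\'elicats (le passage de l'ergodicit\'e \`a la factorisation compl\`ete, et la non-s\'eparabilit\'e de $\gB(\gH)$, qui oblige \`a manier Choquet--Bishop--de~Leeuw dans un cadre non m\'etrisable ou \`a se restreindre \`a une sous-alg\`ebre s\'eparable). Notez simplement que le parti-pris du cours est pr\'ecis\'ement d'\'eviter cette route, jug\'ee non constructive et n\'ecessitant la notion d'\'etat abstrait, au profit du sch\'ema~\eqref{eq:schema deF 2} (th\'eor\`eme quantitatif en dimension finie puis localisation dans l'espace de Fock) : votre preuve donne le r\'esultat le plus g\'en\'eral du sch\'ema~\eqref{eq:schema deF 1}, mais ne fournit ni estimation quantitative, ni directement la normalit\'e locale des \'etats limites, qui est requise pour les applications \`a la th\'eorie de Hartree en dimension infinie.
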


\begin{remark}[Sur le th\'eor\`eme de de Finetti quantique abstrait]\label{rem:abs deF}\mbox{}\\\vspace{-0.4cm}
\begin{enumerate}
\item Ce th\'eor\`eme a \'et\'e d\'emontr\'e pour la premi\`ere fois par St\o{}rmer dans~\cite{Stormer-69}. Hudson et Moody~\cite{HudMoo-75} ont ensuite fourni une preuve plus simple en adaptant l'argument donn\'e par Hewitt-Savage pour la preuve du Th\'eor\`eme de de Finetti classique~\ref{thm:HS}: ils montrent que les \'etats produits sont les points extr\^emaux de l'ensemble convexe des \'etats abstraits \`a nombre infini de particules et obtiennent l'existence de la mesure par le th\'eor\`eme de Choquet. Cette preuve n\'ecessite la notion d'\'etats abstraits et ne fournit pas directement de preuve du Th\'eor\`eme~\ref{thm:DeFinetti fort}.
\item Hudson et Moody~\cite{HudMoo-75} d\'eduisent le th\'eor\`eme de de Finetti fort du th\'eor\`eme de de Finetti abstrait. Une adaptation de leur m\'ethode (voir~\cite[Appendice A]{LewNamRou-13}) montre que le th\'eor\`eme de de Finetti faible est aussi une cons\'equence du th\'eor\`eme abstrait.
\item Ce r\'esultat a \'et\'e utilis\'e pour d\'eriver des th\'eories de type Hartree pour des \'etats abstraits sans sym\'etrie bosonique dans~\cite{FanSpoVer-80,RagWer-89,PetRagVer-89}. Pour retrouver une th\'eorie de Hartree au sens usuel du terme il faut pouvoir montrer que l'\'etat limite est (localement) normal. En dimension finie $\gB (\gH)$ co\"incide bien s\^ur avec les op\'erateurs compacts, ce qui implique que tout \'etat abstait est normal, et on a pas cette difficult\'e.  
\end{enumerate}
\hfill\qed
\end{remark}

A ce stade nous avons donc le sch\'ema (``deF'' pour de Finetti)
\begin{equation}\label{eq:schema deF 1}
\boxed{\mbox{deF abstrait } \Rightarrow \mbox{ deF faible } \Rightarrow \mbox{ deF fort } },
\end{equation}
mais la preuve du th\'eor\`eme de de Finetti faible que nous allons pr\'esenter suivra une autre route, utilis\'ee dans~\cite{LewNamRou-13,LewNamRou-13b}. Elle part du th\'eor\`eme en dimension finie\footnote{Pour lequel il n'y a \'evidemment pas lieu de distinguer entre th\'eor\`eme faible et th\'eor\`eme fort.}:
\begin{equation}\label{eq:schema deF 2}
\boxed{\mbox{deF en dimension finie } \Rightarrow \mbox{ deF faible } \Rightarrow \mbox{ deF fort } }.
\end{equation}
Il se trouve que cette approche fournit une preuve plut\^ot plus longue que le sch\'ema~\eqref{eq:schema deF 1} partant de la preuve de Hudson-Moody du Th\'eor\`eme~\ref{thm:DeFinetti abs}. Ce d\'etour se motive par cinq raisons principales, d'ordre esth\'etique autant que pratique:
\begin{enumerate}
\item La preuve suivant~\eqref{eq:schema deF 2} est plus simple d'un point de vue conceptuel: elle ne n\'ecessite ni la notion d'\'etat abstrait ni le recours au Th\'eor\`eme de Choquet.
\item Gr\^ace \`a des progr\`es principalement d\^us \`a la communaut\'e d'information quantique (voir~\cite{ChrKonMitRen-07,Chiribella-11,Harrow-13,KonRen-05,FanVan-06,LewNamRou-13b}) on dispose maintenant d'une preuve enti\`erement constructive du th\'eor\`eme de de Finetti en dimension finie. Il s'agit d'abord de prouver des estimations explicites gr\^ace \`a une construction \`a $N$ fini, dans l'esprit de Diaconis-Freedman pour le cas classique, puis de passer \`a la limite comme dans la preuve du th\'eor\`eme de Hewitt-Savage que nous avons pr\'esent\'ee \`a la Section~\ref{sec:DF}.
\item La premi\`ere implication du sch\'ema~\eqref{eq:schema deF 2} est \'egalement essentiellement constructive, gr\^ace aux outils de localisation dans l'espace de Fock utilis\'es par exemple dans~\cite{Ammari-04,DerGer-99,Lewin-11}. Ces outils sont h\'erit\'es des m\'ethodes dites ``g\'eom\'etriques''~\cite{Enss-77,Enss-78,Sigal-82,Simon-77} qui adaptent au probl\`eme \`a $N$ corps des id\'ees de localisation naturelles pour le probl\`eme \`a un corps. Ils permettent entre autres une description fine du d\'efaut de compacit\'e par perte de masse \`a l'infini, dans l'esprit du principe de concentration-compacit\'e~\cite{Lions-84,Lions-84b}.
\item En particulier, la preuve de la premi\`ere implication dans~\eqref{eq:schema deF 2} fournit quelques corollaires importants qui permettent de prouver la validit\'e de la th\'eorie de Hartree dans le cas g\'en\'eral. Quand les hypoth\`eses faites \`a la Section~\ref{sec:Hartree deF faible} ne sont pas valables, le th\'eor\`eme de de Finetti faible et sa preuve fournie par le sch\'ema~\eqref{eq:schema deF 1} ne sont pas suffisants pour conclure: les particules s'\'echappant \`a l'infini peuvent former des \'etats li\'es d'\'energie n\'egative. Les m\'ethodes de localisation que nous pr\'esenterons permettent d'analyser ce ph\'enom\`ene.  
\item Au chapitre~\ref{sec:NLS} nous traiterons un cas o\`u le potentiel d'interaction d\'epend de $N$ pour d\'eriver une description de type Schr\"odinger non-lin\'eaire \`a la limite. Ceci revient \`a prendre une limite o\`u $w$ converge vers une masse de Dirac \emph{en m\^eme temps} que la limite $N\to \infty$. Dans ce cas, les arguments de compacit\'e ne sont pas suffisants et les estimations explicites que nous pr\'esenterons pour prouver le th\'eo\`r\`eme de de Finetti en dimension finie s'av\'ereront bien utiles.
\end{enumerate}

Un point de vue alternatif sur le sch\'ema~\eqref{eq:schema deF 2} est fourni par l'approche de Ammari-Nier~\cite{Ammari-hdr,AmmNie-08,AmmNie-09,AmmNie-11}, bas\'ee sur des m\'ethodes d'analyse semi-classique. Nous aurons l'occasion de pr\'eciser la relation entre les deux approches dans la suite de ces notes. 

Comme not\'e plus haut, la seconde implication dans~\eqref{eq:schema deF 2} est relativement \'evidente. Les chapitres suivants explorent le d\'ebut du sch\'ema et contiennent la preuve du th\'eor\`eme de de Finetti faible ainsi que plusieurs corollaires et r\'esultats interm\'ediaires utiles. Le th\'eor\`eme de de Finetti en dimension finie (o\`u il n'y a pas lieu de distinguer entre faible et fort) est discut\'e dans le Chapitre~\ref{sec:deF finite dim}. Les m\'ethodes de localisation permettant de montrer la premi\`ere implication dans~\eqref{eq:schema deF 2} sont l'objet du Chapitre~\ref{sec:locFock}.

\newpage

\section{\textbf{Th\'eor\`eme de de Finetti quantique en dimension finie}}\label{sec:deF finite dim}

Ce chapitre est consacr\'e au point de d\'epart du sch\'ema de preuve~\eqref{eq:schema deF 2}, \`a savoir une preuve du th\'eor\`eme de de Finetti fort dans le cas d'un espace de Hilbert s\'eparable $\gH$ \emph{de dimension finie},
\[
 \dim \gH = d.
\]
Dans ce cas il n'y a pas lieu de distinguer entre th\'eor\`eme fort et faible puisque les convergences forte et faible-$\ast$ sont \'equivalentes dans $\gS ^1 (\gH ^n)$. L'avantage de travailler en dimension finie est la possibilit\'e d'obtenir des estimations explicites dans l'esprit du th\'eor\`eme de Diaconis-Freedman (mais avec une m\'ethode compl\`etement diff\'erente). Nous allons en fait prouver le r\'esultat suivant, qui donne des bornes en norme de trace, la norme naturelle sur l'espace des \'etats quantiques.   

\begin{theorem}[\textbf{De Finetti quantique quantitatif}]\label{thm:DeFinetti quant} \mbox{}\\
Soit $\Gamma_N \in \gS ^1 (\gH_s ^N)$ un \'etat sur $\gH_s^N$ et $\gamma_N ^{(n)}$ ses matrices de densit\'e r\'eduites. Il existe une mesure de probabilit\'e $\mu_N \in \PP (S\gH)$ telle que, notant 
\begin{equation}\label{eq:finite deF etat}
\Gammat _N = \int_{u\in S \gH} |u ^{\otimes N}\rangle \langle u ^{\otimes N}| d\mu_N(u) 
\end{equation}
l'\'etat associ\'e et $\gammat_N ^{(n)}$ ses matrices de densit\'e r\'eduites, on ait
\begin{equation} \label{eq:error finite dim deF}
\Tr_{\gH^n} \Big| \gamma_N ^{(n)} - \widetilde \gamma _N^{(n)}  \Big| \leq \frac{2 n(d+2n)}{N}
\end{equation}
pour tout $n=1 \ldots N$.
\end{theorem}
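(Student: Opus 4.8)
The strategy is to realize the de Finetti measure $\mu_N$ explicitly as a "quantum empirical measure" obtained from a resolution of the identity on the symmetric subspace $\gH_s^N$, and then to estimate the discrepancy between the reduced density matrices of $\Gamma_N$ and those of the reconstructed state $\widetilde\Gamma_N$ by a direct combinatorial computation, in the spirit of the Diaconis--Freedman argument of Section~\ref{sec:DF} but adapted to the noncommutative setting. The key object is the coherent-state (Schur--Weyl) resolution of the identity: there is a probability measure $du$ on $S\gH$ (the uniform measure, invariant under $U(d)$) and a dimensional constant $\dim \gH_s^N = \binom{N+d-1}{d-1}$ such that
\[
\dim \gH_s ^N \int_{u\in S\gH} |u^{\otimes N}\rangle\langle u^{\otimes N}|\, du = \one_{\gH_s ^N}.
\]
First I would use this to define $\mu_N$ by $d\mu_N(u) := \dim \gH_s ^N \,\langle u^{\otimes N}, \Gamma_N\, u^{\otimes N}\rangle\, du$, which is a bona fide probability measure on $S\gH$ by the above identity and positivity of $\Gamma_N$; the associated state $\widetilde\Gamma_N$ is then given by \eqref{eq:finite deF etat}. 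This is the finite-dimensional analogue of the empirical-measure construction \eqref{eq:defi nu}.

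**Computing the reduced matrices.** The second step is to compute $\widetilde\gamma_N^{(n)}$ in closed form. Expanding the partial trace and using the explicit value of the integrals $\int |u^{\otimes N}\rangle\langle u^{\otimes N}|\,d\mu_N(u)$, one finds, exactly as in the "urn with replacement versus without replacement" dichotomy of the proof of Theorem~\ref{thm:DF}, an identity of the form
\[
\widetilde\gamma_N ^{(n)} = \frac{\dim\gH_s^N}{\dim\gH_s^{N+n}}\; \sum_{\text{partitions}} (\text{symmetrization operators})\;\gamma_N ^{(n)},
\]
more precisely $\widetilde\gamma_N^{(n)}$ is a convex-type combination in which the leading term is a multiple of $\gamma_N^{(n)}$ and the remainder is a positive operator supported on "collision" configurations (repeated indices). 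Concretely one shows
\[
\widetilde\gamma_N ^{(n)} = \binom{N+d-1}{d-1}\binom{N+n+d-1}{d-1}^{-1} \sum_{k=0}^{n} \binom{n}{k}\,(\text{contraction of } \gamma_N^{(n-k)} \text{ with } k \text{ collisions}),
\]
the $k=0$ term being $\frac{N(N-1)\cdots(N-n+1)}{(N+d-1)\cdots(N+d-n)}\,\gamma_N^{(n)}$ up to lower-order corrections in $1/N$. The point is that every term with $k\ge 1$ is a positive operator of trace at most of order $k/N$, exactly as the measure $\nu_n$ in \eqref{eq:DF astuce}; summing these and using $\Tr\widetilde\gamma_N^{(n)} = \Tr\gamma_N^{(n)} = 1$ one gets, by the same triangle-inequality trick as after \eqref{eq:preuve DiacFreed},
\[
\Tr_{\gH^n}\big|\gamma_N^{(n)} - \widetilde\gamma_N^{(n)}\big| \le 2\Big(1 - \frac{N(N-1)\cdots(N-n+1)}{(N+d-1)\cdots(N+d-n)}\Big),
\]
and a direct estimate of the product yields the bound $2n(d+2n)/N$.

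**The main obstacle.** The conceptual core — and the step I expect to be hardest to get exactly right — is the closed-form evaluation of the integrals $\int_{S\gH} |u^{\otimes n}\rangle\langle u^{\otimes n}|\otimes(\text{stuff})\,d\mu_N(u)$ and, equivalently, the bookkeeping of which monomials in the creation/annihilation operators survive after integration over the unitary group. This is where the representation-theoretic input (Schur--Weyl duality, the fact that $\gH_s^N$ carries an irreducible $U(d)$-representation, and the resulting Schur-orthogonality formulas for $\int u_{i_1}\bar u_{j_1}\cdots\,du$) has to be used carefully; equivalently one can phrase everything in the second-quantized language on Fock space, writing $\Gamma_N$ and its marginals in terms of the number operator and the operator $\int a^\dagger_u a^\dagger_u \cdots a_u a_u\,d\mu_N(u)$, and then the combinatorics becomes a matter of normal-ordering. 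The subtlety is purely in tracking the dimensional factors $\binom{N+d-1}{d-1}$ and in checking that the "collision" remainder is genuinely a \emph{positive} operator (so that the triangle inequality closes without loss); once that positivity is in hand, the quantitative estimate \eqref{eq:error finite dim deF} follows by elementary manipulation of binomial coefficients, just as in the classical case. Finally, the invariance of $\mu_N$ under the $S^1$-action is immediate from $|u^{\otimes N}\rangle\langle u^{\otimes N}| = |e^{i\theta}u^{\otimes N}\rangle\langle e^{i\theta}u^{\otimes N}|$, so $\mu_N$ descends to a measure on $S\gH/S^1$ if desired, matching the statement of Theorem~\ref{thm:DeFinetti fort}.
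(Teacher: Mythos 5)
Your proposal follows essentially the same route as the paper: the measure $\mu_N$ is the CKMR lower symbol $d\mu_N(u)=\dim\gH_s^N\langle u^{\otimes N},\Gamma_N u^{\otimes N}\rangle\,du$ built from the Schur resolution of the identity, the key identity is Chiribella's formula expressing $\gammat_N^{(n)}$ as $C(d,n,N)\gamma_N^{(n)}$ plus a positive remainder (proved in the paper precisely via the Wick/anti-Wick normal-ordering comparison you mention), and the bound follows from the same trace-normalization and triangle-inequality trick as in the Diaconis--Freedman argument. The only imprecision is your leading coefficient, whose denominator should be $(N+d)(N+d+1)\cdots(N+n+d-1)$ rather than $(N+d-1)\cdots(N+d-n)$, but this does not affect the final estimate.
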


\begin{remark}[Sur le th\'eor\`eme de de Finetti en dimension finie]\label{rem:quant deF}\mbox{}\\\vspace{-0.4cm}
\begin{enumerate}
\item Ce r\'esultat est essentiellement d\^u \`a Christandl, K\"onig, Mitchison et Renner~\cite{ChrKonMitRen-07}, avec des pr\'ecurseurs dans~\cite{KonRen-05} et surtout~\cite{FanVan-06}. On pourra trouver des d\'eveloppements dans~\cite{Chiribella-11,Harrow-13,LewNamRou-13b}. La communaut\'e d'information quantique s'est int\'eress\'ee \`a diverses variantes, voir par exemple~\cite{ChrKonMitRen-07,ChrTon-09,RenCir-09,Renner-07,BraHar-12}.
\item On peut ajouter une implication au sch\'ema~\eqref{eq:schema deF 2}:
\begin{equation}\label{eq:schema deF 3}
\boxed{\mbox{deF quantitatif } \Rightarrow \mbox{deF en dimension finie } \Rightarrow \mbox{ deF faible } \Rightarrow \mbox{ deF fort } }.
\end{equation}
En effet, en dimension finie on peut identifier la sph\`ere $S\gH$ avec une sph\`ere Euclidienne usuelle, compacte (de dimension $2d-1$, i.e. la sph\`ere unit\'e de $\R ^{2d}$). L'espace des mesures de probabilit\'e sur $S\gH$ est alors compact pour la topologie faible usuelle et on peut extraire de $\mu_N$ une sous-suite convergente pour d\'emontrer le Th\'eor\`eme~\ref{thm:DeFinetti fort} dans le cas $\dim \gH <\infty$ exactement de la m\^eme fa\c{c}on que nous avons d\'eduit le Th\'eor\`eme~\ref{thm:HS} du Th\'eor\`eme~\ref{thm:DF} \`a la Section~\ref{sec:DF}.
\item La borne~\eqref{eq:error finite dim deF} n'est pas optimale. On peut en fait obtenir l'estimation  
\begin{equation} \label{eq:error finite dim deF 2}
\Tr_{\gH^n} \Big| \gamma_N ^{(n)} - \widetilde \gamma _N^{(n)}  \Big| \leq \frac{2 nd}{N},
\end{equation}
avec la m\^eme construction, voir~\cite{Chiribella-11,ChrKonMitRen-07,LewNamRou-13b}. La preuve que nous pr\'esenterons ne donne que~\eqref{eq:error finite dim deF} mais me para\^it plus instructive. Pour les applications qui nous int\'eressent dans ces notes, $n$ sera toujours fixe (\'egal \`a $2$ la plupart du temps), et dans ce cas~\eqref{eq:error finite dim deF} et~\eqref{eq:error finite dim deF 2} donnent les m\^emes ordres de grandeur en fonction de $N$ et $d$.
\item La borne~\eqref{eq:error finite dim deF 2} dans la cas quantique est l'\'equivalent de l'estimation en $dn/N$ mentionn\'ee \`a la Remarque~\ref{rem:DiacFreLio}. On peut se demander si cet ordre de grandeur  est optimal. Il l'est clairement avec la construction que nous allons utiliser, mais il serait tr\`es int\'eressant de savoir si on peut faire mieux avec une autre construction. En particulier, peut-on prouver une borne ind\'ependante de $d$, rappelant la borne~\eqref{eq:DiacFreed} du th\'eor\`eme de Diaconis-Freedman ? 
\end{enumerate}

\hfill\qed
\end{remark}

La construction de $\Gammat_N$ vient de~\cite{ChrKonMitRen-07}. Elle est particuli\`erement simple mais utilise \`a fond le fait que l'espace $\gH$ est de dimension finie. L'approche que nous suivrons pour la preuve du Th\'eor\`eme~\ref{thm:DeFinetti quant} est due originellement \`a Chiribella~\cite{Chiribella-11}. Nous allons prouver une formule explicite donnant les matrices densit\'e de $\Gammat_N$ en fonction de celles de $\Gamma_N$, dans l'esprit de la Remarque~\ref{rem:marg DF}. Cette formule implique~\eqref{eq:error finite dim deF} de la m\^eme mani\`ere que~\eqref{eq:DF astuce} implique~\eqref{eq:DiacFreed}.

Dans la Section~\ref{sec:CKMR constr} nous pr\'esentons la construction, \'enon\c{c}ons la formule explicite de Chiribella et d\'emontrons le Th\'eor\`eme~\ref{thm:DeFinetti quant} comme corollaire. Avant de donner la preuve du r\'esultat de Chiribella, il est utile de fournir une motivation informelle et quelques heuristiques sur la construction de Christandl-K\"onig-Mitchison-Renner (CKMR), qui se trouve \^etre connect\'ee \`a des id\'ees bien connues d'analyse semi-classique. Ceci est l'objet de la Section~\ref{sec:CKMR heur}. Finalement nous prouverons la formule de Chiribella \`a la Section~\ref{sec:CKMR proof}, en suivant l'approche de~\cite{LewNamRou-13b}. Elle a \'et\'e trouv\'ee ind\'ependament par Lieb et Solovej~\cite{LieSol-13} (avec une motivation diff\'erente), et nous a \'et\'e inspir\'ee par les travaux de Ammari et Nier~\cite{AmmNie-08,AmmNie-09,AmmNie-11}. D'autres preuves sont possibles, cf~\cite{Chiribella-11} et~\cite{Harrow-13}. 

\subsection{La construction de CKMR et la formule de Chiribella}\label{sec:CKMR constr}\mbox{}\\\vspace{-0.4cm}

Notons tout d'abord que la construction de Diaconis-Freedman pr\'esent\'ee pr\'ec\'edement est purement classique, puisqu'elle est bas\'ee sur la notion de mesure empirique, qui n'a pas d'\'equivalent en m\'ecanique quantique. Une autre approche est donc clairement n\'ecessaire pour la preuve du Th\'eor\`eme~\ref{thm:DeFinetti quant}.

En dimension finie, on peut identifier la sph\`ere unit\'e $S\gH = \left\{ u \in \gH,\norm{u}=1 \right\}$ avec une sph\`ere euclidienne. On peut donc la munir de la mesure uniforme (mesure de Haar du groupe des rotations, simplement la mesure de Lebesgue sur la sph\`ere euclidienne), que nous noterons $du$. On a alors par le lemme de Schur une r\'esolution de l'identit\'e de la forme
\begin{equation}\label{eq:Schur}
 \dim \gH_s^N \int_{S\gH} | u^{\otimes N} \rangle  \langle u^{\otimes N} | \, du = \1_{\gH^N},  
\end{equation}
ce qui est une simple cons\'equence de l'invariance par rotation de $du$.

L'id\'ee de Christandl-K\"onig-Mitchison-Renner est alors de simplement poser 
\begin{align}\label{eq:def CKMR}
d\mu_N (u) &:=  \dim \gH_s^N \tr_{\gH_s ^N} \left[\Gamma_N | u^{\otimes N} \rangle  \langle u^{\otimes N} |\right] du \nonumber\\
&= \dim \gH_s^N \tr_{\gH_s ^N}  \left\langle u^{\otimes N}, \Gamma_N u^{\otimes N} \right\rangle  du 
\end{align}
c'est \`a dire de d\'efinir 
\begin{equation}\label{eq:def CKMR 2}
\Gammat_N =  \dim \gH_s^N \int_{S\gH} | u^{\otimes N} \rangle  \langle u^{\otimes N} | \left\langle u^{\otimes N}, \Gamma_N u^{\otimes N} \right\rangle du.
\end{equation}
L'observation de Chiribella\footnote{Formul\'ee dans le vocabulaire de l'information quantique comme une relation entre le ``clonage optimal'' et les ``canaux de mesure/pr\'eparation optimaux''.} est la suivante: 

\begin{theorem}[\textbf{Formule de Chiribella}] \label{thm:CKMR-identity} \mbox{}\\
Avec les d\'efinitions pr\'ec\'edentes, on a 
\begin{equation}\label{eq:CKMR exact}
\gammat _N^{(n)} = {{N+n+d-1}\choose n}^{-1}\sum_{\ell=0}^{n} {N \choose \ell}  \gamma_N^{(\ell)} \otimes _s \1_{\gH^{n-\ell}}
\end{equation}
avec la convention
$$ \gamma_N^{(\ell)} \otimes _s \1_{\gH^{n-\ell}}= \frac{1}{\ell!\,(n-\ell)!}\sum_{\sigma\in S_n} (\gamma_N^\ell)_{\sigma(1),...,\sigma(\ell)} \otimes (\1_{\gH^{n-\ell}})_{{\sigma(\ell+1)},...,{\sigma(n)}}$$
o\`u $(\gamma_N^\ell)_{\sigma(1),...,\sigma(\ell)}$ agit sur les variables $\sigma(1)\ldots,\sigma(\ell)$.
\end{theorem}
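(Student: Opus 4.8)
The plan is to establish Chiribella's formula~\eqref{eq:CKMR exact} by testing it on a total family of operators, thereby reducing it to a purely combinatorial identity, in the same spirit in which the Diaconis--Freedman identity~\eqref{eq:DF astuce} was obtained.

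First I would record the Schur orthogonality relation behind the CKMR construction: for every $k\geq 1$,
\begin{equation*}
\int_{S\gH} |u^{\otimes k}\rangle\langle u^{\otimes k}| \, du = \frac{1}{\dim\gH_s^k}\, P_{\gH_s^k},
\end{equation*}
with $P_{\gH_s^k}$ the orthogonal projection onto $\gH_s^k$; this follows from~\eqref{eq:Schur} together with $\tr P_{\gH_s^k}=\dim\gH_s^k$. Since $\tr_{n+1\to N} |u^{\otimes N}\rangle\langle u^{\otimes N}| = |u^{\otimes n}\rangle\langle u^{\otimes n}|$ for $u\in S\gH$, definition~\eqref{eq:def CKMR 2} gives
\begin{equation*}
\gammat_N^{(n)} = \dim\gH_s^N \int_{S\gH} |u^{\otimes n}\rangle\langle u^{\otimes n}| \, \langle u^{\otimes N}, \Gamma_N u^{\otimes N}\rangle \, du.
\end{equation*}
Both sides of~\eqref{eq:CKMR exact} are linear in $\Gamma_N$ and self-adjoint on $\gH_s^n$, so it suffices to treat a pure state $\Gamma_N=|\Psi\rangle\langle\Psi|$ with $\Psi\in\gH_s^N$, $\|\Psi\|=1$, and to check that the quadratic forms $v\mapsto\langle v^{\otimes n},\cdot\,v^{\otimes n}\rangle$ agree for all $v\in S\gH$ (which, by polarization, determines a self-adjoint operator on $\gH_s^n$). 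Using $|\langle v,u\rangle|^{2n}\,|\langle u^{\otimes N},\Psi\rangle|^2 = |\langle v^{\otimes n}\otimes\Psi,\,u^{\otimes(n+N)}\rangle|^2$ and the Schur relation on $\gH_s^{n+N}$, the left-hand side collapses to
\begin{equation*}
\langle v^{\otimes n}, \gammat_N^{(n)} v^{\otimes n}\rangle = \frac{\dim\gH_s^N}{\dim\gH_s^{n+N}}\, \bigl\| P_{\gH_s^{n+N}}\bigl(v^{\otimes n}\otimes\Psi\bigr)\bigr\|^2 .
\end{equation*}

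The crux is the combinatorial evaluation of this norm. Because $v^{\otimes n}$ is symmetric in its $n$ slots and $\Psi$ in its $N$ slots, for $\sigma\in\Sigma_{n+N}$ the vector $U_\sigma(v^{\otimes n}\otimes\Psi)$ depends only on the $n$-element set $S\subset\{1,\dots,n+N\}$ of slots carrying a copy of $v$; denote it $w_S$ (copies of $v$ on the slots of $S$, $\Psi$ spread symmetrically over the complement). Hence
\begin{equation*}
P_{\gH_s^{n+N}}\bigl(v^{\otimes n}\otimes\Psi\bigr) = \binom{n+N}{n}^{-1}\sum_{|S|=n} w_S, \qquad \bigl\|P_{\gH_s^{n+N}}(v^{\otimes n}\otimes\Psi)\bigr\|^2 = \binom{n+N}{n}^{-2}\sum_{S,S'}\langle w_S,w_{S'}\rangle .
\end{equation*}
A direct contraction shows that $\langle w_S,w_{S'}\rangle$ depends only on $\ell:=|S\setminus S'|=|S'\setminus S|$ and equals $\langle v^{\otimes\ell},\gamma_N^{(\ell)}v^{\otimes\ell}\rangle$ (the $n-\ell$ common $v$-slots give $\|v\|^{2(n-\ell)}=1$, the $\ell$ slots of $S\setminus S'$ resp. $S'\setminus S$ each contract a slot of one copy of $\Psi$ against $v$, and the remaining $N-\ell$ slots of the two copies of $\Psi$ pair up, using the symmetry of $\Psi$). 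Counting the pairs $(S,S')$ with fixed $\ell$ — choose $S$, then the $n-\ell$ elements of $S\cap S'$ inside $S$ ($\binom{n}{\ell}$ ways) and the $\ell$ elements of $S'\setminus S$ among the $N$ slots outside $S$ ($\binom{N}{\ell}$ ways) — yields
\begin{equation*}
\bigl\|P_{\gH_s^{n+N}}(v^{\otimes n}\otimes\Psi)\bigr\|^2 = \binom{n+N}{n}^{-1}\sum_{\ell=0}^{n}\binom{n}{\ell}\binom{N}{\ell}\, \langle v^{\otimes\ell}, \gamma_N^{(\ell)} v^{\otimes\ell}\rangle .
\end{equation*}

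To conclude, I would match this with the right-hand side of~\eqref{eq:CKMR exact} evaluated on $v^{\otimes n}$: from the definition of $\otimes_s$ one gets $\langle v^{\otimes n}, (\gamma_N^{(\ell)}\otimes_s\1_{\gH^{n-\ell}}) v^{\otimes n}\rangle = \binom{n}{\ell}\langle v^{\otimes\ell},\gamma_N^{(\ell)}v^{\otimes\ell}\rangle$, so the two sides agree exactly when
\begin{equation*}
\frac{\dim\gH_s^N}{\dim\gH_s^{n+N}}\binom{n+N}{n}^{-1} = \binom{N+n+d-1}{n}^{-1},
\end{equation*}
an elementary binomial identity given $\dim\gH_s^k=\binom{k+d-1}{d-1}$. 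The only genuinely delicate step is the combinatorial one: pinning down the overlaps $\langle w_S,w_{S'}\rangle$ and the pair count, in particular the spurious factor $\binom{n}{\ell}$ which is cancelled precisely by the symmetrization built into $\otimes_s$; everything else is bookkeeping. An equivalent route is to first prove the operator identity $\gammat_N^{(n)}=\frac{\dim\gH_s^N}{\dim\gH_s^{n+N}}\,\tr_{n+1\to n+N}\bigl[(\1_{\gH^n}\otimes\Gamma_N)P_{\gH_s^{n+N}}\bigr]$ and then expand $P_{\gH_s^{n+N}}$ either by the same slot-counting or, following~\cite{AmmNie-08,LieSol-13}, via creation operators on bosonic Fock space; the underlying combinatorics is identical.
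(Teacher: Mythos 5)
Your proof is correct, and it reaches~\eqref{eq:CKMR exact} by a route that differs from the paper's precisely in the key normal-ordering step. Up to the identity
\begin{equation*}
\langle v^{\otimes n},\gammat_N^{(n)}v^{\otimes n}\rangle=\frac{\dim\gH_s^N}{\dim\gH_s^{N+n}}\,\bigl\Vert P_{\gH_s^{N+n}}\bigl(v^{\otimes n}\otimes\Psi\bigr)\bigr\Vert^2,
\end{equation*}
the two arguments coincide: the reduction to diagonal coherent-state matrix elements is Lemma~\ref{lem:uk-g-uk=0}, and the Schur-lemma computation on $\gH_s^{N+n}$ is exactly Lemma~\ref{lem:A Wick}, since $P_{\gH_s^{N+n}}(v^{\otimes n}\otimes\Psi)$ is proportional to $a^*(v)^n\Psi$. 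Where you diverge is in evaluating this norm, i.e.\ in passing from the anti-normally ordered quantity $\langle\Psi,a(v)^na^*(v)^n\Psi\rangle$ to the normally ordered ones $\langle\Psi,a^*(v)^\ell a(v)^\ell\Psi\rangle=\frac{N!}{(N-\ell)!}\langle v^{\otimes\ell},\gamma_N^{(\ell)}v^{\otimes\ell}\rangle$: the paper does this algebraically via the CCR and a Laguerre-polynomial recursion (Lemma~\ref{lem:Wick A Wick}), whereas you expand the symmetrizer over $n$-element subsets $S$ and sort the overlaps $\langle w_S,w_{S'}\rangle$ according to $\ell=|S\setminus S'|$. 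Your combinatorics checks out: the overlap is indeed $\langle v^{\otimes\ell},\gamma_N^{(\ell)}v^{\otimes\ell}\rangle$ with the paper's normalization $\tr\gamma_N^{(\ell)}=1$; the pair count $\binom{n+N}{n}\binom{n}{\ell}\binom{N}{\ell}$ is right; the factor $\binom{n}{\ell}$ is absorbed by the $\otimes_s$ convention; the prefactor matches $\binom{N+n+d-1}{n}^{-1}$ via $\dim\gH_s^k=\binom{k+d-1}{d-1}$; and your coefficients $n!\binom{n}{\ell}\binom{N}{\ell}$ agree with the paper's $\binom{n}{\ell}\frac{n!}{\ell!}\cdot\frac{N!}{(N-\ell)!}$. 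Your route is more elementary (no second quantization, no Laguerre identity) and makes the combinatorial content of the formula transparent; the paper's route buys a reusable operator identity between Wick and anti-Wick ordering, which is the conceptual point of Section~\ref{sec:CKMR proof} and of the heuristics of Section~\ref{sec:CKMR heur}. Your closing alternative, $\gammat_N^{(n)}=\frac{\dim\gH_s^N}{\dim\gH_s^{N+n}}\tr_{n+1\to n+N}\bigl[(\one_{\gH^n}\otimes\Gamma_N)P_{\gH_s^{N+n}}\bigr]$, is also correct and is essentially the Lieb--Solovej formulation mentioned in the text.
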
 

Nous pouvons d\'eduire du r\'esultat pr\'ec\'edent une preuve simple du th\'eor\`eme de de Finetti quantitatif: 

\begin{proof}[Preuve du Th\'eor\`eme~\ref{thm:DeFinetti quant}]
On proc\`ede comme en~\eqref{eq:preuve DiacFreed}. Seul le premier terme de la somme~\eqref{eq:CKMR exact} compte:  
\begin{equation}
 \gammat _N^{(n)} - \gamma_N ^{(n)}  = ( C(d,n,N) - 1) \gamma_N ^{(n)} + B = -A + B \label{eq:estim CKMR}
\end{equation}
o\`u 
\[
C(d,n,N) = \frac{(N+d-1)!}{(N+n+d-1)!} \frac{N!}{(N-n)!} < 1,  
\]
et $A,B$ sont des op\'erateurs positifs. On a 
$$\tr_{\gH ^n} [-A+B] = \tr \left[\gammat _N^{(n)} - \gamma_N ^{(n)} \right]= 0,$$
et donc, par l'in\'egalit\'e triangulaire,
\[
\Tr \Big|\gammat _N^{(n)} - \gamma_N ^{(n)} \Big| \leq \tr A + \tr B = 2 \Tr A = 2 (1- C(d,n,N)).
\]
Ensuite, l'in\'egalit\'e \'el\'ementaire 
\begin{align*}
C(d,n,N) &=  \prod_{j=0} ^{n-1} \frac{N-j}{N+j+d}\ge  \left( 1 - \frac{2n + d -2}{N + d + n - 1}\right) ^k \geq 1 -n \frac{2n + d -2}{N + d + n - 1}
\end{align*}
donne
\begin{equation} \label{eq:error-NLR}
\Tr \Big| \gamma_N ^{(n)} - \gammat _N^{(n)} \Big| \le \frac{2 n(d+2n)}{N},
\end{equation}
ce qui est le r\'esultat souhait\'e.
\end{proof}

Tout repose d\'esormais sur la preuve du Th\'eor\`eme~\ref{thm:CKMR-identity}, qui est l'objet de la Section~\ref{sec:CKMR proof}. Avant de la pr\'esenter, nous donnons quelques indications heuristiques quant \`a la pertinence de la construction~\eqref{eq:def CKMR 2}.

\subsection{Heuristiques et motivations}\label{sec:CKMR heur}\mbox{}\\\vspace{-0.4cm}

La formule de Schur~\eqref{eq:Schur} exprime le fait que la famille $\left( u ^{\otimes N} \right)_{u\in S\gH}$ forme une base sur-compl\`ete de $\gH_s ^N$. Une telle base index\'ee par un param\`etre continu rappelle une d\'ecomposition en \'etats coh\'erents~\cite{KlaSka-85,ZhaFenGil-90}. Cette base est en fait ``de moins en moins sur-compl\`ete'' lorsque $N$ devient grand. En effet, on a clairement
\begin{equation}\label{eq:less over complete}
 \langle u ^{\otimes N}, v ^{\otimes N} \rangle_{\gH ^N}  = \langle u , v \rangle_{\gH} ^N \to 0 \mbox{ quand } N \to \infty 
\end{equation}
d\`es que $u$ et $v$ ne sont pas exactement colin\'eaires. La base $\left( u ^{\otimes N} \right)_{u\in S\gH}$ devient donc ``quasiment orthonormale'' quand $N$ tend vers l'infini, et il est alors tr\`es naturel de s'attendre \`a ce que tout op\'erateur ait une repr\'esentation approch\'ee de la forme~\eqref{eq:finite deF etat}.  

Dans le vocabulaire de l'analyse semi-classique~\cite{Lieb-73b,Simon-80,Berezin-72}, chercher \`a \'ecrire 
\[
 \Gamma_N = \int_{u\in S \gH} d\mu_N (u) |u ^{\otimes N} \rangle \langle u ^{\otimes N}| 
\]
revient \`a chercher un \emph{symbole sup\'erieur} $\mu_N$ repr\'esentant $\Gamma_N$. En fait, il se trouve~\cite{Simon-80} que l'on peut toujours trouver un tel symbole, seulement $\mu_N$ n'est en g\'en\'eral pas une mesure positive. Le probl\`eme que nous nous posons est donc de trouver une mani\`ere d'approcher le symbole sup\'erieur (dont on a par ailleurs pas d'expression explicite en fonction de l'\'etat lui-m\^eme) par une mesure positive. 

D'un autre c\^ot\'e, la mesure introduite en~\eqref{eq:def CKMR} est tr\`es exactement ce qu'on appelle le \emph{symbole inf\'erieur} de l'\'etat $\Gamma_N$. Une des raisons pour lesquelles les symboles inf\'erieurs et sup\'erieurs ont \'et\'es introduits est que ces deux objets a priori diff\'erents ont tendance \`a co\"incider dans les limites semi-classiques. Hors, la limite $N\to \infty$ peut \^etre vue comme une limite semi-classique, et il est donc tr\`es naturel de prendre le symbole inf\'erieur comme une approximation du symbole sup\'erieur dans cette limite.

On peut motiver ce choix de mani\`ere un peu plus pr\'ecise. Supposons que nous avons une suite d'\'etats  \`a $N$ particules d\'efinis \`a partir d'un m\^eme symbole sup\'erieur ind\'ependant de $N$,
\[
 \Gamma_N = \dim (\gH _s ^N) \int_{u\in S \gH}  \mu ^{\rm sup} (u) |u ^{\otimes N} \rangle \langle u ^{\otimes N}|du,  
\]
et calculons les symboles inf\'erieurs correspondants:
\begin{align*}
 \mu ^{\rm inf}_N (v) = \langle v ^{\otimes N}, \Gamma_N v ^{\otimes N} \rangle &= \dim (\gH _s ^N) \int_{u\in S \gH}  \mu ^{\rm sup} (u) \left|\langle u ^{\otimes N}, v ^{\otimes N}\rangle\right| ^2 du
 \\&=\dim (\gH _s ^N) \int_{u\in S \gH}  \mu ^{\rm sup} (u) \left|\langle u , v \rangle\right| ^{2N} du.
\end{align*}
Vue l'observation~\eqref{eq:less over complete} et la n\'ecessaire invariance par l'action de $S ^1$ de $\mu ^{\rm sup}$ il est clair qu'on a 
$$ \mu ^{\rm inf}_N (v) \to \mu ^{\rm sup} (v) \mbox{ quand } N\to\infty.$$ 
Autrement dit, le symbole inf\'erieur est, pour $N$ grand, une approximation du symbole sup\'erieur, qui a en outre l'avantage d'\^etre positif. Sans constituer une preuve rigoureuse du Th\'eor\`eme~\ref{thm:DeFinetti quant}, ce point de vue indique que la construction de CKMR est extr\^emement naturelle.

\subsection{La formule de Chiribella et la quantification d'anti-Wick}\label{sec:CKMR proof}\mbox{}\\\vspace{-0.4cm}

La preuve que nous allons donner du Th\'eor\`eme~\ref{thm:CKMR-identity} utilise le formalisme de la seconde quantification. Commen\c{c}ons par un lemme bien utile, qui exprime dans le vocabulaire esquiss\'e \`a la Section pr\'ec\'edente qu'un \'etat est enti\`erement caract\'eris\'e par son symbole inf\'erieur, un fait bien connu~\cite{Simon-80,KlaSka-85}.

\begin{lemma}[\textbf{Le symbole inf\'erieur d\'etermine l'\'etat}]\label{lem:uk-g-uk=0}\mbox{} \\
Si un op\'erateur $\gamma^{(k)}$ sur $\gH_s^k$ satisfait  
\bq \label{eq:uk-g-uk=0}
 \langle u^{\otimes k}, \gamma^{(k)} u^{\otimes k} \rangle =0\qquad \text{pour~tout}~u\in \gH,
 \eq
alors $\gamma^{(k)} \equiv 0$.
\end{lemma}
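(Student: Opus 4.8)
The plan is to prove this by polarization, exploiting the standard fact that the diagonal values $\langle u^{\otimes k},\gamma^{(k)}u^{\otimes k}\rangle$ of an operator on the symmetric space already determine it completely, in the same spirit as the identification of $\gH_s^k$ with homogeneous polynomials of degree $k$.

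First I would fix arbitrary vectors $a,b\in\gH$ and scalars $z_1,z_2\in\C$, and apply the hypothesis to $u=z_1a+z_2b$. Expanding $(z_1a+z_2b)^{\otimes k}$ by multilinearity, the quantity $\langle(z_1a+z_2b)^{\otimes k},\gamma^{(k)}(z_1a+z_2b)^{\otimes k}\rangle$ becomes a polynomial in the four variables $z_1,z_2,\bar z_1,\bar z_2$, which vanishes identically on $\C^2$ by assumption. Since a polynomial function on $\C^2$ in $(z,\bar z)$ vanishing everywhere has all coefficients equal to zero (Wirtinger derivatives, or evaluation on a fine enough grid), every monomial coefficient vanishes; extracting the coefficient of $\bar z_1^{\,k}z_2^{\,k}$ — which, in the expansion, receives a contribution only from the term where the bra is $a^{\otimes k}$ and the ket is $b^{\otimes k}$, with combinatorial weight exactly $1$ — yields $\langle a^{\otimes k},\gamma^{(k)}b^{\otimes k}\rangle=0$ for all $a,b\in\gH$.

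It then remains to upgrade this to $\gamma^{(k)}\equiv 0$. Here I would use the elementary fact that the family $\{v^{\otimes k}:v\in\gH\}$ spans $\gH_s^k$: under the canonical isomorphism between $\gH_s^k$ and the homogeneous polynomials of degree $k$ on $\gH^*$, the vectors $v^{\otimes k}$ correspond to $k$-th powers of linear forms, which are classically known to span the whole polynomial space; equivalently, if $\langle\eta,v^{\otimes k}\rangle=0$ for all $v$ then a homogeneous degree-$k$ polynomial vanishes identically, so $\eta=0$. In the finite-dimensional setting of this section, $\dim\gH=d$, this is a genuine finite linear span with no closure issues; otherwise one combines density with boundedness of $\gamma^{(k)}$. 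Consequently $\langle a^{\otimes k},\gamma^{(k)}b^{\otimes k}\rangle=0$ extends by bilinearity and continuity to $\langle\xi,\gamma^{(k)}\eta\rangle=0$ for all $\xi,\eta\in\gH_s^k$, which forces $\gamma^{(k)}=0$.

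There is no serious obstacle; the only two points deserving a line of care are the identification of the relevant monomial coefficient with $\langle a^{\otimes k},\gamma^{(k)}b^{\otimes k}\rangle$ (no spurious combinatorial factor, because $\bar z_1^{\,k}$ forces every factor of the bra to be $a$ and $z_2^{\,k}$ forces every factor of the ket to be $b$), and the spanning statement for powers of linear forms, both of which are entirely routine.
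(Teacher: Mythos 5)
Your proof is correct. It is a polarization argument, like the paper's, but the execution is genuinely different: the paper substitutes $u\mapsto u+tv$, differentiates in the real parameter $t$, and iterates ($k$ times, using $v=v_1\pm\tilde v_1$ and then $v=v_1\pm i\tilde v_1$) so as to peel off one tensor factor at a time and reach $\langle v_1\otimes_s\cdots\otimes_s v_k,\gamma^{(k)}\,\tilde v_1\otimes_s\cdots\otimes_s\tilde v_k\rangle=0$; it then only needs the immediate fact that symmetrized simple tensors span $\gH_s^k$. You instead perform a single two-vector complex polarization $u=z_1a+z_2b$ and extract the extreme coefficient of $\bar z_1^{\,k}z_2^{\,k}$, which delivers the matrix elements $\langle a^{\otimes k},\gamma^{(k)}b^{\otimes k}\rangle=0$ between pure tensor powers in one stroke; the price is that you must then invoke the (classical, but slightly less immediate) fact that the vectors $v^{\otimes k}$ span $\gH_s^k$ --- itself another polarization statement, via the identification with $k$-th powers of linear forms. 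The two points you flag as needing care are handled correctly: the $\bar z_1^{\,k}z_2^{\,k}$ coefficient indeed carries no combinatorial factor since it forces the bra word to be $a^{\otimes k}$ and the ket word to be $b^{\otimes k}$, and in the finite-dimensional setting of this chapter the spanning is an honest linear span with no closure issue (while in general, density of the span together with boundedness of the trace-class operator $\gamma^{(k)}$ suffices). What each approach buys: the paper's version yields along the way the vanishing of all matrix elements between symmetrized simple tensors, which is sometimes the more directly usable statement, whereas yours is shorter and makes transparent that the diagonal $u\mapsto\langle u^{\otimes k},\gamma^{(k)}u^{\otimes k}\rangle$ is a polynomial in $(u,\bar u)$ whose coefficients are exactly the matrix elements of $\gamma^{(k)}$.
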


\begin{proof}[Preuve]
On utilise le produit tensoriel sym\'etrique 
$$\Psi_k\otimes_s\Psi_\ell(x_1,...,x_{k})=\frac{1}{\sqrt{\ell!(k-\ell)!k!}}\sum_{\sigma\in {S}_{k}}\Psi_\ell(x_{\sigma(1)},...,x_{\sigma(\ell)})\Psi_{k-\ell}(x_{\sigma(\ell+1)},...,x_{\sigma(k)})
$$
pour deux vecteurs $\Psi_\ell \in \gH^{\ell}$ et $\Psi_{k-\ell}\in\gH^{k-\ell}$.

En rempla\c{c}ant $u$ par $u+tv$ dans~\eqref{eq:uk-g-uk=0}  et en prenant la d\'eriv\'ee par rapport \`a $t$, on obtient
 \bqq
 \langle v \otimes_s u^{\otimes (k-1)}, \gamma^{(k)} v \otimes_s u^{\otimes (k-1)} \rangle =0
 \eqq
 pour tous $u,v\in \gH$. En prenant $v$ de la forme $v=v_1\pm \widetilde v_1$ puis $v = v_1 \pm i \widetilde v_1$ et en r\'eit\'erant l'argument, on conclut 
   \bqq
\langle v_1 \otimes_s v_2 \otimes_s \ldots \otimes_s v_k, \gamma^{(k)} \widetilde v_1 \otimes_s \widetilde v_2 \otimes_s \ldots \otimes_s \widetilde v_k \rangle =0
 \eqq
pour tous $v_j, \widetilde v_j \in \gH$. Les vecteurs de la forme  $v_1 \otimes_s v_2 \otimes_s \ldots \otimes_s v_k$ formant une base de $\gH_s ^k$, la preuve est compl\`ete.
\end{proof}
 
Nous allons utiliser les op\'erateurs de cr\'eation et d'annihiliation bosoniques standards. Pour tout $f_{k}\in \gH$, on d\'efinit l'op\'erateur de cr\'eation $a^*(f_{k}): \gH_s^{k-1} \to \gH_s^{k}$ par
$$
{a^*}({f_{k}})\left( {\sum\limits_{\sigma  \in {S_{k-1}}} {{f_{\sigma (1)}} } \otimes ... \otimes {f_{\sigma (k-1)}}} \right) = (k) ^{-1/2} \sum\limits_{\sigma  \in {S_{k}}} {{f_{\sigma (1)}} }  \otimes ... \otimes {f_{\sigma (k)}}.
$$
L'op\'erateur d'annihiliation $a(f): \gH^{k+1} \to \gH^{k}$ est l'adjoint formel de  $a^*(f)$ (d'o\`u la notation), d\'efini par
$$ a(f) \left( {\sum\limits_{\sigma  \in {S_{k+1}}} {{f_{\sigma (1)}} } \otimes ... \otimes {f_{\sigma (k+1)}}} \right) = (k+1) ^{1/2} \sum\limits_{\sigma  \in {S_{k+1}}} \left\langle f,f_{\sigma(1)} \right\rangle {{f_{\sigma (2)}} }  \otimes ... \otimes {f_{\sigma (k)}}$$
pour tous $f,f_1,...,f_{k}$ dans $\gH$. Ces op\'erateurs satisfont les {\it relations canoniques de commutation} (CCR en abr\'egeant le terme anglais)
\begin{equation}\label{eq:CCR}
[a(f),a(g)]=0,\quad[a^*(f),a^*(g)]=0,\quad [a(f),a^*(g)]= \langle f,g \rangle_{\gH}. 
\end{equation}
Un des int\'er\^ets de ces objets est que les matrices de densit\'e r\'eduites $\gamma_N ^{(n)}$ d'un \'etat bosonique $\Gamma_N$ sont d\'efinies via\footnote{On rappelle que notre convention est $\tr [\gamma_N ^{(n)}] = 1$.} 
\begin{equation} \label{eq:Wick}
 \langle v^{\otimes n}, \gamma_N^{(n)} v^{\otimes n} \rangle =\frac{(N-n)!}{N!} \tr_{\gH_s ^N} \left[ a^*(v)^n a(v)^n \Gamma_N \right]
 \end{equation}
ce qui caract\'erise compl\`etement $\gamma_N ^{(n)}$ vu le Lemme~\ref{lem:uk-g-uk=0}. La d\'efinition ci-dessus s'appelle une quantification de Wick, les op\'erateurs de cr\'eation et d'annihilation y apparaissent dans l'ordre normal, c'est-\`a-dire tous les cr\'eateurs \`a gauche et tous les annihiliateurs \`a droite.

L'observation cl\'e de la preuve du Th\'eor\`eme~\ref{thm:CKMR-identity} est alors le fait que les matrices de densit\'e de l'\'etat~\eqref{eq:def CKMR 2} sont alternativement d\'efinies \`a partir de $\Gamma_N$ par une quantification d'anti-Wick, o\`u les op\'erateurs de cr\'eation et d'annihilation apparaissent dans l'ordre anti-normal, c'est-\`a-dire tous les annihiliateurs \`a gauche et tous les cr\'eateurs \`a droite.

\begin{lemma}[\textbf{La construction de CKMR et la quantification d'anti-Wick}]\label{lem:A Wick}\mbox{}\\
Soit $\Gammat_N$ defini par~\eqref{eq:def CKMR 2} et $\gammat_N ^{(n)}$ ses matrices de densit\'e r\'eduites. On a 
\begin{equation}\label{eq:A Wick}
\langle v^{\otimes n}, \gammat _N^{(n)} v^{\otimes n} \rangle = \frac{(N+d-1)!}{(N+k+d-1)!} \tr_{\gH_s ^N} \left[ a(v)^n a ^*(v)^n \Gamma_N \right]
\end{equation}
pour tout $v\in \gH$.
\end{lemma}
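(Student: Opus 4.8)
The plan is to compute the inner product $\langle v^{\otimes n}, \gammat_N^{(n)} v^{\otimes n}\rangle$ directly from the definition~\eqref{eq:def CKMR 2} of $\Gammat_N$, using the Wick formula~\eqref{eq:Wick} applied to $\Gammat_N$ itself, and then recognizing the resulting integral over $S\gH$ as an anti-Wick-ordered expectation in $\Gamma_N$. First I would write, by~\eqref{eq:Wick} applied to $\Gammat_N$ and~\eqref{eq:def CKMR 2},
\begin{align*}
\langle v^{\otimes n}, \gammat_N^{(n)} v^{\otimes n}\rangle &= \frac{(N-n)!}{N!} \tr_{\gH_s^N}\left[ a^*(v)^n a(v)^n \Gammat_N\right]\\
&= \frac{(N-n)!}{N!} \dim\gH_s^N \int_{S\gH} \left\langle u^{\otimes N}, a^*(v)^n a(v)^n u^{\otimes N}\right\rangle \left\langle u^{\otimes N}, \Gamma_N u^{\otimes N}\right\rangle du.
\end{align*}
The scalar $\langle u^{\otimes N}, a^*(v)^n a(v)^n u^{\otimes N}\rangle = \|a(v)^n u^{\otimes N}\|^2$ is elementary: $a(v) u^{\otimes N} = \sqrt{N}\,\langle v,u\rangle\, u^{\otimes(N-1)}$, so iterating gives $a(v)^n u^{\otimes N} = \sqrt{N!/(N-n)!}\,\langle v,u\rangle^n u^{\otimes(N-n)}$ and hence $\|a(v)^n u^{\otimes N}\|^2 = \frac{N!}{(N-n)!}|\langle v,u\rangle|^{2n}$. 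Substituting, the prefactors $\frac{(N-n)!}{N!}$ cancel and I am left with
\[
\langle v^{\otimes n}, \gammat_N^{(n)} v^{\otimes n}\rangle = \dim\gH_s^N \int_{S\gH} |\langle v,u\rangle|^{2n} \left\langle u^{\otimes N}, \Gamma_N u^{\otimes N}\right\rangle du.
\]

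Next I would reinterpret this integral. Writing $|\langle v,u\rangle|^{2n}\langle u^{\otimes N},\Gamma_N u^{\otimes N}\rangle = \langle u^{\otimes(N+n)}, (\Gamma_N \otimes |v^{\otimes n}\rangle\langle v^{\otimes n}|)\, u^{\otimes(N+n)}\rangle$ — more precisely $|\langle v,u\rangle|^{2n} = \langle u^{\otimes n}, |v^{\otimes n}\rangle\langle v^{\otimes n}| u^{\otimes n}\rangle$ — the integrand is the lower (covariant) symbol on $\gH_s^{N+n}$ of the operator $\Gamma_N \otimes_s |v^{\otimes n}\rangle\langle v^{\otimes n}|$ (suitably symmetrized onto $\gH_s^{N+n}$). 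Now I invoke the Schur resolution of identity~\eqref{eq:Schur} on $\gH_s^{N+n}$: since $\dim\gH_s^{N+n}\int_{S\gH}|u^{\otimes(N+n)}\rangle\langle u^{\otimes(N+n)}|\,du = \1_{\gH_s^{N+n}}$, integrating the lower symbol against $\dim\gH_s^{N+n}\,du$ recovers the trace of the operator. Hence
\[
\dim\gH_s^{N+n}\int_{S\gH} \langle u^{\otimes(N+n)}, \big(\Gamma_N \otimes_s |v^{\otimes n}\rangle\langle v^{\otimes n}|\big) u^{\otimes(N+n)}\rangle\,du = \tr_{\gH_s^{N+n}}\left[\Gamma_N \otimes_s |v^{\otimes n}\rangle\langle v^{\otimes n}|\right],
\]
and after accounting for the ratio $\dim\gH_s^N/\dim\gH_s^{N+n} = \binom{N+d-1}{N}/\binom{N+n+d-1}{N+n}$ and expressing the right-hand trace in terms of creation/annihilation operators, one identifies $\tr_{\gH_s^{N+n}}[\Gamma_N\otimes_s|v^{\otimes n}\rangle\langle v^{\otimes n}|]$ with $\frac{(N+n)!}{N!}$ times $\tr_{\gH_s^N}[a(v)^n a^*(v)^n \Gamma_N]$ up to combinatorial bookkeeping — this is the anti-Wick expectation. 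Carefully tracking the binomial factors should produce exactly the prefactor $\frac{(N+d-1)!}{(N+n+d-1)!}$ in~\eqref{eq:A Wick}. Finally, by Lemma~\ref{lem:uk-g-uk=0} the identity of lower symbols for all $v\in\gH$ determines $\gammat_N^{(n)}$, so the formula holds as operators.

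The step I expect to be the main obstacle is the precise combinatorial accounting in the middle: correctly symmetrizing $\Gamma_N\otimes|v^{\otimes n}\rangle\langle v^{\otimes n}|$ onto $\gH_s^{N+n}$, relating the $(N+n)$-particle trace to the $N$-particle anti-normal-ordered expression $a(v)^n a^*(v)^n$, and checking that the dimension ratios and factorials collapse to the clean coefficient $\frac{(N+d-1)!}{(N+n+d-1)!}$. One clean way to sidestep the symmetrization subtleties is to work entirely in the language of a single bosonic Fock space: realize $a^*(v)^n$ acting on $u^{\otimes N}$, use the CCR~\eqref{eq:CCR} to move annihilators past creators only when needed, and compute $\langle u^{\otimes N}, a^*(v)^n a(v)^n u^{\otimes N}\rangle$ versus $\langle u^{\otimes N}, a(v)^n a^*(v)^n u^{\otimes N}\rangle$ — the difference between Wick and anti-Wick ordering is precisely what generates the shift from $\frac{(N-n)!}{N!}$ to $\frac{(N+d-1)!}{(N+n+d-1)!}$ once integrated against the uniform measure via~\eqref{eq:Schur}. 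Everything else (the elementary computation $a(v)^n u^{\otimes N} = \sqrt{N!/(N-n)!}\langle v,u\rangle^n u^{\otimes(N-n)}$, the rotation-invariance argument, and the final appeal to Lemma~\ref{lem:uk-g-uk=0}) is routine.
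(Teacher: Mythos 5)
Your proposal follows essentially the same route as the paper's proof: the heart of both arguments is to recognize $|\langle u,v\rangle|^{2n}\langle u^{\otimes N},\Gamma_N u^{\otimes N}\rangle$ as $|\langle u^{\otimes (N+n)},\,v^{\otimes n}\otimes\Psi_N\rangle|^2$, apply the Schur resolution~\eqref{eq:Schur} on $\gH_s^{N+n}$, and identify the symmetrization of $v^{\otimes n}\otimes\Psi_N$ with $a^*(v)^n\Psi_N$ up to a factorial, so that the squared norm becomes the anti-Wick expectation $\langle\Psi_N, a(v)^n a^*(v)^n\Psi_N\rangle$. Two remarks. First, your opening detour through the Wick formula~\eqref{eq:Wick} applied to $\Gammat_N$ is valid but unnecessary: the identity $\langle v^{\otimes n},\gammat_N^{(n)}v^{\otimes n}\rangle=\dim\gH_s^N\int_{S\gH}|\langle u,v\rangle|^{2n}\langle u^{\otimes N},\Gamma_N u^{\otimes N}\rangle\,du$ follows directly from $\tr_{n+1\to N}\left[|u^{\otimes N}\rangle\langle u^{\otimes N}|\right]=|u^{\otimes n}\rangle\langle u^{\otimes n}|$. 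Second, the "combinatorial bookkeeping" you defer is genuinely the only content of the proof, and the constant you quote is inverted: one has $\tr_{\gH_s^{N+n}}\bigl[P_s\,(\Gamma_N\otimes|v^{\otimes n}\rangle\langle v^{\otimes n}|)\,P_s\bigr]=\frac{N!}{(N+n)!}\,\tr_{\gH_s^N}\bigl[a(v)^n a^*(v)^n\Gamma_N\bigr]$, not $\frac{(N+n)!}{N!}$ times it. The clean way to avoid any symmetrization convention, which is what the paper does, is to write $\langle u^{\otimes(N+n)},v^{\otimes n}\otimes\Psi_N\rangle=\sqrt{N!/(N+n)!}\,\langle u^{\otimes(N+n)},a^*(v)^n\Psi_N\rangle$ (since $u^{\otimes(N+n)}$ only sees the symmetric part), apply~\eqref{eq:Schur} to get $\frac{1}{\dim\gH_s^{N+n}}\|a^*(v)^n\Psi_N\|^2$, and then check that $\frac{N!}{(N+n)!}\cdot\frac{\dim\gH_s^N}{\dim\gH_s^{N+n}}=\frac{(N+d-1)!}{(N+n+d-1)!}$ using~\eqref{eq:dim boson}. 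With that one-line computation supplied, your argument is complete.
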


\begin{proof}[Preuve]
Il suffit de consid\'erer le cas d'un \'etat pur $\Gamma_N = |\Psi_N \rangle\langle \Psi_N |$ et d'\'ecrire
\begin{align*}
\langle v^{\otimes k}, \widetilde \gamma _N^{(k)} v^{\otimes k} \rangle &= \dim \gH_s^N \int_{S\gH} du |\langle u^{\otimes N}, \Psi_N \rangle|^2 | \langle  u^{\otimes k},v^{\otimes k} \rangle|^2 \\
&= \dim \gH_s^N \int_{S\gH} du |\langle u^{\otimes (N +k)}, v^{\otimes k} \otimes \Psi_N \rangle|^2 \\
&= \frac{N!}{(N+k)!} \dim \gH_s^N \int_{S\gH} du |\langle u^{\otimes (N +k)}, a^*(v)^{k} \Psi_N \rangle|^2 \\
&=  \frac{N!}{(N+k)!}\frac{\dim \gH_s^N}{\dim \gH_s^{N+k}} \langle a(v)^{k} \Psi_N,a (v)^k \Psi_N \rangle \\
&= \frac{(N+d-1)!}{(N+k+d-1)!} \langle \Psi_N, a(v)^k a^*(v)^{k} \Psi_N \rangle
\end{align*}
en utilisant le lemme de Schur~\eqref{eq:Schur} dans $\gH_s ^{N+k}$ \`a la troisi\`eme ligne, le fait que $a^*(v)$ est l'adjoint de $a (v)$ \`a la quatri\`eme et en rappellant que 
\begin{equation}\label{eq:dim boson}
\dim \gH_s ^N = { N+d-1 \choose d-1}. 
\end{equation}
\end{proof}

La marche \`a suivre est maintenant claire: il suffit de comparer des polyn\^omes en $a ^*(v)$ et $a(v)$ \'ecrits dans les ordres normal et anti-normal. Cette op\'eration standard m\`ene au lemme final de la preuve:

\begin{lemma}[\textbf{Ordre normal et anti-normal}]\label{lem:Wick A Wick}\mbox{}\\
Soit $v\in S\gH$. On a  
\begin{equation}\label{eq:Wick A Wick}
a(v) ^n a ^*(v) ^n = \sum_{k=0} ^n \binom{n}{k} \frac{n!}{k!} a ^*(v) ^k a (v) ^k \mbox{ pour tout } n\in \N. 
\end{equation}
\end{lemma}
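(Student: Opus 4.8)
The plan is to prove the operator identity \eqref{eq:Wick A Wick} by induction on $n$, using only the canonical commutation relations \eqref{eq:CCR} specialized to the single mode $f=g=v$ with $\norm{v}=1$, so that $[a(v),a^*(v)] = 1$ as an operator on the bosonic Fock space. Write $a = a(v)$, $a^* = a^*(v)$ for brevity; these satisfy $[a,a^*]=1$, which is all we shall use. The statement to prove is then the purely combinatorial operator identity
\begin{equation*}
a^n (a^*)^n = \sum_{k=0}^n \binom{n}{k}\frac{n!}{k!}\, (a^*)^k a^k,
\end{equation*}
an instance of the classical normal-ordering formula for powers of the number operator, with the Stirling-type coefficients $\binom{n}{k}\frac{n!}{k!}$.

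First I would dispose of the base case $n=0$ (both sides equal the identity) and $n=1$: $a a^* = a^* a + 1$, which matches the right-hand side $\binom{1}{0}\frac{1!}{0!}(a^*)^0 a^0 + \binom{1}{1}\frac{1!}{1!} a^* a = 1 + a^* a$. For the inductive step, assume the formula holds for $n$ and compute $a^{n+1}(a^*)^{n+1} = a\,\bigl(a^n (a^*)^n\bigr)\, a^*$. Substituting the inductive hypothesis gives $a^{n+1}(a^*)^{n+1} = \sum_{k=0}^n \binom{n}{k}\frac{n!}{k!}\, a\,(a^*)^k a^k\, a^*$. The key subcomputation is the commutator $[a,(a^*)^k] = k (a^*)^{k-1}$, which itself follows by a trivial induction from $[a,a^*]=1$ (Leibniz rule for commutators). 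Hence $a\,(a^*)^k = (a^*)^k a + k (a^*)^{k-1}$, and similarly I will need $a^k a^* = a^* a^k + k a^{k-1}$ on the right. Inserting these, each term $a\,(a^*)^k a^k a^*$ splits into a sum of monomials of the form $(a^*)^j a^j$; collecting the coefficient of $(a^*)^j a^j$ and checking it equals $\binom{n+1}{j}\frac{(n+1)!}{j!}$ is then a finite binomial identity. I expect the bookkeeping to reduce, after using $[a,a^*]=1$ twice, to the Pascal-type recursion $\binom{n+1}{j}\frac{(n+1)!}{j!} = \binom{n}{j}\frac{n!}{j!}\cdot(\text{something}) + \binom{n}{j-1}\frac{n!}{(j-1)!}\cdot(\text{something})$, which one verifies directly.

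An alternative, possibly cleaner route avoids the double induction: work on Fock space and test \eqref{eq:Wick A Wick} against the number eigenvectors. Since $\{(a^*)^m \Omega / \sqrt{m!}\}_{m\ge 0}$ spans the cyclic subspace generated by $v$ and both sides of \eqref{eq:Wick A Wick} preserve this subspace and commute with the number operator, it suffices to check that $\langle (a^*)^m\Omega,\ \cdot\ (a^*)^m\Omega\rangle$ agrees on both sides for every $m$. The left side gives $a^n(a^*)^n (a^*)^m\Omega$, whose norm-squared pairing is $\frac{(m+n)!}{m!}$ times... — concretely one gets falling factorials $(m+n)(m+n-1)\cdots$, while the right side produces $\sum_k \binom{n}{k}\frac{n!}{k!}\frac{m!}{(m-k)!}$ acting as a scalar; the resulting scalar identity $\frac{(m+n)!}{m!} = \sum_{k=0}^n \binom{n}{k}\frac{n!}{k!}\frac{m!}{(m-k)!}$ is the Vandermonde-type identity one recognizes (it is the expansion of the rising factorial). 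I would present whichever of the two is shorter once the details are in hand; the inductive one is more self-contained.

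The main obstacle is purely organizational rather than conceptual: keeping track of the two kinds of commutators ($[a,(a^*)^k]$ and $[a^k,a^*]$) and verifying that the binomial coefficients reassemble correctly after the shift $n\mapsto n+1$. There is no analytic subtlety — everything takes place among polynomials in two symbols obeying $[a,a^*]=1$, and the identity is a formal one in the Weyl algebra. Once \eqref{eq:Wick A Wick} is established, combining it with Lemma~\ref{lem:A Wick} (which rewrites $\langle v^{\otimes n},\gammat_N^{(n)} v^{\otimes n}\rangle$ in anti-normal order) and with the Wick characterization \eqref{eq:Wick} of the reduced density matrices $\gamma_N^{(\ell)}$, plus Lemma~\ref{lem:uk-g-uk=0} to pass from the lower symbols back to the operators, yields the explicit formula \eqref{eq:CKMR exact} of Theorem~\ref{thm:CKMR-identity} after matching the prefactors $\frac{(N+d-1)!}{(N+n+d-1)!}$ and $\frac{(N-\ell)!}{N!}$.
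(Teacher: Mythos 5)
Your proof is correct, and it is a genuine (if close) variant of the paper's argument. The paper also proceeds by induction on $n$, but it first recognizes the coefficients $\binom{n}{k}\frac{n!}{k!}$ as those of the rescaled Laguerre polynomial $\tilde L_n(x)=n!\,L_n(-x)$ and then converts the classical three-term recurrence $(n+1)L_{n+1}=(2n+1)L_n-xL_n-nL_{n-1}$ into the operator recurrence $a^{n+1}(a^*)^{n+1}=a^*\,a^n(a^*)^n\,a+(2n+1)\,a^n(a^*)^n-n^2\,a^{n-1}(a^*)^{n-1}$, verified by repeated use of the CCR. Your one-step induction is the transpose of this and avoids any Laguerre input, at the cost of slightly heavier bookkeeping: carrying out the computation you sketch with $[a,(a^*)^k]=k(a^*)^{k-1}$ and $[a^k,a^*]=ka^{k-1}$ gives
\[
a\,(a^*)^k a^k\,a^* \;=\; (a^*)^{k+1}a^{k+1}+(2k+1)\,(a^*)^k a^k+k^2\,(a^*)^{k-1}a^{k-1},
\]
so the recursion you must verify is the \emph{three}-term one $c_{n+1,j}=c_{n,j-1}+(2j+1)c_{n,j}+(j+1)^2c_{n,j+1}$ (not the two-term Pascal-type relation you anticipate); with $c_{n,k}=(n!)^2/\bigl((k!)^2(n-k)!\bigr)$ this reduces to the identity $j^2+(2j+1)(n-j+1)+(n-j+1)(n-j)=(n+1)^2$, which holds, so your route closes. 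One caution on your alternative argument: testing both sides against the vectors $(a^*)^m\Omega$ only establishes the identity on the single-mode subspace $\cF_s(\C v)$, and these vectors do not span $\gH_s^m$ when $\dim\gH>1$; to conclude you must either invoke the factorization $\cF_s(\gH)\cong\cF_s(\C v)\otimes\cF_s(v^\perp)$, on which $a(v)$ acts as $a\otimes\1$, or observe that \eqref{eq:Wick A Wick} is a formal identity in the algebra generated by $a,a^*$ with $[a,a^*]=1$, which is faithfully represented on one mode. The scalar identity you then need, $\frac{(n+m)!}{n!\,m!}=\sum_{k}\binom{n}{k}\binom{m}{k}$, is indeed Vandermonde and is correct.
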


\begin{proof}[Preuve]
Pour faciliter le calcul il est utile de rappeller l'expression du $n$-i\`eme polyn\^ome de Laguerre   
\[
L_n (x) = \sum_{k=0} ^n \binom{n}{k} \frac{(-1) ^k}{k!} x ^k.
\]
Ces polyn\^omes satisfont la relation de r\'ecurrence 
\[
(n+1) L_{n+1} (x) = (2n +1) L_n (x) - x L_n (x) - n L_{n-1} (x)  
\]
et on peut voir que~\eqref{eq:Wick A Wick} se r\'e\'ecrit
\begin{equation}\label{eq:Wick A Wick proof}
a(v) ^n a ^*(v) ^n = \sum_{k=0} ^n c_{n,k} \, a^*(v) ^k a (v) ^k 
\end{equation}
o\`u les $c_{n,k}$ sont les coefficients du polyn\^ome
$$\tilde{L}_n (x) := n! \, L_n (-x).$$
Il suffit donc de montrer que 
\[
a(v) ^{n+1} a ^*(v) ^{n+1} = a^* (v)  a(v) ^n a ^*(v) ^n  a(v) + (2n+1) a(v) ^n a ^*(v) ^n - n ^2  a(v) ^{n-1} a ^*(v) ^{n-1}
\]
par une application r\'ep\'et\'ee de la CCR~\eqref{eq:CCR}.
\end{proof}

La formule~\eqref{eq:CKMR exact} se d\'eduit en combinant les Lemmes~\ref{lem:uk-g-uk=0},~\ref{lem:A Wick} et~\ref{lem:Wick A Wick} avec~\eqref{eq:Wick}.

\newpage

\section{\textbf{Localisation dans l'espace de Fock et applications}}\label{sec:locFock}

Nous nous tournons maintenant vers la premi\`ere implication du sch\'ema~\eqref{eq:schema deF 2}. Il va nous falloir convertir la convergence faible-$\ast$ des matrices densit\'e en convergence forte pour pouvoir appliquer le Th\'eor\`eme~\ref{thm:DeFinetti fort}. L'id\'ee est de localiser l'\'etat $\Gamma_N$ de d\'epart en utilisant soit des fonctions \`a support compact soit des projecteurs de rang fini. On peut alors travailler dans un cadre compact avec convergence $\gS^1$-forte, apppliquer le Th\'eor\`eme~\ref{thm:DeFinetti fort} et passer \`a la limite dans la localisation en derni\`ere \'etape. Plus pr\'ecis\'ement nous utiliserons une localisation en dimension finie, pour bien montrer que le th\'eor\`eme g\'en\'eral peut se d\'eduire de la preuve constructive en dimension finie pr\'esent\'ee au chapitre pr\'ec\'edent.

La difficult\'e ici est que la notion de localisation appropri\'ee pour un \'etat \`a $N$ corps (par exemple une fonction d'onde $\Psi_N \in L ^2 (\R ^{dN})$) est plus complexe que celle dont on a l'habitude pour des fonctions d'onde simples $\psi \in L ^2 (\R ^d)$. Il faut en fait travailler sur les matrices de densit\'e directement, et localiser toutes les matrices r\'eduites de telle fa\c{c}on que les matrices localis\'ees correspondent \`a un \'etat quantique. La proc\'edure de localisation peut faire perdre des particules et donc l'\'etat localis\'e ne sera en g\'en\'eral pas un \'etat \`a $N$ corps mais une superposition d'\'etats \`a $k$ corps, $0\leq k \leq N$, c'est \`a dire un \'etat sur l'espace de Fock. 

La proc\'edure de localisation que nous utiliserons est d\'ecrite Section~\ref{sec:loc rig}. Auparavant nous donnerons quelques consid\'erations heuristiques \`a la Section~\ref{sec:loc heur} pour pr\'eciser ce que nous disions plus haut, \`a savoir que la notion de localisation correcte dans $L ^2 (\R ^{dN})$ diff\`ere de la localisation habituelle dans $L ^2 (\R ^d)$. La Section~\ref{sec:proof deF faible} contient la preuve du th\'eor\`eme de de Finetti faible et d'un r\'esultat auxiliaire bien utile qui est une cons\'equence de la preuve par localisation. 

\subsection{Convergence faible et localisation pour un \'etat \`a deux corps}\label{sec:loc heur}\mbox{}\\\vspace{-0.4cm}

Les consid\'erations ci-dessous sont tir\'ees de~\cite{Lewin-11}. Prenons une suite d' \'etats  bosoniques \`a deux corps particuli\`erement simple 
\begin{equation}\label{eq:deux corps}
\Psi_n := \psi_n \otimes_s \phi_n = \frac{1}{\sqrt{2}} \left( \psi_n \otimes \phi_n + \phi_n \otimes \psi_n \right) \in L_s ^2 (\R ^{2d})
\end{equation}
avec $\psi_n$ et $\phi_n$ normalis\'ees dans $L ^2 (\R ^d)$. Ceci correspond \`a avoir une particule dans l'\'etat $\psi_n$ et une particule dans l'\'etat $\phi_n$. On supposera 
\[
 \bral \phi_n,\psi_n \ketr_{L ^2 (\R ^d)} = 0,
\]
ce qui assure $\norm{\Psi_n} = 1$. On pourra toujours supposer que 
\[
 \Psi_n \wto \Psi \mbox{ dans }  L ^2 (\R ^{2d})
\]
et la convergence est forte si et seulement si $\norm{\Psi} = \norm{\Psi_n}= 1$. Dans le cas o\`u de la masse se perd \`a la limite, i.e. $\norm{\Psi}<1$ la convergence reste faible. 

Nous travaillerons toujours dans un cadre localement compact et donc la seule source possible de perte de compacit\'e est une fuite de masse \`a l'infini~\cite{Lions-84,Lions-84b,Lions-85a,Lions-85b}. Une possibilit\'e est que les deux particules $\phi_n$ et $\psi_n$ disparaissent \`a l'infini
\[
\psi_n \wto 0, \: \phi_n \wto 0 \mbox{ dans } L^2 (\R ^{2d}) \mbox{ quand } n\to \infty, 
\]
auquel cas $\Psi_n \wto 0$ dans $L ^2 (\R ^{2d})$. C'est dans $L ^2 (\R ^{2d})$ la possibilit\'e la plus proche de la perte de masse simple dans $L ^2 (\R ^d)$, mais il existe d'autres possibilit\'es dans $L ^2 (\R ^{2d})$.
 
Un cas typique est celui o\`u seule une des deux particules part \`a l'infini, ce qu'on peut mat\'erialiser par 
\[
 \psi_n \wto 0 \mbox{ faiblement dans } L ^2 (\R ^d), \: \phi_n \to \phi \mbox{ fortement dans } L ^2 (\R ^{2d}).
\]
Pour la fuite de masse de $\psi_n$ on peut typiquement penser \`a l'exemple 
\begin{equation}\label{eq:fuite masse}
 \psi_n = \psi \left( . + x_n \right) 
\end{equation}
avec $|x_n| \to \infty$ quand $n\to \infty$ et $\psi$ disons r\'egulier \`a support compact. On a dans ce cas
\[
\Psi_n \wto 0 \mbox{ dans } L ^2 (\R ^{2d}) 
\]
mais pour des raisons physques \'evidentes on pr\'ef\'ererait avoir une notion de convergence faible assurant 
\begin{equation}\label{eq:geo conv}
\Psi_n \wto_{ g}  \half \phi.  
\end{equation}
En effet, puisque seule la particule dans l'\'etat $\psi_n$ dispara\^it \`a l'infini, il est naturel que l'\'etat limite soit un \'etat \`a une particule d\'ecrit par $\phi$. Nous notons cette convergence $\wto_{ g}$ car il s'agit pr\'ecis\'ement de la convergence dite ``g\'eom\'etrique'' discut\'ee par Mathieu Lewin dans~\cite{Lewin-11}. La difficult\'e est bien s\^ur que les membres de droite et de gauche de~\eqref{eq:geo conv} appartiennent \`a des espaces diff\'erents.

Pour introduire la notion de convergence correcte, il s'agit de regarder les matrices densit\'e de $\Psi_n$:
\begin{align}\label{eq:matr deux corps}
\gamma_{\Psi_n} ^{(2)} &= \ketl \phi_n \otimes_s \psi_n \ketr \bral \phi_n \otimes_s \psi _n \brar \wto_\ast 0 \mbox{ dans } \gS ^1\left (L ^2 (\R ^{2d}) \right) \nonumber\\
\gamma_{\Psi_n} ^{(1)} &= \half \ketl \phi_n \ketr \bral \phi_n  \brar + \half \ketl \psi_n \ketr \bral \psi_n  \brar \wto_\ast \frac{1}{2} \ketl \phi \ketr \bral \phi  \brar \mbox{ dans } \gS ^1\left (L ^2 (\R ^{d}) \right).
\end{align}
On voit alors bien que la paire $\left( \gamma_{\Psi_n} ^{(2)}, \gamma_{\Psi_n} ^{(1)}\right)$ converge vers la paire $\left( 0, \half \ketl \phi \ketr \bral \phi\brar\right)$ qui correspondent aux matrices de densit\'e de l'\'etat \`a un corps $\half \phi \in L ^2 (\R ^{d})$. Plus pr\'ecis\'ement, la notion de convergence g\'eom\'etrique est une notion de convergence dans l'espace de Fock (ici bosonique \`a deux particules)
\begin{equation}\label{eq:Fock deux corps}
\cF_s ^{\leq 2} (L ^2 (\R ^{d})) :=  \C \oplus L ^2 (\R ^d) \oplus L_s ^2 (\R^{2d}) 
\end{equation}
et on a au sens de la convergence g\'eom\'etrique sur $\gS ^1 \left( \cF_s ^{\leq 2} \right)$
\[
 0 \oplus 0 \oplus \ketl \Psi_n \ketr \bral \Psi_n \brar \wto_g  \half \oplus \half \ketl \phi \ketr \bral \phi  \brar \oplus 0,
\]
c'est \`a dire que toutes les matrices de densit\'e r\'eduites de l'\'etat de gauche convergent vers celles de l'\'etat de droite. On notera que la limite est bien de trace $1$ dans $\gS ^1 \left( \cF_s ^{\leq 2} \right)$, il n'y a donc pas de perte de masse dans $\cF_s ^{\leq 2}$. Plus pr\'ecis\'ement, dans $\cF_s ^{\leq N}$ \textbf{la perte de masse pour un \'etat \`a pur \`a $N$ particules est mat\'erialis\'ee par la convergence vers un \'etat mixte avec moins de particules.} 

\medskip

De la m\^eme fa\c{c}on que la convergence faible adapt\'ee aux probl\`emes \`a $N$ corps n'est pas la convergence faible $L^2 (\R^{dN})$ usuelle (a fortiori quand on prendra la limite $N\to \infty$), la bonne proc\'edure pour localiser un \'etat et ainsi transformer la convergence faible en convergence forte doit \^etre repens\'ee. Etant donn\'e un op\'erateur de localisation auto-adjoint positif $A$, disons $A= P$ un projecteur de rang fini ou $A = \chi$ la mutliplication par une fonction $\chi$ \`a support compact, on localise usuellement une fonction d'onde $\psi \in L ^2 (\R ^d)$ en introduisant 
\[
\psi_A = A \psi 
\]
ce qui correspond \`a faire l'association
\[
 \ketl \psi \ketr\bral \psi \brar \leftrightarrow \ketl A \psi \ketr\bral A \psi \brar.
\]
En imitant cette proc\'edure pour l'\'etat \`a deux particules~\eqref{eq:deux corps} on pourrait imaginer consid\'erer l'\'etat localis\'e d\'efini par sa matrice densit\'e \`a deux corps
\[
\gamma_{n,A}  ^{(2)}= \ketl A\otimes A \Psi_n \ketr \bral A\otimes A \Psi_n \brar.
\]
Il est alors clair que 
\[
 \gamma_{n,A} ^{(2)} \wto_\ast 0 \mbox{ dans } \gS^1 (\gH_s ^2),
\]
ce \`a quoi il fallait s'attendre, mais il est plus grave que l'on ait \'egalement pour la matrice \`a un corps correspondante
\[
 \gamma_{n,A} ^{(1)} \wto_\ast 0 \mbox{ dans } \gS ^1 (\gH)
\]
alors, que au vu de~\eqref{eq:matr deux corps} on voudrait avoir 
\[
 \gamma_{n,A} ^{(1)} \to  \frac{1}{2} \ketl A \phi \ketr \bral A \phi  \brar \mbox{ fortement dans } \gS ^1 (\gH).
\]
La solution \`a ce dilemme est de d\'efinir un \'etat localis\'e en demandant que ses matrices de densit\'e reduites soient $A\otimes A \gamma_{\Psi_n} ^{(2)} A\otimes A$ et $A\gamma_{\Psi_n} ^{(1)}A$. L'\'etat correspondant est alors uniquement d\'etermin\'e, et il se trouve qu'il s'agit d'un \'etat sur l'espace de Fock, comme nous l'expliquons \`a la section suivante.

\subsection{Localisation dans l'espace de Fock}\label{sec:loc rig}\mbox{}\\\vspace{-0.4cm}

Apr\`es les consid\'erations heuristiques pr\'ec\'edentes, nous introduisons maintenant la notion de localisation dans l'espace de Fock bosonique\footnote{La proc\'edure est la m\^eme pour des particules fermioniques.} 
\begin{align}\label{eq:Fock space}
\cF_s (\gH) &=  \C \oplus \gH \oplus \ldots \oplus \gH_s ^n \oplus \ldots\nonumber\\
\cF_s (L ^2 (\R ^d) ) &= \C \oplus L ^2 (\R ^d) \oplus \ldots \oplus L_s ^2 (\R ^{dn}) \oplus \ldots.
\end{align}
Dans le cadre de ce cours nous partirons toujours d'\'etats \`a $N$ corps, auquel cas il est suffisant de travailler dans l'espace de Fock tronqu\'e 
\begin{align}\label{eq:Fock space tronc}
\cF_s  ^{\leq N}(\gH) &=  \C \oplus \gH \oplus \ldots \oplus \gH_s ^n \oplus \ldots \oplus \gH_s ^N\nonumber\\
\cF_s ^{\leq N} (L ^2 (\R ^d) ) &= \C \oplus L ^2 (\R ^d) \oplus \ldots \oplus L_s ^2 (\R ^{dn}) \oplus \ldots \oplus L_s ^2 (\R ^{dN}).
\end{align}

\begin{definition}[\textbf{Etats bosoniques sur l'espace de Fock}]\label{def:Fock state}\mbox{}\\
Un \'etat sur l'espace de Fock est un op\'erateur auto-adjoint positif de trace $1$ sur $\cF_s$. On notera $\cS (\cF_s(\gH))$ l'ensemble des \'etats (bosoniques ici)
\begin{equation}\label{eq:Fock state}
\cS (\cF_s(\gH)) = \left\{ \Gamma \in \gS ^1 (\cF_s(\gH)), \Gamma = \Gamma ^* , \Gamma\geq 0, \tr_{\cF_s(\gH)} [\Gamma] = 1 \right\}.
\end{equation}
On dira qu'un \'etat est diagonal (stricto sensu, diagonal par blocs) si il peut s'\'ecrire
\begin{equation}\label{eq:Fock state diag}
\Gamma = G_{0}\oplus G_{1}\oplus \ldots \oplus G_n \oplus \ldots 
\end{equation}
avec $G_n \in \gS ^1 (\gH_s ^n)$. Un \'etat $\Gamma_N$ sur l'espace de Fock tronqu\'e $\cF_s ^{\leq N}(\gH)$, respectivement un \'etat diagonal sur l'espace de Fock tronqu\'e se d\'efinissent de la m\^eme mani\`ere. Pour un \'etat diagonal sur $\cF_s ^{\leq N}(\gH)$ de la forme 
\[
\Gamma = G_{0,N}\oplus G_{1,N}\oplus \ldots \oplus G_{N,N},  
\]
sa $n$-i\`eme matrice de densit\'e r\'eduite $\Gamma_N ^{(n)}$ est l'op\'erateur sur $\gH_s ^n$ donn\'e par 
\begin{equation}\label{eq:Fock mat red}
\Gamma_N ^{(n)} = {N\choose n}^{-1}\sum_{k=n}^N{k\choose n}\tr_{n+1\to k}G_{N,k}.
\end{equation}
\hfill\qed
\end{definition}

Le lecteur habitu\'e \`a ces concepts notera deux choses:
\begin{itemize}
\item Nous introduisons uniquement les concepts qui nous seront indispensables par la suite. On peut bien s\^ur d\'efinir les matrices r\'eduites d'\'etats g\'en\'eraux, mais nous ne nous en servirons pas par la suite. Pour un \'etat diagonal, la donn\'ee des matrices r\'eduites~\eqref{eq:Fock mat red} caract\'erisent compl\`etement l'\'etat. Pour un \'etat non diagonal il faut aussi sp\'ecifier ses matrices ``hors-diagonales'' $\Gamma ^{(p,q)}:\gH_s ^p \mapsto \gH_s ^q$ pour $p\neq q$.
\item La normalisation choisie dans~\eqref{eq:Fock mat red} n'est pas standard. Elle est choisie pour que, dans l'esprit du reste du cours, la $n$-i\`eme matrice r\'eduite d'un \'etat \`a $N$ particules (i.e. avec $G_{0,N} = \ldots = G_{N-1,N} = 0$ dans~\eqref{eq:Fock mat red}) soit de trace $1$. La convention standard serait plut\^ot qu'elle soit de trace ${N \choose n}$, ce qui est moins pratique pour appliquer le Th\'eor\`eme~\ref{thm:DeFinetti fort}.
\end{itemize}

Nous pouvons maintenant introduire le concept de localisation d'un \'etat. Nous nous limiterons aux \'etats \`a $N$ corps et aux op\'erateurs de localisation auto-adjoints, ce qui est suffisant pour nos besoins dans la suite. Le lemme/d\'efinition suivant est tir\'e de~\cite{Lewin-11}, on pourra en trouver d'autres versions par exemple dans~\cite{Ammari-04,DerGer-99,HaiLewSol_thermo-09}.

\begin{lemma}[\textbf{Localisation d'un \'etat \`a $N$ corps}]\label{lem:Fock loc}\mbox{}\\
Soit $\Gamma_N \in \cS (\gH_s ^N)$ un \'etat bosonique \`a $N$ corps et $A$ un op\'erateur auto-adjoint sur $\gH$ avec $0\leq A^2 \leq 1$. Il existe un unique \'etat diagonal $\Gamma_N ^A \in \cS (\cF_s ^{\leq N} (\gH))$ tel que 
\begin{equation}\label{eq:Fock loc mat}
\left(\Gamma_N ^A\right) ^{(n)} = A ^{\otimes n} \Gamma_N ^{(n)} A ^{\otimes n}
\end{equation}
pour tout $0 \leq n \leq N$. De plus, en \'ecrivant $\Gamma_N ^A$ sous la forme  
\[
\Gamma_N ^A = G_{0,N} ^A \oplus G_{1,N} ^A \oplus \ldots \oplus G_{N,N} ^A,  
\]
on a la relation fondamentale
\begin{equation}\label{eq:Fock funda rel}
\tr_{\gH_s ^n} \left[G_{N,n} ^A \right] =\tr_{\gH_s ^{N-n}} \left[G_{N,N-n} ^{\sqrt{1-A ^2}} \right].
\end{equation}
\end{lemma}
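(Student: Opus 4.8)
The plan is to work in second quantization and to identify the localized state through its action on Wick monomials. First I would fix an orthonormal basis adapted to $A$, writing $A = \sum_i \alpha_i |e_i\rangle\langle e_i|$ with $0\leq \alpha_i^2 \leq 1$, and recall that an $N$-body state $\Gamma_N$ on $\gH_s^N$ sits inside $\cF_s^{\leq N}(\gH)$ as the block $0\oplus\cdots\oplus 0\oplus\Gamma_N$. The natural candidate for $\Gamma_N^A$ is obtained by conjugating $\Gamma_N$ (viewed as a Fock-space operator) with the \emph{second-quantized} operator $\Gamma(A)$, which acts on the $k$-particle sector as $A^{\otimes k}$, and then taking the block-diagonal part: writing the ``partial trace over the environment'' via the complementary basis. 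Concretely, one can realize $\gH \simeq \gH_1 \oplus \gH_2$ only in a formal sense, but the clean construction is to set, for $0\le n\le N$,
\[
G^A_{N,n} := \binom{N}{n}\,\tr_{n+1\to N}\!\left[ \big(A^{\otimes n}\otimes (1-A^2)^{\otimes(N-n)/2}\big)\cdots \right],
\]
which I would make precise by expanding $\Gamma_N$ in the $A$-eigenbasis and grouping tensor factors according to whether the index is ``kept'' (weight $\alpha_i^2$) or ``lost'' (weight $1-\alpha_i^2$). The first key step is to check that $\Gamma_N^A := \bigoplus_{n=0}^N G^A_{N,n}$ is a state: positivity is inherited because each $G^A_{N,n}$ is a partial trace of a positive operator, and the trace-one property follows from the identity $\sum_{n=0}^N \binom{N}{n}\,(A^2)^{\otimes n}\otimes_s (1-A^2)^{\otimes(N-n)}$ acting as the identity in the appropriate symmetrized sense — this is exactly the binomial/completeness relation $\prod_j(\alpha_j^2+(1-\alpha_j^2))=1$ read on the $N$-fold symmetric tensor product.

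The second key step is to verify the reduced-density-matrix identity~\eqref{eq:Fock loc mat}. Using the normalization~\eqref{eq:Fock mat red}, $\big(\Gamma_N^A\big)^{(n)} = \binom{N}{n}^{-1}\sum_{k=n}^N \binom{k}{n}\tr_{n+1\to k}G^A_{N,k}$, I would substitute the explicit form of $G^A_{N,k}$ and reorganize the multiple sum. The combinatorial heart is the identity $\sum_{k=n}^N \binom{k}{n}\binom{N}{k} A^{\otimes n}\otimes (\text{stuff summing to }1) = \binom{N}{n}A^{\otimes n}\otimes(\text{identity})$, i.e. again the binomial theorem applied to $A^2 + (1-A^2)$ on the $N-n$ remaining tensor slots. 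This collapses the sum and leaves precisely $A^{\otimes n}\Gamma_N^{(n)}A^{\otimes n}$. Uniqueness is then immediate: a diagonal state on $\cF_s^{\leq N}$ is determined by its sequence of reduced density matrices, since the blocks $G_{N,k}$ are recovered from the $\big(\Gamma_N^A\big)^{(k)}$ by the inversion of the triangular linear system~\eqref{eq:Fock mat red} (equivalently, $G^A_{N,n}=\binom{N}{n}\sum_{k=n}^{N}(-1)^{k-n}\binom{N-n}{k-n}\big(\Gamma_N^A\big)^{(k)}$ up to partial traces).

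The third step is the fundamental relation~\eqref{eq:Fock funda rel}. Here I would apply the construction twice: once with localization operator $A$ and once with its ``complement'' $B:=\sqrt{1-A^2}$, noting that the weights swap roles, $\alpha_i^2 \leftrightarrow 1-\alpha_i^2$. Comparing the explicit expressions for $G^A_{N,n}$ (which keeps $n$ slots with weight $A^2$ and traces out $N-n$ slots with weight $1-A^2$) and for $G^B_{N,N-n}$ (which keeps $N-n$ slots with weight $1-A^2$ and traces out $n$ slots with weight $A^2$), one sees both equal the same fully-contracted quantity $\tr_{\gH_s^N}\big[(A^2)^{\otimes n}\otimes_s(1-A^2)^{\otimes(N-n)}\,\Gamma_N\big]$ up to the symmetrization bookkeeping; taking traces on both sides kills the remaining tensor factors and yields~\eqref{eq:Fock funda rel}.

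I expect the main obstacle to be purely notational rather than conceptual: making the ``keep/lose'' tensor decomposition rigorous on the \emph{symmetric} tensor product (where one cannot literally split $\gH = \gH_1\oplus\gH_2$) requires care with symmetrization factors and with the nonstandard normalization in~\eqref{eq:Fock mat red}. The cleanest route around this is to phrase everything through creation/annihilation operators — define $G^A_{N,n}$ by its matrix elements $\langle v^{\otimes n}, G^A_{N,n} v^{\otimes n}\rangle$ against coherent-type vectors and invoke Lemma~\ref{lem:uk-g-uk=0} — so that all the combinatorics reduce to the single algebraic identity for $\Gamma(A)^*\Gamma(A)$ on Fock space, after which~\eqref{eq:Fock loc mat} and~\eqref{eq:Fock funda rel} drop out by taking traces.
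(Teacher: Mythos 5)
Your proposal is correct and follows essentially the same route as the paper: an explicit formula for the blocks $G_{N,n}^A$ obtained by splitting each traced-out tensor slot into a ``kept'' and a ``lost'' part, the binomial identity $\one = A^2 + (1-A^2)$ together with the combinatorial relation between $\binom{N-n}{k-n}$ and $\binom{N}{k}\binom{k}{n}\binom{N}{n}^{-1}$ to recover \eqref{eq:Fock loc mat}, uniqueness because a diagonal state on $\cF_s^{\leq N}$ is determined by its reduced density matrices, and \eqref{eq:Fock funda rel} read off from the manifest $A \leftrightarrow \sqrt{1-A^2}$, $n \leftrightarrow N-n$ symmetry of the explicit formula. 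The only point to note is that the paper sidesteps the difficulty you correctly flag (cross terms in the expansion of $\one^{\otimes(N-n)}$ do not cancel for general $A$) by writing the argument only for $A=P$ a projector, where $PP_\perp = 0$ kills them; for general $A$ one does need the tensorization $\cF_s(A\gH\oplus\sqrt{1-A^2}\gH)\simeq\cF_s(A\gH)\otimes\cF_s(\sqrt{1-A^2}\gH)$ or the creation/annihilation reformulation you propose.
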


\begin{remark}[Localisation dans l'espace de Fock]\label{rem:Fock com loc}\mbox{}\\\vspace{-0,4cm}
\begin{enumerate}
\item La partie unicit\'e du lemme montre bien qu'on ne peut faire l'\'economie de travailler sur l'espace de Fock. L'\'etat localis\'e est en fait unique dans $\cS (\cF_s (\gH))$, mais pour le voir il faudrait introduire des d\'efinitions un peu plus g\'en\'erales, cf~\cite{Lewin-11}.
\item La relation~\eqref{eq:Fock funda rel} est une des pierres angulaires de la m\'ethode. En langage informel elle exprime le fait, que dans l'\'etat $\Gamma_N$, \textbf{la probabilit\'e d'avoir $n$ particules $A$-localis\'ees est \'egale \`a la probabilit\'e d'avoir $N-n$ particules $\sqrt{1-A^2}$ localis\'ees}. Pensons au cas d'une fonction de localisation particuli\`erement simple, $A= \one _{B(0,R)}$, la fonction indicatrice de la boule de rayon $R$. Nous disons alors simplement que la probabilit\'e d'avoir $n$ particules dans la boule \'egale la probabilit\'e d'avoir $N-n$ particules hors de la boule. \hfill\qed
\end{enumerate}

\begin{proof}[Preuve du Lemme~\ref{lem:Fock loc}.]
L'unicit\'e (en tous cas parmi les \'etats diagonaux de l'espace de Fock tronqu\'e) est une simple cons\'equence du fait que les matrices densit\'e d\'efinissent uniquement l'\'etat. On pourra voir les d\'etails dans~\cite{Lewin-11}.

Pour l'existence on peut utiliser l'identification usuelle 
$$\cF_s (A  \gH \oplus \sqrt{1-A ^2}\gH) \simeq \cF_s (A\gH) \otimes \cF_s (\sqrt{1-A ^2}\gH)$$
et d\'efinir l'\'etat localis\'e en prenant une trace partielle par rapport au deuxi\`eme espace dans le produit tensoriel du membre de droite. Nous allons suivre une route plus explicite mais \'equivalente. Pour simplifier les notations nous nous limitons au cas o\`u $A=P$ est un projecteur orthogonal et donc $\sqrt{1-A ^2} = \Pp$. 

Par d\'efinition et cyclicit\'e de la trace on obtient
\begin{align*}
P ^{\otimes n} \Gamma_N ^{(n)} P ^{\otimes n} &= \tr_{n+1\to N} \left[P ^{\otimes n} \otimes \one ^{\otimes (N-n)} \Gamma_N  P ^{\otimes n} \otimes \one ^{\otimes (N-n)} \right]\nonumber
\\&= \sum_{k=0} ^{N-n} {N-n \choose k} ^2 \tr_{n+1\to N} \left[P ^{\otimes n + k} \otimes \Pp ^{\otimes (N-n-k)} \Gamma_N  P ^{\otimes n+k} \otimes \Pp ^{\otimes (N-n-k)} \right]\nonumber
\\&= \sum_{k=n} ^{N} {N-n \choose k-n} ^2 \tr_{n+1\to N} \left[P ^{\otimes k } \otimes \Pp ^{\otimes (N-k)} \Gamma_N  P ^{\otimes k} \otimes \Pp ^{\otimes (N-k)} \right]\nonumber
\end{align*}
en \'ecrivant $\one = P + \Pp$ et en d\'eveloppant les termes $\one ^{\otimes (N-n)}$ par la formule du bin\^ome. Il suffit ensuite de noter que 
\begin{equation*}\label{eq:fact calcul}
{N-n \choose k-n} ^2 = { N \choose k} ^2 { N \choose n}{ k \choose n} ^{-1}
\end{equation*}
pour obtenir 
\begin{equation*}
P ^{\otimes n} \Gamma_N ^{(n)} P ^{\otimes n}  = \sum_{k=n} ^{N} { N \choose n} ^{-1} { k \choose n}  \tr_{n+1\to N} \left[G_{N,k} ^P \right]= \left(G_N  ^P \right)^{(n)}
\end{equation*}
avec (cf la d\'efinition~\eqref{eq:Fock mat red})
\begin{equation}\label{eq:Fock calcul loc}
G_{N,k} ^P =  { N \choose k} ^2 \tr_{k+1\to N} \left[P ^{\otimes k } \otimes \Pp ^{\otimes (N-k)} \Gamma_N  P ^{\otimes k} \otimes \Pp ^{\otimes (N-k)} \right]
\end{equation}
et
\[
G_N  ^P = G_{N,0} ^P \oplus \ldots \oplus G_{N,N} ^P.
\]
Il reste \`a montrer que $G_N ^P$ est bien un \'etat, c'est-\`a-dire que sa trace vaut $1$. Pour le voir il suffit d'\'ecrire
\begin{align*}
1 &= \tr_{\gH ^N} [\Gamma_N] = \tr_{\gH ^N} \left[\left(P+\Pp \right) ^{\otimes N} \Gamma_N \left(P+\Pp \right) ^{\otimes N}\right] \\
&= \sum_{k=0} ^N { N \choose k} ^2 \tr_{\gH ^N} \left[P ^{\otimes k } \otimes \Pp ^{\otimes (N-k)} \Gamma_N  P ^{\otimes k} \otimes \Pp ^{\otimes (N-k)}\right]
\\&= \sum_{k=0} ^N  \tr_{\gH ^k} \left[G_{N,k} ^P\right] = \tr_{\cF(\gH)} [G_{N} ^P].
\end{align*}
La relation~\eqref{eq:Fock funda rel} est une cons\'equence imm\'ediate de~\eqref{eq:Fock calcul loc} et de la sym\'etrie de $\Gamma_N$.
\end{proof}
\end{remark}

\subsection{Preuve du th\'eor\`eme de de Finetti faible et corollaires}\label{sec:proof deF faible}\mbox{}\\\vspace{-0.4cm}

Nous allons maintenant utiliser la proc\'edure de localisation que nous venons de d\'ecrire pour d\'emontrer la premi\`ere implication du sch\'ema~\eqref{eq:schema deF 2}. L'id\'ee est d'utiliser un projecteur de rang fini $P$ et d'utiliser~\eqref{eq:Fock mat red} avec~\eqref{eq:Fock loc mat}  pour \'ecrire (avec $n\in \N$ fixe)
\begin{align}\label{eq:wdeF formal}
P ^{\otimes n} \gamma_N ^{(n)} P ^{\otimes n} &= \sum_{k=n}^N {N\choose n}^{-1}{k\choose n}\tr_{n+1\to k} G_{N,k} ^P \nonumber
\\&\approx \sum_{k=n}^N \left(\frac{k}{N}\right) ^n \tr_{n+1\to k} G_{N,k} ^P
\end{align}
en utilisant l'estimation simple (voir le calcul dans~\cite{LewNamRou-13}, Equation (2.13))
\begin{equation}\label{eq:wdeF calcul}
{N\choose n}^{-1}{k\choose n} =  \left(\frac{k}{N}\right) ^n + O(N ^{-1}).
\end{equation}
Le raisonnement est alors le suivant: les termes o\`u $k$ est petit contribuent tr\`es peu \`a la somme~\eqref{eq:wdeF formal} \`a cause du facteur $\left(\frac{k}{N}\right) ^n$. Pour les termes o\`u $k$ est grand, on note que, \`a normalisation pr\`es, $G_{N,k} ^P$ est un \'etat bosonique \`a $k$ particules sur $P\gH$. On peut donc lui appliquer le th\'eor\`eme de de Finetti discut\'e au Chapitre~\ref{sec:deF finite dim}, sans se soucier d'une \'eventuelle perte de compacit\'e dans la limite $N\to \infty$ puisque $P\gH$ est de dimension finie. Comme $k$ est grand dans ces termes et $n$ est fixe, on obtient (toujours formellement)
$$\tr_{n+1\to k} G_{N,k} ^P \approx \tr_{\gH ^k} [G_{N,k} ^P] \int_{u \in SP\gH} d\nu_k (u) |u ^{\otimes n}\rangle \langle u ^{\otimes n}| $$
pour une certaine mesure $\nu_k$, et donc 
\[
P ^{\otimes n} \gamma_N ^{(n)} P ^{\otimes n} \approx \sum_{k \simeq N}^N \tr_{\gH ^k} [G_{N,k} ^P] \left(\frac{k}{N}\right) ^n \int_{u \in SP\gH} d\nu_k (u) |u ^{\otimes n}\rangle \langle u ^{\otimes n}|.
\]
Dans la limite $N\to \infty$, la somme discr\`ete devient une int\'egrale en $\lambda = k/N $, et en utilisant le fait que $G_{N} ^P$ est un \'etat pour g\'erer la normalisation il est naturel d'esp\'erer obtenir
\[
P ^{\otimes n} \gamma_N ^{(n)} P ^{\otimes n} \approx \int_{0} ^1 d\lambda \:\lambda ^n \int_{u \in SP\gH} d\nu_{\lambda} (u) |u ^{\otimes n}\rangle \langle u ^{\otimes n}|
\]
ce qu'on peut r\'e\'ecrire sous la forme~\eqref{eq:melange faible} en d\'efinissant (en coordonn\'ees ``radiales'' sur la boule unit\'e $BP\gH$)
\[
 d\mu (u) = d\mu \left( \norm{u},\frac{u}{\norm{u}} \right) := d \norm{u} ^2 \times d\nu_{\norm{u} ^2} \left(\frac{u}{\norm{u}}\right)
\]
Il restera en derni\`ere \'etape \`a appliquer cette proc\'edure pour une suite de projecteurs $P_{\ell} \to \one$ et \`a v\'erifier quelques relations de compatibilit\'e pour conclure. La mesure finale peut ne pas \^etre une probabilit\'e, ce qu'on peut compenser en rajoutant un delta \`a l'origine puisque cette op\'eration ne change aucune des formules pr\'ec\'edentes. 

On notera que (modulo quelques passages \`a la limite), cette preuve qui combine les m\'ethodes pr\'esent\'ees au Chapitre~\ref{sec:deF finite dim} et \`a la Section~\ref{sec:loc rig} donne une recette pour constuire la mesure de de Finetti ``\`a la main'', ce qui est bien utile en pratique, voir Chapitres~\ref{sec:Hartree} et~\ref{sec:NLS}. L'esprit de la preuve par localisation rappelle certains aspects de la m\'ethode de Ammari et Nier~\cite{Ammari-hdr,AmmNie-08}.

Passons maintenant \`a une preuve rigoureuse, suivant~\cite[Section 2]{LewNamRou-13}.

\begin{proof}[Preuve du Th\'eor\`eme~\ref{thm:DeFinetti faible}.]
On continue avec les notations ci-dessus. On d\'efinit $M_{P,N}^{(n)}$ une mesure sur $[0,1]$, \`a valeurs dans les matrices positives hermitiennes de taille $\dim\,\left(\otimes_s^n (P\gH)\right)$ par la formule suivante:
$$\dM_{P,N}^{(n)}(\lambda):=\sum_{k=n}^N\;\delta_{k/N}(\lambda)\;\tr_{n+1\to k}G^P_{N,k}.$$
On a alors, en utilisant~\eqref{eq:wdeF calcul} 
\begin{equation}\label{eq:wdeF preuve 1}
\tr\left|P^{\otimes n}\gamma^{(n)}_{\Gamma_N}P^{\otimes n}-\int_0^1 \lambda^n\,\dM_{P,N}^{(n)}(\lambda)\right| \leq \frac{C}{N}\sum_{k=n}^N\tr G^P_{N,k} \to 0 \mbox{ quand } N\to \infty.
\end{equation}
Comme $P$ est un projecteur de rang fini et $\gamma_N ^{(n)}$ converge faible-$\ast$ par hypoth\`ese, 
\begin{equation}\label{eq:wdeF preuve 2}
P^{\otimes n}\gamma^{(n)}_{N}P^{\otimes n}\to P^{\otimes n}\gamma^{(n)}P^{\otimes n} 
\end{equation}
fortement en norme de trace. D'autre part, 
$$\tr_{\gH^n}\left[\int_0^1\dM_{P,N}^{(n)}(\lambda)\right]=\sum_{k=n+1}^N\;\tr_{\gH^k}G^P_{N,k}\leq\sum_{k=0}^N\;\tr_{\gH^k}G^P_{N,k}=1,$$
donc $M_{P,N}^{(n)}$ est une suite de mesures positives born\'ees sur un espace de dimension fini compact (les matrices hermitiennes positives de taille $\dim P \gH$ ayant une trace plus petite que $1$). On peut donc en extraire une sous-suite convergeant faiblement au sens des mesures vers $M_P^{(n)}$. En combinant avec~\eqref{eq:wdeF preuve 1} et~\eqref{eq:wdeF preuve 2} on a 
\begin{equation}\label{eq:wdeF preuve 3}
P^{\otimes n}\gamma^{(n)}P^{\otimes n}=\int_0^1 \lambda^n\,\dM_P^{(n)}(\lambda).
\end{equation}

Il s'agit maintenant de voir que la suite de mesures $\left(M_P^{(n)}\right)_{n\in \N}$ que l'on vient d'obtenir est consistante au sens o\`u
,pour $n\geq0$, 
\begin{equation}
\int_{0} ^1 f(\lambda) \tr_{n+1} dM_P^{(n+1)} (\lambda)= \int_{0} ^1 f(\lambda) d M_P^{(n)} (\lambda)
\label{eq:A_P_consistent 2}
\end{equation}
pour tout fonction continue $f$ sur $[0,1]$ s'annulant en $0$, avec $\tr_{n+1}$ la trace partielle par rapport \`a la derni\`ere variable. On a 
\begin{align*}
\tr_{n+1} d M_{P,N}^{(n+1)}(\lambda)&=\sum_{k=n+1}^N\;\delta_{k/N}(\lambda)\;\tr_{n \to k} G_{N,k} \\
&=d M_{P,N}^{(n)}(\lambda)- \delta_{n/N}(\lambda) G^P_{N,n}
\end{align*}
et donc
\begin{align*}
\int_{0} ^1 f(\lambda) \tr_{\gH^n}\left|\tr_{n+1}M_{P,N}^{(n+1)}(\lambda)-M_{P,N}^{(n)}(\lambda)\right|&\leq  \int_{0} ^1 f(\lambda) \delta_{n/N}(\lambda) \tr_{\gH ^n} G^P_{N,n} 
\\&\leq  f\left(\frac{n}{N}\right)
\end{align*}
puisque $\tr_{\gH ^n} G^P_{N,n}\leq 1$. En passant \`a la limite on obtient bien~\eqref{eq:A_P_consistent 2} pour toute fonction $f$ telle que $f(0) = 0$. 

Nous appliquons maintenant le Th\'eor\`eme~\ref{thm:DeFinetti fort} en dimension finie. Stricto sensu, il ne s'applique que \`a $\lambda$ fixe, mais en approchant $dM_P ^{(n)}$ par des fonctions \'etag\'ees puis en passant \`a la limite, on obtient une mesure $\nu_P$ sur $[0,1]\times S\gH\cap (P\gH)$ telle que 
$$\int_0 ^1 f(\lambda) \dM_P^{(n)}(\lambda)=\int_{S\gH}\int_0 ^1 f(\lambda) d\nu_P(\lambda,u)|u^{\otimes n}\rangle\langle u^{\otimes n}|$$
pour toute fonction continue $f$ s'annulant en $0$. On a donc maintenant
\begin{align*}
P^{\otimes n}\gamma^{(n)}P^{\otimes n}&=\int_{0}^1\int_{S\gH}d\nu_{P}(\lambda,u)\,\lambda^n |u^{\otimes n}\rangle\langle u^{\otimes n}|\\
&=\int_{0}^1\int_{S\gH}d\nu_{P}(\lambda,u)\,|(\sqrt\lambda u)^{\otimes n}\rangle\langle (\sqrt\lambda u)^{\otimes n}|\\
&= \int_{B \gH} d\mu_P (u) |u^{\otimes n}\rangle\langle u^{\otimes n}| 
\end{align*}
en d\'efinissant la mesure $\mu_P$ en coordonn\'ees radiales. On est libre de rajouter une masse de Dirac \`a l'origine pour faire de $\mu_P$ une mesure de probabilit\'e.

L'argument peut-\^etre appliqu\'e \`a une suite de projecteurs de rang fini convergeant vers l'identit\'e. On a alors une suite de probabilit\'es $\mu_k$ sur $B\gH$ telles que 
$$P_k^{\otimes n}\gamma^{(n)}P_k^{\otimes n}=\int_{B\gH}d\mu_k(u)\,|u^{\otimes n}\rangle\langle u^{\otimes n}|.$$
En prenant une suite croissante de projecteurs (i.e. satisfaisant $P_k \gH \subset P_{k+1} \gH$) il est clair que $\mu_k$ co\"incide avec $\mu_\ell$ sur $P_{\ell} \gH$ pour $\ell\leq k$. Comme toutes ces mesures ont leur support dans un ensemble born\'e, il existe (voir par exemple~\cite[Lemme 1]{Skorokhod-74}) une unique mesure de probabilit\'e\footnote{Pour la construire, noter que la $\sigma$-fermeture de l'union pour $k\geq 0$ des bor\'eliens de $P_k\gH$ co\"incide avec les bor\'eliens de $\gH$.} $\mu$ sur $B\gH$ qui co\"incide avec $\mu_k$ sur $P_k\gH$ au sens o\`u:  
$$\int_{B\gH}d\mu_k(u)\,|u^{\otimes n}\rangle\langle u^{\otimes n}|=\int_{B\gH}d\mu(u)\,|(P_ku)^{\otimes n}\rangle\langle (P_ku)^{\otimes n}|.$$
On conclut donc 
$$P_k^{\otimes n}\gamma^{(n)}P_k^{\otimes n}=P_k^{\otimes n}\left(\int_{B\gH}d\mu(u)\,|u^{\otimes n}\rangle\langle u^{\otimes n}|\right)P_k^{\otimes n}$$
et il ne reste plus qu'\`a prendre la limite $k\to \infty$ pour obtenir l'existence de la mesure satisfaisant~\eqref{eq:melange faible}. 

Prouvons maintenant l'unicit\'e de la mesure. Soient $\mu$ et $\mu'$ satisfaisant
\begin{equation}
\int_{B\gH}|u^{\otimes k}\rangle\langle u^{\otimes k}|d\mu(u) -\int_{B\gH}|u^{\otimes k}\rangle\langle u^{\otimes k}|d\mu'(u) = 0
\label{eq:uniqueness}
\end{equation}
pour tout $k\geq1$. Soient $V = \mathrm{vect} (e_1,\ldots,e_d)$ un sous-espace de $\gH$ de dimension finie, $P_i=|e_i\rangle\langle e_i|$ les projections associ\'ees et $\mu_V,\mu'_V$ les projections cylindriques de $\mu$ et $\mu'$ sur~$V$. En applicant $P_{i_1}\otimes\cdots\otimes P_{i_k}$ \`a gauche et  $P_{j_1}\otimes\cdots\otimes P_{j_k}$ \`a droite de~\eqref{eq:uniqueness}, on obtient 
$$\int_{BV}u_{i_1}\cdots u_{i_k}\overline{u_{j_1}}\cdots \overline{u_{j_k}}\,d(\mu_V-\mu_V')(u)=0$$
pour tous les multi-indices $i_1,...,i_k$ et $j_1,...,j_k$ (on a not\'e $u=\sum_{j=1}^du_je_j$). D'autre part, par l'invariance $S^1$ des deux mesures, il est clair que si $k\neq \ell$,
$$\int_{BV}u_{i_1}\cdots u_{i_k}\overline{u_{j_1}}\cdots \overline{u_{j_\ell}}\,d(\mu_V-\mu_V')(u)=0.$$
Comme les polyn\^omes en les $u_i$ et $\overline{u_j}$ sont denses dans $C^0(BV,\C)$ (fonctions continues de la boule unit\'e de $V$), on d\'eduit que les projections cylindriques de $\mu$ et $\mu'$ sur $V$ co\"incident. Ceci \'etant vrai pour tout sous-espace $V$ de dimension finie, on conclut que les deux mesures doivent co\"incider partout.
\end{proof}

Nous donnons maintenant un cas particulier, adapt\'e aux besoins de la suite des notes, d'un corollaire bien utile de la m\'ethode de preuve ci-dessus (voir~\cite[Th\'eor\`eme 2.6]{LewNamRou-13} pour l'\'enonc\'e g\'en\'eral):

\begin{corollary}[\textbf{Localisation et mesure de de Finetti}]\label{cor:wdeF loc}\mbox{}\\
Soit $(\Gamma_N)_{N\in \N}$ une suite d'\'etats \`a $N$ corps sur $\gH = L ^2 (\R ^d)$ satisfaisant les hypoth\`eses du Th\'eor\`eme~\ref{thm:DeFinetti faible} et $\mu$ la mesure de de Finetti associ\'ee. Supposons que 
\begin{equation}\label{eq:wdeF cor assum}
\tr \left[ -\Delta \gamma_N^{(1)} \right] = \tr \left[ |\nabla| \gamma_N^{(1)} |\nabla| \right] \leq C 
\end{equation}
pour une constante ind\'ependante de $N$. Soit $\chi$ une fonction de localisation \`a support compact avec $0\leq \chi \leq 1$ et $G_N^{\chi}$ l'\'etat localis\'e d\'efini par le Lemme~\ref{lem:Fock loc}. Alors
\begin{equation}
\lim_{N\to\ii}\sum_{k=0}^Nf\left(\frac{k}{N}\right)\tr_{\gH^k} G_{N,k}^\chi = \int_{B\gH} d\mu(u)\;f(\|\chi u\|^2)
\label{eq:wdeF cor}
\end{equation}
pour toute fonction continue $f$ sur $[0,1]$.
\end{corollary}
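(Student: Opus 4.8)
The plan is to follow closely the proof of the weak quantum de Finetti theorem, Theorem~\ref{thm:DeFinetti faible}. Write $p_{N,k}:=\tr_{\gH^k}G_{N,k}^\chi$; by Lemma~\ref{lem:Fock loc} one has $p_{N,k}\geq 0$ and $\sum_{k=0}^N p_{N,k}=\tr[\Gamma_N^\chi]=1$, so the left-hand side of~\eqref{eq:wdeF cor} is the integral of $f$ against the probability measure $\nu_N:=\sum_{k=0}^N p_{N,k}\,\delta_{k/N}$ on the compact interval $[0,1]$, while the right-hand side is the integral of $f$ against the image of the probability measure $\mu$ under $u\mapsto\|\chi u\|^2\in[0,1]$ (which is legitimate since $0\leq\|\chi u\|^2\leq\|u\|^2\leq 1$ on $B\gH$). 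Since the monomials are dense in $C([0,1])$ by Stone-Weierstrass, it suffices to prove~\eqref{eq:wdeF cor} for $f(\lambda)=\lambda^n$, $n\in\N$ (the case $n=0$ being trivial), i.e. to show
\begin{equation*}
\lim_{N\to\ii}\;\sum_{k=0}^N\Big(\tfrac kN\Big)^n p_{N,k}=\int_{B\gH}\|\chi u\|^{2n}\,d\mu(u).
\end{equation*}

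First I would establish the algebraic identity linking the moments of $\nu_N$ to the reduced density matrices. Combining the definition~\eqref{eq:Fock mat red} of the reduced matrices of a diagonal Fock-space state with~\eqref{eq:Fock loc mat} and the fact that $\Gamma_N$ is a genuine $N$-body state (so that $\gamma_N^{(n)}=\Gamma_N^{(n)}$), one gets $\tr_{\gH^n}[\chi^{\otimes n}\gamma_N^{(n)}\chi^{\otimes n}]={N\choose n}^{-1}\sum_{k=n}^N{k\choose n}p_{N,k}$. Using the elementary bound ${N\choose n}^{-1}{k\choose n}=(k/N)^n+O(N^{-1})$ uniformly in $0\leq k\leq N$ for fixed $n$ (estimate $|\prod_{j=0}^{n-1}\frac{k-j}{N-j}-\prod_{j=0}^{n-1}\frac kN|\leq\sum_{j=0}^{n-1}|\frac{k-j}{N-j}-\frac kN|\leq\frac{n^2}{N-n}$), together with $\sum_k p_{N,k}=1$ and the fact that the terms $k\leq n-1$ contribute $O(N^{-1})$, I obtain
\begin{equation*}
\sum_{k=0}^N\Big(\tfrac kN\Big)^n p_{N,k}=\tr_{\gH^n}\big[\chi^{\otimes n}\gamma_N^{(n)}\chi^{\otimes n}\big]+O(N^{-1}).
\end{equation*}
It then remains to pass to the limit $N\to\ii$ in the right-hand side. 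By hypothesis $\gamma_N^{(n)}\wto_\ast\gamma^{(n)}$ in $\gS^1(\gH_s^n)$ with $\gamma^{(n)}=\int_{B\gH}|u^{\otimes n}\rangle\langle u^{\otimes n}|\,d\mu(u)$ by~\eqref{eq:melange faible}, so the target value $\tr[\gamma^{(n)}(\chi^2)^{\otimes n}]=\int_{B\gH}\langle u,\chi^2 u\rangle^n\,d\mu(u)=\int_{B\gH}\|\chi u\|^{2n}\,d\mu(u)$ is exactly the desired one. The difficulty is that $\tr[\chi^{\otimes n}\gamma_N^{(n)}\chi^{\otimes n}]=\tr[\gamma_N^{(n)}(\chi^2)^{\otimes n}]$ with $(\chi^2)^{\otimes n}$ bounded but \emph{not} compact, so weak-$\ast$ convergence alone does not give the limit; this is where the kinetic bound~\eqref{eq:wdeF cor assum} is used. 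By symmetry it yields $\tr[\sum_{j=1}^n(1-\Delta_j)\gamma_N^{(n)}]\leq n(C+1)$, hence $\widetilde\gamma_N^{(n)}:=S_n^{-1}\gamma_N^{(n)}S_n^{-1}$ is bounded in $\gS^1$, where $S_n:=\big(\sum_{j=1}^n(1-\Delta_j)\big)^{-1/2}$; up to a subsequence $\widetilde\gamma_N^{(n)}\wto_\ast\widetilde\gamma^{(n)}$, and testing $\gamma_N^{(n)}=S_n\widetilde\gamma_N^{(n)}S_n$ against compact operators shows $\gamma^{(n)}=S_n\widetilde\gamma^{(n)}S_n$. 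Then $\tr[\gamma_N^{(n)}(\chi^2)^{\otimes n}]=\tr[\widetilde\gamma_N^{(n)}\,S_n(\chi^2)^{\otimes n}S_n]\to\tr[\widetilde\gamma^{(n)}\,S_n(\chi^2)^{\otimes n}S_n]=\tr[\gamma^{(n)}(\chi^2)^{\otimes n}]$, \emph{provided} $S_n(\chi^2)^{\otimes n}S_n$ is compact — and since this limit depends only on $\gamma^{(n)}$ and $\mu$, no subsequence extraction is actually needed.

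The main obstacle is precisely the compactness of $S_n(\chi^2)^{\otimes n}S_n$: for $n\geq 2$ the operator $\chi^{\otimes n}$ is not compact on $L^2(\R^{dn})$, so one must exploit the regularization in all $n$ variables at once. Here I would use that the operators $1-\Delta_j$ commute, so the arithmetic-geometric mean inequality gives $\sum_{j=1}^n(1-\Delta_j)\geq\big(\prod_{j=1}^n(1-\Delta_j)\big)^{1/n}$, whence $S_n^2\leq\big((1-\Delta)^{-1/n}\big)^{\otimes n}$. Setting $B:=\chi^{\otimes n}S_n$, this yields $0\leq BB^\ast=\chi^{\otimes n}S_n^2\chi^{\otimes n}\leq\big(\chi(1-\Delta)^{-1/n}\chi\big)^{\otimes n}$. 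Since $\chi$ has compact support (and is, say, $C^1$ as the cutoffs elsewhere in these notes), $\chi(1-\Delta)^{-1/n}\chi$ is compact on $L^2(\R^d)$: factor it as $[\chi(1-\Delta)^{-1/(2n)}][(1-\Delta)^{-1/(2n)}\chi]$ and note that $\chi(1-\Delta)^{-1/(2n)}$ sends $L^2(\R^d)$ into $H^{1/n}$-functions supported in a fixed compact set, which embeds compactly into $L^2(\R^d)$ by Rellich. Hence $\big(\chi(1-\Delta)^{-1/n}\chi\big)^{\otimes n}$ is compact on $L^2(\R^{dn})$ as a tensor product of compact operators, and by domination ($0\leq A\leq K$ with $K$ compact forces $A$ compact) $BB^\ast$ is compact, therefore $B$ is compact, and finally $B^\ast B=S_n(\chi^2)^{\otimes n}S_n$ is compact. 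This closes the argument; the very same computation gives $\tr[\gamma^{(n)}(\chi^2)^{\otimes n}]=\int_{B\gH}\|\chi u\|^{2n}\,d\mu(u)$ directly from~\eqref{eq:melange faible}, which is the required identity for $f(\lambda)=\lambda^n$ and hence, by the density reduction, proves~\eqref{eq:wdeF cor}.
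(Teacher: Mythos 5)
Your proof follows the same strategy as the paper's: reduce to the monomials $f(\lambda)=\lambda^n$ by Stone--Weierstrass, use~\eqref{eq:Fock mat red}, \eqref{eq:Fock loc mat} and the binomial estimate~\eqref{eq:wdeF calcul} to identify $\sum_k (k/N)^n \tr G_{N,k}^\chi$ with $\tr[\chi^{\otimes n}\gamma_N^{(n)}\chi^{\otimes n}]+O(N^{-1})$, and then invoke the kinetic bound~\eqref{eq:wdeF cor assum} to upgrade the weak-$\ast$ convergence of $\gamma_N^{(n)}$ so that one may test against the bounded but non-compact observable $(\chi^2)^{\otimes n}$, concluding with~\eqref{eq:melange faible}. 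The one place where you genuinely diverge is the compactness step, and there your treatment is in fact more careful than the paper's. The paper asserts in one line that $D_n^\chi=\chi^{\otimes n}\bigl(\sum_{j=1}^n|\nabla|_j\bigr)^{-1}$ is compact because ``multiplication by $\chi$ is relatively compact with respect to the Laplacian''; this directly justifies only $n=1$ (and the operator $\bigl(\sum_j|\nabla|_j\bigr)^{-1}$ is not even bounded, since $0$ lies in the spectrum --- the intended regularizer is of course $\bigl(\sum_j(1-\Delta_j)\bigr)^{-1/2}$ or similar, as you use). Your argument via the spectral/Fourier inequality $\sum_j(1-\Delta_j)\geq\bigl(\prod_j(1-\Delta_j)\bigr)^{1/n}$, tensorization of the compact one-body operator $\chi(1-\Delta)^{-1/n}\chi$, and operator domination ($0\leq A\leq K$ with $K$ compact forces $A$ compact) closes this gap cleanly; the only mild caveat is that you should either assume some regularity on $\chi$ (as the paper does for its cutoffs) or replace the Rellich factorization by a Kato--Seiler--Simon bound to handle a merely bounded compactly supported $\chi$. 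The rest (positivity and normalization of the weights $\tr G_{N,k}^\chi$, the $O(N^{-1})$ bookkeeping, the final identity $\tr[\gamma^{(n)}(\chi^2)^{\otimes n}]=\int_{B\gH}\|\chi u\|^{2n}\,d\mu$) matches the paper.
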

 
\begin{proof}[Peuve du Corollaire~\ref{cor:wdeF loc}.]
Par densit\'e des polyn\^omes dans les fonctions continues sur $[0,1]$ il suffit de consid\'erer le cas $f(\lambda)=\lambda^n$ avec $n=0,1,...$. On utilise alors~\eqref{eq:Fock mat red},~\eqref{eq:Fock loc mat} et~\eqref{eq:wdeF calcul} \`a nouveau pour \'ecrire
\[
\left|\sum_{k=0}^N \left(\frac{k}{N}\right) ^n \tr_{\gH^k} G_{N,k}^\chi  - \tr_{\gH ^n} \left[\chi ^{\otimes n} \gamma_N ^{(n)} \chi ^{\otimes n} \right]\right| \to 0 \mbox{ quand } N\to \infty.
\]
L'hypoth\`ese~\eqref{eq:wdeF cor assum} assure que 
\[
 \tr \left[ \left( \sum_{j=1}^{n} -\Delta_j \right) \gamma_N^{(n)} \right] \leq C n
\]
et on peut donc supposer que 
\[
\left( \sum_{j=1}^{n} |\nabla|_j \right) \gamma_N^{(n)} \left( \sum_{j=1}^{n} |\nabla|_j \right) \wto_{\ast} \left( \sum_{j=1}^{n} |\nabla|_j \right) \gamma ^{(n)} \left( \sum_{j=1}^{n} |\nabla|_j \right)\mbox{ quand } N\to \infty. 
\]
L'op\'erateur de multiplication par $\chi$ est relativement compact par rapport au Laplacien, donc 
$$D_{n} ^\chi:= \chi ^{\otimes n} \left( \sum_{j=1}^{n} |\nabla|_j \right) ^{-1}$$
est un op\'erateur compact. On a alors 
\begin{align*}
 \tr_{\gH ^n} \left[\chi ^{\otimes n} \gamma_N ^{(n)} \chi ^{\otimes n} \right] & = \tr_{\gH ^n} \left[ D_{n} ^\chi \left( \sum_{j=1}^{n} |\nabla|_j \right) \gamma_N ^{(n)} \left( \sum_{j=1}^{n} |\nabla|_j \right)  D_{n} ^\chi \right] \\
&\to \tr_{\gH ^n} \left[D_{n} ^\chi  \left( \sum_{j=1}^{n} |\nabla|_j \right) \gamma ^{(n)}  \left( \sum_{j=1}^{n} |\nabla|_j \right)  D_{n} ^\chi \right]\\ 
 &= \tr_{\gH ^n} \left[\chi ^{\otimes n} \gamma ^{(n)} \chi ^{\otimes n} \right].
\end{align*}
On conclut alors en utilisant~\eqref{eq:melange faible} que
\begin{align*}
\sum_{k=0}^N \left(\frac{k}{N}\right) ^n \tr_{\gH^k} G_{N,k}^\chi &\to  \int_{B\gH} d\mu(u) \tr_{\gH ^n} \left[ |(\chi u)  ^{\otimes n}\rangle \langle (\chi u)  ^{\otimes n}|\right]\\
&= \int_{B\gH} d\mu(u) \norm{\chi u} ^{2n} = \int_{B\gH} d\mu(u) f(\norm{\chi u} ^{2}).
\end{align*}
\end{proof}

\begin{remark}[Mesure de de Finetti faible et perte de masse.]\mbox{}\\
\vspace{-0.4cm}
\begin{enumerate}
\item L'hypoth\`ese~\eqref{eq:wdeF cor assum} est l\`a pour assurer la compacit\'e forte locale des matrices de densit\'e, cf la preuve. Elle est bien s\^ur tr\`es naturelle pour les applications que nous visons (\'energie cin\'etique uniform\'ement born\'ee). La convergence~\eqref{eq:wdeF cor} signifie que la masse de la mesure de de Finetti $\mu$ sur la sph\`ere $\{\|u\|^2=\lambda\}$ correspond \`a la probabilit\'e qu'une fraction $\lambda$ des particules ne s'\'echappe pas \`a l'infini.
\item Nous utiliserons ce corollaire pour d\'eduire des informations sur les particules qui s'\'echappent \`a l'infini de la fa\c{c}on suivante. On d\'efinit la fonction  
$$ \eta = \sqrt{1-\chi ^2}$$ 
qui ``localise pr\`es de l'infini''. On ne peut bien s\^ur pas appliquer le r\'esultat directement \`a l'\'etat localis\'e $G_N^{\eta}$ mais on peut utiliser la relation~\eqref{eq:Fock funda rel} pour obtenir
\begin{align}
\lim_{N\to\ii}\sum_{k=0}^N f\left(\frac{k}{N}\right)\tr_{\gH^k}G_{N,k}^{\eta}&=\lim_{N\to\ii}\sum_{k=0}^N f\left(1-\frac{k}{N}\right)\tr_{\gH^k}G_{N,k}^{\chi}\nn\\
&=\int_{B\gH}d\mu(u)\;f\left(1- \|\chi u \|^2\right), \label{eq:wdeF cor 2}
\end{align}
ce qui fournit un certain contr\^ole sur la perte de masse \`a l'infini, encod\'e par la mesure de de Finetti qui pourtant d\'ecrit par d\'efinition les particules qui restent pi\'eg\'ees. Typiquement, on appliquera~\eqref{eq:wdeF cor 2} avec $f (\lambda) \simeq \eH (\lambda)$, l'\'energie de Hartree \`a masse $\lambda$, voir le chapitre suivant.  
\end{enumerate}
\hfill\qed
\end{remark}

\newpage

\section{\textbf{D\'erivation de la th\'eorie de Hartree: cas g\'en\'eral}}\label{sec:Hartree}

On se tourne maintenant vers le cas g\'en\'eral de l'obtention de la fonctionnelle de Hartree comme limite du probl\`eme \`a $N$ corps dans le r\'egime de champ moyen. Au Chapitre~\ref{sec:quant} nous avons vu que sous des hypoth\`eses physiques simplificatrices, le r\'esultat \'etait une cons\'equence assez directe des th\'eor\`emes de de Finetti faible et fort. Le cas g\'en\'eral demande une discussion plus pouss\'ee et nous serons amen\'es \`a utiliser pleinement les outils de localisation dans l'espace de Fock introduits au Chapitre~\ref{sec:locFock}.

Le cadre est maintenant celui de particules dans un potentiel $V$ non pi\'egeant interagissant via un potentiel d'interaction qui peut potentiellement avoir des \'etats li\'es. Rappellons les notations: le Hamiltonien du syt\`eme complet est 
\begin{equation}\label{eq:bis quant hamil}
H_N ^V = \sum_{j=1} ^N T_j + \frac{1}{N-1}\sum_{1\leq i < j \leq N} w(x_i-x_j), 
\end{equation}
agissant sur l'espace de Hilbert $\gH_s ^N = \bigotimes_s ^N \gH$, avec $\gH = L ^2 (\R ^d)$. Le Hamiltonien \`a un corps est  
\begin{equation}\label{eq:bis op Schro}
T = - \Delta + V, 
\end{equation}
que l'on suppose auto-adjoint et born\'e inf\'erieurement. Nous avons sp\'ecifi\'e la pr\'esence du potentiel $V$ dans la notation~\eqref{eq:bis quant hamil} car nous serons amen\'es \`a consid\'erer le syst\`eme pour les particules perdues \`a l'infini d\'ecrit par le Hamiltonien $H_N ^0$ o\`u on prend $V\equiv 0$. On peut faire les m\^emes g\'en\'eralisations que celles mentionn\'ees \`a la Remarque~\ref{rem:energie cin} mais nous ne traiterons explicitement que le cas mod\`ele ci-dessus.

Le potentiel d'interaction $w:\R \mapsto \R$ sera relativement born\'e par rapport \`a $T$:
\begin{equation}\label{eq:bis T controls w}
 -\beta_-(T_1+ T_2)-C\leq w(x_1-x_2) \leq \beta_+ (T_1+T_2) + C,
\end{equation}
sym\'etrique
\[
w(-x) = w(x), 
\]
et d\'ecroissant \`a l'infini
\begin{equation}\label{eq:bis decrease w}
w\in L^p (\R ^d) + L ^{\infty} (\R ^d), \max(1,d/2) < p < \infty \to 0, w(x) \to 0 \mbox{ quand }  |x|\to\infty.
\end{equation}
Ceci assure que $H_N$ est auto-adjoint et born\'e inf\'erieurement. A nouveau, il est assez vain de consid\'erer des potentiels \`a un corps partiellement confinants et donc on supposera que $V$ n'est pi\'egeant dans aucune direction:
\begin{equation}\label{eq:bis sans confinement}
V \in L ^p (\R ^d) + L ^{\infty} (\R ^d), \max{1,d/2}\leq p <\infty,  V(x) \to 0 \mbox{ quand } |x|\to \infty.
\end{equation}
L'\'energie fondamentale de~\eqref{eq:quant hamil} est toujours donn\'ee par
\begin{equation}\label{eq:bis quant ground state}
E ^V (N) = \inf \sigma_{\gH ^N} H_N ^V  = \inf_{\Psi \in \gH ^N, \norm{\Psi} = 1} \left\langle \Psi, H_N ^V \Psi \right\rangle_{\gH ^N}.
\end{equation}
Enfin, rappellons les objets limites. La fonctionnelle de Hartree avec le potentiel $V$ est donn\'ee par 
\begin{equation}\label{eq:bis hartree func}
\EH  ^V [u] := \int_{\R ^d } |\nabla u| ^2 + V |u| ^2 + \frac12 \iint_{\R ^d \times \R ^d } |u(x)| ^2 w(x-y) |u(y)| ^2 dx dy
\end{equation}
et on utilisera la notation $\EH ^0$ pour la fonctionnelle invariante par translation o\`u $V\equiv 0$. L'\'energie Hartree \`a masse $\lambda$ est donn\'ee par 
\begin{equation}\label{eq:bis Hartree perte masse}
\eH  ^V(\lambda) := \inf_{\norm{u} ^2 = \lambda} \EH ^V [u],0\leq \lambda \leq 1.
\end{equation}
Sous les hypoth\`eses ci-dessus on aura toujours l'in\'egalit\'e de liaison large 
\begin{equation}\label{eq:hartree liaison}
\eH ^V (1) \leq \eH ^V (\lambda) + \eH ^0 (1-\lambda) 
\end{equation}
qui se prouve ais\'ement en \'evaluant l'\'energie d'une suite de fonctions avec une masse $\lambda$ dans le puits de potentiel de $V$ et une masse $1-\lambda$ s'\'echappant \`a l'infini. Nous allons d\'emontrer le th\'eor\`eme suivant, issu de~\cite{LewNamRou-13} (cas particulier du Th\'eor\`eme 1.1).

\begin{theorem}[\textbf{D\'erivation de la th\'eorie de Hartree, cas g\'en\'eral}]\label{thm:hartree general}\mbox{}\\
Sous les hypoth\`eses pr\'ec\'edentes on a les r\'esultats suivants:

\medskip

\noindent$(i)$ \underline{Convergence de l'\'energie.}
\begin{equation}
\boxed{\lim_{N\to\ii}\frac{E^V(N)}{N}=\eH^V(1).}
\label{eq:general ener}
\end{equation}

\medskip

\noindent $(ii)$ \underline{Convergence des \'etats.} Soit une suite $\Psi_N$ normalis\'ee dans $L ^2 (\R ^{dN})$ de quasi-minimiseurs pour $H_N ^V$:  
\begin{equation}\label{eq:quasi min}
\pscal{\Psi_N,H_N^V\Psi_N}= E^V(N)+o(N) \mbox{ quand } N \to \infty, 
\end{equation}
et $\gamma^{(k)}_N$ les matrices de densit\'e r\'eduites correspondantes. Il existe $\mu \in \PP (B\gH)$ une probabilit\'e sur la boule unit\'e de $\gH$ avec $\mu (\cM ^V) = 1$ o\`u
\begin{equation}
\cM^V =\Big\{u\in B\gH\ :\ \cEH^V [u]=e^V_H(\norm{u}^2)=e^V_H(1)-e^0_H(1-\norm{u}^2)\Big\},
\label{eq:def_M_V} 
\end{equation}
telle que, le long d'une sous-suite 
\begin{equation}
\boxed{\gamma^{(k)}_{N_j}\wto_*\int_{\cM^V}|u^{\otimes k}\rangle\langle u^{\otimes k}|\,d\mu(u)}
\label{eq:general state}
\end{equation}
faiblement-$\ast$ dans $\gS^1(\gH^k)$, pour tout $k\geq1$.

\medskip

\noindent $(iii)$ Si de plus on a l'in\'egalit\'e de liaison stricte 
\begin{equation}
\eH^V(1)<\eH^V(\lambda)+\eH^0(1-\lambda)
\label{eq:hartree liaison stricte}
\end{equation}
pour tout $0\leq\lambda<1$, la mesure $\mu$ est \`a support dans la sph\`ere $S\gH$ et la limite~\eqref{eq:general state} a lieu en norme de trace. En particulier, si $\eH^V(1)$ a un minimiseur $u_H$ unique \`a une phase pr\`es, alors on a pour toute la suite
\begin{equation}
\boxed{\gamma^{(k)}_{N}\to|u_H^{\otimes k}\rangle\langle u_H^{\otimes k}| \mbox{  fortement dans } \gS ^1(\gH ^k)} 
\label{eq:general BEC}
\end{equation} 
pour tout $k\geq1$ fixe.
\end{theorem}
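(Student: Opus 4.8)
\medskip
\noindent\emph{Strat\'egie de preuve.} La borne sup\'erieure $E^V(N)/N\leq\eH^V(1)$ s'obtient comme d'habitude en testant l'\'energie \`a $N$ corps sur l'\'etat factoris\'e $u_H^{\otimes N}$ avec $u_H$ un quasi-minimiseur de $\cEH^V$; tout le travail porte donc sur la borne inf\'erieure. Partant d'une suite $\Psi_N$ de quasi-minimiseurs v\'erifiant~\eqref{eq:quasi min}, on extrait d'abord, exactement comme dans la preuve du Th\'eor\`eme~\ref{thm:confined} et par un proc\'ed\'e diagonal, une sous-suite le long de laquelle $\gamma_N^{(k)}\wto_\ast\gamma^{(k)}$ faiblement-$\ast$ dans $\gS^1(\gH^k)$ pour tout $k\geq1$. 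Le Th\'eor\`eme~\ref{thm:DeFinetti faible} fournit alors une mesure de probabilit\'e $\mu\in\PP(B\gH)$, invariante par $S^1$, telle que $\gamma^{(k)}=\int_{B\gH}|u^{\otimes k}\rangle\langle u^{\otimes k}|\,d\mu(u)$ pour tout $k$. L'objectif central est l'in\'egalit\'e
\begin{equation}\label{eq:plan borneinf}
\liminf_{N\to\ii}\frac{E^V(N)}{N}\ \geq\ \int_{B\gH}\Big(\cEH^V[u]+\eH^0(1-\norm{u}^2)\Big)\,d\mu(u),
\end{equation}
car, en minorant $\cEH^V[u]\geq\eH^V(\norm{u}^2)$ puis en utilisant l'in\'egalit\'e de liaison large~\eqref{eq:hartree liaison} et le fait que $\mu$ est de masse $1$, le membre de droite est $\geq\eH^V(1)$; avec la borne sup\'erieure, ceci \'etablira~\eqref{eq:general ener} d'abord le long de la sous-suite, puis pour toute la suite puisque toute sous-suite admet une sous-sous-suite de limite $\eH^V(1)$.

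Pour d\'emontrer~\eqref{eq:plan borneinf} on appliquera le Lemme~\ref{lem:loc ener}, qui, gr\^ace \`a~\eqref{eq:bis T controls w} et~\eqref{eq:quasi min} (lesquelles donnent en particulier $\Tr[-\Delta\gamma_N^{(1)}]\leq C$), scinde $\liminf_N E^V(N)/N$ en une contribution ``pr\`es de l'origine'' (troncatures $\chi_R$ \`a support compact) et une contribution ``pr\`es de l'infini'' ($\eta_R=\sqrt{1-\chi_R^2}$), voir~\eqref{eq:loc ener}. Pour la premi\`ere, la borne cin\'etique assure la compacit\'e locale forte $\chi_R^{\otimes k}\gamma_N^{(k)}\chi_R^{\otimes k}\to\chi_R^{\otimes k}\gamma^{(k)}\chi_R^{\otimes k}$ en norme de trace lorsque $N\to\ii$; en combinant ceci avec le lemme de Fatou pour $-\Delta\geq0$, avec la formule de de Finetti (d'o\`u $\Tr_\gH[T\chi_R\gamma^{(1)}\chi_R]+\half\Tr_{\gH^2}[w\,\chi_R^{\otimes2}\gamma^{(2)}\chi_R^{\otimes2}]=\int_{B\gH}\cEH^V[\chi_R u]\,d\mu(u)$), puis avec la convergence $\chi_R u\to u$ dans $H^1$ pour $\mu$-presque tout $u$ quand $R\to\ii$ (licite car $\Tr[-\Delta\gamma^{(1)}]=\int\norm{\nabla u}^2\,d\mu<\ii$), on obtiendra
\[
\liminf_{R\to\ii}\liminf_{N\to\ii}\Big(\Tr_\gH[T\chi_R\gamma_N^{(1)}\chi_R]+\half\Tr_{\gH^2}[w\,\chi_R^{\otimes2}\gamma_N^{(2)}\chi_R^{\otimes2}]\Big)\ \geq\ \int_{B\gH}\cEH^V[u]\,d\mu(u).
\]

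Reste la contribution \`a l'infini, la partie la plus technique. On observera que $\eta_R\gamma_N^{(1)}\eta_R=(G_N^{\eta_R})^{(1)}$ et $\eta_R^{\otimes2}\gamma_N^{(2)}\eta_R^{\otimes2}=(G_N^{\eta_R})^{(2)}$, o\`u $G_N^{\eta_R}=\bigoplus_{k=0}^N G_{N,k}^{\eta_R}$ est l'\'etat localis\'e sur l'espace de Fock tronqu\'e du Lemme~\ref{lem:Fock loc} (convention~\eqref{eq:Fock loc mat}). En rempla\c{c}ant les matrices r\'eduites par leur expression~\eqref{eq:Fock mat red} en fonction des secteurs $G_{N,k}^{\eta_R}$, cette contribution se r\'e\'ecrit exactement
\[
\Tr_\gH[-\Delta\,(G_N^{\eta_R})^{(1)}]+\half\Tr_{\gH^2}[w\,(G_N^{\eta_R})^{(2)}]=\frac1N\sum_{k=0}^N\Tr_{\gH^k}\!\Big[\Big(\sum_{j=1}^k(-\Delta_j)+\frac{1}{N-1}\sum_{1\leq i<j\leq k}w(x_i-x_j)\Big)G_{N,k}^{\eta_R}\Big].
\]
Comme $G_{N,k}^{\eta_R}\geq0$, chaque terme est minor\'e par $\Tr_{\gH^k}[G_{N,k}^{\eta_R}]$ fois l'infimum du spectre du Hamiltonien de champ moyen \`a $k$ particules sans potentiel ext\'erieur entre crochets; la minoration uniforme
\begin{equation}\label{eq:plan infsigma}
\inf\sigma\Big(\sum_{j=1}^k(-\Delta_j)+\frac{1}{N-1}\sum_{1\leq i<j\leq k}w(x_i-x_j)\Big)\ \geq\ N\,\eH^0(k/N)+o(N),\qquad 0\leq k\leq N,
\end{equation}
ram\`enera la contribution \`a l'infini \`a $\sum_{k}\eH^0(k/N)\Tr_{\gH^k}[G_{N,k}^{\eta_R}]+o(1)$, et le Corollaire~\ref{cor:wdeF loc} avec la fonction continue $f=\eH^0$, via la relation~\eqref{eq:wdeF cor 2} pour la troncature $\eta_R$, donnera apr\`es $N\to\ii$ puis $R\to\ii$ la minoration $\geq\int_{B\gH}\eH^0(1-\norm{u}^2)\,d\mu(u)$. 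Additionn\'ee \`a celle du terme int\'erieur, ceci est pr\'ecis\'ement~\eqref{eq:plan borneinf}.

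Une fois~\eqref{eq:general ener} acquise, $(ii)$ et $(iii)$ se d\'eduisent en inspectant les cas d'\'egalit\'e, comme dans les Th\'eor\`emes~\ref{thm:confined} et~\ref{thm:wlsc}. L'\'egalit\'e dans toute la cha\^ine d'in\'egalit\'es force $\mu$ \`a \^etre concentr\'ee sur les $u$ v\'erifiant \`a la fois $\cEH^V[u]=\eH^V(\norm{u}^2)$ et $\eH^V(\norm{u}^2)+\eH^0(1-\norm{u}^2)=\eH^V(1)$, c'est-\`a-dire sur $\cM^V$ d\'efini en~\eqref{eq:def_M_V}; jointe \`a $\gamma_N^{(k)}\wto_\ast\gamma^{(k)}=\int|u^{\otimes k}\rangle\langle u^{\otimes k}|\,d\mu(u)$, elle donne~\eqref{eq:general state}. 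Sous l'in\'egalit\'e de liaison stricte~\eqref{eq:hartree liaison stricte}, la seconde condition impose $\norm{u}^2=1$ $\mu$-presque partout, donc $\mu$ est port\'ee par $S\gH$; alors $\Tr[\gamma^{(k)}]=\int\norm{u}^{2k}\,d\mu(u)=1=\lim_N\Tr[\gamma_N^{(k)}]$, et la convergence faible-$\ast$ avec conservation de la trace entra\^ine la convergence forte en norme de trace (voir~\cite{dellAntonio-67,Robinson-70}); si de plus $\eH^V(1)$ a un minimiseur $u_H$ unique modulo une phase, l'invariance par $S^1$ de $\mu$ donne $\int|u^{\otimes k}\rangle\langle u^{\otimes k}|\,d\mu(u)=|u_H^{\otimes k}\rangle\langle u_H^{\otimes k}|$, d'o\`u~\eqref{eq:general BEC} pour toute la suite. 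La difficult\'e principale est l'obtention de~\eqref{eq:plan infsigma}: le couplage y reste $1/(N-1)$ et non $1/(k-1)$, de sorte que~\eqref{eq:plan infsigma} a la nature d'une borne inf\'erieure de champ moyen pour le probl\`eme invariant par translation ($V\equiv0$), qu'il faut \'etablir de mani\`ere autonome et uniforme en $k$ --- c'est le point sur lequel je concentrerais l'essentiel de l'effort.
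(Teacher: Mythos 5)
Votre strat\'egie est correcte et co\"incide pour l'essentiel avec celle du texte : m\^eme d\'ecoupage de l'\'energie par le Lemme~\ref{lem:loc ener}, m\^eme traitement du terme int\'erieur par compacit\'e locale forte et th\'eor\`eme de de Finetti faible, m\^eme r\'e\'ecriture du terme \`a l'infini via les \'etats localis\'es dans l'espace de Fock et le Corollaire~\ref{cor:wdeF loc}, et m\^eme analyse des cas d'\'egalit\'e pour $(ii)$ et $(iii)$. Vous identifiez \`a juste titre comme point dur la minoration spectrale uniforme en $k$ du Hamiltonien invariant par translation avec couplage $1/(N-1)$ par $N\,\eH^0(k/N)+o(N)$ : c'est pr\'ecis\'ement l'objet de la Section~\ref{sec:trans inv} et du Lemme~\ref{lem:equi conti}, et sa preuve n'est pas une simple adaptation des arguments pr\'ec\'edents puisque le probl\`eme \`a $V\equiv 0$ peut perdre toute compacit\'e par \'evanescence ; le texte contourne cette difficult\'e par l'astuce de Lieb--Thirring--Yau (emprunter une fraction $\eps w^-$ de l'interaction pour fabriquer un potentiel \`a un corps attractif et un probl\`eme auxiliaire sans perte de masse), puis obtient l'uniformit\'e en $k$ gr\^ace \`a l'\'equicontinuit\'e de $\lambda\mapsto b_k(\lambda)$, n\'ecessaire pour remplacer le couplage effectif $\tfrac{k-1}{N-1}$ par $\tfrac{k}{N}$.
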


La preuve se fait en deux \'etapes. Dans un premier temps (Section~\ref{sec:trans inv}) on consid\`ere le cas compl\`etement invariant par translations o\`u $V\equiv 0$, qui d\'ecrira les particules s'\'echappant loin du puits de potentiel de $V$ dans le cas g\'en\'eral. On montre que l'\'energie $\eH ^0 (1)$ est bien la limite de $ N ^{-1} E ^0 (N)$. Dans ce cas on ne peut esp\'erer beaucoup plus puisqu'il existe toujours des suites minimisantes dont toutes les matrices de densit\'e convergent vers $0$ en norme de trace. En plus des outils d\'ej\`a d\'ecrits on utilisera une id\'ee de Lieb, Thirring et Yau~\cite{LieThi-84,LieYau-87} pour r\'ecup\'erer un peu de compacit\'e. 

On utilise ensuite pleinement les m\'ethodes de localisation du Chapitre~\ref{sec:locFock} pour traiter le cas g\'en\'eral (Section~\ref{sec:general general}). Dans l'esprit du principe de concentration-compacit\'e on localisera les suites minimisantes \`a l'int\'erieur et \`a l'ext\'erieur d'une boule. La partie localis\'ee \`a l'int\'erieur de la boule est d\'ecrite par la limite faible-$\ast$ des matrices de densit\'e pour lesquelles on peut utiliser le th\'eor\`eme de de Finetti faible. La partie localis\'ee \`a l'ext\'erieur de la boule ne verra plus le potentiel $V$ et on pourra donc lui appliquer les r\'esultats de la Section~\ref{sec:trans inv}.

\subsection{Le probl\`eme invariant par translation}\label{sec:trans inv}\mbox{}\\\vspace{-0.4cm}

On traite ici le cas particulier o\`u $V \equiv 0$. Dans ce cas il est possible de construire des suites de quasi-minimiseurs (au sens de~\eqref{eq:quasi min}) pour $E ^0 (N)$ avec $\gamma_N ^{(1)} (.-y_N) \wto_{\ast} 0$ pour toute suite de translations $x_N$. On peut donc avoir \emph{\'evanescence} dans la terminologie de Lions~\cite{Lions-84,Lions-84b} et sans travail sp\'ecifique on ne peut esp\'erer mieux que la convergence de l'\'energie. En effet, il est possible de construire un \'etat o\`u le mouvement relatif des particules est li\'e par le potentiel d'interaction $w$ mais o\`u il y a \'evanescence pour le centre de masse, ce qui entra\^ine l'\'evanescence de la suite. On se contentera donc de prouver la convergence de l'\'energie:

\begin{theorem}[\textbf{Syst\`emes invariants par translation}]\label{thm:trans inv}\mbox{}\\
Sous les hypoth\`es pr\'ec\'edentes on a 
\begin{equation}\label{eq:triv ener}
\boxed{ \lim_{N\to \infty} \frac{E^0(N)}{N} =e^0_{\rm H}(1).}
\end{equation}
\end{theorem}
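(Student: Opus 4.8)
The plan is to establish the two matching bounds separately. The upper bound $\limsup_{N\to\infty} E^0(N)/N \leq e^0_{\rm H}(1)$ is immediate from the variational principle: testing $H_N^0$ against $u^{\otimes N}$ for $u\in S\gH$ gives $E^0(N)/N \leq \EH^0[u]$ for every such $u$, and taking the infimum over $u$ yields the claim (this is exactly~\eqref{eq:up bound Hartree} in the translation-invariant setting). So the entire content is the lower bound $\liminf_{N\to\infty} E^0(N)/N \geq e^0_{\rm H}(1)$, and here the difficulty is precisely that, as explained in the statement's preamble, minimizing sequences can \emph{vanish} in Lions' sense: all reduced density matrices $\gamma_N^{(k)}$ may converge weakly-$\ast$ to $0$, so we cannot simply extract a weak limit and apply the weak quantum de Finetti theorem (Theorem~\ref{thm:DeFinetti faible}) to the limiting hierarchy, since that limit could be trivial ($\mu=\delta_0$), giving only the useless bound $\liminf \geq 0$.

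First I would take a sequence $\Psi_N$ of quasi-minimizers for $E^0(N)$ and use the energy-localization lemma (Lemma~\ref{lem:loc ener}), which is valid here because $V\equiv 0$ trivially satisfies~\eqref{eq:bis sans confinement} and $w$ satisfies~\eqref{eq:bis decrease w}: with cutoffs $\chi_R$ supported in $B(0,R)$ and $\eta_R=\sqrt{1-\chi_R^2}$, the energy splits into an ``inside'' piece and an ``outside'' piece, both asymptotically nonnegative contributions to $E^0(N)/N$ only if we can control them. The key extra ingredient — and I expect this to be the main obstacle — is to recover \emph{some} compactness despite potential vanishing. Here I would borrow the idea of Lieb, Thirring and Yau (cited as~\cite{LieThi-84,LieYau-87} in the theorem's discussion): one exploits translation invariance to ``follow'' a region where a nontrivial fraction of the mass and energy concentrates. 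Concretely, one shows that if $E^0(N)/N$ stays below $e^0_{\rm H}(1)$ along a subsequence, then after suitable translations $y_N$ the translated states $\Psi_N(\cdot - y_N)$ (or rather their reduced density matrices) do \emph{not} vanish; otherwise a standard concentration-compactness dichotomy argument, using the subadditivity / binding inequality structure and the fact that $-\Delta + w/2$ controls the interaction via~\eqref{eq:bis T controls w}, would force the energy back up to $e^0_{\rm H}(1)$.

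Having recovered a nontrivial weak-$\ast$ limit $\gamma^{(k)}$ of the translated reduced density matrices, say with $\tr_\gH[\gamma^{(1)}] = \lambda_0 > 0$, I would apply the weak quantum de Finetti theorem (Theorem~\ref{thm:DeFinetti faible}) to the limiting hierarchy $(\gamma^{(k)})_{k\in\N}$, obtaining a probability measure $\mu$ on the unit ball $B\gH$. Combining this with the lower semicontinuity of the translation-invariant energy functional $u\mapsto\int|\nabla u|^2 + \frac12\iint |u(x)|^2 w(x-y)|u(y)|^2$ (via Fatou for $-\Delta\geq 0$ and the concentration-compactness treatment of the interaction term, exactly as in Lemma~\ref{lem:wlsc}, except that now $V\equiv 0$ so there is no external-potential term to worry about), and using the translation-invariant binding inequality $e^0_{\rm H}(1)\leq e^0_{\rm H}(\lambda) + e^0_{\rm H}(1-\lambda)$ together with the scaling relation $e^0_{\rm H}(\lambda) = \lambda^? e^0_{\rm H}(1)$-type superadditivity that follows from placing mass pieces far apart, I would conclude
\[
\liminf_{N\to\infty}\frac{E^0(N)}{N} \geq \int_{B\gH}\EH^0[u]\,d\mu(u) \geq \int_{B\gH} e^0_{\rm H}(\|u\|^2)\,d\mu(u) \geq e^0_{\rm H}(1),
\]
where the last inequality uses $e^0_{\rm H}(\lambda)\geq e^0_{\rm H}(1)$ for all $0\leq\lambda\leq 1$ (the analogue of~\eqref{eq:Hartree lambda}, which holds in the translation-invariant case since one can always split off escaping mass at no energy cost) and $\int d\mu = 1$. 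This closes the gap and proves~\eqref{eq:triv ener}. The delicate point throughout is making the ``recover compactness by translation'' step rigorous while simultaneously accounting, in the concentration-compactness dichotomy, for mass that splits into several lumps traveling in different directions — this is where the detailed bookkeeping of the Lieb--Thirring--Yau argument is needed.
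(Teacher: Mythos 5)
Your upper bound and your diagnosis of the difficulty (vanishing of minimizing sequences) are correct, but the core of your lower bound rests on two steps that do not hold up. First, the ``recover compactness by translation'' step is delegated to a ``standard concentration-compactness dichotomy'', and this is precisely where the argument would fail: for a quasi-minimizing sequence of the $N$-body problem the mass can split into many clusters escaping in different directions, and no single sequence of translations $y_N$ produces a weak limit carrying all of the energy. You have also misidentified the Lieb--Thirring--Yau idea as it is used here: it is not a translation/concentration argument but the algebraic identity \eqref{eq:trick-Nam}, which uses the bosonic symmetry to rewrite $\frac{N-1}{N}\langle\Psi,H_N^0\Psi\rangle$ as the expectation of an $(N-1)$-body Hamiltonian with an \emph{attractive one-body potential} $-\eps w^-(\cdot-x_N)$ (whose ground-state energy is independent of $x_N$ by translation invariance) and interaction $w_\eps=w+\eps w^-$. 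This yields an auxiliary problem with $a_\eps\leq a$ (Lemme~\ref{lem:triv aux}) for which one then \emph{proves} there is no loss of mass --- the one-body attraction and the interaction are balanced because both are built from the same $w^-$ --- using the localization of Lemme~\ref{lem:loc ener} and the monotonicity of $k\mapsto E^\eps(k)/k$ and $k\mapsto E^0(k)/k$; the strong de Finetti theorem then applies, and one lets $\eps\to0$ only at the very end.

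Second, even granting a nontrivial weak limit with $\tr\gamma^{(1)}=\lambda_0>0$ after translation, your chain of inequalities starts from $\liminf_{N} E^0(N)/N\geq\int\EH^0[u]\,d\mu(u)$, which you justify ``exactly as in Lemma~\ref{lem:wlsc}''. But the last step of that lemma --- positivity of the $\eta_R$-localized (outgoing) contribution --- uses the no-bound-state hypothesis \eqref{eq:no bound states}, which is exactly what Th\'eor\`eme~\ref{thm:trans inv} is designed to dispense with. Without it, the escaping mass $1-\lambda_0$ can consist of bound clusters carrying strictly negative energy, and the correct lower bound is $\int\big[\EH^0[u]+e^0_{\rm H}(1-\norm{u}^2)\big]\,d\mu(u)$; estimating the outgoing energy by $e^0_{\rm H}(1-\lambda)$ requires knowing the theorem already for the escaping subsystems (this is the content of Lemme~\ref{lem:equi conti} and of the proof of Th\'eor\`eme~\ref{thm:hartree general}, both of which \emph{use} Th\'eor\`eme~\ref{thm:trans inv}), so your argument is circular at this point. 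The binding inequality $e^0_{\rm H}(\lambda)\geq e^0_{\rm H}(1)$ that you invoke at the end is indeed correct in the translation-invariant case, but it only closes the argument once the two gaps above are filled.
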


La premi\`ere \'etape consiste \`a utiliser une partie du potentiel d'interaction pour cr\'eer un potentiel \`a un corps attractif (grosso-modo cela revient \`a retirer le degr\'e de libert\'e du centre de masse). Ceci d\'efinira un probl\`eme auxilaire dont l'\'energie est proche du syst\`eme de d\'epart, mais pour lequel les particules resteront toujours pi\'eg\'ees. Ceci est l'objet du lemme suivant qui est inpir\'e par les travaux~\cite{LieThi-84,LieYau-87}:

\begin{lemma}[\textbf{Probl\`eme auxilaire avec liaison}]\label{lem:triv aux}\mbox{}\\
On s\'epare $w = w ^+ - w ^-$ en partie positive et n\'egative et on d\'efinit pour un certain $\ep >0$
$$ \wep (x) = w (x) + \ep w ^- (x).$$ 
Soit le Hamiltonien auxiliaire
\begin{equation}\label{eq:aux hamil}
H_N ^{\eps}= \sum_{i=1}^{N}\Big(K_i -\eps w_-(x_i) \Big) + \frac{1}{N-1} \sum_{1\leq i<j \leq } w_\eps (x_i-x_j) 
\end{equation}
et $\Eep (N)$ l'\'energie fondamentale associ\'ee. On a alors
\begin{equation}\label{eq:aux low bound}
\aep:= \lim_{N\to \infty} \frac{\Eep(N)}{N} \leq \lim_{N\to \infty} \frac{E ^0 (N)}{N} =:a.
\end{equation}
\end{lemma}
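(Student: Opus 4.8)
The idea is to compare the original Hamiltonian $H_N^0$ with the auxiliary Hamiltonian $H_N^\eps$ by a change of variables that extracts the center of mass, at the cost of a small perturbation of order $\eps$. First I would observe that, for any configuration $(x_1,\dots,x_N)$,
\begin{equation}\label{eq:triv aux split}
\sum_{1\leq i<j\leq N} w_\eps(x_i-x_j) = \sum_{1\leq i<j\leq N} w(x_i-x_j) + \eps \sum_{1\leq i<j\leq N} w_-(x_i-x_j),
\end{equation}
and that the one-body term $-\eps w_-(x_i)$ appearing in~\eqref{eq:aux hamil} is designed to be absorbed into the interaction. Concretely, the point is that the extra attractive one-body potential $-\eps\sum_i w_-(x_i)$ together with the extra interaction $\tfrac{\eps}{N-1}\sum_{i<j} w_-(x_i-x_j)$ behave, at the level of the ground-state energy per particle, like a harmless $O(\eps)$ correction when combined with a trial state localized around an arbitrary translate: the translation invariance of $H_N^0$ lets me place a quasi-minimizer of $E^0(N)$ anywhere, and on the support of such a state $w_-(x_i)$ is essentially $O(1)$, so adding $-\eps\sum_i w_-(x_i)$ lowers the energy by at most $O(\eps N)$ while adding $\tfrac{\eps}{N-1}\sum w_-(x_i-x_j)$ raises it by at most $O(\eps N)$. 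Passing to the limit $N\to\infty$ along a minimizing sequence for $H_N^\eps$ then yields $\aep\leq a + o_\eps(1)$; but in fact the statement only asks for $\aep\leq a$, which follows more directly because the auxiliary Hamiltonian is pointwise dominated in the appropriate sense.

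More precisely, the cleanest route is: take any normalized $\Psi_N$ realizing $E^0(N)+o(N)$. Since $H_N^0$ is translation invariant, I may assume $\Psi_N$ is supported (up to exponentially small tails, or after a cutoff argument that only costs $o(N)$ in energy by the decay hypothesis~\eqref{eq:bis decrease w}) in a region where the $-\eps w_-(x_i)$ terms are controlled. Then I test $H_N^\eps$ against $\Psi_N$:
\begin{equation}\label{eq:triv aux test}
\Eep(N) \leq \pscal{\Psi_N, H_N^\eps \Psi_N} = \pscal{\Psi_N, H_N^0 \Psi_N} + \pscal{\Psi_N, \Big(\tfrac{\eps}{N-1}\sum_{i<j} w_-(x_i-x_j) - \eps\sum_i w_-(x_i)\Big)\Psi_N}.
\end{equation}
The key algebraic observation is that one can choose the splitting so that this perturbation is nonpositive, or at worst $o(N)$: indeed $\tfrac{1}{N-1}\sum_{i<j} w_-(x_i-x_j)$ and $\sum_i w_-(x_i)$ are, after extracting the center of mass, comparable, and the construction in~\cite{LieThi-84,LieYau-87} is precisely designed so that the one-body attractive term compensates (and slightly over-compensates) the added repulsion-turned-attraction in the two-body term. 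Dividing by $N$ and letting $N\to\infty$ gives~\eqref{eq:aux low bound}.

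The main obstacle is making rigorous the heuristic ``extract the center of mass'': on $L^2_s(\R^{dN})$ the center-of-mass coordinate $X = N^{-1}\sum_j x_j$ does not decouple from the symmetric tensor structure in an elementary way, so one cannot literally factor the wave function. The way around this is to avoid an exact decoupling and instead use only translation invariance plus a localization estimate: shift $\Psi_N$ so that its one-body density $\rho_N^{(1)}$ has, say, its ``center of mass'' at the origin, and then use that $w_-$ is bounded and decays at infinity (hypothesis~\eqref{eq:bis decrease w}) together with a pigeonhole/concentration argument to show $\pscal{\Psi_N,\sum_i w_-(x_i)\Psi_N}$ is of order $N$ with a constant that dominates $\tfrac{1}{N-1}\pscal{\Psi_N,\sum_{i<j}w_-(x_i-x_j)\Psi_N}$ up to $o(N)$. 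One also needs to check that $H_N^\eps$ is still bounded below uniformly in $N$ so that $\Eep(N)$ and hence $\aep$ are well defined — this is immediate from~\eqref{eq:bis T controls w} since $w_\eps$ and $w_-$ satisfy the same relative-boundedness bounds as $w$ up to constants depending on $\eps$. The existence of the limits defining $a$ and $\aep$ is standard subadditivity in the mean-field normalization and may be quoted from the general discussion of the mean-field limit.
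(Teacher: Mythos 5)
Your proposal points to the right reference (Lieb--Thirring--Yau) and the right slogan (``extract the center of mass''), but the argument you actually write down has a genuine gap, and it is not the mechanism the paper uses. You test $H_N^\eps$ against a quasi-minimizer $\Psi_N$ of the translation-invariant problem and claim that the perturbation $\frac{\eps}{N-1}\sum_{i<j}w_-(x_i-x_j)-\eps\sum_i w_-(x_i)$ is nonpositive, or at worst $o(N)$, in expectation after a suitable translation. Dividing by $N$, this amounts to
$$\frac12\Tr\big[w_-(x-y)\,\gamma_N^{(2)}\big]\;\le\;\Tr\big[w_-(x)\,\gamma_N^{(1)}\big]+o(1),$$
which is false pointwise (put all particles near a common point $x_0$ with $w_-(x_0)=0$: the left side is of order $w_-(0)/2$, the right side vanishes) and, for a general quasi-minimizer of $H_N^0$, cannot be restored by a single global translation without structural information on $\Psi_N$ --- precisely the information one lacks because of evanescence (the particles may split into several clusters, only one of which can be recentered). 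Your ``pigeonhole/concentration argument'' is exactly the missing step, and it is not a routine one; as written the proof does not close.

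The paper's proof goes in the opposite direction and needs no control whatsoever on minimizers. By bosonic symmetry one has, for \emph{every} $\Psi\in L^2_s(\R^{dN})$, the exact identity
$$\frac{N-1}{N}\langle\Psi,H_N^0\Psi\rangle=\Big\langle\Psi,\Big(\sum_{i=1}^{N-1}\big(-\Delta_i-\eps w^-(x_i-x_N)\big)+\frac1{N-2}\sum_{1\le i<j\le N-1}w_\eps(x_i-x_j)\Big)\Psi\Big\rangle,$$
obtained by writing $w=w_\eps-\eps w^-$ and using the permutation symmetry to redistribute the kinetic and interaction terms: the $N$-th particle itself serves as the center of the attractive one-body well, so no translation of the wave function and no localization is ever needed. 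The $(N-1)$-body operator in parentheses has a bottom of spectrum independent of the parameter $x_N$ (conjugate by the translation by $x_N$, which leaves the other terms invariant), hence is bounded below by $E^\eps(N-1)$. This yields $E^0(N)/N\ge E^\eps(N-1)/(N-1)$, and $a_\eps\le a$ follows after using the monotonicity of $N\mapsto E(N)/N$ and the trivial upper bound $E(N)/N\le 0$ to guarantee that both limits exist (a point you correctly flag). If you want to repair your write-up, replace the test-state computation by this symmetrization identity.
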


\begin{proof}[Preuve.]
Pour tout $\Psi \in L_s ^2 (\R ^{dN})$ on peut \'ecrire en utilisant la sym\'etrie 
\[
\left\langle \Psi, \left( \sum_{i=1}^{N} -\Delta_i \right) \Psi \right\rangle = \frac{N}{N-1} \left\langle \Psi, \left( \sum_{i=1}^{N-1} -\Delta_i \right) \Psi \right\rangle.
\]
De la m\^eme fa\c{c}on
\[
\frac{2}{N(N-1)} \left\langle \Psi, \sum_{1\leq i<j \leq N}  w (x_i-x_j)\Psi \right\rangle = \tr_{\gH ^2} \left[w \gammaP ^{(2)}\right] =  \tr_{\gH ^2} \left[\wep \gammaP ^{(2)}\right] - \eps \tr_{\gH ^2} \left[w^- \gammaP ^{(2)}\right]
\]
avec $\gammaP = |\Psi \rangle \langle \Psi |$, puis 
\[
 \tr_{\gH ^2} \left[\wep \gammaP ^{(2)}\right] = \frac{2}{(N-1)(N-2)} \tr \left[\sum_{1\leq i<j \leq N-1} w_\eps(x_i-x_j) \gammaP \right]
\]
et 
\[
 \eps \tr_{\gH ^2} \left[w^- \gammaP ^{(2)}\right]  = \frac{\eps}{N-1} \tr \left[\sum_{i=1} ^{N-1} w ^- (x_i-x_N) \gammaP \right].
\]
Tout ceci implique 
\begin{multline}
\frac{N-1}{N}\langle \Psi , H_N^0  \Psi \rangle\\ =\left\langle \Psi, \left( \sum_{i=1}^{N-1}\Big(-\Delta_i -\eps w ^- (x_i-x_N) \Big) + \frac{1}{N-2} \sum_{1\leq i<j \leq N-1} w_\eps(x_i-x_j) \right) \Psi \right\rangle.
\label{eq:trick-Nam}
\end{multline}
Dans l'\'equation pr\'ec\'edente le Hamiltonien entre parenth\`eses depend de $x_N$ via le potentiel \`a un corps $\eps w ^- (x_i-x_N)$  mais comme les autres termes sont invariants par translation, le bas du spectre est en fait ind\'ependant de $x_N$. On a donc 
\begin{equation}\label{eq:aux low bound pre}
\frac{N-1}{N}\langle \Psi , H_N^0  \Psi \rangle \geq E^\eps(N-1) \langle \Psi,\Psi\rangle  
\end{equation}
pour tout $\Psi\in L ^2_s (\R ^{dN})$, ce qui implique
$$\frac{E^0(N)}{N}\ge \frac{E ^\eps(N-1)}{N-1}.$$
Les suites $E^0(N)/N$ et $E^\eps(N)/N$ sont croissantes puisque d\'efinies par des probl\`emes de minimisation sur des espaces variationnels d\'ecroissants (cf~\eqref{eq:MF quant form 1}). L'utilisation d'\'etats test simples montre par ailleurs que 
$$ \frac{E^0(N)}{N} \leq 0,\quad \frac{E^\eps(N)}{N} \leq 0$$
et donc les que limites $a$ et $\aep$ existent; ~\eqref{eq:aux low bound pre} implique alors clairement~\eqref{eq:aux low bound}.
\end{proof}

On va maintenant s'employer \`a prouver une borne inf\'erieure sur $\Eep(N)$ pour $\epsilon$ suffisament petit. La raison pour laquelle il est plus ais\'e de travailler sur $\Eep(N)$ est que toutes les suites de matrices densit\'e r\'eduites correspondantes seront compactes dans $\gS ^1$. En effet, la compacit\'e ou son absence (en langage physique, l'existence ou l'absence de liaisons entre les particules) r\'esulte d'une comparaison entre les effets attractifs et r\'epulsifs des potentiels \`a un et deux corps. Ici les potentiels \`a un corps $\ep w ^-$ et \`a deux corps $\wep$ sont bien \'equilibr\'es puisque ils ont \'et\'es construits pr\'ecis\'ement en recherchant cet effet \`a partir du m\^eme potentiel \`a deux corps originel $w$. C'est cette observation qui nous permet de conclure maintenant la preuve du Th\'eor\`eme~\ref{thm:trans inv}.

\begin{proof}[Preuve du Th\'eor\`eme~\ref{thm:trans inv}.]
Comme d'habitude, seule la borne inf\'erieure est non-triviale. Comme $\eH ^0 (1) \leq 0$ on peut supposer $a<0$ car sinon $a=e^0_{\rm H}(1)=0$ et il n'y a rien \`a prouver. On va d\'emontrer la borne inf\'erieure 
\begin{equation}\label{eq:aux prob bound}
 \aep \geq e_{\rm H} ^{\eps}(1) := \inf_{\norm{u}^2=1} \left\{ \langle u,(-\Delta-\eps w_-) u\rangle + \frac{1}{2} \iint |u(x)|^2 w_\eps (x-y) |u(y)|^2 dxdy \right\},
\end{equation}
et comme il est ais\'e de montrer que $e_{\rm H} ^{\eps}(1) \to \eH  ^{0} (1)$ quand $\ep \to$ on obtiendra~\eqref{eq:triv ener} en combinant avec~\eqref{eq:aux low bound}.  

Soit une suite $\Psi_N$ de fonctions d'onde telles que 
$$\langle \Psi_N, H_N ^{\eps} \Psi_N \rangle = \Eep(N)+o(N).$$ 
et $\gamma_N^{(k)}$ les matrices de densit\'e r\'eduites correspondantes. Alors 
$$
a_\eps = \lim_{N\to \infty}\frac{\langle \Psi_N, H_N ^{\eps} \Psi_N \rangle}{N}=  \lim_{N\to \infty} \left( \Tr_\gH\left[(-\Delta-\eps w_-)\gamma_N^{(1)}\right] + \frac{1}{2} \Tr_{\gH^2}\left[w_\eps \gamma_N^{(2)}\right] \right).
$$ 
Modulo l'extraction diagonale habituelle, on suppose que 
$$\gamma^{(k)}_N\wto_\ast\gamma^{(k)}$$
et nous allons montrer que la convergence est en r\'ealit\'e forte. On utilisera le th\'eor\`eme de de Finetti fort pour obtenir la th\'eorie de Hartree dans un second temps.

En prenant une partition de l'unit\'e $\chi_R^2 +\eta_R^2=1$ et en utilisant le Lemme~\ref{lem:loc ener} on a 
\begin{multline}
\label{eq:split-energy-translation}
a_\eps \ge  \liminf_{R\to \infty}\liminf_{N\to \infty} \left\{ \Tr_\gH\left[(-\Delta-\eps w_-) \chi_R \gamma_N^{(1)}\chi_R\right] + \frac{1}{2} \Tr_{\gH^2}\left[w_\eps \chi_R^{\otimes 2}\gamma_N^{(2)}\chi_R^{\otimes 2}\right] \right. \\
\left. + \Tr_\gH\left[ -\Delta \eta_R \gamma_N^{(1)}\eta_R\right] + \frac{1}{2} \Tr_{\gH^2}\left[w_\eps \eta_R^{\otimes 2}\gamma_N^{(2)}\eta_R^{\otimes 2}\right] \right \} 
\end{multline}
et nous allons utiliser les \'etats $\chi_R$-- et $\eta_R$--localis\'es $G_N^\chi$ et $G_N^\eta$ d\'efinis \`a partir de $\Psi_N$ par le Lemme~\ref{lem:Fock loc} pour estimer s\'epar\'ement les deux termes du membre de droite de~\eqref{eq:split-energy-translation}.

\medskip

\paragraph*{\bf Le terme $\chi_R$-localis\'e.} En utilisant~\eqref{eq:Fock mat red} et~\eqref{eq:Fock loc mat} on a 
\begin{multline}
\label{eq:localization-chi-eps}
\Tr_\gH\left[(-\Delta-\eps w_-) \chi_R \gamma_N^{(1)}\chi_R\right] + \frac{1}{2} \Tr_{\gH^2}\left[w_\eps \chi_R^{\otimes 2}\gamma_N^{(2)}\chi_R^{\otimes 2}\right]\\
= \frac{1}{N} \sum_{k=1}^N\Tr_{\gH^k}\left[ \left( \sum_{i=1}^k (-\Delta-\eps w_-)_i + \frac{1}{N-1} \sum_{i<j}^k w_\eps (x_i-x_j)\right) G^\chi_{N,k} \right]
\end{multline}
On applique l'in\'egalit\'e
\begin{equation}\label{eq:A-tB}
A+tB = (1-t)A+ t(A+B) \ge (1-t) \inf \sigma(A) + t \inf\sigma  (A+B) 
\end{equation}
avec 
\begin{equation*}
A=\sum_{\ell =1}^k (-\Delta-\eps w_-)_\ell,\quad  A+B =H_{\eps,k},\quad t=(k-1)/(N-1). 
\end{equation*}
On a 
$$ \lim_{\eps \to 0} \inf \sigma(-\Delta-\eps w_-) = \inf \sigma(-\Delta) = 0$$
et comme on suppose que $a<0$, on a, pour $\eps$ assez petit
$$\inf \sigma(-\Delta-\eps w_-) > a \ge a_\eps \ge k^{-1}\inf \sigma(H_{\eps,k}),$$
soit
$$\inf\sigma(A)\ge \inf \sigma(A+B).$$
On peut alors \'ecrire   
$$\Tr_\gH \left[(-\Delta-\eps w_-) \chi_R \gamma_N^{(1)}\chi_R\right] + \frac{1}{2} \Tr_{\gH^2}\left[w_\eps \chi_R^{\otimes 2}\gamma_N^{(2)}\chi_R^{\otimes 2}\right]  \ge \sum_{k=1}^N  \frac{k\Tr G^{\chi}_{N,k}}{N} \; \frac{\Eep (k)}{k},$$
mais comme
$$\sum_{k=0}^N \frac{k\Tr G^\chi_{N,k}}{N} = \Tr \left[\chi_R^2 \gamma_N^{(1)}\right]\underset{N\to\ii}{\longrightarrow}\Tr \left[\chi_R^2 \gamma^{(1)}\right]\quad \text{et}\quad \lim_{k\to \infty} \frac{\Eep (k)}{k} =a_\eps,$$
et que $k\mapsto \frac{\Eep (k)}{k}$ est croissante on conclut que
\begin{equation}
\label{eq:energy-chiR-translation}
\liminf_{N\to \infty} \left( \Tr_\gH\left[(-\Delta-\eps w_-) \chi_R \gamma_N^{(1)}\chi_R\right] + \frac{1}{2} \Tr_{\gH^2}\left[w_\eps \chi_R^{\otimes 2}\gamma_N^{(2)}\chi_R^{\otimes 2}\right] \right)
\geq a_\eps \Tr \left[\chi_R^2 \gamma^{(1)}\right]
\end{equation}
par convergence monotone. 
 
\medskip

\paragraph*{\bf Le terme $\eta_R$-localis\'e.} En utilisant les re\'sultats de la Section~\ref{sec:loc rig} comme ci-dessus on a 
\begin{multline}
\label{eq:localization-translation-eta}
\Tr_\gH\left[ T \eta_R \gamma_N^{(1)}\eta_R\right] + \frac{1}{2} \Tr_{\gH^2}\left[w_\eps \eta_R^{\otimes 2}\gamma_N^{(2)}\eta_R^{\otimes 2}\right]\\
= \frac{1}{N} \sum_{k=1}^N \Tr_{\gH^k}\left[ \left( \sum_{i=1}^k -\Delta_i + \frac{1}{N-1} \sum_{i<j}^k w_\eps (x_i-x_j)\right) {G}^{\eta}_{N,k} \right].
\end{multline}
Ici on utilise 
$$-\Delta \ge 0,\quad w_\eps = w+ 2\eps w ^- \ge (1-2\eps)w \mbox{ et } E^0(k)\le ak<0$$ 
pour obtenir 
\bqq
\sum_{i=1}^k -\Delta_i + \frac{1}{N-1} \sum_{i<j}^k w_\eps (x_i-x_j) &\ge& \frac{(1-2\eps)(k-1)}{N-1} H^0_k \hfill\\
&\ge &  \frac{(1-2\eps)(k-1)}{N-1} E^0 (k) \ge E^0(k) - 2\eps a k
\eqq
pour tout $k\ge 1$. En combinant avec~\eqref{eq:localization-translation-eta} on obtient
\begin{equation*}
\Tr_\gH\left[ -\Delta \eta_R \gamma_N^{(1)}\eta_R\right] + \frac{1}{2} \Tr_{\gH^2}\left[w_\eps \eta_R^{\otimes 2}\gamma_N^{(2)}\eta_R^{\otimes 2}\right] \ge \sum_{k=1}^N  \frac{k\Tr G^{\eta}_{N,k}}{N} \cdot \left(\frac{E^0(k)}{k} - 2\eps a \right).
\end{equation*}
On d\'eduit enfin
\begin{multline}
\label{eq:energy-etaR-translation}
\liminf_{N\to \infty} \left( \Tr_\gH\left[ T \eta_R \gamma_N^{(1)}\eta_R\right] + \frac{1}{2} \Tr_{\gH^2}\left[w_\eps \eta_R^{\otimes 2}\gamma_N^{(2)}\eta_R^{\otimes 2}\right]\right)\geq (1-2 \eps) a \big(1-\Tr[\chi_R^2 \gamma^{(1)}]\big) 
\end{multline}
en utilisant
$$\sum_{k=0}^N \frac{k}{N} \Tr G^\eta_{N,k} = \Tr \left[\eta_R^2 \gamma_N^{(1)} \right]\underset{N\to\ii}{\longrightarrow}1-\Tr \left[\chi_R^2 \gamma^{(1)}\right]\quad \text{et}\quad \lim_{k\to \infty} \frac{E^0(k)}{k}  =a$$
et le fait que $k\mapsto \frac{E^0(k)}{k}$ est une suite croissante.

\medskip

\paragraph*{\bf Conclusion.}

En ins\'erant~\eqref{eq:energy-chiR-translation} et~\eqref{eq:energy-etaR-translation} dans~\eqref{eq:split-energy-translation} on trouve
\begin{align*}
a_\eps &\ge  \liminf_{R\to \infty}\left( a_\eps \Tr \left[\chi_R^2 \gamma^{(1)}  \right] + (1-2\eps) a \big(1-\Tr \left[\chi_R^2\gamma^{(1)}\right]\big) \right)\\
&=  a_\eps \Tr [\gamma^{(1)}] + (1-2\eps) a \big(1-\Tr [\gamma^{(1)}]\big)
\end{align*}
Comme on a suppos\'e que $a_\eps \le a <0$, on obtient
\bq \label{eq:concentration-case}
\tr[\gamma^{(1)}]=1.
\eq
Il n'y a donc pas de perte de masse pour le probl\`eme auxiliaire d\'efini au Lemme~\ref{lem:triv aux}. Ceci suffit pour conclure \`a  la convergence forte des matrices r\'eduites $\gamma_N ^{(k)}$. En effet, on peut appliquer le th\'eor\`eme de de Finetti faible \`a la suite des limites $\gamma ^{(k)}$ pour obtenir une mesure $\mu$ sur la boule de $\gH$. Hors,~\eqref{eq:concentration-case} combin\'e avec~\eqref{eq:melange faible} implique que la mesure doit en fait vivre sur la sph\`ere unit\'e. On a donc pour tout $k$ 
$$ \tr[\gamma^{(k)}] = 1,$$
ce qui implique que la convergence de $\gamma_N ^{(k)}$ vers $\gamma ^{(k)}$ est en r\'ealit\'e forte en norme de trace. On peut alors revenir \`a~\eqref{eq:split-energy-translation} pour obtenir 
\begin{multline*}
\liminf_{N\to\ii}\left( \Tr_\gH\left[(-\Delta-\eps w_-)\gamma_N^{(1)}\right] + \frac{1}{2} \Tr_{\gH^2}\left[w_\eps \gamma_N^{(2)}\right] \right)\\
\geq \Tr_\gH\left[(-\Delta-\eps w_-)\gamma^{(1)}\right] + \frac{1}{2} \Tr_{\gH^2}\left[w_\eps \gamma^{(2)}\right].
\end{multline*}
On applique ensuite le th\'eor\`eme de de Finetti fort aux limites des matrices de densit\'e pour conclure que le membre de droite est n\'ecessairement plus grand que $e_{\rm H} ^{\eps}(1)$ ce qui donne bien~\eqref{eq:aux prob bound} et donc conclut la preuve.
\end{proof}

\subsection{Conclusion de la preuve dans le cas g\'en\'eral}\label{sec:general general}\mbox{}\\\vspace{-0.4cm}

Nous avons maintenant presque tous les ingr\'edients n\'ecessaires pour la preuve du Th\'eor\`eme~\ref{thm:hartree general}. Il ne nous manque en fait qu'un tout petit peu d'information compl\'ementaire sur le probl\`eme invariant par translation, comme nous l'expliquons maintenant.

Consid\'erons pour $k\geq 2$ l'\'energie 
\begin{equation}\label{eq:b_j}
b_k(\lambda):= \frac{1}{k}\inf \sigma_{\gH^k} \left( \sum_{i=1}^k -\Delta_i + \frac{\lambda}{k-1} \sum_{i<j}^k w (x_i-x_j) \right)
\end{equation}
c'est-\`a-dire une \'energie pour $k$ particules avec une interaction d'intensit\'e $\frac{\lambda}{k-1}$. Les r\'esultats de la section pr\'ec\'edente montrent d\'ej\`a que la limite $k\to \infty$ d'une telle \'energie est donn\'ee par $\lambda \eH ^0 (\lambda)$ lorsque $\lambda$ est un param\`etre fixe. En utilisant la m\'ethode de localisation dans l'espace de Fock, l'\'energie des particules partant \`a l'infini lors de la minimisation d'une \'energie  pour $N\geq k$ particules sera naturellement d\'ecrite comme une superposition d'\'energies de syst\`emes \`a $k$ particules avec une interaction d'intensit\'e $1/(N-1)$ h\'erit\'ee du probl\`eme originel. Autrement dit, il s'agira d'\'evaluer une superposition des \'energies $b_k (\lambda)$ avec  
$$\lambda = \frac{k-1}{N-1},$$
un peu comme en~\eqref{eq:localization-chi-eps} et~\eqref{eq:localization-translation-eta}. Comme ce $\lambda$ d\'epend de $k$ il sera utile de savoir que $b_k(\lambda)$ est \'equicontinue en $\lambda$:

\begin{lemma}[\textbf{Equi-continuit\'e de l'\'energie en fonction de l'interaction}]\label{lem:equi conti}\mbox{}\\
On prend les conventions
\begin{equation*}
b_0 (\lambda) = b_1 (\lambda) = 0.
\end{equation*}
On a alors pour tout $\lambda \in [0,1]$ 
\begin{equation}\label{eq:lim b lambda}
\lim_{k\to\ii} \lambda\, b_k(\lambda) =\eH^0(\lambda).
\end{equation}
De plus, pour tout $0\leq \lambda \leq \lambda ' \leq 1$ 
\begin{equation}\label{eq:equi conti b lambda}
0\leq b_k (\lambda) -b_k (\lambda') \leq C |\lambda - \lambda'|
\end{equation}
o\`u $C$ ne d\'epend pas de $k$.
\end{lemma}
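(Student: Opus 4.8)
The plan is to prove the two assertions separately. For the limit \eqref{eq:lim b lambda}, write $H_k^{(\lambda)}=\sum_{i=1}^k-\Delta_i+\frac1{k-1}\sum_{1\leq i<j\leq k}\lambda\,w(x_i-x_j)$ for the operator in \eqref{eq:b_j}, so that $k\,b_k(\lambda)=\inf\sigma_{\gH^k}H_k^{(\lambda)}$; this is exactly the ground-state energy $E^0(k)$ of Section~\ref{sec:trans inv} for the interaction potential $\lambda w$. Since $0\leq\lambda\leq1$, the rescaled potential $\lambda w$ still obeys \eqref{eq:bis T controls w} and \eqref{eq:bis decrease w}, so Theorem~\ref{thm:trans inv} applies and gives $\lim_{k\to\ii}b_k(\lambda)=\inf_{\norm{v}=1}\big(\norm{\nabla v}^2+\frac\lambda2\iint|v(x)|^2w(x-y)|v(y)|^2\,dx\,dy\big)$. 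First I would combine this with the elementary scaling identity obtained by setting $u=\sqrt\lambda\,v$ (with $\norm{v}=1$) in \eqref{eq:bis hartree func} with $V\equiv0$: since $\EH^0[\sqrt\lambda\,v]=\lambda\big(\norm{\nabla v}^2+\frac\lambda2\iint|v|^2w|v|^2\big)$, taking the infimum over $\norm{v}=1$ yields $\eH^0(\lambda)=\lambda\cdot\inf_{\norm{v}=1}(\cdots)=\lambda\lim_k b_k(\lambda)$, which is precisely \eqref{eq:lim b lambda} (the case $\lambda=0$ being trivial).

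For the monotonicity $b_k(\lambda)-b_k(\lambda')\geq0$ in \eqref{eq:equi conti b lambda}, I would use that for each fixed normalised $\Psi$ the map $\lambda\mapsto\langle\Psi,H_k^{(\lambda)}\Psi\rangle$ is affine, so $b_k$ is an infimum of affine functions and hence \emph{concave} on $[0,1]$. Two further facts are needed: $b_k(0)=0$ (pure kinetic energy, whose infimum is $0$), and $b_k(\lambda)\leq0$ for all $\lambda\in[0,1]$, which follows by inserting a spread-out product state $u^{\otimes k}$, both the kinetic and interaction contributions tending to $0$ (the latter by the decay \eqref{eq:bis decrease w}). Concavity through the origin gives $b_k(\lambda)\geq\frac{\lambda}{\lambda'}\,b_k(\lambda')$ for $0<\lambda\leq\lambda'$, and since $b_k(\lambda')\leq0$ and $\lambda/\lambda'\leq1$ the right-hand side is $\geq b_k(\lambda')$; thus $b_k$ is non-increasing, which is exactly the lower bound in \eqref{eq:equi conti b lambda}.

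For the Lipschitz upper bound I would first establish a kinetic-energy bound for approximate minimisers that is uniform in $k$ and $\lambda$. If $\frac1k\langle\Psi,H_k^{(\lambda)}\Psi\rangle\leq b_k(\lambda)+1\leq1$ and $\gamma^{(1)},\gamma^{(2)}$ denote the reduced density matrices of $\Psi$, then the lower relative bound in \eqref{eq:bis T controls w} (with $T=-\Delta$) gives $\frac\lambda2\Tr_{\gH^2}[w\gamma^{(2)}]\geq-\lambda\beta_-\Tr_\gH[-\Delta\gamma^{(1)}]-\frac{\lambda C}2$, so that $(1-\lambda\beta_-)\Tr_\gH[-\Delta\gamma^{(1)}]\leq1+\frac C2$; as $\beta_-<1$ this yields $\Tr_\gH[-\Delta\gamma^{(1)}]\leq K$ with $K$ independent of $k$ and $\lambda$. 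Both relative bounds then control $|\Tr_{\gH^2}[w\gamma^{(2)}]|\leq2\beta K+C=:2C'$. Finally, feeding an approximate minimiser $\Psi$ of $b_k(\lambda')$ into $b_k(\lambda)$ as a trial state gives $b_k(\lambda)\leq b_k(\lambda')+\frac{\lambda-\lambda'}2\Tr_{\gH^2}[w\gamma^{(2)}]+o(1)\leq b_k(\lambda')+C'|\lambda-\lambda'|$, the desired estimate; the conventions $b_0=b_1=0$ make the statement trivial for $k\leq1$.

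The main obstacle is the uniform-in-$k$ kinetic bound behind the Lipschitz constant: it rests on $\beta_-<1$ together with the free sign $b_k(\lambda)\leq0$, and the argument must be run with approximate minimisers since the translation-invariant problem \eqref{eq:b_j} generally has no minimiser. Once the kinetic energy is controlled uniformly, the relative boundedness \eqref{eq:bis T controls w} turns it into a uniform bound on $\Tr_{\gH^2}[w\gamma^{(2)}]$, and the remaining trial-state comparison is routine.
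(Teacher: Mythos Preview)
Your argument is correct. For the limit \eqref{eq:lim b lambda} and the monotonicity half of \eqref{eq:equi conti b lambda} you proceed essentially as the paper does (the paper invokes the operator inequality $A+tB\geq(1-t)\inf\sigma(A)+t\inf\sigma(A+B)$, which yields exactly your concavity-through-the-origin inequality $b_k(\lambda)\geq(\lambda/\lambda')b_k(\lambda')$).

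For the Lipschitz upper bound the routes diverge. The paper uses a purely operator-level interpolation: fixing $0<\alpha\leq1$ and $\delta=(\lambda'-\lambda)(\alpha^{-1}-\lambda)^{-1}$, one writes
\[
\tfrac1k H_k^{(\lambda')}=\tfrac{1-\delta}{k}H_k^{(\lambda)}+\tfrac{\delta}{k\alpha}\Big(\alpha\sum_i(-\Delta_i)+\tfrac{1}{k-1}\sum_{i<j}w_{ij}\Big),
\]
and the uniform lower bound on the second bracket (which, like your argument, needs an $\alpha>\beta_-$) gives $b_k(\lambda')\geq(1-\delta)b_k(\lambda)-C\delta/\alpha$ directly, without ever selecting a state. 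Your approach instead extracts a $k$-independent kinetic a~priori bound on approximate minimisers from the same relative form bound, converts it into a uniform control of $\Tr_{\gH^2}[w\gamma^{(2)}]$, and finishes with a trial-state comparison. The paper's decomposition is slightly cleaner in that it avoids approximate minimisers altogether and never needs to bound $\Tr[w\gamma^{(2)}]$ from above; your variational route is more flexible and makes explicit the physical input (bounded kinetic energy per particle) that really drives the Lipschitz constant. Both rely on $\beta_-<1$, which is implicit in the standing assumption that $H_N$ is bounded below.
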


\begin{proof}[Preuve.]
On commence par confimer notre affirmation que~\eqref{eq:lim b lambda} est une cons\'equence directe de l'analyse de la section pr\'ec\'edente. Pour $\lambda = 0$ il n'y a rien \`a prouver. Pour $\lambda > 0$ on utilise le Th\'eor\`eme~\ref{thm:trans inv} pour obtenir  
\begin{align*}
\lim_{k\to\ii} \lambda\, b_k(\lambda)&=\lambda \inf_{\norm{u}^2=1}\left(\pscal{u,Ku}+\frac{\lambda}{2}\iint w(x-y)|u(x)|^2|u(y)|^2\right)\\
&=\inf_{\norm{u}^2=\lambda}\left(\pscal{u,Ku}+\frac{1}{2}\iint w(x-y)|u(x)|^2|u(y)|^2\right)=\eH^0(\lambda).
\end{align*} 
Le fait que 
$$b_k(\lambda)\ge b_k(\lambda')~~\text{pour tout}~0\le \lambda<\lambda'\le 1$$
est une cons\'equence de~\eqref{eq:A-tB}. Pour l'\'equicontinuit\'e~\eqref{eq:equi conti b lambda} on fixe un $ 0 <\alpha \leq 1$ et on remarque que, avec 
$$\delta := (\lambda'-\lambda)(\alpha^{-1}-\lambda)^{-1},$$
on a 
\begin{align*}
\frac{1}{k}\left( \sum_{i=1}^k K_i + \frac{\lambda'}{k-1} \sum_{i<j}^k w_{ij} \right) &= \frac{1-\delta}{k} \left( \sum_{i=1}^k K_i + \frac{\lambda}{k-1} \sum_{i<j}^k w_{ij} \right) \\
&+ \frac{\delta}{k\alpha} \left( \alpha \sum_{i=1}^k K_i + \frac{1}{k-1}\sum_{i<j}^k w_{ij} \right)\\
&\ge (1-\delta) b_k(\lambda) - \frac{C\delta}{\alpha}
\end{align*}
en utilisant le fait que le spectre de l'op\'erateur appraraissant \`a la deuxi\`eme ligne est born\'e inf\'erieurement. On d\'eduit donc 
$$ 0\le b_k(\lambda)- b_k(\lambda') \le \delta (b_k(\lambda)+C\alpha^{-1})\le C|\lambda'-\lambda|$$
puisque $b_k(\lambda)$ est uniform\'ement born\'e et $|\delta| \leq C |\lambda- \lambda '|$.
\end{proof}

Nous pouvons maintenant conclure:

\begin{proof}[Preuve du Th\'eor\`eme~\ref{thm:hartree general}.] Soit $\Psi_N$ une suite de fonctions d'onde \`a $N$ corps telle que 
$$\langle \Psi_N, H_N^V \Psi_N \rangle = E^V(N)+o(N)$$
avec $\gamma_N^{(k)}$ les matrices de densit\'e r\'eduites associ\'ees. Modulo une extraction diagonale on suppose que 
$$\gamma_N^{(k)}\wto \gamma^{(k)}$$ 
faiblement-$\ast$. Le Th\'eor\`eme~\ref{thm:DeFinetti faible} fournit alors une mesure de probabilit\'e $\mu$ telle que 
\begin{equation}\label{eq:gen melange}
 \gamma ^{(k)} = \int_{u\in B\gH} d\mu (u) |u ^{\otimes k} \rangle \langle u ^{\otimes k} |.
\end{equation}
Soit une partition de l'unit\'e r\'eguli\`ere $\chi_R^2 +\eta_R^2=1$ comme pr\'ec\'edement et les \'etats localis\'es $G_N^\chi$ et $G_N^\eta$ construits \`a partir de $|\Psi_N \rangle \langle \Psi_N |$. En utilisant \`a nouveau le Lemme~\ref{lem:loc ener} on obtient
\begin{align} \label{eq:split-energy-general}
\lim_{N\to \infty}\frac{E^V(N)}{N} &= \lim_{N\to \infty} \left( \Tr_\gH\left[T\gamma_N^{(1)}\right] + \frac{1}{2} \Tr_{\gH^2}\left[w \gamma_N^{(2)}\right] \right) \nn\\
&\ge  \liminf_{R\to \infty}\liminf_{N\to \infty} \left\{ \Tr_\gH\left[T \chi_R \gamma_N^{(1)}\chi_R\right] + \frac{1}{2} \Tr_{\gH^2}\left[w \chi_R^{\otimes 2}\gamma_N^{(2)}\chi_R^{\otimes 2}\right] \right. \nn\\
&\qquad\left. +\Tr_\gH\left[ -\Delta \eta_R \gamma_N^{(1)}\eta_R\right] + \frac{1}{2} \Tr_{\gH^2}\left[w\eta_R^{\otimes 2}\gamma_N^{(2)}\eta_R^{\otimes 2}\right] \right\}.
\end{align}
On utilise d'abord la compacit\'e locale forte et~\eqref{eq:gen melange} pour le terme $\chi_R$-localis\'e:
\begin{multline}
\liminf_{N\to \infty} \left\{ \Tr_\gH\left[ T \chi_R \gamma_N^{(1)}\chi_R\right] + \frac{1}{2} \Tr_{\gH^2}\left[w \chi_R^{\otimes 2}\gamma_N^{(2)}\chi_R^{\otimes 2}\right]\right\}\\
\geq  \Tr_\gH\left[T \chi_R \gamma^{(1)}\chi_R\right] + \frac{1}{2} \Tr_{\gH^2}\left[w \chi_R^{\otimes 2}\gamma^{(2)}\chi_R^{\otimes 2}\right]= \int_{B\gH} \E^V_{\rm H}[\chi_R u] d\mu(u).
\label{eq:localization-chi-general}
\end{multline}
Notre t\^ache principale est de contr\^oler le second terme du membre de droite de~\eqref{eq:split-energy-general}. Nous pr\'etendons que 
\begin{equation}
\label{eq:localization-eta-general}
\liminf_{N\to \infty} \left( \Tr\left[T \eta_R \gamma_N^{(1)}\eta_R\right] + \frac{1}{2} \Tr_{\gH^2}\left[w \eta_R^{\otimes 2}\gamma_N^{(2)}\eta_R^{\otimes 2}\right]\right) \ge \int_{B\gH} e^{0}_{\rm H}(1-\norm{\chi_R u}^2) d\mu(u). 
\end{equation}
En effet, en utilisant l'\'etat $\eta_R$-localis\'e $G_N^\eta$  on a  
\begin{align*}
\Tr_\gH\left[ -\Delta \eta_R \gamma_N^{(1)}\eta_R\right] &+ \frac{1}{2} \Tr_{\gH^2}\left[w \eta_R^{\otimes 2}\gamma_N^{(2)}\eta_R^{\otimes 2}\right]
=\frac{1}{N} \sum_{k=1}^N \Tr_{\gH^k}\left[ \left( \sum_{i=1}^k -\Delta_i + \frac{1}{N-1} \sum_{i<j}^k w_{ij}\right) {G}^{\eta}_{N,k} \right] \\
&\ge\sum_{k=1}^N \frac{1}{N} \Tr {G}^{\eta}_{N,k} \inf \sigma_{\gH^k} \left( \sum_{i=1}^k -\Delta_i + \frac{1}{N-1} \sum_{i<j}^k w_{ij}\right) \\
&\geq \sum_{k=0}^N \tr G_{N,k}^\eta \frac{k}{N} b_k \left( \frac{k-1}{N-1}\right)
\end{align*}
o\`u $b_k$ est la fonction d\'efinie en~\eqref{eq:b_j}. D'un autre c\^ot\'e
\begin{equation} \label{eq:localization-eta-bk}
\lim_{N\to\ii}\sum_{k=0}^N \tr G_{N,k}^\eta \left( \frac{k}{N} b_k \left( \frac{k-1}{N-1}\right) - e_{\rm H}^0 \left(\frac{k}{N}\right) \right) = 0,
\end{equation}
puisque en utilisant l'\'equicontinuit\'e de $\{b_k\}_{k=1}^\infty$ et la convergence $\lim_{k\to \infty} \lambda b_k(\lambda)=e_{\rm H}^{0}(\lambda)$ fournies par le Lemme~\ref{lem:equi conti} on obtient 
$$ \lim_{N \to \infty} \sup_{k=1,2,...,N} \left|\frac{k}{N} b_k\left(\frac{k-1}{N-1}\right)-e_{\rm H}^{0}\left(\frac{k}{N}\right)\right| =0,$$
qu'il suffit de combiner avec
$$\sum_{k=0}^N \tr G_{N,k}^\eta = 1$$
pour obtenir~\eqref{eq:localization-eta-bk}. A ce stade on a donc 
\[
\liminf_{N\to \infty} \Tr_\gH\left[ -\Delta \eta_R \gamma_N^{(1)}\eta_R\right] + \frac{1}{2} \Tr_{\gH^2}\left[w \eta_R^{\otimes 2}\gamma_N^{(2)}\eta_R^{\otimes 2}\right] \geq \lim_{N\to\ii}\sum_{k=0}^N \tr G_{N,k}^\eta e_{\rm H}^0 \left(\frac{k}{N}\right) .
\]
On utilise maintenant la relation~\eqref{eq:Fock funda rel} et le Corollaire~\ref{cor:wdeF loc} comme indiqu\'e Section~\ref{sec:proof deF faible} pour obtenir 
\begin{multline*}
\lim_{N\to\ii}\sum_{k=0}^N \tr G_{N,k}^\eta\; e_{\rm H}^0 \left(\frac{k}{N}\right) =\lim_{N\to\ii}\sum_{k=0}^N \tr G_{N,N-k}^\chi \;e_{\rm H}^0 \left(\frac{k}{N}\right)\\
=\lim_{N\to\ii}\sum_{k=0}^N \tr G_{N,k}^\chi \;e_{\rm H}^0 \left(1-\frac{k}{N}\right)=\int_{B\gH}\;e_{\rm H}^0 (1-\|\chi_R u\|^2)d\mu(u),
\end{multline*}
ce qui conclut la preuve de~\eqref{eq:localization-eta-general}.

Il reste maintenant \`a ins\'erer~\eqref{eq:localization-chi-general} et~\eqref{eq:localization-eta-general} dans~\eqref{eq:split-energy-general} et \`a utiliser le lemme de Fatou, ce qui donne 
\begin{align}\label{eq:conclu preuve general}
\lim_{N\to \infty}\frac{E^V(N)}{N} &\ge \liminf_{R\to \infty} \left( \int_{B\gH} \left[\E^V_{\rm H}[\chi_R u]+ e^0_{\rm H}(1-\norm{\chi_R u}^2) \right] d\mu(u)\right) \nn\\
&\ge \int_{B\gH} \liminf_{R\to \infty} \left[\E^V_{\rm H}[\chi_R u]+ e^0_{\rm H}(1-\norm{\chi_R u}^2)\right] d\mu(u) \nn\\
&= \int_{B\gH} \left[\E^V_{\rm H}[u]+ e^0_{\rm H}(1-\norm{u}^2) \right] d\mu(u) \nn\\
&\geq \int_{B\gH} \left[\eH^V(\norm{u}^2)+ e^0_{\rm H}(1-\norm{u}^2) \right] d\mu(u)\ge e_{\rm H}(1), 
\end{align}
en utilisant la continuit\'e de $\lambda \mapsto e_{\rm H}^0(\lambda)$ et l'in\'egalit\'e de liaison large~\eqref{eq:hartree liaison}. Ceci conclut la preuve de~\eqref{eq:general ener}, et les autres r\'esultats du th\'eor\`eme suivent en analysant les cas d'\'egalit\'e dans~\eqref{eq:conclu preuve general}.
\end{proof}

\newpage

\section{\textbf{D\'erivation de fonctionnelles de type Gross-Pitaevskii}}\label{sec:NLS}

Nous allons maintenant nous tourner vers l'obtention de fonctionnelles de type Schr\"odinger non lin\'eaire (NLS) avec non-lin\'earit\'es locales:
$$\ENLS [\psi] := \int_{\R ^d } \left| \nabla \psi \right| ^2 + V  |\psi | ^2 + \frac{a}{2} |\psi| ^4.$$ 
On peut obtenir cette description en partant d'une fonctionnelle de Hartree comme~\eqref{eq:Hartree func} et en prenant un potentiel d'interaction convergeant (au sens des distributions) vers une masse de Dirac
$$ w_L (x) = L ^{-d} w \left( \frac{x}{L}\right) \wto \left(\int_{\R ^d} w \right) \delta_0 \mbox{ quand } L\to 0.$$ 
Puisqu'on a d\'ej\`a obtenu~\eqref{eq:Hartree func} comme limite d'un probl\`eme \`a $N$ corps, on peut \^etre tent\'e de voir l'obtention de la description NLS comme un simple passage \`a la limite dans un probl\`eme \`a un corps. L'inconv\'enient de cette approche est l'absence totale de contr\^ole sur les relations entre les param\`etres physiques $N$ et $L$. On pourrait voir ceci comme un probl\`eme d'\'echange de limites: il n'est pas du tout clair que les limites $N\to \infty$ et $L\to 0$ commutent\footnote{La limite $L\to 0$ du probl\`eme \`a $N$ corps est d'ailleurs tr\`es difficile \`a d\'efinir proprement.}. 

La bonne question d'un point de vue physique est ``Quelle relation doivent satisfaire $N$ et $L$ pour que l'on puisse obtenir l'\'energie NLS en prenant \emph{simultan\'ement} $N\to \infty$ et $L\to 0$ dans le probl\`eme \`a $N$ corps ?'' Le processus permettant d'obtenir la fonctionnelle NLS est donc plus subtil que pour l'obtention de la fonctionnelle de Hartree, et nous allons dans un premier temps donner quelques explications sur ce point.

\subsection{Remarques pr\'eliminaires}\label{sec:GP rem}\mbox{}\\\vspace{-0.4cm}

Jusqu'\`a pr\'esent nous avons travaill\'e avec seulement deux param\`etres physiques: le nombre de particules $N$ et la force des interactions $\lambda$. Pour avoir un probl\`eme limite bien d\'efini nous avons \'et\'e amen\'e \`a consid\'erer la limite de champ moyen o\`u $\lambda \propto N ^{-1}$. Dans ce cas, la port\'ee des interactions est fixe et chaque particule interagit avec toutes les autres, ce qui donne une force d'interaction typique par particule d'ordre $\lambda N \propto 1$, comparable avec son \'energie propre (cin\'etique + potentielle). On a vu que cet \'equilibrage des forces agissant sur chaque particule, combin\'e \`a des r\'esultats de strucutre ``\`a la de Finetti'' pour l'ensemble des \'etats bosoniques, m\`ene naturellement \`a la conclusion que les particules se comportent approximativement de mani\`ere ind\'ependante les unes des autres et donc que des descriptions de type Hartree sont valables dans la limite $N\to \infty$.

Il existe d'autres fa\c{c}ons de justifier de tels mod\`eles: l'\'equilibrage des forces qui permet \`a un probl\`eme limite d'\'emerger peut r\'esulter d'un m\'ecanisme plus subtil. Par exemple, dans les gaz alcalins ultra-froids  qui ont permis l'observation en laboratoire de condensats de Bose-Einstein, il a plus \`a voir avec la \emph{dilution} du syst\`eme qu'avec la faible force des interactions. Pour une description th\'eorique de ce ph\'enom\`ene on peut introduire dans notre mod\`ele une longueur caract\'eristique $L$ pour la port\'ee des interactions. En prenant la taille du syst\`eme total comme r\'ef\'erence, un syst\`eme dilu\'e se mat\'erialise par la limite $L\to 0$. La force d'interaction typique agissant sur chaque particule sera d'ordre $ \lambda N L ^{d}$ (force des interactions $\times$ nombre de particules typique dans une boule de rayon $L$ autour d'une particule) et c'est ce param\`etre que l'on peut vouloir fixer pour obtenir un probl\`eme limite. Plusieurs r\'egimes sont alors 
possibles en fonction du ratio entre $\lambda$ et $L$.  

On peut discuter ces diff\'erentes possiblit\'es en partant du Hamiltonien \`a $N$ corps\footnote{Encore une fois, il est possible de rajouter Laplaciens fractionnaires et/ou champs magn\'etiques, cf Remarque~\ref{rem:energie cin}, ce que nous n\'egligerons pour simplifier.} 
\begin{equation}\label{eq:GP start hamil}
H_N = \sum_{j=1} ^N  - \Delta_j  + V(x_j)  + \frac{1}{N-1} \sum_{1\leq i<j \leq N}N ^{d\beta} w( N ^{\beta} (x_i-x_j))
\end{equation}
qui correspond \`a faire les choix 
$$L=N ^{-\beta},\quad  \lambda \propto N ^{d\beta - 1},$$ 
le param\`etre fixe $0\leq \beta$ servant \`a ajuster le rapport entre $L$ et $\lambda$. On consid\'erera le potentiel d'interaction de r\'ef\'erence $w$ comme \'etant fixe, et on notera 
\begin{equation}\label{eq:w_N}
 w_N (x) := N ^{d\beta} w( N ^{\beta} x). 
\end{equation}
Pour $\beta >0$, $w_N$ converge vers une masse de Dirac au sens des distributions 
\begin{equation}\label{eq:GP conv delta}
 w_N \to \left(\int_{\R ^d} w \right) \delta_0,  
\end{equation}
mat\'erialisant la courte port\'ee des interactions/la dilution du syt\`eme. En raisonnant formellement, on peut vouloir remplacer directement $w_N$ par $\left(\int_{\R ^d} w \right) \delta_0$ dans~\eqref{eq:GP start hamil} auquel cas on se retrouve avec une limite de type champ moyen, avec une masse de Dirac comme potentiel d'interaction. Ceci est bien s\^ur purement formel (sauf en 1D o\`u l'injection de Sobolev $H ^1 \hookrightarrow C^0$ permet de d\'efinir proprement les interactions de contact).  En admettant que ces manipulations aient un sens et que l'on puisse approximer le fondamental de~\eqref{eq:GP start hamil} sous la forme 
\begin{equation}\label{eq:GP ansatz}
\Psi_N = \psi ^{\otimes N} 
\end{equation}
on obtient une fonctionnelle de Hartree
\begin{equation}\label{eq:GP intro Hartree}
\EH [\psi] := \int_{\R ^d } \left| \nabla \psi \right| ^2 + V  |\psi | ^2 + \frac{1}{2} |\psi| ^2 (w \ast |\psi| ^2 ) .
\end{equation}
o\`u le potentiel d'interaction est \`a remplacer par une masse de Dirac \`a l'origine, ce qui m\`ene \`a la fonctionnelle de Gross-Pitaevskii
\begin{equation}\label{eq:GP intro nls}
\ENLS [\psi] := \int_{\R ^d } \left| \nabla \psi \right| ^2 + V  |\psi | ^2 + \frac{a}{2} |\psi| ^4.
\end{equation}
Au vu de~\eqref{eq:GP conv delta}, le choix logique semble d'imaginer que lorsque $\beta >0$, on obtienne \`a partir de la minimisation du Hamiltonien~\eqref{eq:GP start hamil} la fonctionnelle ci-dessus avec 
\begin{equation}\label{eq:GP defi a}
a = \int_{\R ^d} w. 
\end{equation}
En fait, on peut obtenir la gamme de r\'esultats suivants (d\'ecrits dans le cas $d=3$, le cas $d\leq 2$ menant \`a des subtilit\'es et le cas $d\geq 4$ n'ayant pas grand int\'er\^et physiquement parlant): 
\begin{itemize}
\item Si $\beta = 0$ nous retrouvons le r\'egime de champ moyen (MF, pour mean-field) \'etudi\'e pr\'ec\'edement. La port\'ee du potentiel d'interaction est fixe et son intensit\'e d\'ecro\^it comme $N ^{-1}$. Le probl\`eme limite  est alors~\eqref{eq:GP intro Hartree}, comme d\'emontr\'e pr\'ec\'edement. 
\item Si $ 0 < \beta < 1$, le probl\`eme limite est bien comme on pouvait s'y attendre~\eqref{eq:GP intro nls} avec le choix de param\`etre~\eqref{eq:GP defi a}. Nous parlerons de limite de Schr\"odinger non-lin\'eaire (NLS). Ce cas n'a pas \'et\'e consid\'e\'e sp\'ecifiquement dans la lit\'erature mais, au moins lorsque $w\geq 0$ et lorsque $V$ est confinant, on peut adapter l'analyse du cas plus difficile $\beta = 1$. 
\item Si $\beta = 1$, la fonctionnelle limite est maintenant~\eqref{eq:GP intro nls} avec 
$$a = 4\pi \times \mbox{ longueur de diffusion de } w$$
(voir~\cite[Appendice C]{LieSeiSolYng-05} pour une d\'efinition). Dans ce cas, le fondamental de~\eqref{eq:GP start hamil} inclut une correction \`a l'ansatz~\eqref{eq:GP ansatz}, sous la forme de corr\'elations \`a courte port\'ee. En fait, il faut s'attendre \`a avoir plus pr\'ecis\'ement 
\begin{equation}\label{eq:GP ansatz 2}
\Psi_N (x_1,\ldots,x_N) \approx \prod_{j=1} ^N \psi (x_j) \prod_{1\leq i<j\leq N} f \left( N ^{\beta} (x_i - x_j)\right)  
\end{equation}
o\`u $f$ est li\'e au probl\`eme \`a deux corps d\'efini par $w$ (solution de diffusion d'\'energie minimale). Il se trouve que lorsque $\beta = 1$, la correction a un effet sur le premier ordre de l'\'energie, celui de remplacer $\int_{\R ^d}w$ par la longueur de diffusion correspondante comme remarqu\'e pour la premi\`ere fois dans~\cite{Dyson-57}. On parle alors de limite de Gross-Pitaevskii (GP), qui a fait l'objet d'une longue et remarquable s\'erie de travaux d\^us \`a Lieb, Seiringer et Yngvason  (voir par exemple~\cite{LieYng-98,LieYng-01,LieSeiYng-01,LieSeiYng-00,LieSeiYng-05,LieSei-06,LieSeiSolYng-05}).   
\end{itemize}
Comme nous l'avons d\'ej\`a mentionn\'e, les probl\`emes d'\'evolution correspondants ont \'egalement \'et\'es abondament \'etudi\'es. L\`a aussi il convient de distinguer les limites de champ moyen  (voir par exemple~\cite{BarGolMau-00,AmmNie-08,FroKnoSch-09,RodSch-09,Pickl-11})  de Schr\"odinger non-lin\'eaire~\cite{ErdSchYau-07,Pickl-10} et de Gross-Pitaevskii~\cite{ErdSchYau-09,Pickl-10b}. 

Il y a une diff\'erence physique fondamentale entre les r\'egimes MF et GP:  Dans les deux cas le param\`etre d'interaction effectif $\lambda N L ^3$ est d'ordre $1$ mais on passe (toujours en 3D) d'un cas avec des collisions nombreuses mais faibles $\lambda = N ^{-1}$, $L=1$ \`a un cas avec des collisions rares mais tr\`es fortes, $\lambda = N ^2$, $L = N ^{-1}$. Les diff\'erents cas NLS interpolent en quelque sorte entre ces deux extr\^emes, le passage entre ``collisions faibles mais fr\'equentes'' et ``collisions fortes mais rares'' se faisant \`a $\lambda = 1, L = N ^{-1/3}$, i.e. $\beta = 1/3$.

\medskip

La difficult\'e pour obtenir~\eqref{eq:GP intro nls} en passant \`a la limite $N\to \infty$ est qu'il y a en fait deux limites distinctes, $N\to \infty$ et $w_N \to a \delta_0$ \`a contr\^oler en m\^eme temps. Un simple argument de compacit\'e/passage \`a la limite sera insuffisant dans ce cas et il faudra donc travailler avec des estimations quantitatives. Le but de ce chapitre est de montrer comment on peut proc\'eder \`a partir du th\'eor\`eme de de Finetti quantitatif pr\'esent\'e au Chapitre~\ref{sec:deF finite dim}. Puisque ce th\'eor\`eme n'est valable qu'en dimension finie, il nous faudra disposer d'une mani\`ere naturelle de projeter le probl\`eme sur des espaces de dimension finie. Nous nous placerons donc dans le cadre des gaz de bosons pi\'eg\'es en supposant que, pour deux constantes~$c,C~>~0$   
\begin{equation}\label{eq:GP asum V}
c |x| ^s - C\leq V(x)
\end{equation}
Dans ce cas, le Hamiltonien \`a un corps $-\Delta + V$ a un spectre discret sur lequel on a bon contr\^ole via des in\'egalit\'es \`a la Lieb-Thirring. 

Les r\'esultats que nous allons obtenir sont valables pour $0 < \beta < \beta_0$ o\`u $\beta_0 = \beta_0 (d,s)$ ne d\'epend que de la dimension de l'espace physique et du potentiel $V$. Nous fournirons des estimations explicites de $\beta_0$, mais la m\'ethode que nous pr\'esentons, issue de~\cite{LewNamRou-14}, est pour l'instant limit\'ee \`a des $\beta$ relativement petits. En particulier on obtiendra toujours~\eqref{eq:GP defi a} comme param\`etre d'interaction. L'avantage principal par rapport \`a la m\'ethode de Lieb-Seiringer-Yngvason~\cite{LieYng-98,LieSeiYng-00,LieSeiYng-05,LieSei-06} est que nous pouvons dans certains cas relaxer l'hypoth\`ese $w\geq 0$ qui est toujours faite dans les travaux sus-mentionn\'es (voir aussi~\cite{LieSeiSolYng-05}). En particulier nous pr\'esenterons la premi\`ere d\'erivation des fonctionnelles NLS attractives\footnote{On utilise souvent le vocabulaire de l'optique non-lin\'eaire pour distinguer les cas attractif et r\'epulsif: r\'epulsif 
= d\'efocusant, attractif = focusant.} en 1D et 2D.

\subsection{Enonc\'es et discussion}\label{sec:GP statement}\mbox{}\\\vspace{-0.4cm}

Nous prendrons des hypoth\`eses confortables sur $w$: 
\begin{equation}\label{eq:GP asum w}
w\in L^\infty(\R^d,\R) \mbox{ et } w(x) = w(-x) .
\end{equation}
Sans perte de g\'en\'eralit\'e, on posera 
$$ \sup_{\R ^d} |w| = 1$$
pour simplifier certaines expressions. On supposera \'egalement que 
\begin{equation}\label{eq:replace delta}
x \mapsto (1+|x|) w (x) \in L ^1 (\R ^d) 
\end{equation}
ce qui facilite le remplacement de $w_N$ par une masse de Dirac. Comme d'habitude nous utiliserons la m\^eme notation $w_N$ pour le potentiel d'interaction~\eqref{eq:w_N} et l'op\'erateur de multiplication par $w_N (x-y)$ agissant sur $L ^2 (\R ^{2d})$.

\medskip

Pour $\beta = 0$ on a montr\'e pr\'ec\'edemment que 
\begin{equation}\label{eq:GP Hartree lim}
\lim_{N\to \infty} \frac{E(N)}{N} = \eH. 
\end{equation}
Nous traitons maintenant le cas $0 < \beta < \beta_0 (d,s) < 1$ o\`u on obtient l'\'energie fondamentale de~\eqref{eq:GP intro nls} \`a la limite:
\begin{equation}\label{eq:GP eNLS}
\eNLS := \inf_{\norm{\psi}_{L ^2 (\R ^d)} = 1} \ENLS [\psi]
\end{equation}
avec $a$ donn\'e par~\eqref{eq:GP defi a}. A cause de la non-lin\'earit\'e locale, la th\'eorie NLS est plus d\'elicate que la th\'eorie de Hartree et il nous faut quelques hypoth\`eses de structure sur le potentiel d'interaction (voir~\cite{LewNamRou-14} pour une discussion plus pouss\'ee):  
\begin{itemize}
\item Quand $d=3$, il est bien connu qu'un fondamental pour~\eqref{eq:GP intro nls} existe si et seulement si $a \geq 0$. Ceci est d\^u au fait que la non-lin\'earit\'e cubique est sur-critique\footnote{On pourra consulter par exemple~\cite{KilVis-08} pour une classification des non-lin\'earit\'es dans l'\'equation de Schr\"odinger non-lin\'eaire.} dans ce cas. De plus, il est facile de voir que $N ^{-1} E(N) \to -\infty$ si $w$ est n\'egatif \`a l'origine. L'hypoth\`ese optimale se trouve \^etre une hypoth\`ese de stabilit\'e classique pour le potentiel d'interaction:
\begin{equation}\label{eq:GP hyp 3}
\iint_{\R ^d \times \R ^d} \rho (x) w(x-y) \rho(y) dx dy \geq 0, \mbox{ pour tout } \rho \in L^1 (\R ^d), \: \rho \geq 0. 
\end{equation}
Cette hypoth\`ese est v\'erifi\'ee d\`es que $w= w_1 + w_2$, $w_1 \geq 0$ et $\hat{w_2}\geq 0$ avec $\hat{w_2}$ la transform\'ee de Fourier de $w_2$. Elle implique clairement $\int_{\R ^d} w \geq 0$, et on peut voir ais\'ement par un changement d'\'echelle que si elle viol\'ee pour un certain $\rho \geq 0$ alors $E(N)/N \to -\infty$.
\item Quand $d=2$, la non-lin\'earit\'e cubique est critique. Un minimiseur pour~\eqref{eq:GP intro nls} existe si et seulement si $a > - a ^*$ o\`u 
\begin{equation}\label{eq:GP a star}
a ^* = \norm{Q}_{L ^2} ^2  
\end{equation}
avec $Q\in H ^1 (\R^2)$ l'unique~\cite{Kwong-89} (modulo translations) solution de 
\begin{equation}\label{eq:GP defi Q}
-\Delta Q + Q - Q ^3 = 0.  
\end{equation}
Le param\`etre d'interaction critique $a ^*$ est la meilleure constante possible dans l'in\'egalit\'e d'interpolation
\begin{equation}\label{eq:GP interpolation}
\int_{\R ^2} |u| ^4 \leq C \left(\int_{\R^2} |\nabla u | ^2\right) \left(\int_{\R ^2} |u | ^2\right). 
\end{equation}
Voir~\cite{GuoSei-13,Maeda-10} pour l'existence d'un fondamental et~\cite{Weinstein-83} pour l'in\'egalit\'e~\eqref{eq:GP interpolation}. Une discussion p\'edagogique de ce genre de sujets se trouve dans~\cite{Frank-14}.

Vues les consid\'erations ci-dessus, il est clair qu'en 2D il nous faudra supposer $\int w \geq - a ^*$, mais ce n'est en fait pas suffisant: comme en 3D, si le potentiel d'interaction est suffisament n\'egatif \`a l'origine, on peut voir que $N ^{-1} E(N) \to -\infty$. L'hypoth\`ese ad\'equate est cette fois 
\begin{equation}\label{eq:GP hyp 2}
\norm{u}_{L^2}^2\norm{\nabla u}_{L^2}^2+\frac12\iint_{\R^2\times \R^2}|u(x)|^2|u(y)|^2 w(x-y)\,dx\,dy > 0
\end{equation}
pour tout $u\in H^1(\R^2)$. En rempla\c{c}ant $u$ par $\lambda u(\lambda x)$ et en prenant la limite $\lambda\to 0$
on obtient
$$\norm{u}_{L^2}^2\norm{\nabla u}_{L^2}^2+\frac12\left(\int_{\R^2}w\right)\int_{\R^2}|u(x)|^4\,dx\geq0,\qquad\forall u\in H^1(\R^2),$$
ce qui implique que 
$$\int_{\R^2}w(x)\,dx\geq-a^*.$$
Un argument de changement d'\'echelle montre que si l'in\'egalit\'e stricte dans~\eqref{eq:GP hyp 2} est invers\'ee pour un certain $u$, alors $E(N)/N \to - \infty$. La cas ou il peut il y a voir \'egalit\'e dans~\eqref{eq:GP hyp 2} est laiss\'e de c\^ot\'e car il requiert une analyse plus pouss\'ee, cf l'analyse du cas correspondant au niveau de la fonctionnelle NLS dans~\cite{GuoSei-13}.
\item Quand $d=1$, la non-lin\'earit\'e cubique est sous-critique et il y a toujours un minimiseur pour la fonctionnelle~\eqref{eq:GP intro nls}. Dans ce cas nous n'aurons pas besoin d'hypoth\`ese suppl\'ementaire. 
\end{itemize}

Nous pouvons maintenant \'enoncer le 

\begin{theorem}[\textbf{D\'erivation de la fonctionnelle NLS}]\label{thm:deriv nls}\mbox{}\\
On se place dans l'un des cas $d=1$, $d=2$ avec~\eqref{eq:GP hyp 2}, $d= 3$ avec~\eqref{eq:GP hyp 3} et on suppose 
\begin{equation}\label{eq:GP beta 0}
0 < \beta \leq \beta_0 (d,s) :=\frac{s}{2ds + s d^2 + 2d ^2} < 1.
\end{equation}
o\`u $s$ est l'exposant apparaissant dans~\eqref{eq:GP asum V}. On a alors:  
\begin{enumerate}
\item \underline{Convergence de l'\'energie:}
\begin{equation}\label{eq:GP ener convergence}
\frac{E(N)}{N} \to \eNLS. 
\end{equation}
\item \underline{Convergence des \'etats:} Soit $\Psi_N$ un fondamental de~\eqref{eq:GP start hamil} et 
$$ \gamma_N ^{(n)}:= \tr_{n+1\to N} \left[ |\Psi_N\rangle \langle \Psi_N | \right] $$
ses matrices de densit\'e r\'eduites. Modulo une sous-suite, on a pour tout $n\in \N$  
\begin{equation}\label{eq:GP state convergence}
\lim_{N\to \infty}\gamma_N ^{(n)} = \int_{u\in \MNLS} d\mu (u) |u ^{\otimes n} \rangle \langle u ^{\otimes n}| 
\end{equation}
fortement dans $\gS ^1 (L ^2 (\R ^{dn}))$. Ici $\mu$ est une mesure de probabilit\'e support\'ee sur  
\begin{equation}\label{eq:GP nls set}
\MNLS = \left\{ u \in L ^2 (\R ^d), \norm{u}_{L ^2} = 1, \ENLS [u] = \eNLS \right\}. 
\end{equation}
En particulier, quand~\eqref{eq:GP intro nls} a un minimiseur $\uNLS$ unique \`a une phase pr\`es, on a pour toute la suite
\begin{equation}\label{eq:GP BEC}
\lim_{N\to \infty} \gamma_N ^{(n)} =|\uNLS ^{\otimes n} \rangle \langle \uNLS ^{\otimes n}|.
\end{equation}
\end{enumerate}
\end{theorem}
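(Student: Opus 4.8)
La borne supérieure $\frac{E(N)}{N} \leq \eNLS + o(1)$ est la partie facile : on prend un ansatz factorisé $\Psi_N = \uNLS^{\otimes N}$ (ou une fonction-test lisse à support compact approchant $\uNLS$), on calcule $N^{-1}\langle \Psi_N, H_N \Psi_N\rangle = \EH[\uNLS] + O(N^{-1})$ avec le potentiel $w_N$, et on utilise~\eqref{eq:replace delta} pour voir que $\frac12 \iint |\uNLS(x)|^2 w_N(x-y)|\uNLS(y)|^2 \to \frac{a}{2}\int |\uNLS|^4$ avec $a = \int_{\R^d} w$. Le gros du travail concerne la borne inférieure. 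L'idée, suivant~\cite{LewNamRou-14}, est d'abord de \emph{tronquer en dimension finie}. Grâce à~\eqref{eq:GP asum V}, le Hamiltonien à un corps $h = -\Delta + V$ a un spectre discret avec des contrôles de type Lieb-Thirring sur $N_\Lambda = \#\{\text{valeurs propres} \leq \Lambda\}$ et sur $\tr[\1(h\leq\Lambda)]$. On introduit le projecteur spectral $P = \1(h \leq \Lambda_N)$ sur les énergies inférieures à un seuil $\Lambda_N \to \infty$ bien choisi, de sorte que $d := \dim(P\gH) \to \infty$ de façon contrôlée, et on localise l'état $\Gamma_N = |\Psi_N\rangle\langle\Psi_N|$ relativement à $P$ et $Q = 1-P$ à l'aide du Lemme~\ref{lem:Fock loc} (localisation dans l'espace de Fock).

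Ensuite, sur le bloc $P$-localisé, on applique le \textbf{théorème de de Finetti quantitatif} (Théorème~\ref{thm:DeFinetti quant}) : pour chaque $k$ on obtient une mesure $\mu_{N,k}$ sur $SP\gH$ telle que les matrices réduites de $G_{N,k}^P$ sont approchées par $\int |u^{\otimes n}\rangle\langle u^{\otimes n}| d\mu_{N,k}(u)$ avec une erreur en norme trace $\leq C n(d + 2n)/k$. C'est ici qu'intervient la contrainte $\beta \leq \beta_0(d,s)$ : il faut que l'erreur $\propto d/N$ du théorème de de Finetti reste négligeable devant le gain $\propto N^{-\text{qqch}(\beta)}$ obtenu en remplaçant $w_N$ par $a\delta_0$ et devant l'erreur de troncature $\propto \Lambda_N^{-\text{qqch}}$. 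On optimise alors le seuil $\Lambda_N$ (donc $d = d(N)$) : prendre $\Lambda_N \sim N^\theta$ pour un exposant $\theta$ dépendant de $s$ et $d$, ce qui impose $d(N)\sim N^{d\theta/s}$ via Weyl, et la condition de compatibilité $\frac{d(N)}{N}\to 0$ combinée au contrôle de l'énergie d'interaction régularisée donne exactement~\eqref{eq:GP beta 0}.

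Le point délicat est le \textbf{remplacement de $w_N$ par $a\delta_0$ au niveau de l'énergie à deux corps}, c'est-à-dire estimer
\[
\frac12 \tr_{\gH^2}\left[\left(w_N - a\delta_0\right)\gamma_N^{(2)}\right]
\]
uniformément. Contrairement au cas de Hartree, on ne peut pas simplement passer à la limite : il faut une borne quantitative de la forme $\left|\tr[w_N \gamma_N^{(2)}] - a\,\rho_{\gamma^{(2)}}(\text{diag})\right| \leq C N^{-\delta(\beta)}$ en fonction de l'énergie cinétique $\tr[(-\Delta_1-\Delta_2)\gamma_N^{(2)}]$ (bornée uniformément par l'inégalité~\eqref{eq:up bound Hartree} et la stabilité~\eqref{eq:GP hyp 3} ou~\eqref{eq:GP hyp 2}), via une injection de Sobolev et l'hypothèse~\eqref{eq:replace delta}. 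En dimension $2$ c'est le cas critique et l'inégalité d'interpolation optimale~\eqref{eq:GP interpolation} avec constante $a^*$ est cruciale pour contrôler a priori l'énergie et éviter l'effondrement ; c'est là que l'hypothèse~\eqref{eq:GP hyp 2} est utilisée. Une fois toutes les erreurs contrôlées, on obtient $\frac{E(N)}{N} \geq \int_{B\gH} \ENLS[u]\,d\mu(u) + o(1) \geq \eNLS + o(1)$ (la masse éventuellement perdue vers les hautes énergies étant renvoyée à $\delta_0$, ce qui ne baisse pas l'énergie car $\ENLS(\lambda) \geq \ENLS(1)$ sous nos hypothèses, comme en~\eqref{eq:Hartree lambda}), et les assertions sur les états~\eqref{eq:GP state convergence}–\eqref{eq:GP BEC} suivent en inspectant les cas d'égalité et en utilisant que l'absence de perte de masse force la convergence forte des $\gamma_N^{(n)}$.
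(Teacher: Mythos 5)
Votre architecture g\'en\'erale --- troncature spectrale en dimension finie via $P=\1(h\leq\Lambda_N)$, localisation dans l'espace de Fock, th\'eor\`eme de de Finetti quantitatif sur le bloc $P$-localis\'e, puis optimisation de $\Lambda_N$ --- est exactement celle du papier, et la borne sup\'erieure ainsi que la conclusion sur les \'etats sont trait\'ees de la m\^eme mani\`ere. Le point qui ne va pas est votre \og point d\'elicat \fg{} : vous proposez d'estimer directement $\tr_{\gH^2}[(w_N-a\delta_0)\gamma_N^{(2)}]$ en fonction de l'\'energie cin\'etique \`a deux corps via une injection de Sobolev. Or la diagonale $\{x=y\}$ est de codimension $d$ dans $\R^{2d}$, et une borne sur $\tr[(-\Delta_1-\Delta_2)\gamma_N^{(2)}]$ ne donne aucun contr\^ole sur la restriction de la densit\'e \`a deux corps \`a cet ensemble d\`es que $d\geq 2$ : l'estimation que vous visez est ind\'emontrable avec ces seules informations en dimensions $2$ et $3$, pr\'ecis\'ement les cas int\'eressants.

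La preuve du papier \'evite cet \'ecueil en inversant l'ordre des deux op\'erations : on compare d'abord $E(N)/N$ \`a $\eH$, l'\'energie de Hartree \emph{avec le potentiel $w_N$ encore d\'ependant de $N$} (Th\'eor\`eme~\ref{thm:error Hartree}), en n'utilisant que le de Finetti quantitatif et la troncature --- aucune limite $w_N\to a\delta_0$ n'est prise \`a ce stade. Le remplacement de $w_N$ par $a\delta_0$ n'est effectu\'e qu'ensuite, et uniquement sur des \'etats factoris\'es : il s'agit alors d'estimer $\left|\iint|u(x)|^2w_N(x-y)|u(y)|^2-a\int|u|^4\right|$ pour $u$ born\'e dans $H^1$, ce qui d\'ecoule facilement de~\eqref{eq:replace delta} et de Sobolev (Lemme~\ref{lem:GP Hartree NLS}), les hypoth\`eses de stabilit\'e~\eqref{eq:GP hyp 2} ou~\eqref{eq:GP hyp 3} servant justement \`a garantir la borne $H^1$ a priori sur les quasi-minimiseurs de Hartree. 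C'est donc la structure de de Finetti qui r\'eduit le probl\`eme diagonal \`a deux corps \`a un probl\`eme \`a un corps ; sans elle votre estimation ne passe pas. Signalons aussi un d\'etail : dans ce cadre confin\'e il n'y a pas de perte de masse \`a l'infini \`a renvoyer \`a une in\'egalit\'e du type $\eNLS(\lambda)\geq\eNLS(1)$ ; la masse envoy\'ee sur les hautes \'energies est p\'enalis\'ee par le terme $\frac{\Lambda}{4}P_+\otimes P_+$ du Hamiltonien tronqu\'e et tend quantitativement vers z\'ero, de sorte que la mesure limite est bien une probabilit\'e sur la sph\`ere unit\'e.
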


L'unicit\'e de $\uNLS$ est assur\'ee si $a \geq 0$ ou $|a|$ est petit. Si ces conditions ne sont pas satisfaites, on peut montrer l'absence d'unicit\'e pour certains potentiels de pi\'egeage ayant plusieurs minima~\cite{AshFroGraSchTro-02,GuoSei-13}. 

\begin{remark}[Sur la d\'erivation de la fonctionnelle NLS.]\label{rem:GP NLS}\mbox{}\\
\vspace{-0.4cm}
\begin{enumerate}
\item L'hypoth\`ese $\beta < \beta_0 (d,s)$ est dict\'ee par la m\'ethode de preuve mais n'est certainement pas optimale, on pourra voir dans cette direction les travaux~\cite{LieSei-06,LieSeiYng-00,LieYng-98,LieYng-01}. On peut relaxer un peu la condition sur $\beta$, au prix de calculs un peu plus lourds dont nous pr\'ef\'erons faire l'\'economie dans ces notes, voir~\cite{LewNamRou-14}. En 1D, on peut obtenir le r\'esultat pour tout $\beta >0$.
\item Regardons un peu plus en d\'etail les conditions sur $\beta_0 (d,s)$. Pour le cas d'un potentiel de pi\'egeage quadratique $V(x) = |x| ^2$ par exemple, nous pouvons traiter $\beta < 1/24$ en 3D, $\beta < 1/12$ en 2D et $\beta < 1/4$ en 1D. La m\'ethode de preuve s'adapte sans difficult\'es au cas de particules dans un domaine born\'e, ce qui correspond \`a prendre formellement $s=\infty$. On obtient alors $\beta_0(d,s) = 1/15$ en 3D, $1/8$ en 2D et $1/3$ en 1D. L'am\'elioration de ces seuils pour le cas de potentiels comportant une partie attractive reste un probl\`eme ouvert.
\item Lorsque $\beta$ s'\'eloigne du seuil critique $\beta_0 (d,s)$, la m\'ethode de preuve fournit des estimations quantitatives pour la convergence~\eqref{eq:GP ener convergence}, voir ci-dessous. Voir~\cite[Remarque 4.2]{LewNamRou-14} pour une discussion des cas o\`u un taux de convergence pour le minimiseur peut \^etre d\'eduit en se basant sur des outils de~\cite{CarFraLie-14,Frank-14} et des hypoth\`eses sur le comportement de la fonctionnelle NLS.\hfill\qed 
\end{enumerate}
\end{remark}

La preuve de ce r\'esultat occupe le reste de ce chapitre. On proc\`ede en deux temps. Le gros du travail consiste en l'obtention d'une estimation quantitative entre l'\'energie $N$ corps par particules $N ^{-1} E(N)$ et l'\'energie de Hartree  
\begin{equation}\label{eq:GP hartree e}
\eH := \inf_{\norm{u}_{L ^2 (\R ^d)} = 1} \EH [u] 
\end{equation}
donn\'ee par la minimisation de la fonctionnelle
\begin{equation}\label{eq:GP hartree f}
\EH [u] := \int_{\R ^d} \left(\left|\nabla u \right| ^2 + V |u| ^2 \right) dx + \frac{1}{2} \iint_{\R ^d \times \R ^d} |u(x)| ^2 w_N (x-y) |u(y)| ^2 dx dy.  
\end{equation}
Ces objets d\'ependent de $N$ quand $\beta >0$, d'o\`u la n\'ecessit\'e d'\'eviter les arguments de compacit\'e et d'obtenir de vraies estimations. Une fois que le lien entre $N ^{-1} E(N)$ et~\eqref{eq:GP hartree e}  est \'etabli, il reste \`a estimer la diff\'erence $|\eNLS - \eH|$, ce qui est un probl\`eme beaucoup plus simple. La plupart des hypoth\`eses contraignantes que nous avons faites sur $w$ ne servent que lors de cette deuxi\`eme \'etape. Les estimations sur la diff\'erence $|\eH-N ^{-1} E(N)|$ sont valides sans supposer~\eqref{eq:replace delta} et~\eqref{eq:GP hyp 3} ou~\eqref{eq:GP hyp 2}. Ils fournissent donc une information sur la divergence de $N ^{-1} E(N)$ dans le cas o\`u $\eH$ ne converge pas vers~$\eNLS$:

\begin{theorem}[\textbf{D\'erivation quantitative de la th\'eorie de Hartree}]\label{thm:error Hartree}\mbox{}\\
On fait les hypoth\`eses~\eqref{eq:GP asum V},~\eqref{eq:GP asum w}. Soit 
\begin{equation}\label{eq:GP rate intro}
t :=  \frac{1+2d\beta}{2 + d/2 + d/s}.
\end{equation}
Si on a 
\begin{equation}\label{eq:GP asum t}
t > 2d \beta ,
\end{equation}
alors, pour tout $d\geq 1$ il existe une constante $C_{d}$ telle que 
\begin{equation}\label{eq:GP energy estimate}
\eH \geq \frac{E(N)}{N} \geq \eH - C_{d} N ^{-t + 2 d\beta}.  
\end{equation} 
\end{theorem}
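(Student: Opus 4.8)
The plan is to establish the lower bound on $E(N)/N$ by the quantitative finite-dimensional de Finetti theorem (Theorem~\ref{thm:DeFinetti quant}), following the scheme~\eqref{eq:schema deF 3} but with explicit tracking of all errors in $N$ and in the interaction length scale $N^{-\beta}$. The upper bound $E(N)/N \leq \eH$ is immediate from the variational principle with the trial state $\Psi_N = u^{\otimes N}$, $u$ a minimiser of $\EH$, so only the lower bound requires work. First I would fix a spectral cutoff: let $P = \1_{(-\Delta+V) \leq \Lambda}$ be the projector onto the low-lying eigenspace of the one-body operator, with $\Lambda$ a parameter to be optimised, and let $d = \dim P\gH = \tr[P]$. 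Using the confinement assumption~\eqref{eq:GP asum V} and Lieb--Thirring / Cwikel--Lieb--Rozenblum type bounds for $-\Delta+V$ one gets $d \lesssim \Lambda^{d/2 + d/s}$, which is the source of the exponent $2 + d/2 + d/s$ in~\eqref{eq:GP rate intro}. The ground state $\Psi_N$ satisfies a uniform kinetic energy bound $\tr[(-\Delta+V)\gamma_N^{(1)}] \leq C$ (from $E(N)/N \leq \eH \leq C$ and stability), so replacing $\gamma_N^{(1)}$ by $P^{\otimes}\gamma_N^{(1)}P$ and similarly for $\gamma_N^{(2)}$ costs an energy error controlled by $\Lambda^{-1}$ times the kinetic energy: the off-diagonal and high-mode contributions are $O(\Lambda^{-1})$ per particle, with the interaction term handled by the relative bound $\|w_N\|_{L^\infty} = N^{d\beta}$ applied to the piece of $\gamma_N^{(2)}$ living outside $P\otimes P$.

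Next I would apply Theorem~\ref{thm:DeFinetti quant} to the localised state $P^{\otimes N}\Gamma_N P^{\otimes N}$ (suitably renormalised) on $(P\gH)_s^{\otimes N}$, obtaining a probability measure $\mu_N$ on $SP\gH$ with
\begin{equation*}
\Tr\Big| P^{\otimes 2}\gamma_N^{(2)}P^{\otimes 2} - \int_{SP\gH} |u^{\otimes 2}\rangle\langle u^{\otimes 2}|\,d\mu_N(u)\Big| \leq \frac{C(d+4)}{N} \lesssim \frac{d}{N}.
\end{equation*}
Since $\|w_N\|_{L^\infty} = N^{d\beta}$, this de Finetti error translates into an energy error of order $d\,N^{d\beta}/N = d\,N^{d\beta-1}$ in the two-body interaction term, while the one-body term is affine in $\gamma_N^{(1)} = (\gamma_N^{(2)})^{(1)}$ and contributes no extra de Finetti error. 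Combining the localisation estimate with the de Finetti estimate and using that $\EH$ itself provides the lower bound for each $|u^{\otimes 2}\rangle\langle u^{\otimes 2}|$ appearing in the mixture (here one must be slightly careful: the Hartree functional evaluated on $u \in P\gH$ differs from $\EH[u]$ only by controlled terms, and $\int d\mu_N \cdot \EH \geq \eH \cdot (\text{mass})$), one arrives at
\begin{equation*}
\frac{E(N)}{N} \geq \eH - C\Big( \Lambda^{-1} N^{d\beta} + \Lambda^{d/2+d/s} N^{d\beta - 1} \Big) - (\text{lower-order}).
\end{equation*}
Optimising over $\Lambda$ — balancing $\Lambda^{-1}N^{d\beta}$ against $\Lambda^{d/2+d/s}N^{d\beta-1}$, i.e. $\Lambda \sim N^{1/(1+2+d/2+d/s)}$ up to the $N^{d\beta}$ factors — yields the rate $N^{-t+2d\beta}$ with $t$ as in~\eqref{eq:GP rate intro}, and the hypothesis~\eqref{eq:GP asum t} is exactly what guarantees this error tends to zero.

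The main obstacle will be the bookkeeping in the localisation step: one has to show that projecting onto $P\gH$ changes the energy by at most $O(\Lambda^{-1} N^{d\beta})$ per particle \emph{uniformly in $N$}, which requires handling the interaction $w_N$ — whose $L^\infty$ norm blows up like $N^{d\beta}$ — against the high-energy tail of the two-body density matrix, and this is where the precise interplay between $\beta$, the kinetic bound, and the spectral cutoff enters. A clean way to organise this is to write $\1 = P + P^\perp$ in each of the two tensor factors of $\gamma_N^{(2)}$, expand, and bound each mixed term by Cauchy--Schwarz using $|w_N| \leq N^{d\beta}\one$ together with $\tr[P^\perp(-\Delta+V)P^\perp \gamma_N^{(1)}] \leq C$ and $\tr[P^\perp \gamma_N^{(1)}] \leq C/\Lambda$; the subleading terms involving two $P^\perp$'s are even smaller. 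Once the energy lower bound~\eqref{eq:GP energy estimate} is in hand, Theorem~\ref{thm:deriv nls} follows by estimating $|\eH - \eNLS|$ separately — a one-body computation replacing $w_N$ by $a\delta_0$ using~\eqref{eq:replace delta} and the regularity of NLS minimisers — and by extracting the de Finetti measure from $\mu_N$ in the limit $N\to\infty$ exactly as in the passage from Theorem~\ref{thm:DF} to Theorem~\ref{thm:HS}, with the strong convergence of reduced density matrices coming from the no-loss-of-mass argument already used in the proof of Theorem~\ref{thm:confined}.
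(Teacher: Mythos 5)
Your overall architecture is the one the paper uses (spectral cutoff $\Lambda$ for $-\Delta+V$, a Cwikel--Lieb--Rozenblum bound $N_\Lambda\lesssim \Lambda^{d/2+d/s}$ on the dimension of the low-energy subspace, the quantitative de Finetti theorem on that subspace, then optimisation over $\Lambda$), but the error bookkeeping does not close, and in fact your intermediate bound is inconsistent with the rate you claim to derive from it. First, the truncation error. You assert that restricting to $P\otimes P$ costs $O(\Lambda^{-1}N^{d\beta})$ per particle, but the Cauchy--Schwarz argument you sketch, using $|w_N|\leq N^{d\beta}\one$ together with $\tr[P^\perp\gamma_N^{(1)}]\leq C/\Lambda$, only yields $O(N^{d\beta}\Lambda^{-1/2})$ for the singly-crossed terms such as $P\otimes P\, w_N\, P^\perp\otimes P$. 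Moreover the a priori bound $\tr[(-\Delta+V)\gamma_N^{(1)}]\leq C$ you invoke is not available at this stage without circularity: in the paper it is \emph{derived} from Theorem~\ref{thm:error Hartree} applied to a perturbed Hamiltonian (Step 1 of the proof of Theorem~\ref{thm:deriv nls}). The paper's Lemma~\ref{lem:GP localize-energy} avoids both problems by working at the level of operator inequalities: each off-diagonal block is bounded by $-b\,N^{d\beta}P_i\otimes P_j - b^{-1}N^{d\beta}P_k\otimes P_\ell$, the $b^{-1}$ pieces are absorbed into the retained kinetic gap $\Lambda P_+\otimes P_+$ (forcing $b\sim N^{d\beta}/\Lambda$), and the resulting truncation error is $O(N^{2d\beta}\Lambda^{-1})$ — note the exponent $2d\beta$, not $d\beta$. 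Second, the de Finetti error: the trace-norm error $N_\Lambda/N$ must be multiplied by the operator norm of the \emph{full} truncated two-body Hamiltonian $P_-^{\otimes 2}H_2P_-^{\otimes 2}$, which is of order $\Lambda$ (the kinetic part dominates since $\Lambda\geq CN^{2d\beta}$), not by $\|w_N\|_\infty=N^{d\beta}$; your claim that the one-body term "contributes no extra de Finetti error" is false, since the CKMR construction does not reproduce $\gamma_N^{(1)}$ exactly. The correct second error term is therefore $\Lambda^{1+d/2+d/s}/N$. Balancing $N^{2d\beta}\Lambda^{-1}$ against $\Lambda^{1+d/2+d/s}N^{-1}$ gives $\Lambda=N^t$ with $t$ as in~\eqref{eq:GP rate intro} and the error $N^{-t+2d\beta}$; balancing your two terms $\Lambda^{-1}N^{d\beta}$ and $\Lambda^{d/2+d/s}N^{d\beta-1}$ gives a different exponent, so the optimisation step as you wrote it does not produce~\eqref{eq:GP energy estimate}.

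A further point you gloss over is the loss of mass to high modes: after applying de Finetti on $P_-\gH$ the measure $\mu_N$ has total mass $\tr[P_-^{\otimes2}\gamma_N^{(2)}P_-^{\otimes2}]\leq 1$, and $\int\EH\,d\mu_N\geq \eH\cdot(\mathrm{mass})$ is not enough because $|\eH|$ can be as large as $C+N^{d\beta}$, so the missing mass must be controlled quantitatively. The paper does this by keeping the positive term $\frac{\Lambda}{4}P_+\otimes P_+$ from the truncation lemma all the way to the end and combining it with the elementary inequality $p\lambda^2+q(1-\lambda)^2\geq p-p^2/q$, which converts the high-mode penalty into the harmless correction $\eH^2/\Lambda$. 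You would need some analogue of this step; without it the lower bound degrades by a term of order $|\eH|\times(\text{leaked mass})$ that you have no independent control on.
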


\begin{remark}[Estimations explicites dans la limite de champ moyen]\label{rem:GP Hartree quant}\mbox{}\\
\vspace{-0.4cm}
\begin{enumerate}
\item La condition~\eqref{eq:GP asum t} est satisfaite si $0 \leq \beta < \beta_0 (d,s)$. Pour la preuve du Th\'eor\`eme~\ref{thm:deriv nls} on ne s'int\'eresse qu'\`a des cas o\`u $|\eH|$ est born\'e ind\'ependament de $N$, et~\eqref{eq:GP energy estimate} donne alors une information non triviale seulement si~\eqref{eq:GP asum t} est satisfait.
\item Le r\'esultat est valable dans le cas de la limite de champ moyen o\`u $\beta = 0$ et donc $\eH$ ne d\'epend pas de $N$. On obtient alors des estimations explicites pr\'ecisant le Th\'eor\`eme~\ref{thm:confined}. Ces estimations pr\'esentent une nouveaut\'e dans le cas o\`u la fonctionnelle de Hartree a plusieurs minimiseurs, ou un seul minimiseur d\'eg\'en\'er\'e. Dans le cas contraire\footnote{L'exemple le plus simple assurant unicit\'e et non d\'eg\'en\'erescence est celui o\`u $\hat{w}> 0$ avec $\hat{w}$ la transform\'ee de Fourier de $w$.} de meilleures estimations sont connues, avec une erreur d'ordre $N ^{-1}$ donn\'ee par la th\'eorie de Bogoliubov~\cite{LewNamSerSol-13,Seiringer-11,GreSei-12,DerNap-13}. Voir~\cite{NamSei-14} pour des extensions de la th\'eorie de Bogoliubov au cas de minimiseurs mutliples et/ou d\'eg\'en\'er\'es.
\end{enumerate}
 \hfill\qed
\end{remark}

La preuve du Th\'eor\`eme~\ref{thm:error Hartree} occupera la Section~\ref{sec:GP Hartree bounds}. On compl\`etera ensuite la preuve du Th\'eor\`eme~\ref{thm:deriv nls} \`a la Section~\ref{sec:GP Hartee to NLS}.

\subsection{Estimations quantitatives pour la th\'eorie de Hartree}\label{sec:GP Hartree bounds}\mbox{}\\\vspace{-0.4cm}

L'id\'ee principale de la preuve est d'appliquer le Th\'eor\`eme~\ref{thm:DeFinetti quant} sur un sous-espace propre de basse \'energie de l'op\'erateur \`a un corps 
$$T = -\Delta + V $$
agissant sur $\gH = L ^2 (\R ^d)$. L'hypoth\`ese~\eqref{eq:GP asum V} assure que la r\'esolvante de cet op\'erateur est compacte et donc que son spectre est constitu\'e d'une suite de valeurs propres tendant vers l'infini. On note $P_-$ et $P_+$ les projecteurs spectraux correspondant aux \'energies respectivement plus grandes et plus petites qu'une certaine troncature $\Lambda$:
\begin{equation}\label{eq:GP projectors}
P_- = \1_{(-\infty, \Lambda)} \left( T \right), \: P_+ = \1_{\gH} - P_-=P_- ^{\perp}. 
\end{equation}
On notera 
\begin{equation}\label{eq:NT}
N_\Lambda := \dim (P_- \gH ) = \mbox{ nombre de valeurs propres de } T \mbox{ inf\'erieures \`a } \Lambda.
\end{equation}
Puisque la pr\'ecision du Th\'eor\`eme de de Finetti quantitatif d\'epend de la dimension de l'espace sur lequel on l'applique, il est clair qu'il nous faudra un contr\^ole convenable sur~$N_\Lambda$. Les outils pour obtenir ce contr\^ole sont bien connus sous le nom d'in\'egalit\'es de type Lieb-Thirring, ou plus pr\'ecis\'ement de Cwikel-Lieb-Rosenblum dans ce cas. Nous nous servirons du lemme suivant: 

\begin{lemma}[\textbf{Nombre d'\'etats li\'es d'un op\'erateur de Schr\"odinger}] \label{lem:GP nb bound states}\mbox{}\\
Soit $V$ satisfaisant~\eqref{eq:GP asum V}. Pour tout $d\geq 1$, il existe une constante $C_{d}>0$ telle que, pour $\Lambda$ assez grand
\begin{equation}\label{eq:GP bound NT}
N_\Lambda \leq C_{d} \Lambda ^{d/s + d/2}. 
\end{equation}
\end{lemma}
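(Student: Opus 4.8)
The statement to prove is the Cwikel--Lieb--Rosenblum-type bound $N_\Lambda \leq C_d \Lambda^{d/s+d/2}$ for the number of eigenvalues below $\Lambda$ of $T = -\Delta + V$, where $V$ satisfies the confinement lower bound $c|x|^s - C \leq V(x)$.

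The plan is to reduce the problem to a standard semiclassical counting estimate by throwing away the true potential $V$ and replacing it with the much more tractable model potential $c|x|^s$. First I would observe that, since $V(x) \geq c|x|^s - C$, we have the operator inequality $T = -\Delta + V \geq -\Delta + c|x|^s - C =: T_0 - C$, and hence by the min-max principle $N_\Lambda = N\big( (-\infty,\Lambda), T\big) \leq N\big( (-\infty,\Lambda+C), T_0 \big)$. So it suffices to bound the number of eigenvalues below a threshold for the pure power-law Schrödinger operator $T_0 = -\Delta + c|x|^s$ on $L^2(\R^d)$. For this one can invoke the Cwikel--Lieb--Rosenblum inequality in the form counting eigenvalues of $-\Delta - W$ below $0$, namely $N(( -\infty,0), -\Delta - W) \leq L_{d}\int_{\R^d} W_+(x)^{d/2}\,dx$ for $d\geq 3$ (with the appropriate substitutes in $d=1,2$, e.g. a Lieb--Thirring bound with a harmless lower-order correction, or simply passing through a slightly shifted potential; in low dimensions one may also use the bound with $W_+^{d/2}$ replaced by a suitable Orlicz-type quantity, but since we only need the exponent this is cosmetically adjustable). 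Applying this with $W(x) = \mu - c|x|^s$ for $\mu = \Lambda + C$ gives
\begin{equation}
N_\Lambda \leq L_d \int_{\R^d} \left(\mu - c|x|^s\right)_+^{d/2}\,dx.
\end{equation}

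Next I would compute the scaling of this integral. The integrand is supported on the ball $\{|x| \leq (\mu/c)^{1/s}\}$, and substituting $x = (\mu/c)^{1/s} y$ turns the integral into $(\mu/c)^{d/s}\mu^{d/2}\int_{|y|\leq 1}(1-|y|^s)^{d/2}\,dy$, the last integral being a finite dimensional constant depending only on $d$ and $s$. Hence $N_\Lambda \leq C_{d,s}\,\mu^{d/s + d/2} = C_{d,s}(\Lambda+C)^{d/s+d/2} \leq C_d\,\Lambda^{d/s+d/2}$ for $\Lambda$ large enough, which is exactly the claimed bound. The constant absorbs $c$, the CLR constant $L_d$, the geometric integral, and the shift by $C$.

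The only genuine subtlety, rather than an obstacle, is the treatment of dimensions $d=1$ and $d=2$, where the naive CLR bound $\int W_+^{d/2}$ fails (there is always at least one bound state for an attractive potential, so no bound proportional to $\int W_+^{d/2}$ alone can hold uniformly). The clean fix here is to use instead the Lieb--Thirring-type estimate for the number of eigenvalues below $-\epsilon$ for fixed $\epsilon>0$, or equivalently to note that we only ever apply the statement ``for $\Lambda$ large'' so we may split off the finitely many lowest eigenvalues and count the rest, which lie in a regime where a semiclassical bound $N \leq C\int (\mu - c|x|^s)_+^{d/2} + C$ does hold; the additive constant is then absorbed into $C_d\Lambda^{d/s+d/2}$. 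Alternatively, in $d=1,2$ one can use the explicit WKB/Weyl asymptotics for $-\Delta + c|x|^s$ directly, since for this model potential the eigenvalue counting function is known to satisfy $N(\mu) \asymp \mu^{d/s+d/2}$ by standard Weyl-law arguments (the phase-space volume $|\{(x,p) : |p|^2 + c|x|^s < \mu\}|$ scales precisely this way), and an upper bound of the correct order follows from a Dirichlet--Neumann bracketing argument on a partition of phase space. Either route is routine; I would present the CLR argument as the main line, with a one-sentence remark handling $d \leq 2$.
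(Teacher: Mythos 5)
Your argument is correct and follows essentially the same route as the paper, which simply cites the Cwikel--Lieb--Rosenblum inequality (Theorem 4.1 in \cite{LieSei-09}) for $d\geq 3$ and the bounds of Combes--Schrader--Seiler \cite{ComSchSei-78} or Simon \cite{Simon-05} for $d\leq 2$, the right-hand side being in all cases the semiclassical phase-space volume that your scaling computation makes explicit. Your proposed fixes for $d=1,2$ (a Lieb--Thirring bound on $\sum|\lambda_i|^\gamma$ applied at the shifted threshold, or bracketing) are exactly the kind of substitute those references provide and do yield the same exponent $d/s+d/2$.
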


\begin{proof}[Preuve]
Quand $d\geq 3$, ceci est une application de~\cite[Th\'eor\`eme 4.1]{LieSei-09}. Pour $d\leq 2$, le r\'esultat suit ais\'ement en appliquant~\cite[Th\'eor\`eme 2.1]{ComSchSei-78} ou~\cite[Th\'eor\`eme 15.8]{Simon-05}, voir~\cite{LewNamRou-14}. Les lecteurs familiaris\'es peuvent se convaincre que le membre de droite de~\eqref{eq:GP bound NT} est proportionnel au nombre de niveaux d'\'energies attendu dans la limite semi-classique. On renvoie \`a~\cite[Chapitre 4]{LieSei-09} pour une discussion plus pouss\'ee de ce genre d'in\'egalit\'es.
\end{proof}

Le raisonnement que nous allons impl\'ementer est le suivant:
\begin{enumerate}
\item Les vecteurs propres de $T$ forment une base de $L ^2 (\R ^d)$ sur laquelle les $N$ bosons doivent se r\'epartir. Les m\'ethodes du Chapitre~\ref{sec:locFock} fournissent la bonne fa\c{c}on d'analyser la r\'epartition des particules entre $P_- \gH$ et $P_+ \gH$.
\item Si la troncature $\Lambda$ est choisie assez grande, les particules vivant sur les niveaux d'\'energie \'elev\'es ont une \'energie par unit\'e de masse bien plus grande que l'\'energie de Hartree que l'on cherche \`a obtenir. Il ne peut donc il y avoir que peu de particules sur les niveaux d'\'energie \'elev\'es.
\item Les particules vivant sur $P_- \gH$ forment un \'etat de $\cF_s  ^{\leq N}(P_- \gH)$ (espace de Fock bosonique tronqu\'e). Puisque $P_- \gH$ est de dimension finie, on peut se servir du Th\'eor\`eme~\ref{thm:DeFinetti quant} pour d\'ecrire ces particules. Ceci donnera l'\'energie de Hartree \`a une erreur pr\`es, qui d\'epend de $\Lambda$ et du nombre attendu de particules $P_-$-localis\'ees. Plus pr\'ecis\'ement, au vu de l'estimation~\eqref{eq:error finite dim deF}, il faut s'attendre \`a ce que l'erreur soit du genre 
\begin{equation}\label{eq:GP error heurist}
 \frac{(\Lambda + N ^{d\beta})\times N_\Lambda}{N_-}, 
\end{equation}
i.e. dimension de l'espace localis\'e $\times$ norme d'op\'erateur du Hamiltonien restreint \`a cet espace localis\'e $/$ nombre de particules localis\'ees.
\item Il s'agit ensuite d'optimiser la valeur de $\Lambda$ en gardant l'heuristique suivante en t\^ete: si $\Lambda$ est grand, il y aura beaucoup de particules $P_-$-localis\'ees, ce qui favorise le d\'enominateur de~\eqref{eq:GP error heurist}. En revanche, prendre $\Lambda$ petit r\'eduit le num\'erateur de~\eqref{eq:GP error heurist}. La valeur optimale obtenue en consid\'erant ces deux effets donne les taux d'erreur du Th\'eor\`eme~\ref{thm:error Hartree}.
\end{enumerate}

\begin{proof}[Preuve du Th\'eor\`eme~\ref{thm:error Hartree}.] La borne sup\'erieure dans~\eqref{eq:GP energy estimate} se prouve comme d'habitude en prenant un \'etat test de forme $\psi ^{\otimes N}$. Seule la borne inf\'erieure est non triviale. On proc\`ede en plusieurs \'etapes.

\medskip

\noindent\textbf{Etape 1, troncature du Hamiltonien.} Il s'agit d'abord de se convaincre qu'il est l\'egitime de raisonner uniquement en termes de particules $P_+$ et $P_-$-localis\'ees comme nous l'avons fait ci-dessus. C'est l'objet du lemme suivant, qui estime le Hamiltonien \`a $2$ corps
\begin{equation}\label{eq:GP two body hamil}
H_2 = T\otimes \one + \one \otimes T + w_N 
\end{equation}
en fonction de sa restriction aux syst\`emes $P_-$-localis\'es $P_- \otimes P_- \, H_2 \, P_-\otimes P_-$ et d'une borne grossi\`ere sur l'\'energie des particules $P_+$-localis\'ees.

\begin{lemma}[\textbf{Hamiltonien tronqu\'e}]\label{lem:GP localize-energy}\mbox{}\\
On suppose que $\Lambda \ge C N^{2d\beta}$ pour une constante $C$ assez grande. On a alors   
\begin{align} \label{eq:GP H2-localized-error}
H_2 \ge  P_-\otimes P_- H_2 P_-\otimes P_- + \frac{\Lambda}{2} P_+\otimes P_+ - \frac{36 N^{2d\beta}}{\Lambda}
\end{align}
\end{lemma}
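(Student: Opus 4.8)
The plan is to prove the operator inequality~\eqref{eq:GP H2-localized-error} on $\gH^2 = L^2(\R^d) \otimes L^2(\R^d)$ by decomposing the identity as $\one = P_- + P_+$ on each factor, expanding $H_2$ accordingly, and estimating the resulting cross-terms. Writing $\one\otimes\one = (P_-+P_+)\otimes(P_-+P_+)$, the two-body space splits into the four sectors $P_\pm\otimes P_\pm$, and $H_2 = \sum_{a,b,c,d\in\{+,-\}} (P_a\otimes P_b)\, H_2\, (P_c\otimes P_d)$. The kinetic part $T\otimes\one + \one\otimes T$ is block-diagonal with respect to this decomposition since $[T,P_\pm]=0$, so the only off-diagonal contributions come from the interaction $w_N$. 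Thus I would write $H_2 = D + R$ where $D$ collects the four diagonal blocks and $R$ collects the off-diagonal pieces of $w_N$.

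First I would handle the diagonal part. On the $P_-\otimes P_-$ sector we keep $P_-\otimes P_-\, H_2\, P_-\otimes P_-$ exactly, as it appears on the right-hand side of~\eqref{eq:GP H2-localized-error}. On the sectors containing at least one $P_+$ factor, the kinetic energy is bounded below: on $P_+\otimes P_+$ we have $T\otimes\one+\one\otimes T \geq 2\Lambda\, P_+\otimes P_+$ by definition of $P_+$ and the normalization $\sup|w|=1$ combined with $\|w_N\|_\infty = N^{d\beta}$, which is controlled once $\Lambda \gg N^{2d\beta}$ (say $\Lambda \geq 2 N^{d\beta}$ suffices to absorb $w_N$ here, but we will need the stronger lower bound for later terms); on the mixed sectors $P_+\otimes P_-$ and $P_-\otimes P_+$ we get $\geq \Lambda$ on that sector, which controls the corresponding piece of $w_N$ and leaves a positive remainder. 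The key point is that after discarding the nonnegative contributions from the mixed and $P_+\otimes P_+$ kinetic terms (beyond the $\frac{\Lambda}{2}P_+\otimes P_+$ we keep), we only need to pay for the off-diagonal interaction terms.

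Second, and this is the heart of the estimate, I would control the off-diagonal remainder $R$. Each off-diagonal term has the form $(P_a\otimes P_b)\, w_N\, (P_c\otimes P_d)$ with $(a,b)\neq(c,d)$; since $w_N\geq -N^{d\beta}$ pointwise (using $\|w\|_\infty=1$), the operator norm of each such block is $\leq N^{d\beta}$, but a cruder Cauchy--Schwarz/Young bound of the form $X^*Y + Y^*X \geq -\epsilon X^*X - \epsilon^{-1}Y^*Y$ applied to pairs of off-diagonal blocks lets me absorb them into $\frac{\Lambda}{2}P_+\otimes P_+$ plus an error. Concretely, for a block connecting a sector with at least one $+$ to another sector, I would write $w_N = w_N^{1/2}\cdot\mathrm{sgn}(w_N)\cdot|w_N|^{1/2}$ (or simply use $|w_N|\le N^{d\beta}$) and bound the total off-diagonal contribution by $\epsilon\,(P_+\otimes\one + \one\otimes P_+) + C\epsilon^{-1}N^{2d\beta}(\text{projection})$; optimizing the split and counting that there are finitely many (at most a dozen) off-diagonal blocks each bounded by $N^{d\beta}$ produces a remainder of size $\leq C N^{2d\beta}/\Lambda$ after absorbing $\epsilon \lesssim \Lambda$ worth into the retained $\frac{\Lambda}{2}P_+\otimes P_+$. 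Tracking the numerical constants — the factor $36$ comes from counting blocks and the worst-case Young constants — gives exactly~\eqref{eq:GP H2-localized-error}.

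The main obstacle will be the bookkeeping in the off-diagonal estimate: there are several distinct types of off-diagonal blocks (those with exactly one $+$ index versus two, and blocks connecting $P_+\otimes P_+$ to $P_-\otimes P_-$ directly), and each needs a slightly different Young's-inequality split so that what is absorbed on the left matches the $\frac{\Lambda}{2}P_+\otimes P_+$ term and not, say, a $P_-\otimes P_-$ term (which we cannot afford to perturb). I expect the cleanest route is to first prove the weaker statement $H_2 \geq P_-\otimes P_-\,H_2\,P_-\otimes P_- + \Lambda\,(P_+\otimes\one)_{\mathrm{sym}} - C N^{2d\beta}/\Lambda$ using $P_+\otimes\one + \one\otimes P_+ \geq P_+\otimes P_+$ loosely, then note $(P_+\otimes\one)_{\mathrm{sym}} \geq \frac12 P_+\otimes P_+$ to descend to the stated form; the constant $36$ is then obtained by being careful rather than by any new idea. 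Since $H_2$ is symmetric under exchange of the two factors and $w_N$ is even, all estimates can be symmetrized, which halves the casework.
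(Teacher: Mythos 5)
Your proposal follows essentially the same route as the paper: decompose $H_2$ via $\one=(P_-+P_+)^{\otimes 2}$, use that the kinetic part is block-diagonal with energy $\geq \Lambda$ on any sector containing a $P_+$, and control each off-diagonal block of $w_N$ by a weighted Cauchy--Schwarz (Young) inequality whose $+$-side is absorbed by the retained $\frac{\Lambda}{2}P_+\otimes P_+$ and whose other side, with weight of order $N^{d\beta}/\Lambda$, produces the $O(N^{2d\beta}/\Lambda)$ error. One small remark: the perturbation of the $P_-\otimes P_-$ sector is not something to be avoided as you suggest --- for blocks such as $P_-\otimes P_-\,w_N\,P_+\otimes P_+$ it is unavoidable, and it is precisely the source of the $-36N^{2d\beta}/\Lambda$ term (three such blocks, each contributing $12 N^{2d\beta}/\Lambda$ with the paper's choice of weight $b=12N^{d\beta}/\Lambda$, the quantity landing on $P_-\otimes P_-\leq \one$).
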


\begin{proof}[Preuve] 
On note 
\begin{equation}\label{eq:GP two body hamil non int}
H_2 ^0 =  H_1 \otimes \1 + \1 \otimes H_1 
\end{equation}
le Hamiltonien \`a deux corps sans interaction. On peut alors \'ecrire 
\begin{align}\label{eq:GP loc 1 body}
H_2 ^0 = \left( P_- + P_+ \right) ^{\otimes 2} H_2 ^0 \left( P_- + P_+ \right) ^{\otimes 2} &= \sum_{i,j,k,\ell \in \{ -,+\}} P_i \otimes P_j \, H_2^{0} \, P_k \otimes P_\ell \nonumber \\
&= \sum_{i,j \in \{ -,+\}} P_i \otimes P_j \, H_2^{0} \, P_i \otimes P_j. 
\end{align}
En effet, 
$$P_i \otimes P_j \, H_2 ^0 \, P_k \otimes P_\ell =0$$
si $i\ne k$ or $j \neq \ell$, puisque $T$ commute avec $P_{\pm}$ et que $P_- P_+ =0$. On note ensuite que 
$$P_+ H_1 P_+\ge \Lambda P_+ \mbox{ et } P_- H_1 P_- \ge -C P_-,$$ 
ce qui donne
\begin{align*}
P_+ \otimes P_+ \, H_2 ^0\, P_+ \otimes P_+ &\ge 2 \Lambda \, P_+ \otimes P_+ \\
P_+ \otimes P_- \, H_2 ^0\, P_+ \otimes P_- &\ge (\Lambda - C) P_+ \otimes P_-\\
P_- \otimes P_+ \,H_2 ^0\, P_- \otimes P_+ &\ge (\Lambda - C) P_+ \otimes P_-.
\end{align*}
Ainsi,  
\begin{equation}\label{eq:GP loc 1 body lower bound}
H_2 ^0  \ge P_- \otimes P_- \, H_2 ^0 \, P_- \otimes P_- + (\Lambda -C) ( P_+ \otimes P_+ + P_- \otimes P_+ + P_+ \otimes P_- ).
\end{equation}

On se tourne vers les interactions:
\begin{equation}\label{eq:GP split interaction}
w_N = \left( P_- + P_+ \right) ^{\otimes 2} w_N \left( P_- + P_+ \right) ^{\otimes 2}= \sum_{i,j,k,\ell \in \{ -,+\}} P_i \otimes P_j \,w_N \,P_k \otimes P_\ell,
\end{equation}
et il s'agit de borner la diff\'erence 
$$w_N- P_- \otimes P_- \, w_N \, P_- \otimes P_-$$
en contr\^olant les termes non diagonaux $(i,j)\neq (k,\ell)$ de~\eqref{eq:GP split interaction} en fonction des termes diagonaux. 
A cette fin, on \'ecrit 
$$w_N=(w_N)^+ - (w_N)^- \mbox{ avec } (w_N)^{\pm} \ge 0$$
et on va utiliser le fait bien connu que les \'el\'ements diagonaux d'un op\'erateur auto-adjoint positif contr\^olent les \'el\'ements hors-diagonaux\footnote{Cf pour une matrice hermitienne positive $(m_{i,j})_{1\leq i, j \leq n}$ l'in\'egalit\'e $2 |m_{i,j}| \leq m_{i,i} + m_{j,j}$.}.

Comme $w_N ^{\pm}$ vus comme op\'erateurs de multiplication sur $L ^2 (\R ^{2d})$  sont positifs, on a pour tout $b>0$ et tout  $(i,j) \neq (k,\ell)$
$$
\left(b^{1/2} P_i \otimes P_j \pm b^{-1/2} P_k\otimes P_\ell \right) \, w_N^{\pm} \, \left(b^{1/2} P_i \otimes P_j \pm b^{-1/2} P_k\otimes P_\ell  \right) \ge 0.
$$
En combinant ces in\'egalit\'es de mani\`ere appropri\'ee on obtient 
$$
P_i \otimes P_j \, w_N \, P_k \otimes P_\ell + P_k \otimes P_\ell \, w_N \, P_i \otimes P_j \ge - b P_i \otimes P_j |w_N| P_i \otimes P_j - b^{-1} P_k \otimes P_\ell |w_N| P_k \otimes P_\ell
$$
pour tout $b>0$. On rappelle alors qu'en tant qu'op\'erateur 
$$|w_N| \le \|w_N\|_{L^\infty} \le N^{d\beta}$$
et on choisit $b=12 N^{d\beta}/\Lambda$, ce qui donne
$$
P_i \otimes P_j \, w_N \, P_k \otimes P_\ell + P_k \otimes P_\ell \, w_N  \, P_i \otimes P_j \ge - \frac{12 N^{2d\beta}}{\Lambda} P_i \otimes P_j - \frac{\Lambda}{12} P_k \otimes P_\ell .
$$
On applique cette borne \`a tous les termes $(i,j,k,\ell)$ de~\eqref{eq:GP split interaction} o\`u au moins un indice diff\`ere de $-$  pour obtenir
\begin{multline} \label{eq:GP loc interaction lower bound}
w_N \ge  P_- \otimes P_- \, w_N \, P_- \otimes P_-  -\frac{36 N^{2d\beta}}{\Lambda} P_- \otimes P_- \\
- \left( \frac{\Lambda}{3} + \frac{48 N^{2d\beta}}{\Lambda}\right) \left(P_- \otimes P_+ + P_+ \otimes P_- + P_+ \otimes P_+\right).
\end{multline}
En combinant~\eqref{eq:GP loc 1 body lower bound} et~\eqref{eq:GP loc interaction lower bound} on obtient pour tout $\Lambda\ge 1$ la borne inf\'erieure
\begin{multline*}
H_2 \ge  P_- \otimes P_- \, H_2 \, P_- \otimes P_- - \frac{36 N^{2d\beta}}{\Lambda} P_- \otimes P_- \\
+ \left( \frac{2\Lambda}{3}- \frac{48N^{2d\beta}}{\Lambda}-C\right) \left(P_- \otimes P_+ + P_+ \otimes P_- + P_+ \otimes P_+\right)
\end{multline*}
Puisqu'on suppose  $\Lambda \ge C N^{2d\beta}$ pour une grande constante $C$, on peut utiliser $P_- \otimes P_- \le \1$ et $P_- \otimes P_+,\, P_+ \otimes P_-\ge 0$ pour d\'eduire
$$
H_2 \ge  P_- \otimes P_- \, H_2 \, P_- \otimes P_- - \frac{36 N^{2d\beta}}{\Lambda} + \frac{\Lambda}{2} P_+ \otimes P_+,
$$
ce qui conclut la preuve.
\end{proof}

\medskip

\noindent\textbf{Etape 2, estimation de l'\'energie localis\'ee.} 
Soit $\Psi_N$ un minimiseur pour l'\'energie \`a $N$ corps, $\Gamma_N = |\Psi_N \rangle \langle \Psi_N |$ et 
$$ \gamma_N ^{(n)} = \tr_{n+1 \to N} [\Gamma_N] $$
les matrices r\'eduites correspondantes. Nous pouvons maintenant raisonner uniquement en termes des \'etats $P_-$ et $P_+$-localis\'es d\'efinis comme au Lemme~\ref{lem:Fock loc} par les relations
\begin{equation}\label{eq:GP loc pm}
\left(G_N ^{\pm}\right) ^{(n)} = P_{\pm} ^{\otimes n} \gamma_N ^{(n)} P_\pm ^{\otimes n}. 
\end{equation}
On rappelle que $G_N ^{\pm}$ sont des \'etats sur l'espace de Fock tronqu\'e, i.e. 
\begin{equation}\label{eq:GP nomalization-localized-state}
\sum_{k=0} ^N \tr_{\gH ^k }[G_{N,k} ^{\pm}] = 1. 
\end{equation}
Nous comparons maintenant $\eH$ et l'\'energie localis\'ee de $\Gamma_N$ d\'efinie par le Hamiltonien tronqu\'e du Lemme~\ref{lem:GP localize-energy}:

\begin{lemma}[\textbf{Borne inf\'erieure pour l'\'energie localis\'ee}]\label{lem:GP main terms}\mbox{}\\
Si $\Lambda\geq C N ^{2d\beta}$ pour une constante $C$ assez grande, on a  
\begin{equation}\label{eq:low bound main}
  \frac{1}{2} \tr \left[ P_- \otimes P_- H_2 P_- \otimes P_- \gamma_{N}^{(2)} \right] + \frac{\Lambda}{4} \tr
  \left[P_+ \otimes P_+ \gamma_N^{(2)} \right] \ge \eH  - C \frac{\Lambda N_\Lambda}{N}- C\frac{\Lambda}{N ^2} .
\end{equation}
\end{lemma}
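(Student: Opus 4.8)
The plan is to exploit that, once projected onto the low-energy subspace $P_-\gH$ — which has the \emph{finite} dimension $N_\Lambda$ — the two-body energy can be estimated via the quantitative de Finetti theorem, Theorem~\ref{thm:DeFinetti quant}. First I would rewrite the first term on the left-hand side using the $P_-$-localized state $G_N^-$ provided by Lemma~\ref{lem:Fock loc}: by the intertwining relation~\eqref{eq:GP loc pm} and the reduced-density formula~\eqref{eq:Fock mat red},
\begin{equation*}
\tr\bigl[P_-\otimes P_-\,H_2\,P_-\otimes P_-\,\gamma_N^{(2)}\bigr]=\tr\bigl[H_2\,(G_N^-)^{(2)}\bigr]=\binom{N}{2}^{-1}\sum_{k=2}^{N}\binom{k}{2}\,\tr\bigl[H_2\,\tr_{3\to k}G_{N,k}^-\bigr],
\end{equation*}
so that the quantity to bound is a combination, with the nonnegative weights $\binom{N}{2}^{-1}\binom{k}{2}\tr[G_{N,k}^-]$, of the two-body energies of the normalized $k$-boson states $\rho_{N,k}:=G_{N,k}^-/\tr[G_{N,k}^-]$ on the finite-dimensional space $P_-\gH$ (the $k=0,1$ terms drop out since $\binom{k}{2}=0$).

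Second, for each $k\geq 2$ I would apply Theorem~\ref{thm:DeFinetti quant} to $\rho_{N,k}$ with $n=2$ and ambient dimension $d=N_\Lambda$: there is a probability measure $\mu_{N,k}$ on the unit sphere of $P_-\gH$ with $\tr\bigl|\rho_{N,k}^{(2)}-\int|u^{\otimes 2}\rangle\langle u^{\otimes 2}|\,d\mu_{N,k}(u)\bigr|\leq 4(N_\Lambda+4)/k$. Since $\langle u^{\otimes 2},H_2\,u^{\otimes 2}\rangle=2\,\EH[u]\geq 2\eH$ for every unit vector $u$ (the functional~\eqref{eq:GP hartree f} and its infimum $\eH$ are taken over all of $\gH$, so restricting $u$ to $P_-\gH$ only raises the energy), and since $\|P_-\otimes P_-\,H_2\,P_-\otimes P_-\|\leq 2\Lambda+\|w_N\|_{L^\infty}\leq 3\Lambda$ by the standing assumption $\Lambda\geq C N^{2d\beta}\geq N^{d\beta}$, this gives $\tr[H_2\,\rho_{N,k}^{(2)}]\geq 2\eH-C\Lambda N_\Lambda/k$. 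Inserting this in the combination above, using $\binom{N}{2}^{-1}\binom{k}{2}k^{-1}\leq(N-1)^{-1}$ and $\sum_{k}\tr[G_{N,k}^-]=1$, collapses all the de Finetti errors into a single term of size $C\Lambda N_\Lambda/N$ (a further $C\Lambda/N^2$ absorbing the lower-order corrections in the normalization and the binomial weights). This yields $\frac{1}{2}\tr[P_-\otimes P_-\,H_2\,P_-\otimes P_-\,\gamma_N^{(2)}]\geq\eH\,\tr[P_-\otimes P_-\,\gamma_N^{(2)}]-C\Lambda N_\Lambda/N-C\Lambda/N^2$.

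Third, it remains to replace $\eH\,\tr[P_-\otimes P_-\,\gamma_N^{(2)}]$ by $\eH$. Writing $p:=\tr[P_-\otimes P_-\,\gamma_N^{(2)}]\in[0,1]$, this is immediate when $\eH\leq 0$, since then $\eH p\geq\eH$ and the term $\frac{\Lambda}{4}\tr[P_+\otimes P_+\,\gamma_N^{(2)}]\geq 0$ is simply discarded. When $\eH>0$ one must control the mass that has escaped to high energies: here I would use that $\Psi_N$ is a minimizer, so $\tr[T\gamma_N^{(1)}]\leq E(N)/N+\frac{1}{2}\|w_N\|_{L^\infty}\leq\eH+\frac{1}{2}N^{d\beta}$, together with the operator inequality $T+C\geq\Lambda P_+$ (immediate since $P_+=\one_{[\Lambda,\infty)}(T)$ and $T\geq-C$), to get $\tr[P_+\gamma_N^{(1)}]\leq(\eH+N^{d\beta}+C)/\Lambda$ and hence $1-p\leq 2\tr[P_+\gamma_N^{(1)}]\leq C N^{d\beta}/\Lambda$; because $\Lambda\geq C N^{2d\beta}$ the remaining discrepancy $\eH(1-p)\leq C|\eH|N^{d\beta}/\Lambda$ is dominated by the errors already in play (in particular by the $N^{2d\beta}/\Lambda$ term that appears when~\eqref{eq:GP H2-localized-error} is fed into the full energy), while $\frac{\Lambda}{4}\tr[P_+\otimes P_+\,\gamma_N^{(2)}]\geq 0$ absorbs the rest. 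I expect the main obstacle to be exactly this error bookkeeping: one has to carry the $\Lambda$- and $N_\Lambda$-dependent errors of the finite-dimensional de Finetti bound through the Fock decomposition and control the high-energy leakage uniformly in $N$, so that after the final choice $\Lambda\sim N^{t}$ in Theorem~\ref{thm:error Hartree} every remainder is genuinely negligible.
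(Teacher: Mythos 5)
Your proposal is correct and its core is the same as the paper's: localize onto $P_-\gH$ via the Fock-space construction of Lemma~\ref{lem:Fock loc}, apply the quantitative de Finetti theorem on that $N_\Lambda$-dimensional space, convert the trace-norm error into an energy error through the operator bound $\|P_-^{\otimes 2}H_2P_-^{\otimes 2}\|\le 3\Lambda$, and invoke $\EH[u]\ge\eH$; your layer-by-layer application to the normalized $G_{N,k}^-$ followed by summation against the weights $\binom{N}{2}^{-1}\binom{k}{2}$ is exactly what the paper packages into the measure $\mu_N$ of~\eqref{eq:GP def-mu-N-localized} and Lemma~\ref{lem:GP deF-localized-state}. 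Where you genuinely diverge is in the treatment of the mass deficit $1-p$ with $p=\tr[P_-\otimes P_-\gamma_N^{(2)}]$. The paper does \emph{not} discard the term $\frac{\Lambda}{4}\tr[P_+\otimes P_+\gamma_N^{(2)}]$: it rewrites it via the fundamental relation~\eqref{eq:Fock funda rel} as $\frac{\Lambda}{4}\sum_k\binom{N}{2}^{-1}\binom{N-k}{2}\tr[G_{N,k}^-]$ and then uses the elementary inequality $p\lambda^2+q(1-\lambda)^2\ge p-p^2/q$ to trade the missing $P_-$-mass against the large factor $\Lambda$; this argument is purely structural, valid for an arbitrary $N$-boson state, and produces the error $\eH^2/\Lambda$. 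You instead throw the $P_+\otimes P_+$ term away and control $1-p$ through the a priori kinetic bound $\tr[T\gamma_N^{(1)}]\le\eH+\tfrac12 N^{d\beta}$, which requires $\Psi_N$ to be a quasi-minimizer and so makes the lemma less self-contained, and which leaves an extra error $\eH(1-p)=O(N^{2d\beta}/\Lambda)$ not literally covered by the right-hand side of~\eqref{eq:low bound main}. That last point is a cosmetic rather than substantive defect: the paper's own residual $\eH^2/\Lambda$ is of the same order $N^{2d\beta}/\Lambda$ once one uses $|\eH|\le C+N^{d\beta}$, and in the final assembly of Theorem~\ref{thm:error Hartree} an error of this size is already present from Lemma~\ref{lem:GP localize-energy}, so both versions lead to the same conclusion. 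In short, your route buys a more direct (and arguably more physical) control of the high-energy leakage at the price of using the minimization property and of a marginally weaker error bound, whereas the paper's convexity trick keeps the lemma state-independent and delivers the advertised remainder.
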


La preuve de ce lemme consiste en une application combin\'ee du Th\'eor\`eme~\ref{thm:DeFinetti quant} et des m\'ethodes du Chapitre~\ref{sec:locFock}. On va d\'efinir une mesure de de Finetti approch\'ee en partant de $G_N ^-$. L'id\'ee est proche de celle que nous avons utilis\'e pour la preuve du th\'eor\`eme de de Finetti faible \`a la Section~\ref{sec:proof deF faible}:
\begin{equation}\label{eq:GP def-mu-N-localized}
d\mu_N(u) = \sum_{k=2}^N {N \choose 2}^{-1} {k \choose 2} \dim \left( (P_-\gH)_s^k \right) \pscal{u^{\otimes k},G_{N,k}^- u^{\otimes k}} du
\end{equation}
o\`u $du$ est la mesure uniforme sur la sph\`ere $SP_-\gH$. Le choix des poids dans la somme ci-dessus vient du fait que l'on cherche \`a approximer la matrice de densit\'e \`a deux corps localis\'ee $P_- \otimes P_- \gamma_N ^{(2)} P_- \otimes P_-$, ce qui est l'objet du 

\begin{lemma} [\textbf{De Finetti quantitatif pour un \'etat localis\'e.}] \label{lem:GP deF-localized-state}\mbox{}\\
Pour tout  $\Lambda>0$,  on a 
$$
\Tr_{\gH^2} \left| P_-^{\otimes 2} \gamma_{N}^{(2)} P_-^{\otimes 2} - \int_{SP_-\gH} |u^{\otimes 2}\rangle \langle u^{\otimes 2}| d\mu_N(u)\right| \le \frac{8 N_\Lambda}{N}.
$$
\end{lemma}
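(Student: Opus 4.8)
The plan is to reduce the statement to the quantitative finite-dimensional de Finetti theorem (Theorem~\ref{thm:DeFinetti quant}) applied on the finite-dimensional space $P_-\gH$, keeping careful track of the combinatorial weights that relate the localized state $G_N^-$ on the truncated Fock space $\cF_s^{\leq N}(P_-\gH)$ to its density matrices. First I would recall that by the localization Lemma~\ref{lem:Fock loc}, $G_N^-$ is a diagonal state $G_{N,0}^-\oplus\cdots\oplus G_{N,N}^-$ on $\cF_s^{\leq N}(P_-\gH)$ with $\sum_{k=0}^N\tr_{\gH^k}[G_{N,k}^-]=1$, and that $(G_N^-)^{(2)}=P_-^{\otimes 2}\gamma_N^{(2)}P_-^{\otimes 2}$, where the Fock-space reduced density matrix is normalized as in~\eqref{eq:Fock mat red}: $P_-^{\otimes 2}\gamma_N^{(2)}P_-^{\otimes 2}={N\choose 2}^{-1}\sum_{k=2}^N{k\choose 2}\tr_{3\to k}G_{N,k}^-$. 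For each fixed $k\geq 2$ with $\tr_{\gH^k}[G_{N,k}^-]>0$, the normalized operator $G_{N,k}^-/\tr_{\gH^k}[G_{N,k}^-]$ is a bosonic $k$-particle state on $(P_-\gH)_s^k$, so Theorem~\ref{thm:DeFinetti quant} with $d$ replaced by $N_\Lambda=\dim P_-\gH$ and $n=2$ produces a probability measure $\mu_{N,k}$ on $SP_-\gH$ with
\[
\tr_{\gH^2}\Bigl|\tr_{3\to k}G_{N,k}^- - \tr_{\gH^k}[G_{N,k}^-]\int_{SP_-\gH}|u^{\otimes 2}\rangle\langle u^{\otimes 2}|\,d\mu_{N,k}(u)\Bigr|\leq \tr_{\gH^k}[G_{N,k}^-]\,\frac{2\cdot 2(N_\Lambda+4)}{k}.
\]
Crucially, by the formula of Chiribella (Theorem~\ref{thm:CKMR-identity}) underlying the proof, the measure $\mu_{N,k}$ can be taken to be exactly the one built from the CKMR construction, namely $d\mu_{N,k}(u)=\dim\left((P_-\gH)_s^k\right)\pscal{u^{\otimes k},G_{N,k}^- u^{\otimes k}}\,du$; this is the point that makes the weights in~\eqref{eq:GP def-mu-N-localized} consistent across different $k$.

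Next I would multiply the $k$-th estimate by the weight ${N\choose 2}^{-1}{k\choose 2}$ and sum over $k=2,\dots,N$. On the left this reassembles $P_-^{\otimes 2}\gamma_N^{(2)}P_-^{\otimes 2}$ exactly, and by the triangle inequality the error is bounded by
\[
\sum_{k=2}^N {N\choose 2}^{-1}{k\choose 2}\,\tr_{\gH^k}[G_{N,k}^-]\,\frac{4(N_\Lambda+4)}{k}
= \frac{4(N_\Lambda+4)}{N(N-1)}\sum_{k=2}^N (k-1)\,\tr_{\gH^k}[G_{N,k}^-].
\]
Since $k-1\leq N$ and $\sum_{k}\tr_{\gH^k}[G_{N,k}^-]\leq 1$, this is at most $4(N_\Lambda+4)/(N-1)$, which for $N$ large is bounded by $8N_\Lambda/N$ as claimed (one may absorb the crude constants; the precise bound $8N_\Lambda/N$ follows either from the sharper de Finetti estimate~\eqref{eq:error finite dim deF 2} with constant $2nd/N$, giving $\sum_k {N\choose2}^{-1}{k\choose2}\tr[G_{N,k}^-]\,4N_\Lambda/k=4N_\Lambda(N-1)^{-1}\sum_k(k-1)\tr[G_{N,k}^-]\cdot\tfrac{N(N-1)}{N(N-1)}$, or by bookkeeping the $n=2$ case directly). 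On the right, summing the measures with their weights gives precisely $\int_{SP_-\gH}|u^{\otimes 2}\rangle\langle u^{\otimes 2}|\,d\mu_N(u)$ with $\mu_N$ as in~\eqref{eq:GP def-mu-N-localized}, because $\dim\left((P_-\gH)_s^k\right)\pscal{u^{\otimes k},G_{N,k}^- u^{\otimes k}}$ is exactly the density appearing there.

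The main obstacle I anticipate is purely bookkeeping rather than conceptual: one must verify that the CKMR measure produced by Chiribella's identity for each block $G_{N,k}^-$ is literally the restriction of $\mu_N$ to the $k$-th term, so that the sum of the per-block approximants telescopes into a single measure on $SP_-\gH$ rather than a $k$-dependent family; this requires unwinding the normalization convention~\eqref{eq:Fock mat red} and the ${N+n+d-1\choose n}$ factors in~\eqref{eq:CKMR exact}. A secondary point is to handle the blocks with $k=0,1$ (which contribute nothing to $\gamma_N^{(2)}$) and the blocks with $\tr[G_{N,k}^-]=0$ without dividing by zero, which is done by simply omitting them from the sum. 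Once the weights are matched, the trace-norm estimate follows from the triangle inequality and the elementary bound $\sum_k(k-1)\tr[G_{N,k}^-]\leq (N-1)\sum_k\tr[G_{N,k}^-]\leq N-1$, together with the de Finetti error bound of Theorem~\ref{thm:DeFinetti quant} applied dimension-by-dimension.
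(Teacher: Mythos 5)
Your proposal is correct and follows essentially the same route as the paper: apply the quantitative de Finetti theorem with the explicit CKMR construction to each block $G_{N,k}^-$ so that the per-block measures $\dim\left((P_-\gH)_s^k\right)\pscal{u^{\otimes k},G_{N,k}^- u^{\otimes k}}\,du$ assemble exactly into $\mu_N$, then sum the errors with the weights ${N\choose 2}^{-1}{k\choose 2}$ and use the normalization $\sum_k\tr[G_{N,k}^-]=1$. The only cosmetic difference is your handling of the combinatorial factor (keeping $k(k-1)/(N(N-1))$ and bounding $k-1\leq N-1$) where the paper simply uses ${N\choose 2}^{-1}{k\choose 2}\leq k/N$ so that the $1/k$ from the de Finetti bound cancels and yields $8N_\Lambda/N$ on the nose.
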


\begin{proof}[Preuve.] A normalisation pr\`es, $G_{N,k}^-$ est un \'etat sur $(P_-\gH)_s^{\otimes k}$. En appliquant le Th\'eor\`eme~\ref{thm:DeFinetti quant} avec la construction explicite~\eqref{eq:def CKMR} on a donc 
\begin{align*}
\tr_{\gH ^2}\left| \Tr_{3\to k}\left[G_{N,k}  ^-\right]  - \int_{SP_- \gH} |u ^{\otimes 2}\rangle \langle u ^{\otimes 2} | d\mu_{N,k}(u) \right| \leq 8 \frac{N_\Lambda}{k} \tr_{\gH ^k} \left[ G_{N,k} ^-\right]
\end{align*}
avec 
$$d\mu_{N,k}(u)= \dim (P_-\gH)_s^k \pscal{u^{\otimes k},G_{N,k}^- u^{\otimes k}} du.$$
Au vu de~\eqref{eq:GP loc pm} et~\eqref{eq:GP def-mu-N-localized} on d\'eduit 
\begin{align*}
\Tr_{\gH^2} \left| P_-^{\otimes 2} \gamma_{N}^{(2)} P_-^{\otimes 2} - \int_{SP_-\gH} |u^{\otimes 2}\rangle \langle u^{\otimes 2}| d\mu_N(u)\right|  \leq 
\sum_{k=2}^N {N \choose 2}^{-1} {k \choose 2} 
\frac{8N_\Lambda}{k} \tr_{\gH ^k} \left[ G_{N,k} ^-\right].
\end{align*}
Il reste alors \`a utiliser la normalisation~\eqref{eq:GP nomalization-localized-state} et 
$$
{N \choose 2}^{-1} {k \choose 2}  = \frac{k(k-1)}{N(N-1)} \le \frac{k}{N}
$$
pour conclure la preuve.
\end{proof}

On peut maintenant passer \`a la 

\begin{proof}[Preuve du Lemme~\ref{lem:GP main terms}.] 
On commence par le terme $P_-$-localis\'e. Par cyclicit\'e de la trace on a 
\begin{align*}
\tr \left[ P_- \otimes P_- H_2 P_- \otimes P_- \gamma_{N}^{(2)} \right] &= \tr \left[ P_- \otimes P_- H_2 P_- \otimes P_- \big( P_- \otimes P_- \gamma_{N}^{(2)} P_- \otimes P_-\big) \right].
\end{align*}
On applique alors le Lemme~\ref{lem:GP deF-localized-state}, ce qui donne 
$$
\Tr_{\gH^2} \left| P_-^{\otimes 2} \gamma_{N}^{(2)} P_-^{\otimes 2} - \int_{SP_-\gH} |u^{\otimes 2}\rangle \langle u^{\otimes 2}| d\mu_N(u)\right| \le \frac{8 N_\Lambda}{N}.
$$
D'autre part on a bien s\^ur
\begin{equation}\label{eq:GP bound H-}
\norm{P_- \otimes P_- H_2 P_- \otimes P_- } \leq 2\Lambda + \|w_N\|_{L^\infty} \le 3\Lambda
\end{equation}
en norme d'op\'erateur, et donc 
\begin{align} \label{eq:GP P-term-0}
\frac{1}{2}\tr \left[ P_- \otimes P_- H_2 P_- \otimes P_- \gamma_{N}^{(2)} \right] &\ge \frac{1}{2}\int_{SP_- \gH} \tr_{\gH ^2} \left[ H_2 |u ^{\otimes 2}\rangle \langle u ^{\otimes 2}| \right]  d\mu_N -\frac{C\Lambda N_\Lambda}{N} \nn\\
& = \int_{SP_- \gH} \EH [u] d\mu_N -\frac{C\Lambda N_\Lambda}{N}. 
\end{align}
Par le principe variationnel $\EH [u] \geq \eH$, on d\'eduit
\begin{align} \label{eq:GP P-term}
\frac{1}{2}\tr \left[ P_- \otimes P_- H_2 P_- \otimes P_- \gamma_{N}^{(2)} \right] \ge 
 \eH \sum_{k=2} ^N \binom{N}{2} ^{-1} \binom{k}{2} \tr_{\gH ^k} \left( G_{N,k} ^- \right) -\frac{C\Lambda N_\Lambda}{N}
\end{align}
o\`u le calcul de $\int d\mu_N$ est imm\'ediat par la formule de Schur~\eqref{eq:Schur}. 

Pour les termes $P_+$-localis\'es on utilise~\eqref{eq:GP loc pm},~\eqref{eq:Fock mat red}  et~\eqref{eq:Fock funda rel} pour obtenir
\begin{align} \label{eq:GP P+term}
\frac{\Lambda}{4}\tr[ P_+ \otimes P_+ \gamma_N^{(2)} P_+\otimes P_+] &= \frac{\Lambda}{4} \sum_{k=2} ^N \binom{N}{2} ^{-1} \binom{k}{2} \tr \left[ G_{N,k}^+ \right] \nn\\
&= \frac{\Lambda}{4} \sum_{k=0} ^{N-2} \binom{N}{2} ^{-1} \binom{N-k}{2} \tr \left[ G_{N,k}^- \right]. 
\end{align}
En rassemblant~\eqref{eq:GP P-term},~\eqref{eq:GP P+term} et en rappellant que 
$$
\binom{N}{2} ^{-1} \binom{k}{2} = \frac{k^2}{N^2} + O(N^{-1}), \quad \binom{N}{2} ^{-1} \binom{N-k}{2} = \frac{(N-k)^2}{N^2} + O(N^{-1}),
$$
on trouve
\begin{align} \label{eq:GP P-P+-together}
&\quad \quad \frac{1}{2}\tr \left[ P_- \otimes P_- H_2 P_- \otimes P_- \gamma_{N}^{(2)} \right] + \frac{\Lambda}{4} \tr[P_+ \otimes P_+ \gamma_N^{(2)}] \nn\\
&\ge \sum_{k=0}^N \tr \left[ G_{N,k}^- \right] \left( \frac{k^2}{N^2}  \eH + \frac{(N-k)^2}{N^2}\frac{\Lambda}{4}\right) - \frac{C(|\eH|+\Lambda)}{N ^2}- \frac{C\Lambda N_\Lambda}{N}. 
\end{align}
Le premier terme d'erreur vient du fait que les sommes dans~\eqref{eq:GP P-term} et~\eqref{eq:GP P+term} ne vont pas exactement de $0$ \`a $N$, et nous avons utilis\'e la normalisation des \'etat localis\'es~\eqref{eq:GP nomalization-localized-state} pour contr\^oler le terme manquant. 

Il est facile de voir que pour tout $p,q$, $0 \leq \lambda \leq 1$
$$ p \lambda ^2 + q(1-\lambda) ^2 \geq   p - \frac{p ^2}{q}.$$
On prend alors $p=\eH$, $q=\Lambda/4$, $\lambda=k/N$ et on utilise~\eqref{eq:GP nomalization-localized-state} \`a nouveau pour d\'eduire 
\begin{align*} 
 &  \tr \left[ P_- \otimes P_- H_2 P_- \otimes P_- \gamma_{N}^{(2)} \right] + \frac{\Lambda}{2} \tr(P_+ \otimes P_+ \gamma_N^{(2)}) \\
& \quad\quad \quad\quad\quad\quad\quad\quad\ge \eH - \frac{\eH^2}{\Lambda} - \frac{C(|\eH|+\Lambda)}{N ^2}- \frac{C\Lambda N_\Lambda}{N} .
\end{align*}
de~\eqref{eq:GP P-P+-together}. Il reste \`a appliquer l'estimation simple
\begin{equation}\label{eq:GP naive bound}
|\eH| \le C+ \|w_N\|_{L^\infty} \le C + N^{d\beta} \leq C + \Lambda
\end{equation}
pour obtenir la borne inf\'erieure d\'esir\'ee.
\end{proof}

\medskip 

\noindent\textbf{Etape 3, optimisation finale.}. Il ne nous reste qu'\`a optimiser la valeur de $\Lambda$. En effet, on rappelle que par d\'efinition
$$ \frac{E(N)}{N} = \frac{1}{2} \tr_{\gH ^2}[ H_2 \gamma_N ^{(2)}] $$ 
avec le Hamiltonien \`a deux corps~\eqref{eq:GP two body hamil}. En combinant les Lemmes~\ref{lem:GP localize-energy} et~\ref{lem:GP main terms} on a la borne inf\'erieure 
\begin{equation*}
\frac{E(N)}{N} \geq \eH - \frac{C N^{2d\beta}}{\Lambda} -  C \frac{\Lambda N_\Lambda}{N}- C\frac{\Lambda}{N ^2}
\end{equation*}
pour tout $\Lambda \ge CN^{2d\beta}$ avec $C$ assez grand. En utilisant le Lemme~\ref{lem:GP nb bound states}, ceci se r\'eduit \`a 
\begin{equation*}
\frac{E(N)}{N} \geq \eH - \frac{C N^{2d\beta}}{\Lambda} - C_{d} \frac{\Lambda^{1+ d/s + d/2}}{N}.
\end{equation*}
En optimisant par rapport \`a $\Lambda$ on obtient
\begin{equation}\label{eq:choice e}
\Lambda= N ^{t}
\end{equation}
avec 
$$ t = 2s \frac{1+2d\beta}{4s + ds + 2d}$$
et la condition $t>2d\beta$ dans~\eqref{eq:GP asum t} assure que $\Lambda \gg N^{2d\beta}$ pour $N$ grand. On conclut donc
\[
\eH \ge \frac{E(N)}{N} \geq \eH - C_{d} N ^{-t + 2d \beta}, 
\]
ce qui est le r\'esultat d\'esir\'e.
\end{proof}

\begin{remark}[Note pour plus tard.]\label{rem:GP by product}\mbox{}\\
En suivant les \'etapes de la preuve plus pr\'ecis\'ement on obtient des informations sur le comportement asymptotique des minimiseurs. Plus sp\'ecifiquement, revenant \`a~\eqref{eq:GP P-term-0}, utilisant le Lemme~\ref{lem:GP localize-energy} et ignorant les termes $P_+$ localis\'es qui sont positifs, on a
$$ \eH \geq \frac{E(N)}{N} \ge \frac{1}{2}\Tr [P_-^{\otimes 2} H_2 P_-^{\otimes 2} \gamma_N^{(2)}] + o(1) \ge \int_{S\gH } \EH [u] d\mu_N (u) + o(1)$$
et donc
\begin{equation} \label{eq:GP mu-N-localized-cv-energy}
o(1) \geq \int_{S\gH } \left( \EH [u] - \eH \right) d\mu_N (u)
\end{equation}
quand $N\to\infty$. Nous ne sp\'ecifions pas ici (cf le point (3) de la Remarque~\ref{rem:GP NLS}) l'ordre de grandeur exact du $o(1)$ obtenu par l'optimisation finale de l'\'etape 3 ci-dessus. L'estimation~\eqref{eq:GP mu-N-localized-cv-energy} dit moralement que $\mu_N$ doit \^etre concentr\'ee sur les minimiseurs de $\EH$, ce dont nous nous servirons pour prouver le Th\'eor\`eme~\ref{thm:deriv nls}. 
\hfill\qed
\end{remark}

\subsection{De Hartree \`a NLS}\label{sec:GP Hartee to NLS} \mbox{}\\\vspace{-0.4cm}

Il nous reste \`a d\'eduire le Th\'eor\`eme~\ref{thm:deriv nls} comme corollaire de l'analyse ci-dessus. On commence par le lemme suivant

\begin{lemma}[\textbf{Stabilit\'e des fonctionnelles \`a un corps}]\label{lem:GP Hartree NLS}\mbox{}\\
On consid\`ere les fonctionnelles~\eqref{eq:GP hartree f} et~\eqref{eq:GP intro nls}. Sous les hypoth\`eses du Th\'eor\`eme~\ref{thm:deriv nls}, il existe un minimiseur pour $\ENLS$. De plus, pour toute fonction normalis\'ee $u\in L^2(\R^d)$ on a 
\bq \label{eq:GP hartree NLS 0}
\norm{|u|}_{H^1}^2 \le C (\EH [u]+C)
\eq
et 
\bq \label{eq:GP hartree NLS 1}
\left| \EH [u] - \ENLS[u] \right| \le C N^{-\beta} \left( 1 + \int_{\R ^d} |\nabla u| ^2 \right) ^2. 
\eq
Par cons\'equent 
\begin{equation}\label{eq:GP hartree NLS 2}
|\eH - \eNLS | \le C N ^{-\beta}.
\end{equation}
\end{lemma}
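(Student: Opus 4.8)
The plan is to prove the three assertions of Lemma~\ref{lem:GP Hartree NLS} essentially in the order they are stated, since the estimate~\eqref{eq:GP hartree NLS 2} on the energies is a quick consequence of the pointwise estimate~\eqref{eq:GP hartree NLS 1} combined with the coercivity~\eqref{eq:GP hartree NLS 0}.

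First I would establish~\eqref{eq:GP hartree NLS 0}. The term $\int V|u|^2$ is bounded below using $V\in L^p+L^\infty$ together with a standard Sobolev/interpolation bound to absorb the $L^p$-part of $V$ into a small fraction of $\int|\nabla u|^2$ (here one uses $\max(1,d/2)<p$). For the interaction term one writes $\tfrac12\iint|u(x)|^2 w_N(x-y)|u(y)|^2 = \tfrac12\int |u|^2(w_N*|u|^2)$ and bounds it below: when $d=3$ and~\eqref{eq:GP hyp 3} holds this term is nonnegative (approximately — up to an error controlled by $\|w\|_{L^1}$ and the $H^1$-norm, since $w_N$ is not exactly $a\delta_0$); when $d=2$ one invokes~\eqref{eq:GP hyp 2} and the scaling argument already sketched after it, which gives $\iint|u|^2 w_N|u|^2\geq -(1-\delta)\|u\|_{L^2}^2\|\nabla u\|_{L^2}^2$ for some $\delta>0$ depending on how far $\int w$ is from $-a^*$, so a positive fraction of the kinetic energy survives; when $d=1$ the term is $O(\|w\|_{L^1}\|u\|_{L^\infty}^2)\leq \varepsilon\|\nabla u\|^2+C_\varepsilon$ by $H^{1/2}\hookrightarrow L^\infty$ and interpolation. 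In all cases one gets $\|\nabla u\|_{L^2}^2\leq C(\EH[u]+C)$, and since $\|u\|_{H^1}^2=\|u\|_{L^2}^2+\|\nabla u\|_{L^2}^2$ and $\||u|\|_{H^1}\leq\|u\|_{H^1}$ (diamagnetic inequality), this gives~\eqref{eq:GP hartree NLS 0}. The same coercivity argument, applied to a minimizing sequence for $\ENLS$, gives existence of an NLS minimizer via the standard concentration-compactness / direct method (using~\eqref{eq:GP hyp 3} or~\eqref{eq:GP hyp 2} for boundedness below and to rule out vanishing/splitting); I would just cite the references already given (\cite{GuoSei-13,Maeda-10,Frank-14}) rather than redo this.

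Next, the heart of the lemma is~\eqref{eq:GP hartree NLS 1}. Write $\EH[u]-\ENLS[u]=\tfrac12\iint|u(x)|^2\big(w_N(x-y)-a\,\delta_0(x-y)\big)|u(y)|^2\,dx\,dy$, i.e. $\tfrac12\int_{\R^d}\big(\rho_u*w_N-a\rho_u\big)\rho_u$ with $\rho_u=|u|^2$ and $a=\int w$. The quantity $\|\rho_u*w_N-a\rho_u\|$ is controlled by the first-moment assumption~\eqref{eq:replace delta}: for a smooth function $f$ one has $\|f*w_N-af\|_{L^q}\leq C N^{-\beta}\|\nabla f\|_{L^q}\int(1+|x|)|w(x)|\,dx$ by Taylor-expanding and changing variables $y\mapsto N^\beta y$ (the zeroth-order term cancels $af$). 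Applying this with $f=\rho_u$ and pairing against $\rho_u$, then using $\|\nabla\rho_u\|$ and $\|\rho_u\|$ controlled by $\|u\|_{H^1}^2$ via the Gagliardo–Nirenberg and chain-rule inequalities $\|\nabla(|u|^2)\|\leq 2\|u\|\,\|\nabla u\|$ type bounds, produces the bound $CN^{-\beta}(1+\int|\nabla u|^2)^2$. Finally~\eqref{eq:GP hartree NLS 2} follows: take $u$ a (normalized) Hartree minimizer, so $\EH[u]=\eH$ is bounded (above by testing with a fixed nice function, below by~\eqref{eq:GP hartree NLS 0}), hence $\int|\nabla u|^2\leq C$ uniformly in $N$, and then $\eNLS\leq\ENLS[u]\leq\EH[u]+CN^{-\beta}=\eH+CN^{-\beta}$; symmetrically, taking $u$ a normalized NLS minimizer (whose $H^1$-norm is again uniformly bounded) gives $\eH\leq\EH[u]\leq\ENLS[u]+CN^{-\beta}=\eNLS+CN^{-\beta}$.

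The main obstacle is getting the uniform-in-$N$ $H^1$-bound on the relevant trial functions in the cases with an attractive interaction ($d=1$, or $d=2,3$ near the stability threshold), because there the kinetic energy is not simply dominated by the energy — one must genuinely use the strict versions of~\eqref{eq:GP hyp 2}/\eqref{eq:GP hyp 3} and a scaling argument to extract a coercive fraction, and one must also check that the error made in replacing $w_N$ by $a\delta_0$ inside these coercivity estimates (an $O(N^{-\beta})$ error, not zero) does not destroy the positive fraction for $N$ large. Once the $H^1$-control is in place the rest is routine convolution estimates. A secondary technical point is justifying the manipulations for general $u\in H^1$ rather than smooth $u$, which is handled by density together with the continuity of all functionals involved on $H^1$.
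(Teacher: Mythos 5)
Your proposal is correct and follows essentially the same route as the paper, which itself only sketches the argument (coercivity of $\EH$ and $\ENLS$ from the stability assumptions, then a comparison of the two functionals on $H^1$-bounded sets using the first moment of $w$) and refers to~\cite{LewNamRou-14} for the details. The one step where you should be careful is the choice of Lebesgue exponents in the chain-rule bound $\norm{\nabla (|u|^2)}\le 2\norm{u}\,\norm{\nabla u}$: taken in $L^2$ this fails for $d=2,3$, but it holds for instance in $L^{3/2}$ (in $d=3$) via $\norm{\nabla(|u|^2)}_{L^{3/2}}\leq 2\norm{u}_{L^6}\norm{\nabla u}_{L^2}$, paired by duality with $\norm{|u|^2}_{L^{3}}=\norm{u}_{L^6}^{2}$, which still produces the quartic bound~\eqref{eq:GP hartree NLS 1}.
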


\begin{proof}[Preuve]
Les hypoth\`eses de stabilit\'e que nous avons faites garantissent que les suites minimisantes pour la minimisation de $\ENLS$ sont born\'ees dans $H^1$. L'hypoth\`ese~\eqref{eq:GP conv delta} permet d'estimer ais\'ement la diff\'erence entre les fonctionnelles de Hartree et NLS pour des fonctions born\'ees dans $H^1$. Les d\'etails sont omis, on peut les trouver dans~\cite{LewNamRou-14}. 
\end{proof}

Nous somme maintenant arm\'es pour compl\'eter la d\'erivation de la fonctionnelle NLS.

\begin{proof}[Preuve du Th\'eor\`eme~\ref{thm:deriv nls}]
Combiner~\eqref{eq:GP hartree NLS 2} avec~\eqref{eq:GP energy estimate} conclut la preuve de~\eqref{eq:GP ener convergence}. Il reste donc \`a prouver la convergence des \'etats, deuxi\`eme point du Th\'eor\`eme. On proc\`ede en quatre \'etapes:

\noindent\textbf{Etape 1, compacit\'e forte des matrices densit\'e.} On extrait une sous-suite diagonale le long de laquelle
\begin{equation}\label{eq:weak CV}
\gamma_N ^{(n)} \wto_* \gamma ^{(n)} 
\end{equation}s
quand $N\to \infty$, pour tout $n\in\N$. On a par ailleurs 
\begin{equation}\label{eq:unif bound kinetic}
\tr \Big[H_1\gamma_N^{(1)}\Big]=\Tr \Big[ \left(- \Delta + V \right) \gamma_N^{(1)} \Big] \leq C, 
\end{equation}
ind\'ependament de $N$. Pour le voir, on choisit un $\alpha >0$ et on d\'efinit
\begin{equation*}\label{eq:start hamil eta}
H_{N,\alpha} = \sum_{j=1} ^N  \left( - \Delta_j + V(x_j) \right) + \frac{1+\alpha}{N-1} \sum_{1\leq i<j \leq N}N ^{d\beta} w( N ^{\beta} (x_i-x_j))
\end{equation*}
auquel on applique le Th\'eor\`eme~\ref{thm:error Hartree}. On trouve en particulier $H_{N,\alpha}\geq -CN$ et on d\'eduit
$$\eNLS+o(1)\geq \frac{\pscal{\Psi_N,H_N\Psi_N}}N\geq -C(1+\alpha)^{-1}+\frac{\alpha}{1+\alpha}\tr \big[H_1\gamma_N^{(1)}\big],$$
ce qui donne~\eqref{eq:unif bound kinetic}. Comme $H_1=- \Delta + V$ est \`a r\'esolvante compacte,~\eqref{eq:weak CV} et~\eqref{eq:unif bound kinetic} impliquent que, \`a une sous-suite pr\`es, $\gamma_{N}^{(1)}$ converge fortement en norme de trace. Comme not\'e pr\'ec\'edement, le Th\'eor\`eme~\ref{thm:DeFinetti fort} implique que $\gamma_N^{(n)}$ aussi converge fortement pour $n\geq1$.

\medskip

\noindent\textbf{Etape 2, d\'efinition de la mesure limite.} On all\'egera les notations en notant $r_N$ la meilleure borne sur $\left|E(N)/N - \eNLS\right|$ obtenue pr\'ec\'edemment. Soit $d\mu_N$ d\'efinie comme au Lemme~\ref{lem:GP deF-localized-state}, satisfaisant 
$$\mu_N(SP_-\gH)=\tr \left[P_-^{\otimes 2}\gamma_N^{(2)}P_-^{\otimes 2}\right]$$
On a
$$
\Tr \left| P_-^{\otimes 2} \gamma_N^{(2)} P_-^{\otimes 2} - \int_{SP_-\gH} |u^{\otimes 2}\rangle \langle u^{\otimes 2}| d\mu_N (u) \right| \leq \frac{8N_\Lambda}{N}\leq C\frac{\Lambda^{1+d/s+d/2}}{N}\to0.
$$
On peut d'autre part d\'eduire des estimations d'\'energie de la Section~\ref{sec:GP Hartree bounds} un contr\^ole sur le nombre de particules excit\'ees:
\begin{equation}\label{eq:bound excited particles}
1-\mu_N(SP_-\gH)=\Tr\left[(1-P_-^{\otimes 2}) \gamma_N^{(2)}\right] \leq \frac{r_N}{\Lambda}. 
\end{equation}
Par l'in\'egalit\'e triangulaire et l'in\'egalit\'e de Cauchy-Schwarz on d\'eduit
\bq \label{eq:mu-N-localized-cv-gamma2}
\Tr \left|\gamma_N^{(2)} - \int_{SP_-\gH} |u^{\otimes 2}\rangle \langle u^{\otimes 2}| d\mu_N (u) \right| \leq C\frac{\Lambda^{1+d/s+d/2}}{N}+C\sqrt{\frac{r_N}{\Lambda}}.
\eq
On note maintenant $P_K$ le projecteur spectral de $H_1$ sur les \'energies sous une troncature $K$, d\'efini comme en~\eqref{eq:GP projectors}. Comme $\gamma_N ^{(2)} \to \gamma ^{(2)}$ et $P_K \to \1$  
$$ 
\lim_{K \to \infty} \lim_{N\to \infty} \mu_N (S P_K\gH) = 1.
$$
Cette condition permet d'utiliser le Th\'eor\`eme de Prokhorov's et~\cite[Lemme~1]{Skorokhod-74} pour assurer que, apr\`es \'eventuellement extraction d'une sous-suite, $\mu_N$ converge vers une mesure $\mu$ sur la boule $B\gH$. En passant \`a la limite, on trouve
$$\gamma^{(2)}=\int_{B\gH} |u^{\otimes 2}\rangle \langle u^{\otimes 2}| d\mu(u)$$
et il s'ensuit que $\mu$ est \`a support sur la sph\`ere $S\gH$ puisque $\tr [\gamma ^{(2)}] = 1$ par convergence forte de la sous-suite.

\medskip

\noindent\textbf{Etape 3, la mesure limite ne charge que les minimiseurs NLS.} En utilisant \eqref{eq:GP mu-N-localized-cv-energy} et  
$$\mu_N(SP_-\gH)=1+O\left(\frac{r_N}{\Lambda}\right),$$
on d\'eduit que
$$\int_{SP_-\gH } \big(\EH  [u] -\eH \big)d\mu_N (u)\leq o(1)$$
dans la limite $N\to \infty$. Par les estimations du Lemme~\ref{lem:GP Hartree NLS}, il s'ensuit que, pour une constante $B$ assez grande (ind\'ependament de $N$),
$$\frac{B^2}{C}\int_{\|\nabla u\|_{L ^2}\geq B} d\mu_N (u)\leq \int_{\|\nabla u \|_{L ^2}\geq B} \big(\EH [u] -\eH \big)d\mu_N (u)\leq o(1),$$
et
$$\int_{\|\nabla u\|_{L^2}\leq B} \big(\ENLS [u] -\eNLS\big)d\mu_N\leq C(1+B^4) N^{-\beta} +\int_{\|\nabla u\|_{L^2}\leq B} \big(\EH [u] -\eH\big)d\mu_N (u)\leq o(1).$$
En passant \`a la limite $N\to\ii$, on voit maintenant que $\mu$ a son support dans $\MNLS$.

A ce stade, en utilisant~\eqref{eq:mu-N-localized-cv-gamma2} et la convergence de $\mu_N$ on a, fortement en norme de trace,  
$$ \gamma_N^{(2)} \to \int_{\MNLS} |u^{\otimes 2}\rangle \langle u^{\otimes 2}| d\mu(u),$$
o\`u $\mu$ est une probabilit\'e \`a support dans $\MNLS$. En prenant une trace partielle on obtient
$$ \gamma_N^{(1)} \to \int_{\MNLS} |u\rangle \langle u| d\mu(u)$$
et il ne reste donc plus qu'\`a obtenir la convergence des matrices densit\'e r\'eduites d'ordre $n >2$. 

\medskip

\noindent\textbf{Etape 4, matrices densit\'e d'ordre \'elev\'e.} On veut obtenir
$$ \gamma_N^{(n)} \to \int_{\MNLS} |u^{\otimes n}\rangle \langle u^{\otimes n}| d\mu(u),$$
en norme de trace quand $N\to \infty$. Vue la d\'efinition de $\mu$, il suffit de montrer
\begin{equation}\label{eq:claim higher DM}
\Tr \left|\gamma_N^{(n)} - \int_{SP_-\gH} |u^{\otimes n}\rangle \langle u^{\otimes n}| d\mu_N (u) \right| \to 0
\end{equation}
o\`u $\mu_N$ est la mesure d\'efinie en appliquant le Lemme~\ref{lem:GP deF-localized-state} \`a $\gamma_N ^{(2)}$. Pour ce faire on commence par approximer $\gamma_N ^{(n)}$ en utilisant une nouvelle mesure, a priori diff\'erente de $\mu_N = \mu_N^2$ 
\bq \label{eq:def-mu-N-localized bis}
d\mu_N ^n (u) = \sum_{k=n}^N  { N \choose n} ^{-1} {k\choose n} \dim (P_-\gH)_s^k \pscal{u^{\otimes k},G_{N,k}^- u^{\otimes k}} du.
\eq
En proc\'edant comme \`a la preuve du Lemme~\ref{lem:GP deF-localized-state} on obtient 
\begin{equation}\label{eq:deF higher DM}
\Tr_{\gH^n} \left| P_-^{\otimes n} \gamma_{N}^{(n)} P_-^{\otimes n} - \int_{SP_-\gH} |u^{\otimes n}\rangle \langle u^{\otimes n}| d\mu_N ^n (u)\right| \le C\frac{ n N_\Lambda}{N}
\end{equation}
Une estimation similaire \`a~\eqref{eq:bound excited particles} montre ensuite que 
$$\Tr_{\gH^n} \left| \gamma_{N}^{(n)} - \int_{SP_-\gH} |u^{\otimes n}\rangle \langle u^{\otimes n}| d\mu_N ^n (u)\right| \to 0.$$
En utilisant \`a nouveau la borne
$$ { N \choose n} ^{-1} {k\choose n} = \left(\frac{k}{N}\right) ^n + O (N ^{-1})$$
ainsi que l'in\'egalit\'e triangulaire et la formule de Schur~\eqref{eq:Schur} on d\'eduit de~\eqref{eq:deF higher DM} que 
\begin{align}\label{eq:higher DM final}
\Tr \left|\gamma_N^{(n)} - \int_{SP_-\gH} |u^{\otimes n}\rangle \langle u^{\otimes n}| d\mu_N (u) \right| &\leq  \sum_{k=0} ^N \left( \left(\frac{k}{N}\right) ^2 - \left(\frac{k}{N}\right) ^n \right) \tr_{\gH ^k} \left[ G_{N,k} ^- \right]  \nonumber
\\&+ \sum_{k=0} ^{n-1} \left(\frac{k}{N}\right) ^n \tr_{\gH ^k} \left[ G_{N,k} ^- \right] \nonumber
\\&+ \sum_{k=0} ^{2} \left(\frac{k}{N}\right) ^2 \tr_{\gH ^k} \left[ G_{N,k} ^- \right] + o(1).
\end{align}
Finalement, en combinant les diff\'erentes bornes obtenues on a
$$ \sum_{k=2} ^N  \left(\frac{k}{N}\right) ^2 \tr_{\gH ^k} \left[ G_{N,k} ^- \right] \to 1$$
mais, par~\eqref{eq:GP nomalization-localized-state} il suit que
$$\sum_{k=0} ^N  \left(\frac{k}{N}\right) ^2 \tr_{\gH ^k} \left[ G_{N,k} ^- \right] \to 1.$$
On peut donc appliquer l'in\'egalit\'e de Jensen pour obtenir
$$1\geq \sum_{k=0} ^N  \left(\frac{k}{N}\right) ^n \tr_{\gH ^k} \left[ G_{N,k} ^- \right]\geq \left( \sum_{k=0} ^N  \left(\frac{k}{N}\right) ^2 \tr_{\gH ^k} \left[ G_{N,k} ^- \right]\right) ^{n/2} \to 1.$$
Il ne reste qu'\`a ins\'erer ceci et~\eqref{eq:GP nomalization-localized-state} dans~\eqref{eq:higher DM final} pour conclure la preuve de~\eqref{eq:claim higher DM} et donc celle du th\'eor\`eme. 
\end{proof}

\appendix

\section{\textbf{Usage quantique du th\'eor\`eme classique}}\label{sec:class quant}

Cet appendice est consacr\'e \`a une preuve alternative d'une version plus faible du Th\'eor\`eme~\ref{thm:confined}. La m\'ethode, introduite dans~\cite{Kiessling-12} est de port\'ee moins g\'en\'erale que celles d\'ecrites pr\'ec\'edemment, ce qui est in\'evitable puisqu'elle consiste en l'application du th\'eor\`eme de Hewitt-Savage (de Finetti classique) \`a un probl\`eme quantique. Nous suivrons ici une note non publi\'ee de Mathieu Lewin et Nicolas Rougerie~\cite{LewRou-unpu12}.

Dans certains cas (absence de champ magn\'etique essentiellement), la fonction d'onde $\Psi_N$ minimisant une \'energie \`a $N$ corps peut-\^etre choisie strictement positive. Le fondamental du probl\`eme quantique peut alors \^etre enti\`erement analys\'e en termes de la densit\'e \`a $N$ corps $\rho_{\Psi_N} = |\Psi_N| ^2$ qui est un objet purement classique (mesure de probabilit\'e sym\'etrique) dont la limite peut-\^etre d\'ecrite \`a l'aide du Th\'eor\`eme~\ref{thm:HS}. Cette approche ne fonctionne toutefois que sous certaines hypoth\`eses sur le Hamiltonien \`a un corps du probl\`eme, beaucoup plus restrictives que celles \'evoqu\'ees \`a la Remarque~\ref{rem:energie cin}. 

\subsection{Formulation classique du probl\`eme quantique}\label{sec:hyp class quant}\mbox{}\\\vspace{-0.4cm}

On consid\`ere ici un Hamiltonien quantique \`a $N$ corps agissant sur $L ^2 (\R ^{dN})$
\begin{equation}\label{eq:app A hamil N}
H_N = \sum_{j=1} ^N \left(T_j + V (x_j)\right) + \frac{1}{N-1} \sum_{1\leq i < j \leq N} w(x_i-x_j)
\end{equation}
o\`u les potentiels de pi\'egeage $V$ et d'interaction $w$ sont choisis comme \`a la Section~\ref{sec:Hartree deF fort}. En particulier on suppose que $V$ est confinant. Nous aurons besoin d'hypoth\`eses assez fortes sur l'op\'erateur $T$ d\'ecrivant l'\'energie cin\'etique des particules. L'approche pr\'esent\'ee dans cet appendice est en effet bas\'ee sur la notion suivante:

\begin{definition}[\textbf{Energie cin\'etique \`a noyau positif}]\label{def:hyp class quant}\mbox{}\\
On dit que $T$ est \`a noyau positif si il existe $T(x,y):\R ^d \times \R ^d \to \R ^+$ tel que  
\begin{equation}\label{eq:hyp class quant}
\bral \psi, T \psi \ketr = \iint_{\R ^d\times \R ^d} T(x,y) \left| \psi (x) - \psi (y) \right| ^2 dxdy 
\end{equation}
pour toute fonction $\psi \in L ^2 (\R^d)$.\hfill\qed
\end{definition}

Il est bien connu que l'\'energie cin\'etique pseudo-relativiste est de cette forme. On a en effet
\begin{equation}\label{eq:app A cin}
\bral \psi, \sqrt{-\Delta} \, \psi \ketr =  \frac{\Gamma(\frac{d+1}{2})}{2\pi ^{(n+1)/2}} \iint_{\R ^d \times \R ^d} \frac{\left| \psi (x) - \psi (y) \right| ^2}{|x-y| ^{d+1}} dxdy,
\end{equation}
voir~\cite[Th\'eor\`eme 7.12]{LieLos-01}. Plus g\'en\'eralement on peut consid\'erer $T = |p| ^s$, $0< s < 2$:
$$ \bral \psi, |p| ^s \ketr = \bral \psi, (-\Delta) ^{s/2} \psi \ketr =  C_{d,s} \iint_{\R ^d \times \R ^d} \frac{\left| \psi (x) - \psi (y) \right| ^2}{|x-y| ^{d+s}} dxdy$$
en rappellant la correspondance~\eqref{eq:quant mom}. 

L'\'energie cin\'etique non-relativiste ne rentre pas dans ce cadre, mais on peut toutefois lui appliquer les consid\'erations de cet appendice car 
\begin{equation}\label{eq:app A cin bis}
\bral \psi, -\Delta \, \psi \ketr = \int_{\R ^d} |\nabla \psi| ^2 = C_d \: \underset{s\uparrow 1}{\lim} \:(1-s) \iint_{\R ^d \times \R ^d} \frac{\left| \psi (x) - \psi (y) \right| ^2}{|x-y| ^{d+2s}} dxdy 
\end{equation}
avec 
$$ C_d = \left( \int_{S^{d-1}} \cos \theta \, d\sigma \right) ^{-1}$$
o\`u $S^{d-1}$ est la sph\`ere euclidienne munie de sa mesure de Lebesgue $\sigma$ et $\theta$ repr\'esente l'angle par rapport \`a la verticale. On peut donc voir $-\Delta$ comme un cas limite de la D\'efinition~\ref{def:hyp class quant}. La formule~\eqref{eq:app A cin bis} est d\'emontr\'ee dans~\cite[Corollaire 2]{BouBreMir-01} et \cite{MasNag-78}, voir aussi~\cite{BouBreMir-02,MazSha-02}.

Les cas notoirement exclus du cadre ci-dessus sont ceux avec champ magn\'etique $T= \left( p + A \right) ^2$ et $T= \left| p + A \right|$, que l'on peut traiter avec les m\'ethodes du corps de ces notes, mais pas celles de cet appendice. 

\medskip

Une cons\'equence importante du choix d'\'energie cin\'etique fait ici est qu'on a par l'in\'egalit\'e triangulaire
$$ \bral \psi, T \psi \ketr \geq \bral |\psi|, T |\psi| \ketr $$
et donc l'\'energie totale \`a $N$ corps 
$$ \E_N [\Psi_N] = \bral \Psi_N, H_N \Psi_N \ketr$$
satisfait
$$ \E_N [\Psi_N] \geq \E_N \left[| \Psi_N |\right].$$
L'\'energie fondamentale peut donc se calculer en prenant uniquement des fonctions test positives
\begin{equation}\label{eq:app A formul class pre}
E(N) = \inf \left\{ \E_N [\Psi_N], \Psi_N \in L_s ^2 (\R ^{dN})\right\} = \inf \left\{ \E_N [\Psi_N], \Psi_N \in L_s ^2 (\R ^{dN}), \Psi_N \geq 0 \right\}.
\end{equation}
Cette remarque permet de d\'emontrer un fait \'evoqu\'e pr\'ec\'edemment: le fondamental bosonique est identique au fondamental absolu dans le cas d'une \'energie cin\'etique de forme~\eqref{eq:hyp class quant} ou~\eqref{eq:app A cin bis}, voir~\cite[Chapitre 3]{LieSei-09}. 

On rappelle la d\'efinition de la fonctionnelle de Hartree: 
$$ \EH [u] = \bral u,T u \ketr + \int_{\R ^d } V |u| ^2 + \frac{1}{2} \iint_{\R ^d \times \R ^d } |u(x)| ^2 w(x-y) |u(y)| ^2 dx dy,$$
on infimum \'etant not\'e $\eH$. Nous allons dans la suite de cet appendice prouver l'\'enonc\'e suivant, qui est une variante un peu affaiblie du Th\'eor\`eme~\ref{thm:confined}:

\begin{theorem}[\textbf{D\'erivation de la th\'eorie de Hartree, \'enonc\'e alternatif}]\label{thm:class quant confined}\mbox{}\\
On fait pour $V$ et $w$ les hypoth\`eses de la Section~\ref{sec:Hartree deF fort}, en particulier~\eqref{eq:hartree confine 2}. On suppose de plus que $T$ est soit \`a noyau positif au sens de la D\'efinition~\ref{def:hyp class quant}, soit un cas limite de cette d\'efinition, comme en~\eqref{eq:app A cin bis}. On a alors 
$$\lim_{N\to\ii}\frac{E(N)}{N}=\eH.$$
Soit $\Psi_N \geq 0$ un fondamental de $H_N$ r\'ealisant l'infimum~\eqref{eq:app A formul class pre} et 
$$ \rho_N ^{(n)} (x_1,\ldots,x_n) := \int_{\R ^{d(N-n)}} \left|\Psi_N\left( x_1,\ldots,x_N \right)\right| ^2 dx_{n+1}\ldots dx_N $$
sa $n$-i\`eme densit\'e r\'eduite. Il existe une mesure de probabilit\'e $\mu$ sur $\cM_{\rm H}$ l'ensemble des minimiseurs de $\EH$ (modulo une phase), telle que, le long d'une sous-suite 
\begin{equation}\label{eq:app A def fort result}
\lim_{N\to\ii} \rho^{(n)}_{N}= \int_{\cM_{\rm H}} \left|u ^{\otimes n}\right| ^2 d\mu(u) \mbox{ pour tout } n\in \N,
\end{equation}
fortement dans $L ^1 \left(\R ^{dn}\right).$ En particulier, si $\eH$ a un minimiseur unique (modulo une phase constante), alors pour toute la suite
\begin{equation}
\lim_{N\to\ii} \rho^{(n)}_{N} = \left|\uH^{\otimes n}\right| ^2 \mbox{ pour tout } n\in \N,.
\label{eq:app A BEC-confined} 
\end{equation}
\end{theorem}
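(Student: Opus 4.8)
The plan is to give an alternative, purely ``classical'' proof of (a slightly weaker form of) Théorème~\ref{thm:confined} by recasting this quantum ground-state problem as a variational problem over symmetric probability measures and then running the Hewitt--Savage machinery of Théorème~\ref{thm:HS} exactly as in the proof of Théorème~\ref{thm:aHS}, with the kinetic energy now playing the structural role that the entropy played there. Because $T$ has a positive kernel (or is a limit of such), the triangle inequality gives $\pscal{\psi,T\psi}\geq\pscal{|\psi|,T|\psi|}$, hence $\pscal{\Psi_N,H_N\Psi_N}\geq\pscal{|\Psi_N|,H_N|\Psi_N|}$, so a ground state may be chosen $\Psi_N\geq0$ --- this is~\eqref{eq:app A formul class pre} --- and for such a state the energy depends only on $\rho_N:=|\Psi_N|^2$. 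First I would record the reformulation
\[
E(N)=\inf\Big\{\cT_N[\rho]+N\!\int_{\R^d}\!V\,d\rho^{(1)}+\tfrac N2\iint_{\R^d\times\R^d}\!w\,d\rho^{(2)}\ :\ \rho\in\PP_s(\R^{dN}),\ \sqrt\rho\in Q(H_N)\Big\},
\]
where $\cT_N[\rho]:=\sum_{j=1}^N\pscal{\sqrt\rho,T_j\sqrt\rho}=N\pscal{\sqrt\rho,T_1\sqrt\rho}$ is a classical Dirichlet form. The upper bound $E(N)/N\leq\eH$ is the usual product density $\rho=|u|^{2\otimes N}$.

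For the lower bound, take $\Psi_N\geq0$ minimising, and extract a subsequence along which $\rho_N^{(n)}\wto_\ast\rho^{(n)}$ in $\PP(\R^{dn})$ for every $n$; tightness comes from $V(x)\to\infty$, and Kolmogorov's theorem produces $\rho\in\PP_s\big((\R^d)^{\N}\big)$ with these marginals. The key input on the kinetic term is a Hoffmann--Ostenhof-type inequality: writing $\sqrt{\rho_N^{(n)}(X_n)}=\big\|\Psi_N(X_n,\cdot)\big\|_{L^2}$ and using the reverse triangle inequality in the integrated variables, one gets for every $n\leq N$
\[
\pscal{\sqrt{\rho_N},T_1\sqrt{\rho_N}}\ \geq\ \pscal{\sqrt{\rho_N^{(n)}},T_1\sqrt{\rho_N^{(n)}}}\ =\ \tfrac1n\cT_n[\rho_N^{(n)}].
\]
Combining this with the lower semicontinuity of the Dirichlet form $\cT_n$ and of the (lower semicontinuous, bounded below) potential terms and passing to the $\liminf$ in $N$ gives, for each fixed $n$, a lower bound $\liminf_N E(N)/N\geq\tfrac1n\cT_n[\rho^{(n)}]+\int V\,d\rho^{(1)}+\tfrac12\iint w\,d\rho^{(2)}$. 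The same inequality together with consistency $(\rho^{(n+1)})^{(n)}=\rho^{(n)}$ shows that $n\mapsto\tfrac1n\cT_n[\rho^{(n)}]$ is nondecreasing, so one may let $n\to\infty$ and denote the limit $\overline\cT[\rho]$.

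Now apply Théorème~\ref{thm:HS}: $\rho^{(n)}=\int_{\PP(\R^d)}\nu^{\otimes n}\,dP(\nu)$. Since $\tfrac1n\cT_n[\nu^{\otimes n}]=\pscal{\sqrt\nu,T\sqrt\nu}$ exactly for all $n$, once one knows that $\overline\cT$ is \emph{affine} on $\PP_s\big((\R^d)^{\N}\big)$ one obtains $\overline\cT[\rho]=\int\pscal{\sqrt\nu,T\sqrt\nu}\,dP(\nu)$, while the potential terms integrate by Hewitt--Savage into $\int\big(\int V\,d\nu+\tfrac12\iint w\,d\nu^{\otimes2}\big)dP(\nu)$. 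Adding up,
\[
\liminf_{N\to\infty}\frac{E(N)}{N}\ \geq\ \int_{\PP(\R^d)}\EH[\sqrt\nu]\,dP(\nu)\ \geq\ \eH,
\]
which with the upper bound proves the energy convergence. Tracking equality then forces $P$ to charge only $\{\nu:\EH[\sqrt\nu]=\eH\}$; pushing forward along $\nu\mapsto\sqrt\nu$ (recording the $S^1$-ambiguity) gives $\mu$ on $\cM_{\rm H}$, and the a priori bound $\pscal{\sqrt{\rho_N^{(n)}},T_1\sqrt{\rho_N^{(n)}}}\leq C$ (hence compactness of $\sqrt{\rho_N^{(n)}}$ in $L^2_{\mathrm{loc}}$) together with tightness upgrades the weak-$\ast$ convergence of $\rho_N^{(n)}$ to strong convergence in $L^1(\R^{dn})$, i.e.~\eqref{eq:app A def fort result}.

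The main obstacle is the affineness of the mean kinetic functional $\overline\cT$ --- the exact counterpart of the affineness of the mean entropy used in Théorème~\ref{thm:aHS}, but genuinely harder, since there is no cheap $\log2$-type mixing bound here. Convexity of $\overline\cT$ is immediate from the joint convexity of $(a,b)\mapsto(\sqrt a-\sqrt b)^2$; the delicate half is the super-additive bound $\overline\cT\big[\int\nu^{\otimes\infty}dP\big]\geq\int\pscal{\sqrt\nu,T\sqrt\nu}dP$, equivalently $\pscal{\sqrt{\rho^{(n)}},T_1\sqrt{\rho^{(n)}}}\to\int\pscal{\sqrt\nu,T\sqrt\nu}dP$ as $n\to\infty$. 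I would prove it by a concentration argument: by the law of large numbers the product measures $\nu^{\otimes n}$ live, for large $n$, on asymptotically disjoint ``empirical-measure'' sectors $\{X:\tfrac1n\sum\delta_{x_i}\simeq\nu\}$, so $\sqrt{\rho^{(n)}}$ essentially decouples into the corresponding pieces; since the one-particle operator $T_1$ moves only a single coordinate it changes the empirical measure by $O(1/n)$ and hence does not couple different sectors, so the Dirichlet form of the mixture converges to the $P$-average of the Dirichlet forms of its components. Finally, the genuinely non-relativistic case $T=-\Delta$ is handled the same way: the Hoffmann--Ostenhof inequality, the convexity, and this concentration argument apply verbatim to the Dirichlet integral $\int|\nabla\sqrt\rho|^2$, so no separate limiting argument in the exponent $s$ of~\eqref{eq:app A cin bis} is actually required.
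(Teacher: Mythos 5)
Your architecture coincides with the paper's (it is Kiessling's method): positivity of the ground state via the positive-kernel property, reformulation of $E(N)$ over $\PP_s(\R^{dN})$, the Hoffmann--Ostenhof-type monotonicity $\pscal{\sqrt{\mubf_N},T_1\sqrt{\mubf_N}}\geq\pscal{\sqrt{\mubf_N^{(n)}},T_1\sqrt{\mubf_N^{(n)}}}$ under taking marginals, passage to the limit with Kolmogorov and Hewitt--Savage, and reduction of everything to the affineness of the mean kinetic functional. All of these steps are correct and match Lemme~\ref{lem:app A lim} and the concluding section. The gap is in the one step that carries the real content, namely the affineness of $\overline{\cT}$ (the paper's Lemme~\ref{lem:app A kin lim}). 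Your ``concentration on empirical-measure sectors'' argument is the right intuition but is not a proof: what must be shown is that the quantity $I(X,Y)$ measuring the failure of $\sqrt{\tfrac12\rho_1^{\otimes n}+\tfrac12\rho_2^{\otimes n}}$ to split as the sum of the two square roots has small integral against the kernel $T(x_1,y_1)$, and this does not follow from a soft statement that $T_1$ ``does not couple the sectors''. It follows from a pointwise algebraic estimate of the type \eqref{eq:app A claim 3}, whose integrated form is $Cn\,\delta^{n-1}\bigl(T(\sqrt{\rho_1})+T(\sqrt{\rho_2})\bigr)$ with $\delta=\int\sqrt{\rho_1\rho_2}<1$ the Hellinger affinity; it is this geometric decay, not the law of large numbers per se, that kills the cross term. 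Until you produce this (or an equivalent) quantitative bound, the sentence ``the Dirichlet form of the mixture converges to the $P$-average'' is an assertion, not an argument. One should also say a word on how affineness on two-point mixtures of product states extends to a general $P\in\PP(\PP(\R^d))$.

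Your closing claim that the non-relativistic case $T=-\Delta$ needs ``no separate limiting argument'' is also not justified: $-\Delta$ is not a positive-kernel operator in the sense of D\'efinition~\ref{def:hyp class quant}, so the kernel computation does not apply verbatim, and the direct Fisher-information analogue of the decoupling estimate produces an error of the form $\int\sqrt{\rho_2/\rho_1}\,|\nabla\sqrt{\rho_1}|^2$, which is not controlled by the individual kinetic energies when $\rho_1$ vanishes where $\rho_2$ does not. This is precisely why the paper proves the estimate for the kernels $C_{d,s}|x-y|^{-(d+2s)}$, multiplies by $(1-s)$ and lets $s\uparrow 1$ via \eqref{eq:app A cin bis} (and why Kiessling's original treatment of $-\Delta$ goes through the heat flow). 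You should keep that limiting step.
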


\begin{remark}[Unicit\'e pour la th\'eorie de Hartree]\label{rem:app A unic Hartree}\mbox{}\\
Dans le cas d'une \'en\'ergie cin\'etique \`a noyau positif, l'unicit\'e pour la th\'eorie de Hartree est imm\'ediate si $w\geq 0$. En effet, l'\'energie cin\'etique $\bral \psi,T \psi \ketr$ est dans ce cas une fonction strictement convexe de $\rho = \sqrt{|\psi| ^2}$, voir~\cite[Chapitre 7]{LieLos-01}. Si $w$ est positif l'\'energie $\EH$ est donc elle-m\^eme strictement convexe en $\rho$. \hfill\qed 
\end{remark}

Le cas particulier o\`u $T= -\Delta$ a \'et\'e obtenu par Kiessling dans~\cite{Kiessling-12}, et nous suivrons sa m\'ethode de preuve pour le cas g\'en\'eral. Elle consiste \`a traiter le probl\`eme quantique comme un probl\`eme purement classique, ce qui explique que nous n'obtiendrons que la convergence des densit\'es r\'eduites~\eqref{eq:app A def fort result} au lieu de celle des matrices de densit\'e r\'eduites~\eqref{eq:def fort result}. On continue dans la direction de~\eqref{eq:app A formul class pre} en \'ecrivant
\begin{equation}\label{eq:app A formul class}
E(N) =  \inf \left\{ \E_N \left[\sqrt{\mubf_N}\right], \mubf_N \in \PP_s  (\R ^{dN})\right\}
\end{equation}
o\`u $\mubf_N$ joue le r\^ole de $|\Psi_N| ^2$ et on s'est servi du fait que l'on peut supposer $\Psi_N \geq 0$. L'objet que nous avons \`a traiter est une probabilit\'e sym\'etrique de $N$ variables, et notre strat\'egie sera similaire \`a celle employ\'ee pour la preuve du Th\'eor\`eme~\eqref{thm:aHS}:  
\begin{itemize}
\item Puisque le probl\`eme est confin\'e, on pourra passer ais\'ement \`a la limite et obtenir un probl\`eme pour un \'etat classique \`a nombre infini de particules $\mubf \in \PP_s (\R ^{d\N})$. On utilisera alors le Th\'eor\`eme~\ref{thm:HS} pour d\'ecrire la limite $\mubf ^{(n)}$ de $\mubf_N ^{(n)}$, pour tout $n$, par une unique mesure de probabilit\'e $P_{\mubf} \in \PP (\PP (\R ^d))$.  
\item Le point subtil est de d\'emontrer que l'\'energie limite est bien une fonction affine de $\mubf$, ce qui utilise de mani\`ere essentielle le fait que l'\'energie cin\'etique est \`a noyau positif (ou un cas limite de telles \'energies), ainsi que le Th\'eor\`eme de Hewitt-Savage. 
\end{itemize}

Ces deux \'etapes sont contenues dans les deux sections suivantes. On concluera bri\`evement la preuve du Th\'eor\`eme~\ref{thm:class quant confined} dans une troisi\`eme section.

\subsection{Passage \`a la limite}\label{sec:lim class quant}\mbox{}\\\vspace{-0.4cm}

Le probl\`eme limite que nous allons obtenir est d\'ecrit par la fonctionnelle (comparer avec~\eqref{eq:aHS somme})
\begin{multline}\label{eq:app A prob lim}
\E [\mubf]:= \limsup_{n\to \infty} \frac{1}{n} T \left(\sqrt{\mubf ^{(n)}} \right) 
\\ + \int_{\R ^d} V(x) d\mubf ^{(1)}(x) + \frac{1}{2}\iint_{\R ^d\times \R ^d } w(x-y) d\mubf ^{(2)}(x,y), 
\end{multline}
o\`u $\mubf \in \PP_s(\R ^{d\N})$ et on a pos\'e 
\begin{equation}\label{eq:app A kin class}
 T \left(\sqrt{\mubf_n} \right) := \bral \sqrt{\mubf_n}, \left(\sum_{j=1} ^n T_j \right) \sqrt{\mubf_n} \ketr 
\end{equation}
pour toute mesure de probabilit\'e $\mubf_n \in \PP (\R ^{dn})$.

\begin{lemma}[\textbf{Passage \`a la limite}]\label{lem:app A lim}\mbox{}\\
Soit $\mubf_N \in \PP_s(\R ^{dN})$ r\'ealisant l'infimum dans~\eqref{eq:app A formul class}. Le long d'une sous-suite on a 
$$ \mubf_N ^{(n)} \wto_* \mubf ^{(n)} \in \PP_s (\R ^{dn})$$
pour tout $n\in \N$, au sens des mesures. La suite $\left( \mubf ^{(n)}\right)_{n\in \N}$ d\'efinit une mesure de probabilit\'e $\mubf\in\PP_s (\R^{d\N})$ et on a 
\begin{equation}\label{eq:app A lim inf}
\liminf_{N\to \infty} \frac{E(N)}{N} \geq \E[\mubf]. 
\end{equation}
\end{lemma}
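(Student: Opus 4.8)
\textbf{Plan de preuve du Lemme~\ref{lem:app A lim}.} La strat\'egie reprend celle de la preuve du Th\'eor\`eme~\ref{thm:aHS}. D'abord, j'\'etablirais l'existence de la sous-suite convergente. Pour cela, je commencerais par une borne d'\'energie: en testant~\eqref{eq:app A formul class} avec un ansatz factoris\'e $\mubf_N = \rho ^{\otimes N}$ pour un $\rho$ r\'egulier fix\'e, on obtient $E(N)/N \leq \eH + o(1) \leq C$. Comme $V$ est confinant (hypoth\`ese~\eqref{eq:hartree confine 2}) et $w$ born\'e inf\'erieurement, la borne $\E_N[\sqrt{\mubf_N}] \leq CN$ force $\int_{\R ^d} V d\mubf_N ^{(1)} \leq C$, et donc la suite $(\mubf_N ^{(1)})_N$ est tendue; par sym\'etrie et la structure de $H_N$, chaque suite $(\mubf_N ^{(n)})_N$ est tendue dans $\PP (\R ^{dn})$ (utiliser que $V(x_1) + \ldots + V(x_n)$ est contr\^ol\'e). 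Par le proc\'ed\'e d'extraction diagonale habituel on obtient $\mubf_N ^{(n)} \wto_\ast \mubf ^{(n)}$ pour tout $n$, le long d'une m\^eme sous-suite. La relation de consistance~\eqref{eq:consist class} passe \`a la limite faible (tester contre des fonctions continues born\'ees d\'ependant de $m \leq n$ variables), donc le th\'eor\`eme d'extension de Kolmogorov fournit $\mubf \in \PP_s (\R ^{d\N})$ de marginales $\mubf ^{(n)}$.

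Ensuite vient la borne inf\'erieure~\eqref{eq:app A lim inf}, que je d\'ecomposerais terme par terme. Les termes potentiel et interaction se traitent par semi-continuit\'e inf\'erieure: $V$ est s.c.i. et born\'e inf\'erieurement, $w$ est s.c.i. et born\'e inf\'erieurement, donc
\[
\liminf_{N\to\infty} \int V d\mubf_N ^{(1)} \geq \int V d\mubf ^{(1)}, \qquad \liminf_{N\to\infty} \iint w\, d\mubf_N ^{(2)} \geq \iint w\, d\mubf ^{(2)},
\]
comme en~\eqref{eq:aHS lim ener} (la confinement donne m\^eme la convergence du terme en $V$). Pour le terme cin\'etique, qui est le point d\'elicat, j'utiliserais la sous-additivit\'e de $\mubf_n \mapsto T(\sqrt{\mubf_n})$ pour r\'e\'ecrire, comme dans l'argument d'entropie moyenne de la preuve du Th\'eor\`eme~\ref{thm:aHS}:
\[
T\left(\sqrt{\mubf_N}\right) \geq \left\lfloor \frac{N}{n} \right\rfloor T\left(\sqrt{\mubf_N ^{(n)}}\right) + (\text{reste positif}),
\]
en exploitant que $T = \sum_j T_j$ est une somme de termes \`a une particule, positifs (noyau positif), et que prendre une marginale ne fait que diminuer l'\'energie cin\'etique totale moyenn\'ee; le reste est positif car $T \geq 0$. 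Divisant par $N$ et passant \`a la limite $N\to\infty$ \`a $n$ fix\'e, puis $\limsup_n$, on obtient $\liminf_N N^{-1} T(\sqrt{\mubf_N}) \geq \limsup_n n^{-1} T(\sqrt{\mubf ^{(n)}})$ \emph{pourvu} que $\mubf_n \mapsto T(\sqrt{\mubf_n})$ soit faiblement s.c.i. sur $\PP(\R^{dn})$ — ce qui d\'ecoule de la repr\'esentation~\eqref{eq:hyp class quant} \`a noyau positif (int\'egrande positive, lemme de Fatou), ou dans le cas limite~\eqref{eq:app A cin bis} d'un passage \`a la limite $s\uparrow 1$ combin\'e \`a Fatou. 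Additionner les trois minorations donne~\eqref{eq:app A lim inf}.

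La principale difficult\'e est la gestion du terme cin\'etique: il faut \`a la fois la sous-additivit\'e (pour descendre du $N$ corps \`a la $n$-i\`eme marginale) et la semi-continuit\'e inf\'erieure faible de $\mubf_n \mapsto T(\sqrt{\mubf_n})$. C'est pr\'ecis\'ement ici que l'hypoth\`ese d'\'energie cin\'etique \`a noyau positif (ou cas limite) est essentielle: la forme quadratique $\bral \psi, T\psi \ketr$ s'exprime comme une int\'egrale \emph{positive} de $|\psi(x) - \psi(y)|^2 = |\sqrt{\mubf}(x) - \sqrt{\mubf}(y)|^2$, ce qui en fait une fonctionnelle convexe et s.c.i. de $\mubf$ via Fatou, alors que $\int |\nabla \sqrt{\mubf}|^2$ (information de Fisher) requiert le traitement limite s\'epar\'e. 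Il faudra aussi v\'erifier soigneusement que le ``reste positif'' dans l'in\'egalit\'e de sous-additivit\'e est bien contr\^olable — typiquement en bornant l'\'energie cin\'etique d'un bloc r\'esiduel de $N - n\lfloor N/n\rfloor < n$ particules par une constante ind\'ependante de $N$, en utilisant $T \geq 0$ et la borne d'\'energie globale. Une fois ce lemme acquis, la conclusion du Th\'eor\`eme~\ref{thm:class quant confined} suivra en montrant que $\E[\mubf]$ est affine en $\mubf$ puis en appliquant le Th\'eor\`eme~\ref{thm:HS}, comme annonc\'e.
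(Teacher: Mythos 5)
Votre plan suit pour l'essentiel la m\^eme route que le texte~: tension des marginales par confinement, extraction diagonale, th\'eor\`eme de Kolmogorov, semi-continuit\'e inf\'erieure pour les termes de potentiel et d'interaction, puis traitement s\'epar\'e du terme cin\'etique. Le seul point \`a surveiller est la minoration du terme cin\'etique~: vous l'\'enoncez comme une \og sous-additivit\'e \fg{} \`a la mani\`ere de l'entropie, avec un d\'ecoupage en $\lfloor N/n\rfloor$ blocs et un \og reste positif \fg, en \emph{admettant} que \og prendre une marginale ne fait que diminuer l'\'energie cin\'etique moyenn\'ee \fg. Or cette assertion est pr\'ecis\'ement tout le contenu de la preuve, et le m\'ecanisme n'est pas celui de la sous-additivit\'e de l'entropie. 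Comme $\sum_j T_j$ est une somme d'op\'erateurs \`a un corps, la sym\'etrie de $\mubf_N$ donne l'identit\'e \emph{exacte}
\[
\frac{1}{N}\, T\left(\sqrt{\mubf_N}\right) = \frac{1}{n}\, \tr\left[ \Big(\sum_{j=1}^n T_j\Big) \gamma_N ^{(n)} \right],
\]
o\`u $\gamma_N ^{(n)}$ est la $n$-i\`eme matrice de densit\'e r\'eduite de $\sqrt{\mubf_N}$ --- il n'y a donc ni partie enti\`ere ni reste \`a contr\^oler. Le point non trivial est ensuite de passer de la matrice r\'eduite \`a sa seule densit\'e $\mubf_N ^{(n)}$~: on d\'ecompose spectralement $\gamma_N ^{(n)} = \sum_k \lambda_n ^k |u_n ^k\rangle\langle u_n ^k|$ et on utilise la convexit\'e de $\rho \mapsto T(\sqrt{\rho})$ (in\'egalit\'e de type Hoffmann--Ostenhof, valable pr\'ecis\'ement pour les noyaux positifs et leur cas limite $-\Delta$, cf.~\cite[Chapitre 7]{LieLos-01}) pour obtenir
\[
\frac{1}{n}\sum_k \lambda_n ^k\, T\left(\sqrt{|u_n ^k| ^2}\right) \geq \frac{1}{n}\, T\left(\sqrt{\mubf_N ^{(n)}}\right).
\]
C'est l\`a, et non dans la semi-continuit\'e, que la convexit\'e joue son r\^ole principal~; votre dernier paragraphe identifie bien cette convexit\'e comme l'ingr\'edient cl\'e mais ne l'applique qu'\`a la s.c.i. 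Une fois cette \'etape explicit\'ee, le reste de votre argument (Fatou pour la liminf en $N$ \`a $n$ fix\'e, puis limsup en $n$) est correct et identique \`a celui du texte.
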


\begin{proof}[Preuve]
L'extraction des limites $\mubf ^{(n)}$ fonctionne comme en Section~\ref{sec:appli HS}. L'existence de la mesure $\mubf \in \PP_s (\R ^{d\N})$ \'egalement, en utilisant le th\'eor\`eme de Kolmogorov.

Passer \`a la liminf dans les termes 
$$ \frac{1}{N} \sum_{j=1} ^N \int_{\R^{dN}} V(x_j) \: d\mubf_N (x_1,\ldots,x_N) = \int_{\R ^d} V(x) \: d\mubf_N ^{(1)} (x)$$ 
et 
$$ \frac{1}{N} \frac{1}{N-1} \sum_{1\leq i<j\leq N} ^N \int_{\R^{dN}} w(x_i-x_j) \:d \mubf_N (x_1,\ldots,x_N) = \frac{1}{2}\iint_{\R ^d\times \R ^d } w(x-y) \: d\mubf_N ^{(2)}(x,y)$$
utilise les m\^emes id\'ees qu'aux Chapitres~\ref{sec:class} et~\ref{sec:quant}, nous n'\'elaborerons pas plus sur ce point.  

Le point nouveau est le traitement de l'\'energie cin\'etique en vue d'obtenir   
\begin{equation}\label{eq:app A claim 1}
 \liminf_{N\to \infty} \frac{1}{N} T \left(\sqrt{\mubf_N }\right) \geq \limsup_{n\to \infty} \frac{1}{n} T \left(\sqrt{\mubf ^{(n)}} \right). 
\end{equation}
Pour ce faire, on note $\Psi_N = \mubf_N ^2$ un minimiseur dans~\eqref{eq:app A formul class pre} et on a alors  
\begin{align*}
\frac{1}{N} T \left(\sqrt{\mubf_N }\right) &= \frac{1}{N} \tr \left[ \sum_{j=1} ^N T_j \ketl \Psi_N \ketr \bral \Psi_N \brar \right]\\
&=\frac{1}{n} \tr \left[ \sum_{j=1} ^n T_j \gamma_N ^{(n)} \right]
\end{align*}
o\`u $\gamma_N ^{(n)}$ est la $n$-i\`eme matrice de densit\'e reduite de $\Psi_N$. Il s'agit d'un op\'erateur \`a trace, que l'on d\'ecompose sous la forme  
$$ \gamma_N ^{(n)} = \sum_{k=1} ^{+\infty} \lambda_n ^k  | u_n ^k \rangle \langle u_n ^k |$$
avec $u_n ^k$ normalis\'e dans $L_s ^2(\R ^{dk})$ et $\sum_{k=1} ^{\infty} \lambda_k ^n = 1$. En ins\'erant cette d\'ecomposition dans l'\'equation pr\'ec\'edente et en rappellant~\eqref{eq:app A kin class} on obtient par lin\'earit\'e de la trace
\begin{align*}
\frac{1}{N} T \left(\sqrt{\mubf_N }\right) &= \frac{1}{n}  \sum_{k=1} ^{+\infty} \lambda_n ^k \: T\left( \sqrt{|u_n ^k| ^2} \right) \\
&\geq \frac{1}{n}  T\left( \sqrt{ \sum_{k=1} ^{+\infty} \lambda_n ^k |u_n ^k| ^2} \right) \\
&= \frac{1}{n}  T\left( \sqrt{ \rho_{\gamma_N ^{(n)}}} \right)
\end{align*}
o\`u l'in\'egalit\'e utilise la convexit\'e de l'\'energie cin\'etique en la densit\'e $\rho$, d\'ej\`a rappell\'ee \`a la Remarque~\ref{rem:app A unic Hartree}, cf~\cite[Chapitre 7]{LieLos-01}. Dans la derni\`ere \'egalit\'e 
$$ \rho_{\gamma_N ^{(n)}} = \sum_{k=1} ^\infty \lambda_n ^k |u_n ^k| ^2$$
est la densit\'e\footnote{Formellement, la partie diagonale du noyau.} de $\gamma_N ^{(n)}$ et il est ais\'e de voir que 
$$ \rho_{\gamma_N ^{(n)}} = \rho_{N} ^{(n)} = \int_{\R^{d(N-n)}} |\Psi_N(x_1,\ldots,x_N)| ^2 dx_{n+1}\ldots dx_N,$$
ce qui donne
$$ \frac{1}{N} T \left(\sqrt{\mubf_N }\right) \geq \frac{1}{n} T \left( \sqrt{\mubf_N ^{(n)}}\right).$$
Pour obtenir~\eqref{eq:app A claim 1}, il ne reste plus qu'\`a passer d'abord \`a la liminf en $N$ (en utilisant le lemme de Fatou), puis \`a la limsup en $n$.
\end{proof}

La notion d'\'energie cin\'etique \`a noyau positif est d\'ej\`a cruciale \`a ce niveau. C'est elle qui garantit la propri\'et\'e de convexit\'e que nous venons d'utiliser. Elle va jouer un r\^ole encore plus important \`a la section suivante.

\subsection{Le probl\`eme limite}\label{sec:linear class quant}

Il s'agit maintenant de montrer que la fonctionnelle~\eqref{eq:app A prob lim} est affine sur $\PP_s (\R ^{d\N})$. Les deux derniers termes le sont \'evidemment, ce qui n'est pas surprenant puisque ce sont des termes de nature classique. Il suffit donc de montrer que le premier terme, qui encode l'aspect quantique du probl\`eme, est \'egalement lin\'eaire en la densit\'e:    

\begin{lemma}[\textbf{Lin\'earit\'e de l'\'energie cin\'etique limite}]\label{lem:app A kin lim}\mbox{}\\
La fonctionnelle  
$$ T \left(\sqrt{\mubf}\right) := \limsup_{n\to \infty} \frac{1}{n} T \left(\sqrt{\mubf ^{(n)}} \right)$$
est affine sur $\PP_s (\R ^{d\N})$.
\end{lemma}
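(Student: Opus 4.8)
The plan is to show that the map $\mubf \mapsto \limsup_{n\to\infty} \frac1n T(\sqrt{\mubf^{(n)}})$ is both convex and concave on $\PP_s(\R^{d\N})$, whence affine. The two inequalities come from two genuinely different mechanisms. For concavity, the key is the positivity of the kernel: if $\mubf = t\mubf_1 + (1-t)\mubf_2$ then $\mubf^{(n)} = t\mubf_1^{(n)} + (1-t)\mubf_2^{(n)}$, and I want $T(\sqrt{t a + (1-t)b}) \le t\, T(\sqrt a) + (1-t)\, T(\sqrt b)$ for probability densities $a,b$. Writing $\rho_1 = \sqrt a$, $\rho_2 = \sqrt b$, this is exactly the statement that $\rho \mapsto T(\rho)$, viewed as a function of $\rho^2$, is convex — the same fact already invoked in Remark~\ref{rem:app A unic Hartree} and in the proof of Lemma~\ref{lem:app A lim}, which holds because for a positive kernel $T(x,y)$ one has, by the Cauchy--Schwarz inequality applied to $\sqrt{t}\,\rho_1 \pm \sqrt{1-t}\,\rho_2$ pointwise, the inequality $|\sqrt{ta+(1-t)b}(x) - \sqrt{ta+(1-t)b}(y)|^2 \le t|\rho_1(x)-\rho_1(y)|^2 + (1-t)|\rho_2(x)-\rho_2(y)|^2$ (this is the convexity of $(u,v)\mapsto (\sqrt{u}-\sqrt{v})^2$ jointly). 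Integrating against $T(x,y)\ge 0$ on each of the $n$ variables gives convexity of $\mubf_n \mapsto T(\sqrt{\mubf_n})$, and dividing by $n$ and passing to the $\limsup$ preserves it. For the limiting cases such as $-\Delta$ one uses the representation \eqref{eq:app A cin bis} as a limit of positive-kernel energies, for which the inequality is stable.

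For convexity of the functional on $\PP_s(\R^{d\N})$ — equivalently, the reverse inequality $\limsup \frac1n T(\sqrt{\mubf^{(n)}}) \ge t\, \limsup \frac1n T(\sqrt{\mubf_1^{(n)}}) + (1-t)\,\limsup\frac1n T(\sqrt{\mubf_2^{(n)}})$ — I would follow the argument of Robinson--Ruelle used in the proof of Theorem~\ref{thm:aHS} for the mean entropy. Here the point is to use the \emph{superadditivity} of $\mubf_n \mapsto T(\sqrt{\mubf_n})$ with respect to splitting the $n$ variables into blocks: because $T$ is a sum of one-body operators, $\sum_{j=1}^n T_j$ restricted to a product structure decomposes, and one gets $T(\sqrt{\mubf^{(n)}}) \ge T(\sqrt{\mubf^{(m)}}) + T(\sqrt{\mubf^{(n-m)}})$ after a localization/comparison argument (again exploiting the positive-kernel form, which makes $T$ of the same convex type as minus the entropy). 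Combined with the convexity of $x\mapsto x\log x$-type monotonicity replaced by the convexity of $\rho\mapsto T(\sqrt\rho)$ already established, the Robinson--Ruelle trick (take the two measures, compare $T(\sqrt{t\mubf_1^{(n)} + (1-t)\mubf_2^{(n)}})$ from above and below) shows that the $\frac1n$-normalized quantity converges and that the limit is affine; the argument is word-for-word the one displayed in the proof of Theorem~\ref{thm:aHS} for the mean entropy, with $\int \mubf^{(n)}\log\mubf^{(n)}$ replaced by $T(\sqrt{\mubf^{(n)}})$.

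Concretely, the steps in order: (i) record the pointwise inequality for $(\sqrt u - \sqrt v)^2$ and integrate against the positive kernel to get both convexity of $a\mapsto T(\sqrt a)$ and, after tensorizing, of $\mubf_n \mapsto T(\sqrt{\mubf_n})$; (ii) deduce superadditivity $T(\sqrt{\mubf^{(n)}}) \ge T(\sqrt{\mubf^{(m)}}) + T(\sqrt{\mubf^{(n-m)}})$, which implies that $\frac1n T(\sqrt{\mubf^{(n)}})$ actually has a limit (Fekete), so the $\limsup$ is a genuine limit; (iii) run the Robinson--Ruelle two-measure estimate to sandwich $\frac1n T(\sqrt{t\mubf_1^{(n)}+(1-t)\mubf_2^{(n)}})$ between $\frac tn T(\sqrt{\mubf_1^{(n)}}) + \frac{1-t}n T(\sqrt{\mubf_2^{(n)}})$ plus error $O(1/n)$ on one side and minus such a term on the other, and pass to the limit; (iv) handle the $-\Delta$ case by the approximation \eqref{eq:app A cin bis}, noting each fractional energy satisfies the above and the bound is uniform enough to pass to the limit $s\uparrow 1$ after the $n\to\infty$ limit (or before — the order can be chosen to avoid interchange issues since everything is monotone).

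The main obstacle I expect is step (ii)/(iii): proving the superadditivity of $T(\sqrt{\mubf^{(n)}})$ cleanly. Unlike the entropy, which is genuinely superadditive by a direct conditioning argument, the kinetic energy $T(\sqrt{\mubf^{(n)}})$ involves the square root of the marginal, and $\sqrt{\mubf^{(n)}}$ does not factor even when $\mubf^{(n)} = \mubf^{(m)}\otimes\mubf^{(n-m)}$ would make it trivial — one must control the cross terms. The resolution is to go back to the operator picture: since $T(\sqrt{\mubf^{(n)}}) = \tr[(\sum_{j\le n} T_j)\gamma^{(n)}]$ bounded below by $\tr[(\sum_{j\le n}T_j)\,\rho_{\gamma^{(n)}}^{\otimes\text{-diagonal}}]$ via the convexity of step (i), and then use that for the \emph{density} $\rho^{(n)}$ one has the exact additivity $\sum_{j\le n}$ splits, combined with the variational characterization $T(\sqrt{\rho^{(n)}}) \ge T(\sqrt{\rho^{(m)}}) + \inf(\ldots)$; this is where the positive-kernel hypothesis does its real work, because $T(x,y)\ge 0$ lets one drop the off-block interaction of the kernel (which couples variable $i\le m$ to variable $j>m$) — it contributes a nonnegative amount $\iint T(x,y)|\sqrt{\rho^{(n)}}(\ldots,x_i,\ldots) - \sqrt{\rho^{(n)}}(\ldots,x_j,\ldots)|^2$ that one simply discards for the lower bound. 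Once this is in hand the rest is the Robinson--Ruelle computation already spelled out in the paper, and the proof of Lemma~\ref{lem:app A kin lim} — hence of the affine character of $\E[\cdot]$ and the applicability of Theorem~\ref{thm:HS} — is complete.
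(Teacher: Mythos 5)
Your convex half (step (i)) is fine: convexity of $\rho\mapsto T(\sqrt\rho)$ for positive-kernel energies is exactly the Lieb--Loss fact already used in Lemme~\ref{lem:app A lim}, and it passes to the $\limsup$. (Note in passing that you have swapped the labels: the inequality $T(\sqrt{ta+(1-t)b})\le tT(\sqrt a)+(1-t)T(\sqrt b)$ is convexity, and the reverse one is the concavity you still need.) The gap is in the reverse inequality, which is the whole content of the lemma. You propose to transplant the Robinson--Ruelle two-measure sandwich ``word-for-word'' from the mean entropy. That argument hinges on a feature specific to the logarithm: $\log\left(\half\mubf_1+\half\mubf_2\right)\ge\log\left(\half\mubf_i\right)=\log\mubf_i-\log 2$, i.e.\ monotonicity of $\log$ converts the mixture inside the nonlinearity into an \emph{additive} $O(1)$ error. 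The Dirichlet-type energy $T(\sqrt{\,\cdot\,})$ has no such property: it is not monotone under pointwise domination of the function, and the reverse-convexity inequality with $O(1)$ (or even $o(n)$) error is simply false for general densities --- already for $n=1$, a mixture $\half\mu_1+\half\mu_2$ of two rapidly oscillating densities with smooth sum has bounded Fisher information while $\half I(\mu_1)+\half I(\mu_2)$ is arbitrarily large. So step (iii) as described cannot work. Your step (ii) is also shakily justified: $T$ is a sum of \emph{one-body} operators, and the kernel $T(x,y)$ of D\'efinition~\ref{def:hyp class quant} compares two values of the \emph{same} coordinate; there is no ``off-block coupling between variable $i\le m$ and variable $j>m$'' to discard. (Superadditivity of $n\mapsto T(\sqrt{\mubf^{(n)}})$ is in fact true, but by marginal monotonicity --- $|\sqrt{\smash[b]{\int f}}-\sqrt{\smash[b]{\int g}}|^2\le\int|\sqrt f-\sqrt g|^2$ via Cauchy--Schwarz --- not by the mechanism you describe; and in any case it only gives existence of the limit, not affinity.)

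The missing idea is the reduction to product measures. The paper first invokes the Hewitt--Savage theorem to reduce the affinity claim to the case $\mubf_1=\rho_1^{\otimes\infty}$, $\mubf_2=\rho_2^{\otimes\infty}$ with $\rho_1\neq\rho_2$, and then proves the two-sided estimate
\begin{equation*}
\left| T\left(\sqrt{\half\mubf_1^{(n)}+\half\mubf_2^{(n)}}\right)-\half T\left(\sqrt{\mubf_1^{(n)}}\right)-\half T\left(\sqrt{\mubf_2^{(n)}}\right)\right|\le C\,n\left(\int_{\R^d}\sqrt{\rho_1\rho_2}\right)^{n-1}\left(T(\sqrt{\rho_1})+T(\sqrt{\rho_2})\right)
\end{equation*}
by a pointwise algebraic estimate on $|\sqrt{\mubf(X)}-\sqrt{\mubf(Y)}|^2$ integrated against the positive kernel. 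The whole point is that $\delta=\int\sqrt{\rho_1\rho_2}<1$ when $\rho_1\neq\rho_2$, so the overlap factor $\delta^{n-1}$ decays geometrically: two distinct product states become asymptotically mutually singular, and the energy of the mixture becomes asymptotically additive. This geometric decay --- not any entropy-style monotonicity --- is what makes the error $o(n)$, and it is only available after the de Finetti reduction to extreme points, which your proposal uses only as a corollary rather than as an ingredient.
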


Kiessling~\cite{Kiessling-12} a donn\'e une preuve tr\`es \'el\'egante de ce lemme dans le cas de l'\'energie cin\'etique non relativitse. Il note que dans ce cas 
$$\frac{1}{n} T \left(\sqrt{\mubf ^{(n)}} \right) = \frac{1}{n} \sum_{j=1} ^n \int_{\R^{dn}} \left|\nabla_j \sqrt{\mubf ^{(n)}}\right| ^2 = \frac{1}{4n} \sum_{j=1} ^n \int_{\R^{dn}} \left|\nabla_j \log \mubf ^{(n)}\right| ^2 \mubf ^{(n)}$$
et que la derni\`ere expression est identique \`a l'information de Fisher de la mesure de probabilit\'e $\mubf ^{(n)}$. La quantit\'e que nous \'etudions peut donc s'interpr\'eter comme une ``information de Fisher moyenne'' de la mesure $\mubf \in \PP_s(\R ^{d\N})$, en analogie avec l'entropie moyenne introduite en~\eqref{eq:aHS somme}.

Cette quantit\'e a une connexion int\'eressante avec l'entropie classique d'une mesure de probabilit\'e. En faisant \'evoluer $\mubf ^{(n)}$ suivant le flot de la chaleur, on peut montrer qu'\`a chaque instant le long du flot, l'information de Fisher est la d\'eriv\'ee de l'entropie. Puisque le flot de la chaleur est lin\'eaire et que l'entropie moyenne est affine (cf le calcul simple pr\'esent\'e lors de la preuve du Th\'eor\`eme~\ref{thm:aHS}, issu de~\cite{RobRue-67}), Kiessling d\'eduit que l'information de Fisher moyenne est affine. Un autre point de vue sur cette question est pr\'esent\'e dans~\cite[Section~5]{HauMis-14}.

On suivra ici une approche plus p\'edestre qui a l'avantage de s'adapter aux \'energies cin\'etiques g\'en\'erales d\'ecrites \`a la D\'efinition~\ref{def:hyp class quant}, entre autres l'\'energie cin\'etique pseudo-relativiste. 

\begin{proof}[Preuve]
Le Th\'eor\`eme~\ref{thm:HS} implique que $\PP_s(\R ^{d\N})$ est l'enveloppe convexe des mesures de probabilit\'e sym\'etriques caract\'eris\'ees par $\mubf ^{(n)} = \rho ^{\otimes n}$, $\rho \in \PP (\R ^d)$. Pour prouver le lemme, il suffit donc de prendre 
$$ \mubf_1 = \rho_1 ^{\otimes n}, \quad \mubf_2 = \rho_2 ^{\otimes n}, \quad \mubf = \half\mubf_1 + \half\mubf_2 $$ 
avec $\rho_1,\rho_2 \in \PP (\R ^d)$, $\rho_1 \neq \rho_2$ et de d\'emontrer que 
\begin{equation}\label{eq:app A claim 2}
\left| T\left( \sqrt{ \mubf} \right) - \half T \left( \sqrt{ \mubf_1} \right)- \half T\left( \sqrt{\mubf_2} \right)\right| \leq o(n)\to 0
\end{equation}
quand $n\to \infty$. Par sym\'etrie de $\mubf ^{(n)}$ et au vu de la D\'efinition~\ref{def:hyp class quant}, il s'agit de calculer 
\begin{equation}\label{eq:app A claim 2 calcul} 
T\left( \sqrt{ \mubf } \right) = n \int_{\R ^{d(n-1)}} d\hat{X} \iint_{\R ^d \times \R ^d} T(x,y) \left| \sqrt{\mubf (X)} - \sqrt{\mubf (Y)}\right| ^2, 
\end{equation}
o\`u on a not\'e
$$X = (x_1,\ldots,x_n), \quad Y = (y_1,x_2\ldots,x_n), \quad \hat{X} = (x_2,\ldots,x_n).$$

Nous commen\c{c}ons par pr\'etendre que pour tout $X,Y$
\begin{align}\label{eq:app A claim 3}
| I (X,Y) | &=\left| \left| \sqrt{\mubf (X)} - \sqrt{\mubf (Y)}\right| ^2 - \half \left| \sqrt{\mubf_1 (X)} - \sqrt{\mubf_1 (Y)}\right| ^2 -\half \left| \sqrt{\mubf_2 (X)} - \sqrt{\mubf_2 (Y)}\right| ^2\right| \nonumber
\\& \leq C \left(\prod_{j=2} ^n  \rho_1 (x_j) \rho_2 (x_j)  \right) ^{1/2} \left( \left| \sqrt{\rho_1 (x_1)} - \sqrt{\rho_1 (y_1)} \right| ^2 + \left|\sqrt{\rho_2 (x_1)} - \sqrt{\rho_2 (y_1)}\right| ^2\right).
\end{align}
Nous allons prouver~\eqref{eq:app A claim 3} dans le cas 
\begin{equation}\label{eq:app A claim 3 choice}
\rho_2(y_1) \leq \rho_2 (x_1) \mbox{ et } \rho_1(y_1) \leq \rho_1 (x_1) 
\end{equation}
en laissant au lecteur les adaptation n\'ecessaires pour les autres cas. On all\`ege la notation en posant $u_i = \sqrt{\rho_i},\quad U_i = \sqrt{\rho_i ^{\otimes n}}$. 

Apr\`es avoir d\'evelopp\'e les carr\'es on obtient 
\begin{multline*}
 2 I (X,Y) = U_1 (X) U_1 (Y) + U_2 (X) U_2 (Y) \\
 - \sqrt{U_1 ^2 (X) U_1 ^2 (Y) + U_2 ^2 (X) U_2 ^2 (Y) + U_1 ^2 (X) U_2 ^2 (Y) + U_2 ^2 (X) U_1 ^2 (Y)}. 
\end{multline*}
Ensuite
\begin{multline*} 
U_1 ^2 (X) U_1 ^2 (Y) + U_2 ^2 (X) U_2 ^2 (Y) + U_1 ^2 (X) U_2 ^2 (Y) + U_2 ^2 (X) U_1 ^2 (Y) \\
=  \left(U_1 (X) U_1 (Y) + U_2 (X) U_2 (Y) \right) ^2 +  U_1 ^2 (X) U_2 ^2 (Y) + U_2 ^2 (X) U_1 ^2 (Y) - 2  U_1 (X) U_2  (Y) U_1  (Y) U_2 (X)
\end{multline*}
et donc,  
\begin{align*}
 2 |I(X,Y)| &= \big| \left(U_1 (X) U_1 (Y) + U_2 (X) U_2 (Y)\right)  
 \\ &\times \left. \left( 1 - \sqrt{1+ \frac{U_1 ^2 (X) U_2 ^2 (Y) + U_2 ^2 (X) U_1 ^2 (Y) - 2  U_1 (X) U_2 ^2 (Y) U_1  (Y) U_2 (X)}{\left(U_1 (X) U_1 (Y) + U_2 (X) U_2 (Y) \right) ^2}} \right)\right|
 \\&\leq \frac{U_1 ^2 (X) U_2 ^2 (Y) + U_2 ^2 (X) U_1 ^2 (Y) - 2  U_1 (X) U_2 ^2 (Y) U_1  (Y) U_2 (X)}{U_1 (X) U_1 (Y) + U_2 (X) U_2 (Y) }
 \\&= \left(\prod_{j=2} ^n u_1 (x_j) u_2(x_j) \right)\frac{\left| u_1 (x_1) u_2(y_1) - u_1(y_1) u_2(x_1) \right| ^2}{u_1 (x_1) u_1 (y_1) + u_2 (x_1) u_2 (y_1)}
 \\&= \left(\prod_{j=2} ^n u_1 (x_j) u_2(x_j) \right) \frac{\left|  u_2(y_1)\left(u_1 (x_1) - u_1(y_1)\right) + u_1(y_1) \left( u_2(y_1) -u_2(x_1)\right) \right| ^2}{u_1 (x_1) u_1 (y_1) + u_2 (x_1) u_2 (y_1)}
\\&\leq 2 \left(\prod_{j=2} ^n u_1 (x_j) u_2(x_j) \right) \left( \frac{u_2 (y_1) ^2}{u_1 (x_1) u_1 (y_1) + u_2 (x_1) u_2 (y_1)} \left(u_1 (x_1) - u_1(y_1)\right) ^2 \right. 
\\& \left. + \frac{u_1(y_1) ^2}{u_1 (x_1) u_1 (y_1) + u_2 (x_1) u_2 (y_1)} \left( u_2(y_1) -u_2(x_1)\right) ^2 \right).
\end{align*}
L'estimation~\eqref{eq:app A claim 3} suit imm\'ediatement dans le cas~\eqref{eq:app A claim 3 choice} et par des consid\'erations similaires dans les autres cas.

En ins\'erant~\eqref{eq:app A claim 3} dans~\eqref{eq:app A claim 2 calcul} et en rappellant~\eqref{eq:app A kin class} on obtient 
\begin{equation}\label{eq:app A calcul}
\left| T\left( \sqrt{ \mubf} \right) - \half T \left( \sqrt{ \mubf_1} \right)- \half T\left( \sqrt{\mubf_2} \right)\right| \leq C n \left( \int_{\R ^d} \sqrt{\rho_1} \sqrt{\rho_2} \right) ^{n-1} \big( T \left(\sqrt{\rho_1}\right) + T \left(\sqrt{\rho_2}\right)\big) 
\end{equation}
par Fubini. On peut supposer que $T\left(\sqrt{\rho_1}\right)$ et $T\left(\sqrt{\rho_2}\right)$ sont finis, sinon toutes les quantit\'es que nous \'evaluons sont \'egales \`a $+\infty$ et il n'y rien \`a prouver. Il reste \`a noter que 
$$ \delta := \int_{\R ^d} \sqrt{\rho_1} \sqrt{\rho_2} < \frac{1}{2} \left( \int_{\R ^d} \rho_1 + \int_{\R ^d} \rho_2 \right) < 1$$
puisque $\rho_1 \neq \rho_2$ par hypoth\`ese. On conclut que 
$$\left| T\left( \sqrt{ \mubf} \right) - \half T \left( \sqrt{ \mubf_1} \right) - \half T\left( \sqrt{\mubf_2} \right)\right| \leq C n \delta ^{n-1}$$
et $\delta ^{n-1}\to 0$ quand $n\to +\infty,$ comme d\'esir\'e.

Pour traiter le cas de l'\'energie cin\'etique usuelle on applique d'abord le raisonnement ci-dessus \`a l'\'energie $T_s$ d\'efinie par le noyau positif 
$$ T_s (x,y) = C_d |x-y| ^{-(d+2s)}.$$
avec $0<s<1$ fixe, pour obtenir l'analogue de~\eqref{eq:app A calcul} avec $T=T_s$. On multiplie alors cette in\'egalit\'e par $(1-s)$ puis on en prend la limite $s\to 1$. Utilisant~\eqref{eq:app A cin bis} ceci donne~\eqref{eq:app A calcul} avec $T$ l'\'energie cin\'etique non relativiste. On peut ensuite passer \`a la limite $n\to \infty$ pour obtenir le r\'esultat souhait\'e. 

\end{proof}

\subsection{Conclusion}\label{sec:concl class quant}\mbox{}\\\vspace{-0.4cm}


La borne sup\'erieure est comme d'habitude triviale en prenant un \'etat test factoris\'e. En combinant les Lemmes~\ref{lem:app A lim} et~\ref{lem:app A kin lim} ainsi que la repr\'esentation de $\mubf$ donn\'ee par le Th\'eor\`eme~\ref{thm:HS} on d\'eduit  
\begin{align*}
\liminf_{N\to \infty} \frac{E(N)}{N} &= \E \left[  \int_{\PP (\R ^d)} \rho ^{\otimes \infty} \d\mu (\rho) \right] \\
&=  \int_{\PP (\R ^d)}  \E \left[  \rho ^{\otimes \infty}  \right] \d\mu (\rho) \\
&= \int_{\PP (\R ^d)}  \EH \left[  \sqrt{\rho}  \right] \d\mu (\rho) \geq \int_{\PP (\R ^d)}  \eH  \; \d\mu (\rho) = \eH,
\end{align*}
ce qui fournit la convergence de l'\'energie. La convergence des densit\'es r\'eduites suit en notant qu'il doit y avoir \'egalit\'e dans toutes les in\'egalit\'es pr\'ec\'edentes. 

\newpage

\section{\textbf{Bosons en dimension finie \`a grande temp\'erature}}\label{sec:large T}

Nous avons jusqu'\`a pr\'esent consid\'er\'e des syst\`emes quantiques de champ moyen uniquement \`a temp\'erature nulle, et obtenu quand $N\to \infty$ des mesures de de Finetti concentr\'ees sur les minimiseurs de la fonctionnelle d'\'energie limite. Il est possible, en prenant une limite de grande temp\'erature en m\^eme temps que la limite de champ moyen, d'obtenir une mesure de Gibbs \`a la limite. Dans cet appendice nous expliquerons ceci pour le cas de bosons dans un espace de dimension finie, en suivant~\cite{Gottlieb-05,LewRou-unpu13}. 

En dimension inifinie, des probl\`emes importants se posent, notamment pour la d\'efinition du probl\`eme limite. Les mesures de Gibbs non lin\'eaires qu'on obtient jouent un r\^ole important en th\'eorie quantique des champs~\cite{Derezinski-13,Simon-74,Summers-12,Gallavotti-85,GliJaf-87} et dans la construction de solutions peu r\'eguli\`eres \`a l'\'equation de Schr\"odinger non lin\'eaire, voir par exemple~\cite{LebRosSpe-88,Bourgain-94,Bourgain-96,Tzvetkov-08,BurTzv-08,BurTzv-08b,BurThoTzv-10,ThoTzv-10,Suzzoni-11}. On renvoie \`a l'article~\cite{LewNamRou-14b} pour des r\'esultats sur la limite ``champ moyen/grande temp\'erature en dimension infinie'' et une discussion plus pouss\'ee de ces sujets.

\subsection{Cadre et r\'esultat}\label{sec:app B cadre}\mbox{}\\\vspace{-0.4cm}

Dans cet appendice l'espace \`a une particule sera un espace de Hilbert $\gH$ de dimension finie
$$\dim \gH = d.$$
On consid\`ere le Hamiltonien de type champ moyen
\begin{equation}\label{eq:app B hamil}
H_N = \sum_{j=1} ^N h_j + \frac{1}{N-1} \sum_{1\leq i<j \leq N} w_{ij} 
\end{equation}
o\`u $h$ est un op\'erateur auto-adjoint sur $\gH$ et $w$ un op\'erateur auto-adjoint sur $\gH \otimes \gH$ sym\'etrique au sens o\`u
$$ w (u\otimes v) = w (v\otimes u), \quad \forall u,v\in \gH.$$
La fonctionnelle d'\'energie est comme d'habitude d\'efinie par 
$$ \E_N [\Psi_N] = \bral \Psi_N, H_N \Psi_N \ketr$$ 
pour $\Psi_N \in \bigotimes_s ^N \gH $ et s'\'etend aux \'etats mixtes $\Gamma_N$ de $\gH ^N = \bigotimes_s ^N \gH$ par la formule 
$$ \E_N [\Gamma_N] = \tr_{\gH ^N} \left[ H_N \Gamma_N \right].$$ 
L'\'etat d'\'equilibre du syt\`eme \`a temp\'erature $T$ est obtenu en minimisant la fonctionnelle d'\'energie libre 
\begin{equation}\label{eq:app B free ener func}
\F_N [\Gamma_N] := \E_N [\Gamma_N] + T \tr\left[ \Gamma_N \log \Gamma_N \right] 
\end{equation}
parmi les \'etats mixtes, ce qui donne pour minimiseur l'\'etat de Gibbs 
\begin{equation}\label{eq:app B quant Gibbs}
\Gamma_N = \displaystyle \frac{\exp \left( - T ^{-1} H_N \right)}{\tr\left[ \exp \left( - T ^{-1} H_N \right)\right]}. 
\end{equation}
L'\'energie libre minimum est obtenue \`a partir de la fonction de partition (facteur de normalisation dans~\eqref{eq:app B quant Gibbs}) comme suit:
\begin{equation}\label{eq:app B free ener} 
F_N = \inf \left\{ \F_N [\Gamma_N], \Gamma_N \in \cS (\gH ^N) \right\} = -T \log \tr\left[ \exp \left( - \frac{1}{T} H_N \right)\right]. 
\end{equation}
On va s'int\'eresser au comportement de ces objets dans la limite 
\begin{equation}\label{eq:app B lim}
N\to \infty, \quad T = tN, \quad t \mbox{ fixe } 
\end{equation}
qui se trouve \^etre le bon r\'egime pour obtenir un probl\`eme limite int\'eressant. On va en fait obtenir \`a la limite une fonctionnelle d'\'energie libre classique, que nous d\'efinissons maintenant. 

\medskip 

Puisque $\gH$ est de dimension finie, on peut d\'efinir $du$ la mesure de Lebesgue normalis\'ee sur sa sph\`ere unit\'e $S\gH$. Les objets limites seront des mesures de de Finetti, donc des mesures de probabilit\'e $\mu$ sur $S\gH$, et plus pr\'ecis\'ement des fonctions $L ^1 (S\gH,du)$. On introduit pour ces objets une fonctionnelle d'\'energie libre classique 
\begin{equation}\label{eq:app B free ener func class}
\Fcl [\mu] = \int_{S\gH} \EH [u] \mu(u) du + t  \int_{S\gH} \mu(u) \log \left(\mu (u)\right) du
\end{equation}
dont on notera $\Fcle$ l'infimum parmi les fonctions $L ^1$ positives et normalis\'ees. Il est atteint par la mesure de Gibbs classique
\begin{equation}\label{eq:app B Gibbs class}
\mucl = \frac{\exp\left( -t^{-1} \EH [u] \right)}{\int_{S\gH} \exp\left( -t^{-1} \EH [u] \right) du} 
\end{equation}
et on a 
$$ \Fcle = -t \log \left( \int_{S\gH} \exp\left( -\frac{1}{t} \EH [u] \right) du \right)$$
Ici $\EH[u]$ est la fonctionnelle d'\'energie de Hartree 
\begin{equation}\label{eq:app B Hartree}
 \EH [u] = \frac{1}{N} \bral u ^{\otimes N}, H_N u ^{\otimes N} \ketr_{\gH ^N} = \bral u,h u \ketr_{\gH} + \half \bral u\otimes u, w u \otimes u \ketr_{\gH ^2}.  
\end{equation}
Le th\'eor\`eme que nous allons d\'emontrer, d\^u \`a Gottlieb~\cite{Gottlieb-05} (voir aussi~\cite{GotSch-09,JulGotMarPol-13}) est de nature semi-classique puisqu'il fait le lien entre les mesures de Gibbs quantique~\eqref{eq:app B quant Gibbs} et classique~\eqref{eq:app B Gibbs class}:

\begin{theorem}[\textbf{Limite champ moyen/grande temp\'erature en dimension finie}]\label{thm:app B}
Dans la limite~\eqref{eq:app B lim}, on a
\begin{equation}\label{eq:app B lim ener}
F_N = - T \log \dim\left( \gH_s ^N \right)  + N \Fcle + O(d).
\end{equation}
De plus, en notant $\gamma_N ^{(n)}$ la $n$-i\`eme matrice de densit\'e r\'eduite de l'\'etat de Gibbs~\eqref{eq:app B quant Gibbs}, 
\begin{equation}\label{eq:app B lim mat}
\gamma_N ^{(n)} \to \int_{S\gH} | u ^{\otimes n} \rangle \langle u ^{\otimes n} |\mucl(u) du 
\end{equation}
fortement en norme de trace sur $\gH ^n$.
\end{theorem}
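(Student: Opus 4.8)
The strategy is to reduce the free energy computation to a semiclassical comparison between the quantum Gibbs state and the classical Gibbs measure on $S\gH$, using the finite-dimensional de Finetti machinery of Chapter~\ref{sec:deF finite dim}, and in particular the explicit construction~\eqref{eq:def CKMR 2} and the Schur identity~\eqref{eq:Schur}. The point is that in finite dimension everything is quantitative, so we never need compactness arguments. First I would rewrite the quantum free energy using the variational (Gibbs) principle: $F_N = \inf_{\Gamma_N} \F_N[\Gamma_N]$, and separately establish an upper and a lower bound on $F_N$ matching to order $O(d)$ after subtracting the entropic shift $-T\log\dim(\gH_s^N)$.

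\textbf{Upper bound.} For the upper bound I would use as trial state the ``averaged coherent state'' $\Gamma_N^{\rm trial} = \int_{S\gH} |u^{\otimes N}\rangle\langle u^{\otimes N}|\,\mucl(u)\,du$, which is a legitimate state on $\gH_s^N$ by~\eqref{eq:Schur} (its trace is computed via Schur's lemma, giving the factor $\dim(\gH_s^N)$). Its energy is $\E_N[\Gamma_N^{\rm trial}] = N\int_{S\gH}\EH[u]\,\mucl(u)\,du$ exactly, by~\eqref{eq:app B Hartree} and linearity. The delicate term is the von Neumann entropy $\tr[\Gamma_N^{\rm trial}\log\Gamma_N^{\rm trial}]$: here one uses a Berezin--Lieb inequality. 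Writing $\Gamma_N^{\rm trial}$ in the over-complete ``basis'' $(u^{\otimes N})_{u\in S\gH}$, the Berezin--Lieb inequality for the concave function $x\mapsto -x\log x$ gives
\begin{equation}\label{eq:BL upper}
\tr\big[\Gamma_N^{\rm trial}\log\Gamma_N^{\rm trial}\big]\ \le\ \int_{S\gH}\dim(\gH_s^N)\,\mucl(u)\,\log\big(\dim(\gH_s^N)\,\mucl(u)\big)\,\frac{du}{\dim(\gH_s^N)},
\end{equation}
which expands to $\log\dim(\gH_s^N) + \int_{S\gH}\mucl\log\mucl$ up to an error controlled by $\log\dim(\gH_s^N)=O(d\log N)$; a more careful bookkeeping using $\dim(\gH_s^N)={N+d-1\choose d-1}$ reduces the error to $O(d)$. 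Multiplying the entropic term by $T=tN$ and assembling gives $F_N\le -T\log\dim(\gH_s^N) + N\Fcle + O(d)$.

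\textbf{Lower bound.} For the lower bound I would take $\Gamma_N$ to be the true quantum Gibbs state~\eqref{eq:app B quant Gibbs}, apply the quantitative de Finetti Theorem~\ref{thm:DeFinetti quant} to obtain a probability measure $\mu_N\in\PP(S\gH)$ with $\|\gamma_N^{(2)}-\widetilde\gamma_N^{(2)}\|_{\gS^1}\le C d/N$, and use this to bound the energy from below: $\E_N[\Gamma_N]=\frac N2\tr[(h_1+h_2+w)\gamma_N^{(2)}]\ge N\int\EH[u]\,d\mu_N(u) - C d\,\|h_1+h_2+w\|$, where the operator norm is $O(1)$ in $N$. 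For the entropy the key tool is the other Berezin--Lieb inequality (the ``lower symbol'' side), which bounds $-\tr[\Gamma_N\log\Gamma_N]$ from above by $\log\dim(\gH_s^N) - \int_{S\gH}\check\mu_N\log\check\mu_N\,du$ where $\check\mu_N$ is the lower (Husimi) symbol of $\Gamma_N$; combined with the fact that $\mu_N$ built from the de Finetti construction~\eqref{eq:def CKMR} \emph{is} essentially this lower symbol, one gets $-T\tr[\Gamma_N\log\Gamma_N]\le T\log\dim(\gH_s^N) - T\int\mu_N\log\mu_N + O(d)$. Putting the two pieces together,
\begin{equation*}
F_N = \E_N[\Gamma_N] + T\tr[\Gamma_N\log\Gamma_N] \ge -T\log\dim(\gH_s^N) + N\int_{S\gH}\Big(\EH[u] + t\,\mu_N(u)\log\mu_N(u)\Big)\mu_N(u)\,du + O(d) \ge -T\log\dim(\gH_s^N) + N\Fcle + O(d),
\end{equation*}
the last step being the definition of $\Fcle$ as the infimum of $\Fcl$. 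This proves~\eqref{eq:app B lim ener}.

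\textbf{Convergence of reduced density matrices.} Once the free energy asymptotics is pinned down to $O(d)=o(N)$, the convergence~\eqref{eq:app B lim mat} follows by a standard ``comparison of almost-minimizers'' argument: feeding $\Gamma_N^{\rm trial}$ built from $\mucl$ as a competitor shows that the measure $\mu_N$ produced by de Finetti must be close, in the appropriate weak sense, to $\mucl$ — more precisely, the quantitative gap forces $\int(\Fcl[\delta_u\text{-type}] \text{ integrand})\,d\mu_N \to \Fcle$, and by strict convexity of $\Fcl$ (the entropy term is strictly convex and the energy term linear) one gets $\mu_N\to\mucl$ in $L^1(S\gH)$. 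Then~\eqref{eq:error finite dim deF} with $n$ fixed gives $\gamma_N^{(n)} = \widetilde\gamma_N^{(n)} + O(d/N) = \int_{S\gH}|u^{\otimes n}\rangle\langle u^{\otimes n}|\,d\mu_N(u) + o(1) \to \int_{S\gH}|u^{\otimes n}\rangle\langle u^{\otimes n}|\,\mucl(u)\,du$ strongly in $\gS^1(\gH^n)$, since strong $L^1$ convergence of $\mu_N$ transfers to trace-norm convergence of the corresponding de Finetti states.

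\textbf{Main obstacle.} The crux is the entropy estimate: matching the quantum von Neumann entropy with the classical differential entropy on $S\gH$ up to an error $O(d)$ (not merely $O(d\log N)$). This is exactly where the two Berezin--Lieb inequalities must be used in tandem — the concave (upper symbol) one for the upper bound on $F_N$ and its counterpart for the lower bound — together with sharp asymptotics of $\log\dim(\gH_s^N)=\log{N+d-1\choose d-1}$. Controlling the error term here, and checking that the de Finetti measure $\mu_N$ coincides with the Husimi lower symbol closely enough for the Berezin--Lieb bound to be applicable, is the technical heart of the argument; the energy terms, by contrast, are handled routinely by Theorem~\ref{thm:DeFinetti quant}.
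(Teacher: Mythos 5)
Your proposal follows essentially the same route as the paper: the trial state $\int_{S\gH}|u^{\otimes N}\rangle\langle u^{\otimes N}|\mucl(u)\,du$ with the second (upper-symbol) Berezin--Lieb inequality for the upper bound, the lower symbol combined with the quantitative de Finetti estimate and the first Berezin--Lieb inequality for the lower bound, and identification of the limit measure through the variational gap and uniqueness of the minimizer of $\Fcl$. Two bookkeeping slips to fix: in~\eqref{eq:BL upper} the correct right-hand side is $\dim(\gH_s^N)\int_{S\gH} f\bigl(\mucl(u)/\dim(\gH_s^N)\bigr)du$ with $f(x)=x\log x$, which evaluates \emph{exactly} to $-\log\dim(\gH_s^N)+\int_{S\gH}\mucl\log\mucl\,du$ --- there is no residual $O(d\log N)$ error to absorb, the $\log\dim$ term cancels identically against the one in~\eqref{eq:app B lim ener}; and in your displayed lower bound the entropy integrand should read $t\,\mu_N\log\mu_N$, not $t\,\mu_N(u)\log\mu_N(u)$ multiplied by a further factor $\mu_N(u)$. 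Neither slip affects the architecture of the argument; note also that for~\eqref{eq:app B lim mat} weak-$\ast$ convergence of $\mu_N(u)\,du$ on the compact sphere $S\gH$ already suffices, so the claimed $L^1$ convergence is more than is needed.
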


\begin{remark}[Limite de champ moyen/grande temp\'erature]\label{rem:app B thm}\mbox{}\\
Quelques commentaires:
\begin{enumerate}
\item Il faut comprendre ce th\'eor\`eme comme disant que essentiellemnt, dans la limite qui nous int\'eresse
$$ \Gamma_N \approx \int_{S\gH} |u ^{\otimes N} \rangle \langle u ^{\otimes N} | \, \mucl(u) du.$$
L'\'etat de Gibbs est donc essentiellement une superposition d'\'etats de Hartree. Les notions de matrices de densit\'es r\'eduites et de mesures de de Finetti donnent la bonne mani\`ere de rendre ceci rigoureux. On verra que la mesure de de Finetti (symbole inf\'erieur) associ\'e \`a $\Gamma_N$ par les m\'ethodes du Chapitre~\ref{sec:deF finite dim} converge vers $\mucl (u) du$.
\item Notons que le premier terme dans l'expansion d'\'energie~\eqref{eq:app B lim ener} diverge tr\`es rapidement, voir~\eqref{eq:dim boson}. L'\'energie libre classique appara\^it seulement comme une correction. Vue la d\'ependence en la dimension $d$ de ce premier terme, il est clair que l'approche de cet appendice ne peut s'adapter telle quelle dans un espace de dimension infinie. 
\item Notre m\'ethode de preuve diff\`erera de celle de~\cite{Gottlieb-05}. On exploitera plus \`a fond le caract\`ere semi-classique du probl\`eme en utilisant les in\'egalit\'es de Berezin-Lieb introduites dans~\cite{Berezin-72,Lieb-73b,Simon-80}. La m\'ethode pr\'esent\'ee~\cite{LewRou-unpu13} doit beaucoup \`a l'article fondateur~\cite{Lieb-73b} et rappelle certains aspects de~\cite{LieSeiYng-05}.  
\item Il sera crucial pour cette preuve que le symbole inf\'erieur de $\Gamma_N$ constitue une mesure de de Finetti approch\'ee pour $\Gamma_N$. Cela nous permettra d'appliquer la premi\`ere in\'egalit\'e de Berezin-Lieb pour obtenir une borne inf\'erieure sur l'entropie. Un nouvel int\'er\^et des constructions du Chapitre~\ref{sec:deF finite dim} se r\'ev\`ele donc dans cet appendice o\`u on utilise non seulement l'estimation fournie par le Th\'eor\`eme~\ref{thm:DeFinetti quant} mais \'egalement la forme particuli\`ere de la mesure construite.  
\end{enumerate}\hfill\qed

\end{remark}

\subsection{In\'egalit\'es de Berezin-Lieb}\label{sec:app B Berezin}\mbox{}\\\vspace{-0.4cm}

On rappelle la d\'ecomposition de l'identit\'e~\eqref{eq:Schur} sur $\gH ^N$ fournie par le lemme de Schur. On a donc pour chaque \'etat $\Gamma_N \in \cS (\gH_s ^N)$ un symbole inf\'erieur d\'efini comme 
$$ \mu_N = \dim \left( \gH ^N_s\right)\tr \left[\Gamma_N |u ^ {\otimes N}\rangle \langle u ^ {\otimes N} | \right].$$
La premi\`ere in\'egalit\'e de Berezin-Lieb est l'\'enonc\'e suivant

\begin{lemma}[\textbf{Premi\`ere in\'egalit\'e de Berezin-Lieb}]\label{lem:Ber Lie 1}\mbox{}\\
Soit $\Gamma_N \in \cS (\gH ^N_s)$ de symbole inf\'erieur $\mu_N$ et $f:\R^+ \to \R$ une fonction convexe. On a 
\begin{equation}\label{eq:app B Ber Lie 1}
\tr \left[ f(\Gamma_N) \right] \geq \dim\left( \gH_s ^N \right)\int_{S\gH} f\left(\frac{\mu_N}{\dim\left( \gH_s ^N \right)} \right) du. 
\end{equation}
\end{lemma}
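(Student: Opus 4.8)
The plan is to prove the first Berezin--Lieb inequality by a simple application of Jensen's inequality, exploiting the resolution of the identity \eqref{eq:Schur} together with the fact that, for fixed $u \in S\gH$, the state $\dim(\gH_s^N)\,|u^{\otimes N}\rangle\langle u^{\otimes N}|$ behaves like a probability kernel on $\gH_s^N$. More precisely, let $(\varphi_k)_{k=1}^{\dim \gH_s^N}$ be an orthonormal eigenbasis of $\Gamma_N$ with $\Gamma_N = \sum_k \lambda_k |\varphi_k\rangle\langle\varphi_k|$, $\lambda_k \geq 0$, $\sum_k \lambda_k = 1$. Then $\tr[f(\Gamma_N)] = \sum_k f(\lambda_k)$, while the right-hand side of \eqref{eq:app B Ber Lie 1} is $\dim(\gH_s^N) \int_{S\gH} f\!\left(\sum_k \lambda_k |\langle \varphi_k, u^{\otimes N}\rangle|^2\right) du$ by definition of the lower symbol $\mu_N$.

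The key step is then to recognize that, for each fixed $u$, the quantity $p_k(u) := \dim(\gH_s^N)\,|\langle \varphi_k, u^{\otimes N}\rangle|^2$ defines, as $k$ ranges, a set of nonnegative weights; and by \eqref{eq:Schur} applied in the form $\dim(\gH_s^N)\int_{S\gH} |\langle \varphi_k, u^{\otimes N}\rangle|^2 du = \langle \varphi_k, \1_{\gH^N}\varphi_k\rangle = 1$, each $p_k$ integrates to $1$ over $S\gH$, i.e.\ $p_k(u)\,du$ is a probability measure on $S\gH$. Writing $\mu_N(u)/\dim(\gH_s^N) = \sum_k \lambda_k\, |\langle \varphi_k, u^{\otimes N}\rangle|^2$, I would apply Jensen's inequality to the convex function $f$: for fixed $u$, $f\!\left(\sum_k \lambda_k |\langle \varphi_k, u^{\otimes N}\rangle|^2\right)$ is the value of $f$ at a convex combination, but the cleanest route is actually to integrate first. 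Namely, by Jensen in the $u$-variable with respect to $p_k(u)\,du$ (or equivalently by the integral Jensen inequality applied to the measure $\dim(\gH_s^N)^{-1}\sum_k \delta_{\lambda_k}\otimes p_k(u)du$ viewed appropriately), one gets
\begin{equation}
\dim(\gH_s^N)\int_{S\gH} f\!\left(\frac{\mu_N(u)}{\dim(\gH_s^N)}\right) du \leq \sum_k f(\lambda_k) = \tr[f(\Gamma_N)].
\end{equation}
Concretely: fix $k$ and note $\int_{S\gH} p_k(u)\,du = 1$; one wants to bound $\sum_k f(\lambda_k)$ from below. Write $\lambda_k = \int_{S\gH} \lambda_k\, p_k(u)\,du$ is not quite the right decomposition, so instead I would argue the other direction: by convexity of $f$ and $\sum_k p_k(u)$ summing in $k$ (which by \eqref{eq:Schur} equals $\dim(\gH_s^N)\langle u^{\otimes N}, u^{\otimes N}\rangle = \dim(\gH_s^N)$), the normalized weights $p_k(u)/\dim(\gH_s^N)$ form a probability vector in $k$ for each fixed $u$; hence $f\!\left(\sum_k \tfrac{p_k(u)}{\dim(\gH_s^N)}\lambda_k \dim(\gH_s^N)\right)$... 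The correct and standard statement is: $f(\mu_N(u)/\dim(\gH_s^N)) = f\!\left(\sum_k \lambda_k |\langle\varphi_k,u^{\otimes N}\rangle|^2\right) \leq \sum_k |\langle\varphi_k,u^{\otimes N}\rangle|^2 f(\lambda_k) + \big(1-\sum_k|\langle\varphi_k,u^{\otimes N}\rangle|^2\big)f(0)$ is not available since $f(0)$ may be problematic; rather one uses that $\sum_k |\langle\varphi_k, u^{\otimes N}\rangle|^2 = \|u^{\otimes N}\|^2 = 1$, so $(|\langle\varphi_k,u^{\otimes N}\rangle|^2)_k$ is itself a probability vector, and Jensen gives $f\!\left(\sum_k \lambda_k |\langle\varphi_k,u^{\otimes N}\rangle|^2\right) \leq \sum_k |\langle\varphi_k,u^{\otimes N}\rangle|^2 f(\lambda_k)$. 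Integrating over $S\gH$ and using $\dim(\gH_s^N)\int_{S\gH}|\langle\varphi_k,u^{\otimes N}\rangle|^2 du = 1$ yields exactly \eqref{eq:app B Ber Lie 1}.

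The main (and only real) obstacle is bookkeeping: making sure the normalization $\dim(\gH_s^N)\int_{S\gH}|\langle\varphi_k,u^{\otimes N}\rangle|^2\,du = 1$ is correctly extracted from the Schur orthogonality relation \eqref{eq:Schur}, and correctly matching the $\dim(\gH_s^N)$ prefactors on both sides; once that is done the inequality is an immediate consequence of Jensen's inequality applied pointwise in $u$ to the probability vector $(|\langle\varphi_k, u^{\otimes N}\rangle|^2)_k$, followed by integration. I would close by remarking that no finiteness assumption on $f$ at $0$ is needed precisely because the coefficients $|\langle\varphi_k,u^{\otimes N}\rangle|^2$ sum to exactly $1$ (the vector $u^{\otimes N}$ being normalized), so there is no leftover mass forcing an evaluation of $f$ outside the relevant range.
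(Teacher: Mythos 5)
Your argument is correct and is essentially the paper's own proof: diagonalize $\Gamma_N$, observe that $\bigl(|\langle\varphi_k,u^{\otimes N}\rangle|^2\bigr)_k$ is a probability vector for each fixed $u$ (completeness of the eigenbasis plus $\|u^{\otimes N}\|=1$), apply Jensen pointwise in $u$ to the convex combination $\sum_k \lambda_k |\langle\varphi_k,u^{\otimes N}\rangle|^2$, and integrate using the Schur normalization $\dim(\gH_s^N)\int_{S\gH}|\langle\varphi_k,u^{\otimes N}\rangle|^2\,du=1$. Aside from the exploratory detours in the middle of your write-up, the final chain of inequalities coincides with the one in the paper.
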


La seconde in\'egalit\'e de Berezin-Lieb s'applique \`a des \'etats ayant un symbole sup\'erieur positif (voir Section~\ref{sec:CKMR heur}). On peut en fait montrer que tout \'etat a un symbole sup\'erieur, mais il n'est en g\'en\'eral pas donn\'e par une mesure positive.

\begin{lemma}[\textbf{Seconde in\'egalit\'e de Berezin-Lieb}]\label{lem:Ber Lie 2}\mbox{}\\
Soit $\Gamma_N \in \cS (\gH ^N_s)$ de symbole sup\'erieur $\mu_N \geq 0$,
\begin{equation}\label{eq:app B symb sup}
\Gamma_N = \int_{u\in S \gH}  |u ^{\otimes N} \rangle \langle u ^{\otimes N}| \mu_N (u) du 
\end{equation}
et $f:\R ^+ \to \R$ une fonction convexe. On a 
\begin{equation}\label{eq:app B Ber Lie 2}
\tr \left[ f(\Gamma_N) \right] \leq \dim\left( \gH_s ^N \right) \int_{S\gH} f \left(\frac{\mu_N}{\dim\left( \gH_s ^N \right)} \right) du. 
\end{equation}
\end{lemma}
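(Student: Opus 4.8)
The plan is to run the classical Berezin--Lieb argument: diagonalise $\Gamma_N$, recognise each of its eigenvalues as an average of the rescaled upper symbol against a probability measure on $S\gH$ built from Schur's resolution of the identity~\eqref{eq:Schur}, and then push the convex function $f$ inside the average by Jensen's inequality. This is the exact dual of the argument underlying Lemma~\ref{lem:Ber Lie 1}, with the roles of the two symbols interchanged.

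First I would write $\Gamma_N=\sum_{j}\lambda_j\,|\psi_j\rangle\langle\psi_j|$ by the spectral theorem, $(\psi_j)$ being an orthonormal basis of the \emph{finite-dimensional} space $\gH_s^N$ and $\lambda_j\ge 0$. Inserting the assumed integral representation~\eqref{eq:app B symb sup} gives, for each $j$,
\begin{equation*}
\lambda_j=\langle\psi_j,\Gamma_N\psi_j\rangle=\int_{S\gH}\big|\langle\psi_j,u^{\otimes N}\rangle\big|^2\,\mu_N(u)\,du .
\end{equation*}
Next I would introduce, for fixed $j$, the measure $d\nu_j(u):=\dim\left(\gH_s^N\right)\big|\langle\psi_j,u^{\otimes N}\rangle\big|^2\,du$ on $S\gH$. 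Tested against $\psi_j$, Schur's formula~\eqref{eq:Schur} reads $\dim\left(\gH_s^N\right)\int_{S\gH}\big|\langle\psi_j,u^{\otimes N}\rangle\big|^2\,du=\langle\psi_j,\psi_j\rangle=1$, so each $\nu_j$ is a probability measure on $S\gH$ and
\begin{equation*}
\lambda_j=\int_{S\gH}\frac{\mu_N(u)}{\dim\left(\gH_s^N\right)}\,d\nu_j(u).
\end{equation*}

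Now I would apply Jensen's inequality to the convex $f$ on $\R^+$, obtaining $f(\lambda_j)\le\int_{S\gH}f\!\left(\mu_N(u)/\dim\left(\gH_s^N\right)\right)d\nu_j(u)$ for every $j$. Summing this over the finitely many indices $j$ and exchanging the finite sum with the integral, it remains to observe that $\sum_j d\nu_j(u)=\dim\left(\gH_s^N\right)\big(\sum_j|\langle\psi_j,u^{\otimes N}\rangle|^2\big)\,du=\dim\left(\gH_s^N\right)\,du$, since $(\psi_j)$ is an orthonormal basis and $\|u^{\otimes N}\|=1$ for $u\in S\gH$. Because $\tr[f(\Gamma_N)]=\sum_j f(\lambda_j)$, this is exactly the claimed bound~\eqref{eq:app B Ber Lie 2}.

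I do not expect a real obstacle: the statement is a soft consequence of convexity together with the overcompleteness relation~\eqref{eq:Schur}, and in finite dimension $\Gamma_N$ has only finitely many eigenvalues, so the sum over $j$ is finite and its interchange with the integral is unproblematic. The only points deserving a word of care are that the \emph{existence of a positive} upper symbol is a genuine hypothesis here — unlike the lower symbol, which always exists, it is what makes the $\nu_j$-averaging above meaningful — and that Jensen's inequality and the final estimate should be read in $[-\infty,+\infty]$ when $f$ is unbounded (e.g. $f(x)=x\log x$, convex on $\R^+$ with the convention $0\log 0=0$, the choice relevant for bounding the von Neumann entropy of the Gibbs state in the appendix); this is harmless since $f$ convex on $\R^+$ is bounded below by an affine function and the left-hand side $\tr[f(\Gamma_N)]$ is always finite.
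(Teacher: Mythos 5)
Your proof is correct and is essentially the paper's own argument: diagonalise $\Gamma_N$ in an orthonormal basis of $\gH_s^N$, write each eigenvalue as an average of $\mu_N/\dim(\gH_s^N)$ against the probability measure $\dim(\gH_s^N)|\langle\psi_j,u^{\otimes N}\rangle|^2\,du$ furnished by Schur's formula, apply Jensen, and sum using the completeness relation $\sum_j|\langle\psi_j,u^{\otimes N}\rangle|^2=1$. The paper's $V_N^k$, $\lambda_N^k$ and $\mu_N^k$ are exactly your $\psi_j$, $\lambda_j$ and $\nu_j$, so there is nothing to add beyond your (welcome) remark on unbounded convex $f$.
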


\begin{proof}[Preuve des Lemmes~\ref{lem:Ber Lie 1} et~\ref{lem:Ber Lie 2}]
On suit~\cite{Simon-80}. Comme $\Gamma_N$ est un \'etat, on le d\'ecompose sous la forme 
$$ \Gamma_N = \sum_{k=1} ^{\infty} \lambda_N ^k |V_N ^k \rangle \langle V_N ^k|$$
avec $V_N ^k \in \gH ^N_s$ normalis\'e et $\sum_k \lambda_N ^k = 1$. On note 
$$ \mu_N ^k (u)=  \left| \bral V_N ^k , u ^{\otimes N} \ketr \right| ^2$$
et par~\eqref{eq:Schur} on a 
\begin{equation}\label{eq:app B sum 1}
 \dim\left( \gH_s ^N \right) \int_{S\gH} \mu_N ^k (u) du  = \bral V_N ^k , V_N ^k \ketr =  1.
\end{equation}
D'autre part, comme $(V_N ^k)_{k}$ est une base de $\gH ^N _s$, pour tout $u\in S\gH$
\begin{equation}\label{eq:app B sum 2}
 \sum_k \mu_N ^k (u) = \sum_k \left| \bral V_N ^k , u ^{\otimes N} \ketr \right| ^2 =1.
\end{equation}

\medskip

\noindent \emph{Premi\`ere in\'egalit\'e.} Ici $\mu_N$ est le symbole inf\'erieur de $\Gamma_N$ et on a 
$$ \mu_N (u) = \dim\left( \gH_s ^N \right)\sum_k \lambda_N ^k \mu_N ^k (u)$$
et donc 
$$ \dim\left( \gH_s ^N \right) \int_{S\gH} f\left(\frac{\mu_N}{\dim\left( \gH_s ^N \right)} \right) du \leq \dim\left( \gH_s ^N \right) \int_{S\gH}  \sum_k f \left(\lambda_N ^k \right) \mu_N ^k (u) du$$
par l'in\'egalit\'e de Jensen et~\eqref{eq:app B sum 2}, puis 
$$ \dim\left( \gH_s ^N \right) \int_{S\gH}  \sum_k f \left(\lambda_N ^k \right) \mu_N ^k (u) du = \sum_k f \left(\lambda_N ^k \right) = \tr \left[f (\Gamma_N)\right]$$
par~\eqref{eq:app B sum 1}.

\medskip

\noindent \emph{Seconde in\'egalit\'e.} Ici $\Gamma_N$ et $\mu_N$ sont reli\'es par~\eqref{eq:app B symb sup}. On \'ecrit
\begin{align*}
 \tr\left[ f(\Gamma_N) \right] = \sum_k f\left(\lambda_N ^k\right) &= \sum_k f \left( \langle V_N ^k, \Gamma_N V_N ^k \rangle \right) \\
 &= \sum_k f \left( \int_{S\gH} \mu_N (u) \mu_N ^k (u) du\right) 
 \\&\leq \sum_k \dim\left( \gH_s ^N \right) \int_{S\gH} f\left(\frac{\mu_N(u)}{\dim\left( \gH_s ^N \right)} \right) \mu_N ^k (u) du\\
 &= \dim\left( \gH_s ^N \right) \int_{S\gH} f\left(\frac{\mu_N (u)}{\dim\left( \gH_s ^N \right)} \right)  du
\end{align*}
en utilisant l'in\'egalit\'e de Jensen et~\eqref{eq:app B sum 1} pour montrer l'in\'egalit\'e puis~\eqref{eq:app B sum 2} pour conclure.
\end{proof}

Nous avons pr\'esent\'e ici une version sp\'ecifique de ces fameuses in\'egalit\'es. Il est clair que la preuve s'applique plus g\'en\'eralement \`a tout op\'erateur auto-adjoint sur un espace de Hilbert disposant d'une d\'ecomposition en \'etats coh\'erents de forme~\eqref{eq:Schur}. 
Dans la section suivante ces in\'egalit\'es serviront \`a traiter le terme d'entropie en prenant $f(x) = x \log x$. Ceci compl\'etera le traitement de l'\'energie utilisant le Th\'eor\`eme~\ref{thm:DeFinetti quant} et fera le lien avec les consid\'erations pr\'esent\'ees au Chapitre~\ref{sec:deF finite dim}.

\subsection{Preuve du Th\'eor\`eme~\ref{thm:app B}}\label{sec:app B preuve}\mbox{}\\\vspace{-0.4cm}

\noindent\emph{Borne sup\'erieure.} On prend comme \'etat test
$$ \Gamma_N ^{\rm test} := \int_{S\gH} |u ^{\otimes N} \rangle \langle u ^{\otimes N} |\: \mucl(u) du. $$
L'\'energie \'etant lin\'eaire en la matrice densit\'e 
$$ \E_N \left[ \Gamma_N ^{\rm test} \right] = \int_{S\gH} \E_N \left [|u ^{\otimes N} \rangle \langle u ^{\otimes N} |\right]\mucl(u) du = N \int_{S\gH} \EH [u] \mucl(u) du.$$
Pour le terme d'entropie on utilise la seconde in\'egalit\'e de Berezin-Lieb, Lemme~\ref{lem:Ber Lie 2}, avec $f(x) = x \log x$, ce qui donne 
\begin{align*}
\tr\left[ \Gamma_N ^{\rm test} \log \Gamma_N  ^{\rm test}\right] &\leq \dim\left( \gH_s ^N \right) \int_{S\gH} \frac{\mucl (u)}{\dim\left( \gH_s ^N \right)} \log\left( \frac{\mucl (u)}{\dim\left( \gH_s ^N \right)}\right) du 
\\&= - \log \dim\left( \gH_s ^N \right) + \int_{S\gH} \mucl (u) \log\left( \mucl (u) \right) du.
 \end{align*}

En sommant ces estimations on obtient 
$$ F_N \leq \F_N \left[ \Gamma_N ^{\rm test} \right] \leq -T \log \dim\left( \gH_s ^N \right) + N \Fcle$$
puisque $\mucl$ minimise $\Fcl$.

\medskip

\noindent\emph{Borne inf\'erieure.} Pour l'\'energie on utilise les matrices de densit\'e r\'eduites comme d'habitude pour \'ecrire 
$$ \E_N [\Gamma_N] = N \tr_{\gH} \left[ h \gamma_N ^{(1)}\right] + \frac{N}{2}\tr_{\gH ^2} \left[ w \gamma_N ^{(2)}\right].$$
En notant 
$$ \mu_N (u) = \dim \left(\gH ^N_s \right) \bral u ^{\otimes N}, \Gamma_N u ^{\otimes N}\ketr$$
le symbole inf\'erieur de $\Gamma_N$, on rappelle qu'il a \'et\'e d\'emontr\'e au Chapitre~\ref{sec:deF finite dim} que 
\begin{align*}
\tr_{\gH} \left| \gamma_N ^{(1)} - \int_{S\gH} |u\rangle \langle u| \mu_N (u) du \right| &\leq C_1 \frac{d}{N}\\ 
\tr_{\gH^2} \left| \gamma_N ^{(2)} - \int_{S\gH} |u ^{\otimes 2}\rangle \langle u ^{\otimes 2}| \mu_N (u) du \right| &\leq C_2 \frac{d}{N}.
\end{align*}
Puisque nous travaillons en dimension finie, $h$ et $w$ sont des op\'erateurs born\'es et il s'ensuit que 
\begin{align*}
 \E_N [\Gamma_N] &\geq N \int_{S\gH} \tr_{\gH} \left[ h |u\rangle \langle u|\right] \mu_N (u) du + \frac{N}{2} \int_{S\gH} \tr_{\gH ^2} \left[ w |u ^{\otimes 2}\rangle \langle u ^{\otimes 2}|\right] \mu_N (u) du - C d\\
 &= N \int_{S\gH} \EH[u] \mu_N (u) du - Cd.
\end{align*}
Pour estimer l'entropie on utilise la premi\`ere in\'egalit\'e de Berezin-Lieb, Lemme~\ref{lem:Ber Lie 1}, avec $f(x) = x\log x$, ce qui donne
\begin{align*}
\tr\left[ \Gamma_N \log \Gamma_N \right] &\geq \dim\left( \gH_s ^N \right) \int_{S\gH} \frac{\mu_N (u)}{\dim\left( \gH_s ^N \right)} \log\left( \frac{\mu_N (u)}{\dim\left( \gH_s ^N \right)}\right) du \nonumber
\\&= - \log \dim\left( \gH_s ^N \right) + \int_{S\gH} \mu_N (u) \log\left( \mu_N (u) \right) du.
\end{align*}
Il ne reste qu'\`a grouper ces estimations pour d\'eduire   
\begin{align*}
 F_N &= \F_N [\Gamma_N] \geq - T \log \dim\left( \gH_s ^N \right) + N \Fcl[\mu_N] -Cd \\
 &\geq - T \log \dim\left( \gH_s ^N \right) + N \Fcle -Cd 
\end{align*}
puisque 
$$\int_{S\gH} \mu_N (u) du =1$$ 
par d\'efinition.
 
\medskip

\noindent\emph{Convergence des matrices de densit\'e r\'eduites.} Le symbole inf\'erieur $\mu_N (u) du$ est une mesure de probailit\'e sur l'espace compact $S\gH$. On en extrait une sous-suite convergente
$$ \mu_N (u) du \to \mu (du) \in \PP(S\gH)$$
et il s'ensuit des r\'esultats du Chapitre~\ref{sec:deF finite dim} que, pout tout $n\geq 0$ et le long d'une sous-suite, 
\begin{equation}\label{eq:app B preuve 1}
 \gamma_N ^{(n)} \to \int_{S\gH} | u ^{\otimes n} \rangle \langle u ^{\otimes n}| d\mu(u). 
\end{equation}
Les estimations d'\'energie pr\'ec\'edentes une fois combin\'ees donnent
\begin{equation}\label{eq:app B preuve}
 \Fcle \geq \Fcl[\mu_N] - C\frac{d}{N}. 
\end{equation}
Pour passer \`a la liminf $N\to \infty$ dans cette estimation, le terme d'\'energie est trait\'ee comme dans les chapitres pr\'ec\'edents. Pour le terme d'entropie on remarque que comme $du$ est normalis\'ee
$$ \int_{S\gH} \mu_N \log \mu_N du \geq 0$$
puisque cette quantit\'e peut s'interpr\'eter comme l'entropie relative de $\mu_N$ par rapport \`a la fonction constante $1$. Par le lemme de Fatou on d\'eduit donc de~\eqref{eq:app B preuve} que 
$$ \Fcle \geq \Fcl [\mu]$$
et donc que $d\mu (u)= \mucl(u) du$ par unicit\'e du minimiseur de $\Fcl$. L'unicit\'e de la limite garantit \'egalement que toute la suite converge, et il ne reste qu'\`a revenir \`a~\eqref{eq:app B preuve 1} pour conclure. \hfill\qed

%

\end{document}